\documentclass[b5paper,10pt]{book}

\usepackage{phdthesis}
\usepackage[
  paperwidth=17cm
  ,paperheight=24cm
  ,inner=2.6cm
  ,outer=2.3cm
  ,top=2.3cm
  ,bottom=2.3cm
]{geometry}

\usepackage[english]{babel}
\usepackage[hyphens]{url}
\usepackage[noend]{algorithmic}
\usepackage{caption}
\usepackage[square,comma,numbers,sort&compress,sectionbib]{natbib}
\usepackage[usenames,dvipsnames,table]{xcolor}
\usepackage{accents}
\usepackage{acronym}
\usepackage[chapter]{algorithm}
\usepackage{amssymb}
\usepackage{balance}
\usepackage{booktabs} 
\usepackage{colortbl}
\usepackage{color}
\usepackage[inline]{enumitem}
\usepackage{flushend}
\usepackage{graphicx}
\usepackage{import}
\usepackage{microtype}
\usepackage{multicol}
\usepackage{multirow}
\usepackage{rotating}
\usepackage{tabularx}
\usepackage{times}
\usepackage{url}
\usepackage{wrapfig}
\usepackage{subfig}
\usepackage[np, autolanguage]{numprint}
\usepackage{amsthm}
\usepackage{rotating}
\usepackage{bbm}
\usepackage{bm}
\usepackage{dsfont}
\usepackage{mleftright}
\usepackage{appendix}
\usepackage{chngcntr}
\usepackage{etoolbox}
\usepackage{placeins}
\usepackage{pdfpages}

\AtBeginEnvironment{subappendices}{%
\clearpage
}

\AtEndEnvironment{subappendices}{%
}

\usepackage[backref=page]{hyperref}
\hypersetup{%
pdfborder = {0 0 0},
pdftitle = {Learning from User Interactions with Rankings: A Unification of the Field},
pdfsubject = {PhD Thesis},
pdfkeywords = {learning to rank},
pdfauthor = {Harrie Oosterhuis}
}
\usepackage{bookmark}

\renewcommand*{\backref}[1]{} 
\renewcommand*{\backrefalt}[4]{%
\ifcase #1
\or (Cited on page~#2.)  %
\else 
(Cited on pages~#2.)  
\fi
}

\let\svthefootnote\thefootnote
\newcommand\blankfootnote[1]{%
  \let\thefootnote\relax\footnotetext{#1}%
  \let\thefootnote\svthefootnote%
}
\let\svfootnote\footnote
\renewcommand\footnote[2][?]{%
  \if\relax#1\relax%
    \blankfootnote{#2}%
  \else%
    \if?#1\svfootnote{#2}\else\svfootnote[#1]{#2}\fi%
  \fi
}
\DeclareMathOperator*{\argmax}{arg\,max}
\DeclareMathOperator*{\argmin}{arg\,min}
\DeclareMathOperator{\sgn}{sgn}
\DeclareMathOperator{\rank}{rank}

\newcommand{\enkelop}{$^{\vartriangle}$}
\newcommand{\dubbelop}{$^{\blacktriangle}$}
\newcommand{\enkelneer}{$^{\triangledown}$}
\newcommand{\dubbelneer}{$^{\blacktriangledown}$}

\newcommand{\ubar}[1]{\underaccent{\bar}{#1}}

\newcommand{\data}{\mathcal{D}}
\newcommand{\traindata}{\mathcal{D}^\textit{train}}
\newcommand{\bounddata}{\mathcal{D}^\textit{sel}}

\newtheorem{theorem}{Theorem}[chapter]

\preto\chapter\acresetall

\begin{document}

\frontmatter

{\pagestyle{empty}
\newcommand{\printtitle}{%
{\Huge\bf Learning from User Interactions with Rankings:\\ A Unification of the Field \\[0.8cm]
}}

\begin{titlepage}
\includepdf{coverdesign/singlecover}
\par\vskip 2cm
\begin{center}
\printtitle
\vfill
{\LARGE\bf Harrie Oosterhuis}
\vskip 2cm
\end{center}
\end{titlepage}

\mbox{}\newpage
\setcounter{page}{1}

\clearpage
\par\vskip 2cm
\begin{center}
\printtitle
\par\vspace {4cm}
{\large \sc Academisch Proefschrift}
\par\vspace {1cm}
{\large ter verkrijging van de graad van doctor aan de \\
Universiteit van Amsterdam\\
op gezag van de Rector Magnificus\\
prof.\ dr.\ ir.\ K.I.J. Maex\\
ten overstaan van een door het College voor Promoties ingestelde \\
commissie, in het openbaar te verdedigen in \\
de Agnietenkapel\\
op vrijdag 27 november 2020, te 16:00 uur \\ } %
\par\vspace {1cm} {\large door}
\par \vspace {1cm}
{\Large Hendrikus Roelof Oosterhuis}
\par\vspace {1cm}
{\large geboren te Schaijk} %
\end{center}

\clearpage
\noindent%
\textbf{Promotiecommissie} \\\\
\begin{tabular}{@{}l l l}
Promotor: \\
& Prof.\ dr.\ M.\ de Rijke & Universiteit van Amsterdam \\  %
Co-promotor: \\
& Prof.\ dr.\ E.\ Kanoulas & Universiteit van Amsterdam  \\
Overige leden: \\
& Prof.\ dr.\ H.\ Haned & Universiteit van Amsterdam \\
& Prof.\ dr.\ T.\ Joachims & Cornell University \\
& Dr.\ ir.\ J.\ Kamps & Universiteit van Amsterdam \\
& Prof.\ dr.\ C.G.M.\ Snoek & Universiteit van Amsterdam \\
& Prof.\ dr.\ ir.\ A.P.\ de Vries & Radboud Universiteit Nijmegen \\
\end{tabular}

\bigskip\noindent%
Faculteit der Natuurwetenschappen, Wiskunde en Informatica\\

\vfill

\noindent
The research was supported by the Netherlands Organisation for Scientific Research (NWO) under project number 612.001.551.  \\
\bigskip

\noindent
Copyright \copyright~2020 Harrie Oosterhuis, Amsterdam, The Netherlands\\
Cover by Harrie Oosterhuis\\
Printed by Offpage, Amsterdam\\
\\
ISBN: 978-94-93197-36-7\\

\clearpage
}

{\pagestyle{empty}

{
\begin{center}

\noindent
\textbf{Acknowledgements} \\ \vskip .5cm
\end{center}
}

\noindent 
Over six years ago, I was invited to join the ILPS research group as an honours MSc.\ A.I. student.
This was the start of an amazing period where I was able to learn, explore and develop myself into a true researcher.
Many people have helped me on this journey, and I am truly grateful for all the support and friendship I have received along the way.
I hope to inspire future students in the same way that you have all inspired me,
and
I will now try my best to thank each and everyone of you.

First and foremost, I want to thank Maarten de Rijke, my supervisor and promotor.
Maarten, I have learned more from you than I thought possible, you have taught me how to do research, how to become a better teacher and supervisor, and how to develop a research career.
Your help was always there when needed and without fail you have always gone above and beyond.
You have set a great example for me and all of your students.
Thank you so much.

Second, I wish to thank my co-promotor Evangelos Kanoulas.
In our annual meetings you have always given me great advice.
You are a truly compassionate supervisor, who cares greatly about his students.
I am very happy to know that you will continue to have an amazing and caring influence on the future of the research group.

Third, I thank Hinda Haned, Thorsten Joachims, Jaap Kamps, Cees Snoek, and Arjen de Vries, I am truly honoured that you are all part of my PhD committee.

My special thanks to Ana and Tom for being my paranymphs.
Through the peaks and valleys of my PhD life you have always been there for me, and I am very honoured to defend this thesis with you on my side.

Another special thanks to Anne Schuth for accepting and supervising me in ILPS when I had only just started the MSc.\ A.I.
In the end, I hold you responsible for my interest in ranking systems and user interactions and I cannot thank you enough.
Also I want to thank Petra in particular, your amazing work has made all of this possible.
Without contest, I consider you the true ILPS MVP, thank you Petra.

Further thanks to everyone who has been part of ILPS during my journey:
Adith, Alexey, Ali, Ali, Amir, Ana, Anna, Anne, Antonis, Arezoo, Arianna, Artem, Bob, Boris, Chang, Christof, Christophe, Chuan, Cristina, Daan, Damien, Dan, Dat, David, David, Dilek, Evgeny, Georgios, Hamid, Hendra, Hinda, Hosein, Ilya, Isaac, Ivan, Jiahuan, Jie, Jin, Julia, Julien, Kaspar, Katya, Ke, Maarten, Maarten, Maartje, Mahsa, Mariya, Marlies, Marzieh, Masrour, Maurits, Mohammad, Mostafa, Mozhdeh, Nikos, Olivier, Peilei, Pengjie, Petra, Praveen, Richard, Ridho, Rolf, Sam, Sami, Sebastian, Shangsong, Shaojie, Spyretta, Svitlana, Thorsten, Tobias, Tom, Trond, Vera, Wanyu, Xiaohui, Xiaojuan, Xinyi, Yangjun, Yaser, Yifan, Zhaochun, and Ziming.
Together, you have all made ILPS a wonderful group to be part of, I am very grateful to call all of you my colleagues.
I was also very happy to part of several sub-groups that discussed ranking systems, thank you Ali, Arezoo, Artem, Chang, Jin, Julia, Maarten, Rolf, and Wanyu, for the great discussions, hopefully there will be many more discussions to come.
In addition, special thanks to Ana, Antonis, Bob, Chang, Hosein, Maartje, Maurits, Nikos, Rolf, and Tom for being great friends as well.

Furthermore, I want to thank the all the people that welcomed me abroad.
For the great experiences I had during Google internships I thank
Ajay,
Ariel,
Bo,
Eugene,
George,
Guan-Lin,
Heng-Tze,
Larry,
Maxime,
Michael,
Mustafa,
Roger,
Sujith,
Vihan, and
Yi-fan.
For the absolute amazing time I had in Australia, I want to thank
Andrew,
Binsheng,
Brian,
Falk,
Joel,
Luke,
Mark,
Sarah, and
Shane.
Especially Joel and Binsheng for literally travelling to the other side of the world with me.
I am really grateful to have met you all and hope the future will allow me to visit you a great many times.

Dan wil ik nog mijn studiegenoten bedanken: Carla, Dasyel, Fabian, Jelle en Wietze, voor de vele leuke herinneringen aan mijn studie.
Verder ben ik ook dankbaar voor mijn lange vriendschap met Chiel, Don, Kit, Stefan, Mark en Luuk, het is mij erg dierbaar om vrienden te hebben die ik al sinds de kleuterklas ken.

Als laatste wil ik mijn familie bedanken, de belangrijkste mensen in mijn leven.
Marianna, Roelof, Anna en Jeroen, bedankt voor alle steun,
zonder jullie was het mij nooit gelukt om zo ver te komen.
Ik bedank Helena en Nathalie omdat zij ons altijd zo warm verwelkomen.
Erg dankbaar ben ik ook voor mijn lieve oma Anna, die altijd zo geduldig luistert als ik weer eens probeer uit te leggen wat ik nu eigenlijk bij de universiteit doe.
Het meest bedank ik mijn grote liefde Emily omdat zij altijd voor mij klaar staat en de dagen zoveel mooier maakt.

\vskip .5cm
\noindent
\begin{flushright}
Harrie Oosterhuis\\
Amsterdam\\
October 2020
\end{flushright}

\clearpage
}
\tableofcontents

\acrodef{IR}{Information Retrieval}
\acrodef{LTR}{Learning to Rank}
\acrodef{OLTR}{Online Learning to Rank}
\acrodef{TDM}{Team-Draft Multileaving}
\acrodef{OM}{Optimized Multileaving}
\acrodef{PM}{Probabilistic Multileaving}
\acrodef{SOSM}{Sample-Scored-Only Multileaving}
\acrodef{PPM}{Pairwise Preference Multileaving}
\acrodef{LM}{Lambda Multileaving}
\acrodef{PBI}{Preference-Based Balanced Interleaving}
\acrodef{TDI}{Team Draft Interleaving}
\acrodef{OI}{Optimized Interleaving}
\acrodef{PI}{Probabilistic Interleaving}
\acrodef{DBGD}{Dueling Bandit Gradient Descent}
\acrodef{MGD}{Multileave Gradient Descent}
\acrodef{RL}{Reinforcement Learning}
\acrodef{PL}{Plackett-Luce}
\acrodef{PDGD}{Pairwise Differentiable Gradient Descent}
\acrodef{IR}{Information Retrieval}
\acrodef{ARP}{Average Relevance Position}
\acrodef{DCG}{Discounted Cumulative Gain}
\acrodef{EM}{Expectation Maximization}
\acrodef{GENSPEC}{Generalization and Specialization}
\acrodef{IPS}{Inverse Propensity Scoring}
\acrodef{CTR}{Click-Through-Rate}
\acrodef{SEA}{Safe Exploration Algorithm}
\acrodef{PBM}{Position-Based Model algorithm}
\acrodef{LogOpt}{Logging-Policy Optimization Algorithm}
\acrodef{3PR}{Perturbed Preference Perceptron for Ranking}
\acrodef{COLTR}{Counterfactual Online Learning to Rank}
\acrodef{NDCG}{Normalized DCG}
\mainmatter

\chapter{Introduction}
\label{chapter:introduction}

Search engines allow users to efficiently navigate through the enormous numbers of documents available online~\citep{albert1999diameter}.
Underlying every search engine is a ranking system that processes documents in order to present a ranking to the user~\citep{liu2009learning}.
Over the years, the role of ranking systems has only become more important, as they are now used in a wide variety of settings.
Users rely on them to search through many large collections of content, including images~\citep{gordo2016deep}, scientific articles~\citep{joachims2017unbiased}, e-commerce products~\citep{karmaker2017application}, streaming videos~\citep{chelaru2014useful}, job applications/applicants~\citep{geyik2018talent}, emails~\citep{wang2016learning}, and legal documents~\citep{roegiest2015trec}.
Similarly, ranking systems are used for recommendation as well, where they help to suggest content to users that matches their interests~\citep{resnick1997recommender}.
This may even be content of which users are not aware that they have an interest in~\citep{schafer1999recommender}.
In all these rankings scenarios, 
the best user experience is provided when the items that users prefer most are on top of the produced rankings~\citep{jarvelin2002cumulated}.
In other words, the ranking should help the user find what they are looking for with the minimal amount of effort~\citep{sanderson2010user}.

Without a ranking system finding the right information in any sizeable collection becomes an impossible task.
Furthermore, without recommendations many online services would lack a lot of user engagement~\citep{dias2008value}.
Thus, ranking systems drive both user satisfaction -- providing users with the content they prefer -- and user engagement -- bringing providers of content or services to interested consumers~\citep{gomez2015netflix}.
Therefore, the performance of a ranking system is very important to both the users of a service and its providers.
Due to this importance, a lot of interest has gone to the evaluation of ranking systems~\citep{sanderson2010user, hofmann-online-2016, sanderson2010, joachims2002unbiased, chapelle2012large, he2009evaluation} and the field of \ac{LTR} which covers methods for optimizing ranking systems~\citep{liu2009learning, joachims2017unbiased, karatzoglou2013learning, karmaker2017application, burges2010ranknet, wang2018lambdaloss}.

Traditionally, ranking evaluation and \ac{LTR} methods made use of human judgements in the form of expert annotations~\citep{sanderson2010}:
for given pairs of queries and documents, experts are asked to annotate the relevance of a document w.r.t.\ a specific query.
This costly process results in an annotated dataset: a collection of query and document pairs with corresponding expert annotations~\citep{letor, Chapelle2011, dato2016fast}.
For an annotated dataset to be useful it should accurately capture:
\begin{enumerate*}[label=(\roman*)]
\item the queries users typically issue;
\item the documents that have to be ranked; and
\item the relevance preferences of the user~\citep{sanderson2010user}.
\end{enumerate*}
With such a dataset, the optimization of a ranking system can be done through supervised \ac{LTR} methods.
These methods optimize ranking metrics such as Precision, \ac{ARP} or \ac{DCG}, based on the provided relevance annotations~\citep{jarvelin2002cumulated, liu2009learning}.
While very important to the \ac{LTR} field, some severe limitations of this supervised approach have become apparent over the years:
\begin{enumerate*}[label=(\roman*)]
\item Expert annotations are expensive and time-consuming to obtain~\citep{letor, Chapelle2011}.
\item In sensitive settings acquiring experts annotations can be unethical, for instance, when gathering data for optimizing systems for search over personal documents such as emails~\citep{wang2016learning}.
\item For specific settings there may be no experts that can judge what is relevant, for instance, in the context of  personalized recommendations.
\item What users perceive as relevant is known to change over time, thus a dataset would have to be updated regularly, further increasing the associated costs~\citep{dumais-web-2010,lefortier-online-2014}.
\item Actual user preferences and expert annotations are often misaligned~\citep{sanderson2010}.
\end{enumerate*}
Consequently, the supervised approach is infeasible for many \ac{LTR} practitioners because they do not have the resources to create an annotated dataset or gathering annotations is not possible in their ranking setting.
Moreover, even if a dataset can be obtained, it may not lead to the optimal ranking system.
Thus there is a need for an alternative to the supervised approach to ranking evaluation and \ac{LTR}.

An alternative approach that has received a lot of attention is to base evaluation and optimization on user interactions~\citep{radlinski2008does, joachims2003evaluating}.
For rankings this usually means that user clicks are used to compare and improve ranking systems.
At first glance user interactions seem to solve the problems with annotations:
\begin{enumerate*}[label=(\roman*)]
\item If a service has enough active users, interactions are virtually free and available at a large scale.
\item Gathering interactions can be done without showing sensitive items to experts for annotation.
\item Unlike annotations, interactions are an indication of the actual individual user preferences.
\end{enumerate*}
Thus there appears to be a lot of potential for using user interactions, however, there are also drawbacks specific to using them:
\begin{enumerate*}[label=(\roman*)]
\item It requires keeping track of large amounts of user behavior, something users may not consent with~\citep{politou2018forgetting}.
\item User behavior is very unpredictable, clicks in particular are known to be a very noisy signal~\citep{chuklin-click-2015}.
\item Clicks are a form of implicit feedback; there are other factors beside user preference that also affect whether a click takes place, making clicks a biased signal of relevance~\citep{chuklin-click-2015, craswell2008experimental}.
\end{enumerate*}
This thesis will not explore the first drawback and will instead focus on settings where acquiring user interactions is done with consent, in a privacy-respecting and ethical manner.
Mainly, we will consider how methods of evaluation and optimization based on clicks can mitigate the negative effects from click-related noise and bias.

Existing methods for ranking evaluation and optimization from user interactions can roughly be divided into two families: the \emph{online} family that deals with bias through direct interaction and result-randomization~\citep{yue2009interactively, raman2013stable}; and the \emph{counterfactual} family that first models click behavior and then uses the inferred model to correct for bias in logged click data~\citep{wang2016learning, joachims2017unbiased}.
A further division can be made.
For this thesis a decomposition into five areas is relevant.
We will divide the online family into three areas:
\begin{enumerate}[label=(\roman*),leftmargin=*]
\item Online Evaluation -- methods like A/B testing and interleaving~\citep{joachims2003evaluating} that interact directly with users to compare ranking systems and randomize displayed results to mitigate biases~\citep{schuth2015predicting, chapelle2012large,hofmann2013fidelity, schuth2016mgd}.
\item Feature-Based Online \ac{LTR} -- methods like \ac{DBGD}~\citep{yue2009interactively} and the \ac{3PR}~\citep{raman2013stable} that optimize feature-based ranking models by direct interaction with users, often relying on online evaluation~\citep{schuth2016mgd, wang2019variance, hofmann12:balancing}.
\item Tabular Online \ac{LTR} -- methods like Cascading Bandits~\citep{kveton2015cascading} and the \ac{PBM}~\citep{lagree2016multiple} that optimize a single ranking for a single ranking setting, by learning from direct interactions and result randomization~\citep{zoghi2016click, lattimore2019bandit, Komiyama2015, zoghi-online-2017}.
Characteristic about tabular methods is that they do not use any feature-based prediction model but instead memorize the best ranking.
\end{enumerate}
For the counterfactual family, we will use the following division into two areas:
\begin{enumerate}[label=(\roman*),leftmargin=*,resume]
\item Counterfactual Evaluation -- methods that evaluate rankings based on historically logged clicks. They require an inferred model of click behavior and use that model to correct for biases using, for instance, \ac{IPS}~\citep{joachims2017unbiased, swaminathan2015self, carterette2018offline, agarwal2019estimating, ovaisi2020correcting}.
\item Counterfactual \ac{LTR} -- methods that use counterfactual evaluation to estimate performance based on historical click-logs, and that optimize ranking models to maximize the estimated a system's performance~\citep{agarwal2019counterfactual, joachims2017unbiased, hu2019unbiased, wang2016learning, agarwal2019addressing, ovaisi2020correcting}.
\end{enumerate}
This division reveals a rich diversity in approaches that all share the same goal of evaluating or optimizating ranker performance based on user interactions.

On the one hand, this diversity is understandable, since in some settings only one area of methods is applicable.
For instance, one cannot add randomization to data that is already logged, making the counterfactual approach the only available option if only logged data is available.
On the other hand, the diversity of approaches is also unexpected and raises some questions.
For instance, why would online approaches not benefit from an accurate model of click behavior if one is available, similar to the counterfactual approach?

In this thesis, we investigate whether this online/counterfactual division is truly necessary.
We will introduce several novel \ac{LTR} methods that improve over the efficiency of existing methods, and increase the applicability of \ac{LTR} from user clicks.
In particular, we focus on finding \ac{LTR} methods that bridge the online/counterfactual division and find methods that are highly effective both when applied online or counterfactually.
An important result of our thesis on the \ac{LTR} field, is that we offer a unified perspective and set of \ac{LTR} methods.

\section{Research Outline and Questions}
\label{section:introduction:rqs}

The overarching question this thesis aims to answer is:
\begin{itemize}
\item[] \em Could there be a single general theoretically-grounded approach that has competitive performance for both evaluation and \ac{LTR} from user clicks on rankings, in both the counterfactual and online settings?
\end{itemize}
Our aim is to progress the \ac{LTR} field towards answering this question in the affirmation.
In this thesis, we will explore two directions in search of a single general theoretically-grounded approach.
Firstly, by introducing novel online \ac{LTR} methods that outperform existing online methods in optimization and large scale optimization in the online setting.
Secondly, by introducing novel counterfactual \ac{LTR} methods that build on the original \ac{IPS}-based counterfactual \ac{LTR} approach~\citep{joachims2017unbiased}.
Our novel counterfactual \ac{LTR} methods expand the original counterfactual approach and make it applicable to more tasks and settings.
As a result, these novel methods bridge several gaps between counterfactual \ac{LTR} and the areas of supervised \ac{LTR} and online \ac{LTR}.
Furthermore, all our novel counterfactual \ac{LTR} methods are compatible with each other, and can be seen as part of a novel counterfactual \ac{LTR} framework.
At the end of the thesis, our proposed framework has taken the original counterfactual \ac{LTR} approach and greatly increased its applicability and effectiveness for both online and counterfactual evaluation and optimization.
This leads to a more unified perspective of the \ac{LTR} field, where areas that were previously largely independent are now connected.

\subsection{Novel online methods for learning and evaluating}
In the first part of the thesis, we introduce two methods that greatly increase the efficiency of large scale online evaluation and online \ac{LTR}.
Additionally, we take a critical look at several existing methods for online evaluation and online \ac{LTR}.

Interleaving was introduced as an efficient evaluation paradigm designed for evaluating whether one ranking system outperforms another~\citep{joachims2003evaluating}.
Interleaving methods take the rankings produced by two systems and combine them into an interleaved ranking~\citep{hofmann2011probabilistic, radlinski2013optimized, radlinski2008does}.
Clicks on the interleaved ranking are interpreted directly as preference signals between the two systems, resulting in a more data-efficient approach~\citep{schuth2015predicting}.
Thus allowing one to efficiently estimate if an alteration leads to an improved system.
Later, the interleaving approach was extended to multileaving which allows for comparisons that include more than two systems at once~\citep{Schuth2014a, brost2016improved, schuth2015probabilistic},
thereby enabling efficient comparing large numbers of systems with each other.

In Chapter~\ref{chapter:01-online-evaluation} we look at such multileaving methods for large scale online ranking evaluation.
Specifically, we investigate the following question:
\begin{enumerate}[label=\textbf{RQ\arabic*},ref=\textbf{RQ\arabic*}]
\item Does the effectiveness of online ranking evaluation methods scale to large comparisons? \label{thesisrq:multileaving}
\end{enumerate}
We examine existing multileaving methods in terms of \emph{fidelity} -- are they provably unbiased in unambiguous cases~\citep{hofmann2013fidelity} -- and \emph{considerateness} -- are they safe w.r.t.\ the user experience during the gathering of clicks.
From our theoretical analysis, we find that no existing multileaving method manages to meet both criteria.
Furthermore, our empirical analysis reveals that their performance decreases as comparisons involve more ranking systems at once.
As a novel alternative, we introduce the \ac{PPM} algorithm, \ac{PPM} bases evaluation on inferred pairwise item preferences.
We prove that it meets both the \emph{fidelity} and \emph{considerateness} criteria.
Furthermore, our empirical results indicate that using \ac{PPM} leads to a much smaller number of errors especially in large scale comparisons.

Besides evaluation, optimization is also very important to obtain effective ranking systems~\citep{liu2009learning}.
The idea of optimizing ranking systems based on clicks is long-established.
One of the first-theoretically grounded approaches was \acf{DBGD}~\citep{yue2009interactively}.
For every incoming query, \ac{DBGD} samples a variation on a ranking system and then uses interleaving to estimate whether this variation is an improvement.
If so, it updates the ranking system to be more similar to the variation.
Over time this process is supposed to oscillate towards the optimal ranking system.
Numerous extensions have been proposed but all have kept the overall \ac{DBGD} approach of sampling variations and using online evaluation~\citep{schuth2016mgd, wang2019variance, hofmann12:balancing, raman2013stable}.
This is somewhat puzzling, since this sampling approach is in stark contrast with all other \ac{LTR} methods that use gradient-based optimization.

In Chapter~\ref{chapter:02-pdgd} we explore alternatives to the \ac{DBGD} approach and ask ourselves the following question:
\begin{enumerate}[label=\textbf{RQ\arabic*},ref=\textbf{RQ\arabic*},resume]
\item Is online \ac{LTR} possible without relying on model-sampling and online evaluation?%
\label{thesisrq:pdgd}
\end{enumerate}
We answer this question in the affirmative by proposing a novel online \ac{LTR} method: \acf{PDGD}.
Unlike \ac{DBGD}, \ac{PDGD} does not require model-sampling nor does it make use of any online evaluation.
Instead, \ac{PDGD} optimizes a stochastic Plackett-Luce ranking model and bases its updates on inferred pairwise item preferences.
\ac{PDGD} weights the gradients w.r.t.\ item-pairs to mitigate the effect of position bias.
We prove, under very mild assumptions, that the weighted gradient of \ac{PDGD} is unbiased w.r.t.\ item-preferences.
Our experimental results show that \ac{PDGD} requires far fewer interactions to reach the same level of performance as \ac{DBGD}.
Furthermore, we show that even in ideal settings \ac{DBGD} may not be able to find the optimal model and is ineffective at optimizing neural models.
In contrast, \ac{PDGD} does converge to near optimal models, and reaches even higher performance when applied to neural networks.

The large improvements of \ac{PDGD} over \ac{DBGD} observed in Chapter~\ref{chapter:02-pdgd}, made us wonder whether \ac{DBGD} is actually a reliable choice for online \ac{LTR}.
In response to this question, Chapter~\ref{chapter:03-oltr-comparison} tackles the following question:
\begin{enumerate}[label=\textbf{RQ\arabic*},ref=\textbf{RQ\arabic*},resume]
\item Are \ac{DBGD} \ac{LTR} methods reliable in terms of theoretical soundness and empirical performance?
\label{thesisrq:dbgd}
\end{enumerate}
First, we take a critical look at the theory underlying the \ac{DBGD} approach, and find that its assumptions do not hold for deterministic ranking systems and common ranking metrics.
Consequently, we conclude that its theory is not applicable to the large majority of existing research that utilizes the \ac{DBGD} approach~\citep{hofmann2013reusing, schuth2016mgd, oosterhuis2016probabilistic, hofmann12:balancing, wang2018efficient, zhao2016constructing}.
Second, we perform an empirical analysis where \ac{DBGD} and \ac{PDGD} are compared in circumstances ranging from near-ideal -- where interactions contain little noise and no position bias -- to extremely difficult -- where interactions contain extreme amounts of noise and position bias.
The difference in performance between \ac{PDGD} and \ac{DBGD} is so large, that we conclude that \ac{PDGD} is by far the more reliable choice.

For the field of online \ac{LTR} this leads us to question the relevancy of \ac{DBGD} and its extentions, as we have found theoretical weaknesses and empirical inferiority.
The fact that virtually all previous methods in the online \ac{LTR} field are extensions of \ac{DBGD} raises profound questions.

\subsection{A single framework for online and counterfactual learning to rank}

In the second part of the thesis, we expand the existing \ac{IPS}-based counterfactual \ac{LTR} approach~\citep{joachims2017unbiased} to create a unified framework for both online and counterfactual \ac{LTR} and ranking evaluation based on clicks.

The conclusions of the first part of the thesis revealed that \ac{DBGD}, which forms the basis of most previous work in online \ac{LTR}, has problems in terms of performance and its theoretical basis.
It is concerning that these conclusions could have been made much earlier: previous work could have taken a critical look at the theory at any moment;
furthermore, if previous work had compared \ac{DBGD} performance with supervised \ac{LTR} in the prevalent simulated setups, it would have observed the convergence problems of \ac{DBGD}.
To avoid similar issues, we chose to build upon the Counterfactual \ac{LTR} approach because it has a strong theoretical basis, and additionally, all experimental comparisons in the second part include optimal ranking models to detect potential convergence issues.

In contrast with online \ac{LTR} approaches, counterfactual \ac{LTR} and evaluation makes explicit assumptions about user behavior~\citep{wang2016learning, joachims2017unbiased}.
By making such assumptions, the unbiasedness of counterfactual methods can be proven.
Thus guaranteeing optimal convergence, given that the assumptions are correct.
While this provides a strong foundation for learning from historically logged clicks, the counterfactual approach is not always applicable nor always the most effective option~\citep{jagerman2019comparison}.
The following research questions consider whether counterfactual \ac{LTR} could overcome its limitations and become the best choice for \ac{LTR} from clicks in general.

One of the requirements for the unbiasedness of the original counterfactual \ac{LTR} method is that it requires every relevant item to be displayed at every query~\citep{joachims2017unbiased}.
This is a problem in top-$k$ ranking settings where not all items can be displayed at once~\citep{ovaisi2020correcting}.
Hence, Chapter~\ref{chapter:04-topk} concerns the question:
\begin{enumerate}[label=\textbf{RQ\arabic*},ref=\textbf{RQ\arabic*},resume]
\item Can counterfactual \ac{LTR} be extended to top-$k$ ranking settings?
\label{thesisrq:topk}
\end{enumerate}
We introduce the Policy-Aware estimator that corrects for position bias while taking into account the behavior of a stochastic logging policy.
As a result, the policy-aware estimator is unbiased even when learning from top-$k$ feedback, if the policy gives every relevant item a non-zero chance of appearing in the top-$k$.
Thus with this extension counterfactual \ac{LTR} is also applicable to the top-$k$ setting which is especially prevalent in recommendation.

Existing work has considered how to optimize ranking metrics such as \ac{DCG} using counterfactual \ac{LTR}~\citep{agarwal2019counterfactual, hu2019unbiased}.
Interestingly, the solutions for counterfactual \ac{LTR} are very different than those in supervised \ac{LTR}~\citep{burges2010ranknet, wang2018lambdaloss}.
To investigate whether this difference is really necessary, Chapter~\ref{chapter:04-topk} also addresses the question:
\begin{enumerate}[label=\textbf{RQ\arabic*},ref=\textbf{RQ\arabic*},resume]
\item Is it possible to apply state-of-the-art supervised \ac{LTR} methods to the counterfactual \ac{LTR} problem?
\label{thesisrq:lambdaloss}
\end{enumerate}
We find that the LambdaLoss framework~\citep{wang2018lambdaloss}, which includes the famous LambdaMART method~\citep{burges2010ranknet}, can also be applied to counterfactual estimates of ranking metrics.
Thus we show that there does not need to be a divide between state-of-the-art supervised \ac{LTR} and counterfactual \ac{LTR}.

So far we have not considered the area of tabular online \ac{LTR}: methods that find the optimal ranking for a single query based on result randomization and direct interaction~\citep{kveton2015cascading, lagree2016multiple, lattimore2019bandit, Komiyama2015, zoghi-online-2017}.
While these methods need a lot of click data to reach decent performance, they can always find the optimal ranking since they optimize a memorized ranking, instead of using a feature-based model~\citep{zoghi2016click}.
The downside is that when few clicks are available for a query, tabular \ac{LTR} methods are highly sensitive to noise.
Thus these approaches are good for specialization: they have great performance on queries where numerous clicks have been observed, while also having an initial period of poor performance.
Conversely, counterfactual \ac{LTR} commonly optimizes feature-based models for generalization to have a robust performance on previously unseen queries, while often not reaching perfect performance at convergence.

Inspired by this contrast, in Chapter~\ref{chapter:05-genspec} we ask ourselves:
\begin{enumerate}[label=\textbf{RQ\arabic*},ref=\textbf{RQ\arabic*},resume]
\item Can the specialization ability of tabular online \ac{LTR} be combined with the robust feature-based approach of counterfactual \ac{LTR}? \label{thesisrq:genspec}
\end{enumerate}
Our answer is in the form of the novel \ac{GENSPEC} algorithm, it optimizes a single robust generalized policy and numerous specialized policies each optimized for a single query.
Then the \ac{GENSPEC} meta-policy uses high-confidence bounds to safely decide per query which policy to deploy.
Consequently, for previously unseen queries \ac{GENSPEC} choose the generalized policy which utilizes the robust feature-based ranking model. 
While for other queries it can decide to deploy a specialized policy, i.e., if it has enough data to confidently determine that the specialized policy has found the better ranking.
For the \ac{LTR} field, \ac{GENSPEC} shows that specialization does not need to be unique to tabular online \ac{LTR}, instead it can be a property of counterfactual \ac{LTR} as well.
Moreover, overall it shows that specialization and generalization are not mutually exclusive abilities.

While counterfactual evaluation methods are designed for using historical clicks, they can be applied online by simply treating newly gathered data as historical~\citep{carterette2018offline, jagerman2019comparison}.
In contrast with online evaluation methods, counterfactual evaluation is completely passive: its methods do not prescribe which rankings should be displayed.
This difference leads us to ask the following question in Chapter~\ref{chapter:06-onlinecountereval}:
\begin{enumerate}[label=\textbf{RQ\arabic*},ref=\textbf{RQ\arabic*},resume]
\item Can counterfactual evaluation methods for ranking be extended to perform efficient and effective online evaluation?
\label{thesisrq:onlineeval}
\end{enumerate}
We answer this question positively by introducing the novel \ac{LogOpt} which uses available clicks to optimize the logging policy to minimize the variance of counterfactual estimates of ranking metrics.
By minimizing variance, \ac{LogOpt} increases the data-efficiency of counterfactual evaluation, leading to more accurate estimates from fewer logged clicks.
\ac{LogOpt} is applied when data is still been gathered and changes what rankings will be displayed for future queries.
Thus, with the addition of \ac{LogOpt}, counterfactual evaluation is transformed into an online approach that is actively involved with how data is gathered.
Our experimental results suggest that \ac{LogOpt} is at least as efficient as interleaving methods, while also being proven to be unbiased under the common assumptions of counterfactual \ac{LTR}.

The results in Chapter~\ref{chapter:01-online-evaluation} and Chapter~\ref{chapter:06-onlinecountereval} did not show any online evaluation method converge on a zero error.
This lead us to also ask the following question in Chapter~\ref{chapter:06-onlinecountereval}:
\begin{enumerate}[label=\textbf{RQ\arabic*},ref=\textbf{RQ\arabic*},resume]
\item Are existing interleaving methods truly capable of unbiased evaluation w.r.t.\ position bias?
\label{thesisrq:interleaving}
\end{enumerate}
We prove that, under the assumption of basic position bias, interleaving methods are not unbiased.
Furthermore, our results in Chapter~\ref{chapter:06-onlinecountereval} indicate that interleaving methods have a systematic error.
Unfortunately, we are unable to estimate the impact this systematic error has on real-world comparisons.
To the best of our knowledge, no empirical studies have been performed that could measure such a bias, our findings strongly show that such a study would be highly valuable to the field.

In Chapter~\ref{chapter:06-onlinecountereval} we have shown that counterfactual ranking evaluation can be as efficient as online evaluation methods, while also having the theoretical justification of counterfactual methods.
Naturally this leads to a similar question regarding \ac{LTR}:
\begin{enumerate}[label=\textbf{RQ\arabic*},ref=\textbf{RQ\arabic*},resume]
\item Can the counterfactual \ac{LTR} approach be extended to perform highly effective online \ac{LTR}?
\label{thesisrq:onlinecounterltr}
\end{enumerate}
In Chapter~\ref{chapter:06-onlinecounterltr} we answer this question by introducing the intervention-aware estimator for online/counterfactual \ac{LTR}.
The intervention-aware estimator corrects for position-bias and trust-bias while also taking into account the effect of online interventions.
This means that if an intervention takes place -- i.e., the logging policy changes during the gathering of data -- the intervention-aware estimator takes its effect on the interaction biases into account.
The result is an estimator that, one the one hand, is just as efficient as other counterfactual estimators when applied to historical data.
While on the other hand, it is much more efficient when applied online than existing estimators.
Moreover, its performance is comparable to online \ac{LTR} methods.
In contrast with online methods, including \ac{DBGD} and \ac{PDGD}, the intervention-aware estimator is proven to be unbiased w.r.t.\ ranking metrics under the standard assumptions.
In other words, it is the only method that is proven to converge on the optimal model, while also being as efficient as the others.
Therefore, we consider the intervention-aware estimator a bridge between online and counterfactual \ac{LTR} as it is a most-reliable choice in both scenarios.

\section{Main Contributions}
\label{section:introduction:contributions}

This section will now summarize the main contributions of this thesis.
We differentiate between algorithmic contributions -- novel algorithms introduced in the thesis -- and theoretical contributions -- findings that are important to the field, both in the form of formal proofs and empirical observations.

\subsection{Algorithmic contributions}

\begin{enumerate}
\item The \acf{PPM} algorithm for large scale comparisons in online evaluation.
\item The \acf{PDGD} algorithm for fast and efficient online \ac{LTR}.
\item The policy-aware estimator that can perform unbiased counterfactual \ac{LTR} from top-$k$ settings. \label{contrib:policyaware}
\item Three loss functions for optimizing top-$k$ metrics with counterfactual \ac{LTR}, including an adaption of the supervised \ac{LTR} LambdaLoss method. \label{contrib:loss}
\item The \acf{GENSPEC} algorithm that combines the specialization ability of tabular models with the generalization ability of feature-based models. \label{contrib:genspec}
\item The \acf{LogOpt} algorithm that turns counterfactual evaluation into online evaluation so as to minimize variance by updating the logging policy during the gathering of data. \label{contrib:logopt}
\item The intervention-aware estimator that bridges the gap between counterfactual and online \ac{LTR}, by extending the policy-aware estimator to take into account the effect of online interventions. \label{contrib:interaware}
\item An overarching framework for both online and counterfactual \ac{LTR} evaluation and optimization, by combining the existing counterfactual approach with the contributions of the second part of the thesis.
For counterfactual/online evaluation contributions \ref{contrib:policyaware}, \ref{contrib:logopt}, and \ref{contrib:interaware} can be applied simultaneously, similarly for counterfactual/online \ac{LTR} the same can be done with contributions \ref{contrib:policyaware}, \ref{contrib:loss}, \ref{contrib:genspec}, and \ref{contrib:interaware}.
\end{enumerate}

\subsection{Theoretical contributions}

\begin{enumerate}[resume]
\item An extension of the definition of \emph{fidelity} and \emph{considerateness} for multileaving; in addition, we show that no existing multileave method meets the criteria of both simultaneously.
\item A formal proof that \ac{PDGD} is unbiased w.r.t.\ pairwise item preferences under mild assumptions.
\item A formal proof that the assumptions of \ac{DBGD} are not sound for deterministic ranking models, thus invalidating some claims of unbiasedness in previous online \ac{LTR} work.
\item An extensive comparison of \ac{DBGD} and \ac{PDGD} under circumstances ranging from ideal to near worst-case, revealing that even in ideal circumstances \ac{DBGD} is often unable to approximate the optimal model.
\item A formal proof for the unbiasedness of the policy-aware and intervention-aware estimators, proving that the former is unbiased w.r.t.\ position bias and item-selection bias and the latter w.r.t.\ position bias, item-selection bias, and trust bias respectively.
\item A formal demonstration how \ac{LTR} loss functions can be adapted to bound top-$k$ metrics, including a description of how LambdaLoss can be adapted for counterfactual \ac{LTR}.
\item An extension of existing bounds in order to bound the relative performance of two policies, with an additional proof that this bound is more efficient than comparing the bounds of individual policies.
\item A formal proof that interleaving methods are not unbiased w.r.t.\ position bias.
\item An empirical analysis that reveals that \ac{PDGD} is not unbiased w.r.t.\ position bias, item-selection bias, and trust bias, when not applied fully online.
\end{enumerate}

\noindent
In addition to these contributions, the source code used to perform the experiments in each published chapter has been shared publicly to enable reproducibility.

\section{Thesis Overview}
\label{section:introduction:overview}

This section will provide an overview of the thesis, and provide some recommendations for reading directions.
This thesis consists of an introduction chapter, seven research chapters divided into two parts, and a conclusion.
Each research chapter answers one or two of the thesis research questions put forward in Section~\ref{section:introduction:rqs}, in addition to several chapter-specific research questions.
The thesis research questions are important to the overarching story of the thesis, whereas the chapter-specific research questions only consider the individual contributions of the chapters.

The first chapter, which you are currently reading, introduces the subject of this thesis: \ac{LTR} and ranking evaluation from user clicks.
Furthermore, it lays out the thesis research questions this thesis answers, and provides an overview of its contributions and its origins.

Part I titled \emph{Novel Online Methods for Learning and Evaluating} contains three research chapters that all consider online methods for \ac{LTR} and ranking evaluation.
Chapter~\ref{chapter:01-online-evaluation} looks at multileaving methods for online evaluation, evaluates existing methods and introduces a novel multileaving method.
Chapter~\ref{chapter:02-pdgd} considers online \ac{LTR} and introduces \ac{PDGD}, a novel debiased pairwise method.
Chapter~\ref{chapter:03-oltr-comparison} performs an extensive comparison of the previous state-of-the-art online \ac{LTR} method \ac{DBGD} and our novel \ac{PDGD}, in terms of theoretical guarantees and an experimental analysis.

Part II titled \emph{A Single Framework for Online and Counterfactual Learning to Rank} contains four research chapters that build on the counterfactual approach to \ac{LTR} and ranker evaluation.
The chapters in this part of the thesis are complementary, most of their contributions can be applied together or build upon each other.
Chapter~\ref{chapter:04-topk} extends counterfactual \ac{LTR} to top-$k$ settings; it introduces a novel estimator to learn from top-$k$ feedback and extends supervised \ac{LTR} methods to optimize counterfactual estimates of top-$k$ ranking metrics.
Chapter~\ref{chapter:05-genspec} looks at both tabular and feature-based ranking models, and introduces an algorithm that optimizes both types of models and safely deploys different models per query.
Thus combining the specialization abilities of tabular models with the robust performance of feature-based models in previously unseen circumstances.
Chapter~\ref{chapter:06-onlinecountereval} aims to unify counterfactual and online ranking evaluation; it introduces a method that updates the logging policy during the gathering of data, turning counterfactual evaluation into efficient online evaluation.
Similarly, Chapter~\ref{chapter:06-onlinecounterltr} seeks to unify counterfactual and online \ac{LTR}; it proposes a novel estimator that takes into account the effect of online interventions but can also be applied counterfactually.
As a result, the estimator is effective for both counterfactual \ac{LTR} and online \ac{LTR}. 

Lastly, the thesis is concluded in Chapter~\ref{chapter:conclusions}, where we summarize the findings of the thesis; in particular, we discus whether the division between the families of online and counterfactual \ac{LTR} methods has been bridged.
We end the chapter with a discussion of possible future research directions.

The research chapters in this thesis are self-contained, therefore, a reader can read any single chapter independently if they desire.
The research chapters grew out of published papers.
We wanted to avoid creating alternate versions of published work that deviate from the originals.
As a result, the notation between some chapters differ somewhat; to help the reader, we have added a table at the end of each chapter detailing the notation it uses.
For the best experience, we recommend reading all the chapters in part II because they build on each other.
For the same reason, Chapter~\ref{chapter:02-pdgd} and Chapter~\ref{chapter:03-oltr-comparison} are best read together.

\section{Origins}
\label{section:introduction:origins}

We will now list the publications on which the research chapters were based.
Each of the publications is a conference paper written by Harrie Oosterhuis and Maarten de Rijke.
In all cases, Oosterhuis came up with the main research ideas, performed all experiments, and wrote the majority of text.
De Rijke lead the discussions on how each paper should be structured and contributed significantly to the text.
In total, this thesis is built on 6 publications~
\citep{oosterhuis2017sensitive, oosterhuis2018differentiable, oosterhuis2019optimizing, oosterhuis2020topkrankings, oosterhuis2021genspec, oosterhuis2020taking, oosterhuis2021onlinecounterltr}.

\begin{enumerate}[align=left, leftmargin=*]
\item[\textbf{Chapter~\ref{chapter:01-online-evaluation}}] is based on \emph{Sensitive and scalable online evaluation with theoretical guarantees} published at CIKM '17 by \citep{oosterhuis2017sensitive}.
\item[\textbf{Chapter~\ref{chapter:02-pdgd}}]  is based on \emph{Differentiable Unbiased Online Learning to Rank} published at CIKM '18 by \citet{oosterhuis2018differentiable}.
\item[\textbf{Chapter~\ref{chapter:03-oltr-comparison}}]  is based on \emph{Optimizing Ranking Models in an Online Setting} published at ECIR '19 by \citet{oosterhuis2019optimizing}.
\item[\textbf{Chapter~\ref{chapter:04-topk}}] is based on \emph{Policy-Aware Unbiased Learning to Rank for Top-k Rankings} published at SIGIR '20 by \citet{oosterhuis2020topkrankings}.
\item[\textbf{Chapter~\ref{chapter:05-genspec}}] is based on \emph{Robust Generalization and Safe Query-Specialization in Counterfactual Learning to Rank} submitted to WWW '21 by \citet{oosterhuis2021genspec}.
\item[\textbf{Chapter~\ref{chapter:06-onlinecountereval}}]  is based on \emph{Taking the Counterfactual Online: Efficient and Unbiased Online Evaluation for Ranking} published at ICTIR '20 by \citet{oosterhuis2020taking}.
\item[\textbf{Chapter~\ref{chapter:06-onlinecounterltr}}]  is based on \emph{Unifying Online and Counterfactual Learning to Rank} published at WSDM '21 by \citet{oosterhuis2021onlinecounterltr}.
\end{enumerate}

\noindent
In addition, this thesis also indirectly benefitted from the following publications:
\begin{itemize}[align=left, leftmargin=*]
\item \emph{Probabilistic Multileave for Online Retrieval Evaluation} published at SIGIR '15 by \citet{schuth2015probabilistic}.
\item \emph{Multileave Gradient Descent for Fast Online Learning to Rank} published at WSDM '16 by \citet{schuth2016mgd}.
\item \emph{Probabilistic Multileave Gradient Descent} published at ECIR '16 by \citet{oosterhuis2016probabilistic}.
\item \emph{Balancing Speed and Quality in Online Learning to Rank for Information Retrieval} published at CIKM '17 by \citet{oosterhuis2017balancing}.
\item \emph{Query-level Ranker Specialization} published at CEUR '17 by \citet{jagerman2017query}.
\item \emph{Ranking for Relevance and Display Preferences in Complex Presentation Layouts} published at SIGIR '18 by \citet{oosterhuis2018ranking}.
\item \emph{The Potential of Learned Index Structures for Index Compression} published at ADCS '18 by \citet{oosterhuis2018potential}.
\item \emph{To Model or to Intervene: A Comparison of Counterfactual and Online Learning to Rank from User Interactions} published at SIGIR '19 by \citet{jagerman2019comparison}.
\item \emph{When Inverse Propensity Scoring does not Work: Affine Corrections for Unbiased Learning to Rank} published at CIKM '20 by \citet{vardasbi2020trust}.
\item \emph{Keeping Dataset Biases out of the Simulation: A Debiased Simulator for Reinforcement Learning based Recommender Systems} published at RecSys '20 by \citet{huang2020keeping}.
\end{itemize}

\noindent
Furthermore, other work helped with gaining broader research insights, without being directly related to the thesis topic:
\begin{itemize}[align=left, leftmargin=*]
\item \emph{Semantic Video Trailers} by \citet{oosterhuis2016semantic}.
\item \emph{Optimizing Interactive Systems with Data-Driven Objectives} by \citet{li2018optimizing}.
\item \emph{Actionable Interpretability through Optimizable Counterfactual Explanations for Tree Ensembles} by \citet{lucic2019actionable}.
\end{itemize}

\part{Novel Online Methods for Learning and Evaluating}

\chapter{Sensitive and Scalable Online Evaluation with Theoretical Guarantees}
\label{chapter:01-online-evaluation}

\footnote[]{This chapter was published as~\citep{oosterhuis2017sensitive}. Appendix~\ref{notation:01-online-evaluation} gives a reference for the notation used in this chapter.}

Multileaved comparison methods generalize interleaved comparison methods to provide a scalable approach for comparing ranking systems based on regular user interactions.
Such methods enable the increasingly rapid research and development of search engines. 
However, existing multileaved comparison methods that provide reliable outcomes do so by degrading the user experience during evaluation.
Conversely, current multileaved comparison methods that maintain the user experience cannot guarantee correctness.

In this chapter, we address the following thesis research question:
\begin{itemize}
\item[\ref{thesisrq:multileaving}] \emph{Does the effectiveness of online evaluation methods scale to large comparisons?}
\end{itemize}

\noindent
Our answer comes in a two-fold contribution;
First, we propose a theoretical framework for systematically comparing multileaved comparison methods using the notions of \emph{considerateness}, which concerns maintaining the user experience, and \emph{fidelity}, which concerns reliable correct outcomes.
Second, we introduce a novel multileaved comparison method, \ac{PPM}, that performs comparisons based on document-pair preferences, and prove that it is {considerate} and has {fidelity}.
We show empirically that, compared to previous multileaved comparison methods, \acs{PPM} is more {sensitive} to user preferences and {scalable} with the number of rankers being compared.

\section{Introduction}
\label{sec:multileave:intro}

Evaluation is of tremendous importance to the development of modern search engines. Any proposed change to the system should be verified to ensure it is a true improvement.
Online approaches to evaluation aim to measure the actual utility of an \ac{IR} system in a natural usage
environment~\citep{hofmann-online-2016}. 
Interleaved comparison methods are a within-subject setup for online experimentation in \ac{IR}. For interleaved comparison, two experimental conditions (``control'' and ``treatment'') are typical. Recently, multileaved comparisons have been introduced for the purpose of efficiently comparing large numbers of rankers~\citep{Schuth2014a,brost2016improved}. These multileaved comparison methods were introduced as an extension to interleaving and the majority are directly derived from their interleaving counterparts \cite{Schuth2014a, schuth2015probabilistic}. The effectiveness of these methods has thus far only been measured using simulated experiments on public datasets. While this gives some insight into the general \emph{sensitivity} of a method, there is no work that assesses under what circumstances these methods provide correct outcomes and when they break. Without knowledge of the theoretical properties of multileaved comparison methods we are unable to identify when their outcomes are reliable.

In prior work on interleaved comparison methods a theoretical framework has been introduced that provides explicit requirements that an interleaved comparison method should satisfy~\citep{hofmann2013fidelity}. We take this approach as our starting point and adapt and extend it to the setting of multileaved comparison methods. Specifically, the notion of \emph{fidelity} is central to \citet{hofmann2013fidelity}'s previous work; Section~\ref{sec:multileaving} describes the framework with its requirements of \emph{fidelity}. In the setting of multileaved comparison methods, this means that a multileaved comparison method should always recognize an unambiguous winner of a comparison. We also introduce a second notion, \emph{considerateness}, which says that a comparison method should not degrade the user experience, e.g., by allowing all possible permutations of documents to be shown to the user. In this chapter we examine all existing multileaved comparison methods and find that none satisfy both the \emph{considerateness} and \emph{fidelity} requirements. In other words, no existing multileaved comparison method is correct without sacrificing the user experience. 

To address this gap, we propose a novel multileaved comparison method, \acf{PPM}. \ac{PPM} differs from existing multileaved comparison methods as its comparisons are based on inferred pairwise document preferences, whereas existing multileaved comparison methods either use some form of document assignment \cite{Schuth2014a, schuth2015probabilistic} or click credit functions \cite{Schuth2014a, brost2016improved}. We prove that \ac{PPM} meets both the \emph{considerateness} and the \emph{fidelity} requirements, thus \ac{PPM} guarantees correct winners in unambiguous cases while maintaining the user experience at all times. Furthermore, we show empirically that \ac{PPM} is more \emph{sensitive} than existing methods, i.e., it makes fewer errors in the preferences it finds. Finally, unlike other multileaved comparison methods, \ac{PPM} is computationally efficient and \emph{scalable}, meaning that it maintains most of its \emph{sensitivity} as the number of rankers in a comparison increases.

In this chapter we address thesis research question \ref{thesisrq:multileaving} by answering the following more specific research questions:
 \begin{enumerate}[label={\bf RQ2.\arabic*},leftmargin=*]
    \item Does \ac{PPM} meet the \emph{fidelity} and \emph{considerateness} requirements?\label{rq:theory}
    \item Is \ac{PPM} more sensitive than existing methods when comparing multiple rankers?\label{rq:sensitive}
\end{enumerate}
To summarize, our contributions in this chapter are:
 \begin{enumerate}[leftmargin=*]
    \item A theoretical framework for comparing multileaved comparison methods;
    \item A comparison of all existing multileaved comparison methods in terms of \emph{considerateness}, \emph{fidelity} and \emph{sensitivity};
    \item A novel multileaved comparison method that is \emph{considerate} and has \emph{fidelity} and is more \emph{sensitive} than existing methods.
\end{enumerate}

\section{Related Work}
\label{sec:related}

Evaluation of information retrieval systems is a core problem in IR. Two types of approach are common to designing reliable methods for measuring an IR system's effectiveness.  Offline approaches such as the Cranfield paradigm~\cite{sanderson2010} are effective for measuring topical relevance, but have difficulty taking into account contextual information including the user's current situation, fast
changing information needs, and past interaction history with the
system~\citep{hofmann-online-2016}. In contrast, online approaches to evaluation aim to measure
the actual utility of an \ac{IR} system in a natural usage
environment. User feedback in online
evaluation is usually implicit, in the form of clicks, dwell time, etc.

By far the most common type of controlled experiment on the web is A/B testing~\citep{kohavi2009controlled,Kohavi2013}. This is a classic between-subject experiment, where each subject is exposed to one of two conditions, \emph{control}---the current system---and \emph{treatment}---an experimental system that is assumed to outperform the control.

An alternative experimental design uses a within-subject setup, where all study participants are exposed to both experimental conditions. Interleaved comparisons \cite{Joachims2002,radlinski2008does} have been developed specifically for online experimentation in IR. Interleaved comparison methods have two main ingredients. First, a method for constructing interleaved result lists specifies how to select documents from the original rankings (``control'' and ``treatment''). Second, a method for inferring comparison outcomes based on observed user interactions with the interleaved result list. Because of their within-subject nature, interleaved comparisons can be up to two orders of magnitude more efficient than A/B tests in effective sample size for studies
of comparable dependent variables~\citep{chapelle2012large}.

For interleaved comparisons, two experimental conditions are typical. Extensions to multiple conditions have been introduced by \citet{Schuth2014a}. Such \emph{multileaved} comparisons are an efficient online evaluation method for comparing multiple rankers simultaneously. Similar to interleaved comparison methods~\cite{hofmann2011probabilistic, radlinski2013optimized, radlinski2008does, joachims2003evaluating}, a multileaved comparison infers preferences between rankers. Interleaved comparisons do this by presenting users with interleaved result lists; these represent two rankers in such a way that a preference between the two can be inferred from clicks on their documents. Similarly, for multileaved comparisons multileaved result lists are created that allow more than two rankers to be represented in the result list. As a consequence, multileaved comparisons can infer preferences between multiple rankers from a single click. Due to this property multileaved comparisons require far fewer interactions than interleaved comparisons to achieve the same accuracy when multiple rankers are involved \cite{Schuth2014a,  schuth2015probabilistic}.

\begin{algorithm}[h]
\begin{algorithmic}[1]
\STATE \textbf{Input}: set of rankers $\mathcal{R}$, documents $D$, no. of timesteps $T$.
\STATE $P \leftarrow \mathbf{0}$ \hfill\textit{\small// initialize $|\mathcal{R}|\times |\mathcal{R}|$ preference matrix} \label{alg:genmul:init}
\FOR{$t = 1,\ldots ,T$}
    \STATE $q_t \leftarrow \text{wait\_for\_user()}$ \hfill \textit{\small // receive query from user} \label{alg:genmul:query}
    \FOR{$i = 1,\ldots,|\mathcal{R}|$}
    	\STATE $\mathbf{l}_i \leftarrow \mathbf{r}_i(q,D)$ \hfill \textit{\small // create ranking for query per ranker}\label{alg:genmul:rank}
    \ENDFOR
    \STATE $\mathbf{m}_t \leftarrow \text{combine\_lists}(\mathbf{l}_1,\ldots,\mathbf{l}_{R})$\hfill\textit{\small// combine into multileaved list} \label{alg:genmul:multileaving}
    \STATE $\mathbf{c} \leftarrow \text{display}(\mathbf{m}_t)$ \hfill \textit{\small // display to user and record interactions} \label{alg:genmul:display}
    \FOR{$i = 1,\ldots,|\mathcal{R}|$}
    	\FOR{$j = 1,\ldots,|\mathcal{R}|$}
		\STATE $P_{ij} \leftarrow P_{ij} + \text{infer}( i,j, \mathbf{c}, \mathbf{m}_t)$ \hfill \textit{\small // infer pref. between rankers}
		\label{alg:genmul:infer}
	\ENDFOR
    \ENDFOR
\ENDFOR
\RETURN $P$
\end{algorithmic}
\caption{General pipeline for multileaved comparisons.}
\label{alg:generalmultileaving}
\end{algorithm}

The general approach for every multileaved comparison method is described in Algorithm~\ref{alg:generalmultileaving}; here, a comparison of a set of rankers $\mathcal{R}$ is performed over $T$ user interactions. After the user submits a query $q$ to the system (Line~\ref{alg:genmul:query}), a ranking $\mathbf{l}_i$ is generated for each ranker $\mathbf{r}_i$ in $\mathcal{R}$ (Line~\ref{alg:genmul:rank}). These rankings are then combined into a single result list by the multileaving method (Line~\ref{alg:genmul:multileaving}); we refer to the resulting list $\mathbf{m}$ as the  multileaved result list. In theory a multileaved result list could contain the entire document set, however in practice a length $k$ is chosen beforehand, since users generally only view a restricted number of result pages. This multileaved result list is presented to the user who has the choice to interact with it or not. Any interactions are recorded in $\mathbf{c}$ and returned to the system (Line~\ref{alg:genmul:display}). While $\mathbf{c}$ could contain any interaction information~\cite{kharitonov2015generalized}, in practice multileaved comparison methods only consider clicks. Preferences between the rankers in $\mathcal{R}$ can be inferred from the interactions and the preference matrix $P$ is updated accordingly (Line~\ref{alg:genmul:infer}). The method of inference (Line~\ref{alg:genmul:infer}) is defined by the multileaved comparison method (Line~\ref{alg:genmul:multileaving}). By aggregating the inferred preferences of many interactions a multileaved comparison method can detect preferences of users between the rankers in $\mathcal{R}$. Thus it provides a method of evaluation without requiring a form of explicit annotation.

By instantiating the general pipeline for multileaved comparisons shown in Algorithm~\ref{alg:generalmultileaving}, i.e., the combination method at Line~\ref{alg:genmul:rank} and the inference method at Line~\ref{alg:genmul:infer}, we obtain a specific multileaved comparison method. We detail all known multileaved comparison methods in Section~\ref{sec:existingmethods} below. 

\smallskip\noindent%
What we add on top of the work discussed above is a theoretical framework that allows us to assess and compare multileaved comparison methods. In addition, we propose an accurate and scalable multileaved comparison method that is the only one to satisfy the properties specified in our theoretical framework and that also proves to be the most efficient multileaved comparison method in terms of much reduced data requirements.

\section{A Framework for Assessing Multileaved Comparison Methods}
\label{sec:multileaving}

Before we introduce a novel multileaved comparison method in Section~\ref{sec:novelmethod}, we propose two theoretical requirements for multileaved comparison methods. These theoretical requirements will allow us to assess and compare existing multileaved comparison methods. Specifically, we introduce two theoretical properties: \emph{considerateness} and \emph{fidelity}. These properties guarantee correct outcomes in \emph{unambigious} cases while always maintaining the user experience. In Section~\ref{sec:existingmethods} we show that no currently available multileaved comparison method satisfies both properties. This motivates the introduction of a method that satisfies both properties in Section~\ref{sec:novelmethod}. 

\subsection{Considerateness}
Firstly, one of the most important properties of a multileaved comparison method is how \textbf{considerate} it is. Since evaluation is done online it is important that the search experience is not substantially altered \citep{Joachims2002, radlinski2013optimized}. In other words, users should not be obstructed to perform their search tasks during evaluation. As maintaining a user base is at the core of any search engine, methods that potentially degrade the user experience are generally avoided. Therefore, we set the following requirement: the displayed multileaved result list should never show a document $d$ at a rank $i$ if every ranker in $\mathcal{R}$ places it at a lower rank. Writing $r(d,\mathbf{l}_j)$ for the rank of $d$ in the ranking $\mathbf{l}_j$ produced by ranker $\mathbf{r}_j$, this boils down to:
\begin{align}
\mathbf{m}_i = d \rightarrow \exists \mathbf{r}_j \in \mathcal{R}, \,  r(d,\mathbf{l}_j) \leq i. \label{eq:obstruction}
\end{align}
Requirement~\ref{eq:obstruction} guarantees that a document can never be displayed higher in a multileaved result list than any ranker would. In addition, it guarantees that if all rankers agree on the top $n$ documents, the resulting multileaved result list $\mathbf{m}$ will display the same top $n$.

\subsection{Fidelity}
Secondly, the preferences inferred by a multileaved comparison method should correspond with those of the user with respect to retrieval quality, and should be robust to user behavior that is unrelated to retrieval quality \cite{Joachims2002}. In other words, the preferences found should be correct in terms of ranker quality. However, in many cases the relative quality of rankers is unclear. For that reason we will use the notion of \textbf{fidelity} \cite{hofmann2013fidelity} to compare the correctness of a multileaved comparison method. \emph{Fidelity} was introduced by \citet{hofmann2013fidelity} and describes two general cases in which the preference between two rankers is unambiguous. To have \emph{fidelity} the expected outcome of a method is required to be correct in all matching cases. However, the original notion of \emph{fidelity} only considers two rankers as it was introduced for interleaved comparison methods, therefore the definition of \emph{fidelity} must be expanded to the multileaved case. First we describe the following concepts:

\paragraph{Uncorrelated clicks} Clicks are considered \emph{uncorrelated} if relevance has no influence on the likelihood that a document is clicked.
We write $r(d_i,\mathbf{m})$ for the rank of document $d_i$ in multileaved result list $\mathbf{m}$ and $P(\mathbf{c}_l\mid q, \mathbf{m}_l=d_i)$ for the probability of a click at the rank $l$ at which $d_i$ is displayed: $l = r(d_i,\mathbf{m})$. Then, for a given query $q$
\begin{equation}
\textit{uncorrelated}(q) \Leftrightarrow 
 \forall l, \forall d_{i,j}, \, P(\mathbf{c}_l\mid q, \mathbf{m}_l=d_i) = P(\mathbf{c}_l\mid q, \mathbf{m}_l=d_j).
\label{eq:uncorrelated} 
\end{equation}

\paragraph{Correlated clicks} We consider clicks correlated if there is a positive correlation between document relevance and clicks. However we differ from \citet{hofmann2013fidelity} by introducing a variable $k$ that denotes at which rank users stop considering documents. Writing $P(\mathbf{c}_i\mid \textit{rel}(\mathbf{m}_i,q))$ for the probability of a click at rank $i$ if a document relevant to query $q$ is displayed at this rank, we set
\begin{equation}
\begin{aligned}
& \textit{correlated}(q, k) \Leftrightarrow {}  \\
& \qquad \forall i \geq k, \, P(\mathbf{c}_i) = 0 \land \forall i < k, \, P(\mathbf{c}_i\mid \textit{rel}(\mathbf{m}_i,q)) > P(\mathbf{c}_i \mid \neg \textit{rel}(\mathbf{m}_i,q)).
\end{aligned}
\label{eq:correlated}
\end{equation}
Thus under correlated clicks a relevant document is more likely to be clicked than a non-relevant one at the same rank, if they appear above rank $k$.

\paragraph{Pareto domination} Ranker $\mathbf{r}_1$ \emph{Pareto dominates} ranker $\mathbf{r}_2$ if all relevant documents are ranked at least as high by $\mathbf{r}_1$ as by $\mathbf{r}_2$ and $\mathbf{r}_1$ ranks at least one relevant document higher. Writing $\mathit{rel}$ for the set of relevant documents that are ranked above $k$ by at least one ranker, i.e., $\mathit{rel} = \{d \mid \mathit{rel}(d,q) \land \exists \mathbf{r}_n \in \mathcal{R}, r(d,\mathbf{l}_n) > k\}$,
we require that the following holds for every query $q$ and any rank $k$:
\begin{equation}
\begin{split}
&\textit{Pareto}( \mathbf{r}_i, \mathbf{r}_j, q, k) \Leftrightarrow \\
& \qquad\qquad \forall d \in \textit{rel}, \, r(d, \mathbf{l}_i) \leq r(d, \mathbf{l}_j) \land \exists d \in \textit{rel}, \, r(d, \mathbf{l}_i) < r(d, \mathbf{l}_j).
\end{split}
\end{equation}

\noindent%
Then, \emph{fidelity} for multileaved comparison methods is defined by the following two requirements:
\begin{enumerate}[nosep,leftmargin=14pt]
\item \label{fidelity:unbias}
Under uncorrelated clicks the expected outcome may find no preferences between any two rankers in $\mathcal{R}$:
\begin{align}
\forall q, \forall (\mathbf{r}_i, \mathbf{r}_j) \in \mathcal{R}, \quad \mathit{uncorrelated}(q) \Rightarrow E[P_{ij}\mid q] = 0.
\end{align}
\item  \label{fidelity:pareto}
Under correlated clicks, a ranker that Pareto dominates all other rankers must win the multileaved comparison in expectation:
\begin{equation}
\begin{split}
& \forall k, \forall q, \forall \mathbf{r}_{i} \in \mathcal{R},\\
& \hspace{1cm} \big(\mathit{correlated}(q,k) \land {}
\forall \mathbf{r}_{j} \in \mathcal{R}, \, i \not = j \rightarrow \mathit{Pareto}( \mathbf{r}_i, \mathbf{r}_j, q,k)\big)\\
& \hspace{4.5cm} \Rightarrow  \left(\forall \mathbf{r}_{j} \in \mathcal{R}, i \not = j \rightarrow  E[P_{ij}\mid q] > 0\right).
\end{split}
\end{equation}
\end{enumerate}
Note that for the case where $|\mathcal{R}| = 2$ and if only $k = |D|$ is considered, these requirements are the same as for interleaved comparison methods~\citep{hofmann2013fidelity}.
The $k$ parameter was added to allow for \emph{fidelity} in \emph{considerate} methods, since it is impossible to detect preferences at ranks that users never consider without breaking the \emph{considerateness} requirement. We argue that differences at ranks that users are not expected to observe should not affect comparison outcomes.
 \emph{Fidelity} is important for a multileaved comparison method as it ensures that an unambiguous winner is expected to be identified. Additionally, the first requirement ensures that in exception no preferences are inferred when clicks are unaffected by relevancy.

\subsection{Additional properties}
\label{sec:non-theory}
In addition to the two theoretical properties listed above, considerateness and fidelity, we also scrutinize multileaved comparison methods to determine whether they accurately find preferences between all rankers in $\mathcal{R}$ and minimize the number of user impressions required do so. This empirical property is commonly known as \textbf{sensitivity}~\citep{Schuth2014a, hofmann2013fidelity}. In Section~\ref{sec:multileave:experiments} we describe experiments that are aimed at comparing the sensitivity of multileaved comparison methods. Here, two aspects of every comparison are considered: the level of error at which a method converges and the number of impressions required to reach that level. Thus, an interleaved comparison method that learns faster initially but does not reach the same final level of error is deemed worse.

\section{An Assessment of Existing Multileaved Comparison Methods}
\label{sec:existingmethods}

We briefly examine all existing multileaved comparison methods to determine whether they meet the \emph{considerateness} and \emph{fidelity} requirements. An investigation of the empirical sensitivity requirement is postponed until Section~\ref{sec:multileave:experiments}~and~\ref{sec:multileave:results}. 

\subsection{Team Draft Multileaving}

\ac{TDM} was introduced by \citet{Schuth2014a} and is based on the previously proposed \ac{TDI} \cite{radlinski2008does}. Both methods are inspired by how team assignments are often chosen for friendly sport matches.
The multileaved result list is created by sequentially sampling rankers without replacement; the first sampled ranker places their top document at the first position of the multileaved list.
Subsequently, the next sampled ranker adds their top pick of the remaining documents. When all rankers have been sampled, the process is continued by sampling from the entire set of rankers again. The method is stops when all documents have been added. When a document is clicked, \ac{TDM} assigns the click to the ranker that contributed the document. For each impression binary preferences are inferred by comparing the number of clicks each ranker received.

It is clear that \ac{TDM} is \emph{considerate} since each added document is the top pick of at least one ranker. However, \ac{TDM} does not meet the fidelity requirements. This is unsurprising as previous work has proven that \ac{TDI} does not meet these requirements \cite{radlinski2013optimized,hofmann2011probabilistic,hofmann2013fidelity}. Since \ac{TDI} is identical to \ac{TDM} when the number of rankers is $|\mathcal{R}| = 2$, \ac{TDM} does not have \emph{fidelity} either.

\subsection{Optimized Multileaving}

\ac{OM} was proposed by \citet{Schuth2014a} and serves as an extension of \ac{OI} introduced by \citet{radlinski2013optimized}. The allowed multileaved result lists of \ac{OM} are created by sampling rankers with replacement at each iteration and adding the top document of the sampled ranker. However, the probability that a multileaved result list is shown is not determined by the generative process. Instead, for a chosen credit function \ac{OM} performs an optimization that computes a probability for each multileaved result list so that the expected outcome is unbiased and sensitive to correct preferences.

All of the allowed multileaved result lists of \ac{OM} meet the \emph{considerateness} requirement, and in theory instantiations of \ac{OM} could have \emph{fidelity}. However, in practice \ac{OM} does not meet the \emph{fidelity} requirements. There are two main reasons for this. First, it is not guaranteed that a solution exists for the optimization that \ac{OM} performs. For the interleaving case this was proven empirically when $k=10$ \cite{radlinski2013optimized}. However, this approach does not scale to any number of rankers. Secondly, unlike \ac{OI}, \ac{OM} allows more result lists than can be computed in a feasible amount of time. Consider the top $k$ of all possible multileaved result lists; in the worst case this produces $|\mathcal{R}|^k$ lists. Computing all lists for a large value of $|\mathcal{R}|$ and performing linear constraint optimization over them is simply not feasible. As a solution, \citet{Schuth2014a} propose a method that samples from the allowed multileaved result lists and relaxes constraints when there is no exact solution. Consequently, there is no guarantee that this method does not introduce bias. Together, these two reasons show that the \emph{fidelity} of \ac{OI} does not imply fidelity of \ac{OM}. It also shows that \ac{OM} is  computationally very costly. 

\subsection{Probabilistic Multileaving}
\ac{PM} \citep{schuth2015probabilistic} is an extension of \ac{PI} \cite{hofmann2011probabilistic}, which was designed to solve the flaws of \ac{TDI}. Unlike the previous methods, \ac{PM} considers every ranker as a distribution over documents, which is created by applying a soft-max to each of them. A multileaved result list is created by sampling a ranker with replacement at each iteration and sampling a document from the ranker that was selected. After the sampled document has been added, all rankers are renormalized to account for the removed document. During inference \ac{PM} credits every ranker the expected number of clicked documents that were assigned to them. This is done by marginalizing over the possible ways the list could have been constructed by \ac{PM}. A benefit of this approach is that it allows for comparisons on historical data \cite{hofmann2011probabilistic, hofmann2013fidelity}.

A big disadvantage of \ac{PM} is that it allows any possible ranking to be shown, albeit not with uniform probabilities. This is a big deterrent for the usage of \ac{PM} in operational settings. Furthermore, it also means that \ac{PM} does not meet the \emph{considerateness} requirement. On the other hand, \ac{PM} does meet the \emph{fidelity} requirements, the proof for this follows from the fact that every ranker is equally likely to add a document at each location in the ranking. Moreover, if multiple rankers want to place the same document somewhere they have to share the resulting credits.\footnote{\citet{brost2016improved} proved that if the preferences at each impression are made binary the  \emph{fidelity} of \ac{PM} is lost.} Similar to \ac{OM}, \ac{PM} becomes infeasible to compute for a large number of rankers $|\mathcal{R}|$; the number of assignments in the worst case is $|R|^{k}$. Fortunately, \ac{PM} inference can be estimated by sampling assignments in a way that maintains \emph{fidelity} \cite{schuth2015probabilistic, oosterhuis2016probabilistic}.

\subsection{Sample Only Scored Multileaving}
\ac{SOSM} was introduced by \citet{brost2016improved} in an attempt to create a more scalable multileaved comparison method. It is the only existing multileaved comparison method that does not have an interleaved comparison counterpart.
\ac{SOSM} attempts to increase \emph{sensitivity} by ignoring all non-sampled documents during inference. Thus, at each impression a ranker receives credits according to how it ranks the documents that were sampled for the displayed multileaved result list of size $k$. The preferences at each impression are made binary before being added to the mean. \ac{SOSM} creates multileaved result lists following the same procedure as \ac{TDM}, a choice that seems arbitrary.

\ac{SOSM} meets the \emph{considerateness} requirements for the same reason \ac{TDM} does. However, \ac{SOSM} does not meet the fidelity requirement. We can prove this by providing an example where preferences are found under uncorrelated clicks. Consider the two documents \textit{A} and \textit{B} and the three rankers with the following three rankings:
\begin{align*}
\mathbf{l}_1 = \textit{AB}, \qquad
\mathbf{l}_2 = \mathbf{l}_3 = \textit{BA}. 
\end{align*}
The first requirement of fidelity states that under uncorrelated clicks no preferences may be found in expectation. Uncorrelated clicks are unconditioned on document relevance (Equation~\ref{eq:uncorrelated}); however, it is possible that they display position bias \cite{yue2010beyond}. Thus the probability of a click at the first rank may be greater than at the second:
\begin{align*}
P(\mathbf{c}_1\mid q) > P(\mathbf{c}_2\mid q).
\end{align*}
Under position biased clicks the expected outcome for each possible multileaved result list is not zero. For instance, the following preferences are expected:
\begin{align}
E[P_{12}\mid \mathbf{m} = \textit{AB}] &> 0, \nonumber\\
E[P_{12}\mid \mathbf{m} = \textit{BA}] &< 0, \nonumber\\
E[P_{12}\mid \mathbf{m} = \textit{AB}] &= -E[P_{12}\mid \mathbf{m} = \textit{BA}]. \nonumber
\end{align}
Since \ac{SOSM} creates multileaved result lists following the \ac{TDM} procedure the probability $P(\mathbf{m} = \textit{BA})$ is twice as high as $P(\mathbf{m} = \textit{AB})$. As a consequence, the expected preference is biased against the first ranker:
\begin{align}
E[P_{12}] < 0. \nonumber
\end{align}
Hence, \ac{SOSM} does not have \emph{fidelity}. This outcome seems to stem from a disconnect between how multileaved results lists are created and how preferences are inferred.

\smallskip\noindent%
To conclude this section, Table~\ref{tab:properties} provides an overview of our findings thus far, i.e., the theoretical requirements that each multileaved comparison method satisfies; we have also included \acs{PPM}, the multileaved comparison method that we will introduce below.

\begin{table}[tb]
\caption{Overview of multileaved comparison methods and whether they meet the \emph{considerateness} and \emph{fidelity} requirements.}
\centering
\if0
\begin{tabular}{ l  c c c c c  }
\toprule
& \acs{TDM} & \acs{OM} & \acs{PM} & \acs{SOSM} & \acs{PPM}  \\
\midrule 
\textbf{Considerateness} & $\times$ & $\times$ & & $\times$ & $\times$
\\
\midrule 
\textbf{Fidelity} & & & $\times$ & & $\times$ 
\\
\bottomrule
\end{tabular}
\fi
\begin{tabular}{ l  c c c   }
\toprule
 & \textbf{Considerateness} & \textbf{Fidelity} & \textbf{Source} \\
\midrule
\acs{TDM} & \checkmark & & \citep{Schuth2014a} \\
\acs{OM} & \checkmark & & \citep{Schuth2014a} \\
\acs{PM} & & \checkmark & \citep{schuth2015probabilistic} \\
\acs{SOSM} & \checkmark & & \citep{brost2016improved} \\
\acs{PPM} & \checkmark & \checkmark & this chapter \\
\bottomrule 
\end{tabular}
\label{tab:properties}
\end{table}

\section{A Novel Multileaved Comparison Method}
\label{sec:novelmethod}

The previously described multileaved comparison methods are based around direct credit assignment, i.e., credit functions are based on single documents.
In contrast, we introduce a method that estimates differences based on pairwise document preferences.
We prove that this novel method is the only multileaved comparison method that meets the \emph{considerateness} and \emph{fidelity} requirements set out in Section~\ref{sec:multileaving}.

\label{sec:preferencemultileave}
The multileaved comparison method that we introduce is \acf{PPM}. It infers pairwise preferences between documents from clicks and bases comparisons on the agreement of rankers with the inferred preferences.
\ac{PPM} is based on the assumption that a clicked document is preferred to: (i) all of the unclicked documents above it; and (ii) the next unclicked document. These assumptions are long-established \cite{joachims2002unbiased} and form the basis of pairwise \ac{LTR} \cite{Joachims2002}.%

We write $\mathbf{c}_{r(d_i,\mathbf{m})}$ for a click on document $d_i$ displayed in multileaved result list $\mathbf{m}$ at the rank $r(d_i,\mathbf{m})$. For a document pair $(d_i, d_j)$, a click $\mathbf{c}_{r(d_i,\mathbf{m})}$ infers a preference as follows:
\begin{equation}
c_{r(d_i,\mathbf{m})} \land \neg c_{r(d_j,\mathbf{m})}
\land
\big(\exists i, (c_i \land r(d_j,\mathbf{m}) < i) \lor c_{r(d_j,\mathbf{m}) -1}\big)
\Leftrightarrow d_i >_{c} d_j.
\end{equation}
In addition, the preference of a ranker $\mathbf{r}$ is denoted by $d_i >_{\mathbf{r}} d_j$.
Pairwise preferences also form the basis for \ac{PBI} introduced by~\citet{he2009evaluation}. However, previous work has shown that \ac{PBI} does not meet the \emph{fidelity} requirements~\citep{hofmann2013fidelity}. Therefore, we do not use \ac{PBI} as a starting point for \ac{PPM}. Instead, \ac{PPM} is derived directly from the \emph{considerateness} and \emph{fidelity} requirements. Consequently, \ac{PPM} constructs multileaved result lists inherently differently and its inference method has \emph{fidelity}, in contrast with \ac{PBI}.

\begin{algorithm}[t]
\begin{algorithmic}[1]
\STATE \textbf{Input}: set of rankers $\mathcal{R}$, rankings $\{\mathbf{l}\}$, documents $D$.
\STATE $\mathbf{m} \leftarrow []$ \hfill \textit{\small // initialize empty multileaving}
\FOR{$n = 1,\ldots,|D|$}
    \STATE $\hat{\Omega}_n \leftarrow \Omega(n,\mathcal{R},D) \setminus \mathbf{m}$  \hfill \textit{\small // choice set of remaining documents} \label{alg:ppm:docset}
    \STATE $d \leftarrow \text{uniform\_sample}(\hat{\Omega}_n)$   \hfill \textit{\small // uniformly sample next document} \label{alg:ppm:next}
    \STATE $\mathbf{m} \leftarrow \text{append}(\mathbf{m},d)$   \hfill \textit{\small // add sampled document to multileaving} \label{alg:ppm:append}
\ENDFOR
\RETURN $\mathbf{m}$
\end{algorithmic}
\caption{Multileaved result list construction for \ac{PPM}.}
\label{alg:ppm-listconstruction}
\end{algorithm}

\begin{algorithm}[t]
\begin{algorithmic}[1]
\STATE \textbf{Input}: rankers $\mathcal{R}$, rankings $\{\mathbf{l}\}$, documents $D$, multileaved result list $\mathbf{m}$, clicks $\mathbf{c}$.
\STATE  $P \leftarrow \mathbf{0}$ \hfill \textit{\small // preference matrix of $|\mathcal{R}|\times|\mathcal{R}|$ } 
\FOR{$(d_i, d_j) \in \{(d_i,d_j)\mid d_i >_{\mathbf{c}} d_j\}$}
       \IF{$\ubar{r}(i,j, \mathbf{m}) \geq \bar{r}(i,j)$}
           \STATE ${w} \leftarrow 1$ \hfill \textit{\small // variable to store $P(\ubar{r}(i,j, \mathbf{m})  \geq \bar{r}(i,j))$ }
           \STATE $\textit{min\_x} \leftarrow \min_{d \in \{d_i,d_j\}} \min_{\mathbf{r}_n \in \mathcal{R}} r(d,\mathbf{l}_n)$
           \FOR{$x = \textit{min\_x},\ldots,\bar{r}(i,j)-1$}
               \STATE ${w} \leftarrow {w} \cdot (1 - (|\Omega(x,\mathcal{R},D)|-x-1)^{-1})$
           \ENDFOR
       \FOR{$n = 1,\ldots,|R|$}
       	   \FOR{$m = 1,\ldots,|R|$}
                    \IF{$d_i >_{\mathbf{r}_n} d_j \land n \not = m$}
                        \STATE $P_{nm} \leftarrow P_{nm} + {w}^{-1}$\hfill \textit{\small // result of scoring function $\phi$ }
                    \ELSIF{$n \not = m$}
                         \STATE $P_{nm} \leftarrow P_{nm} - {w}^{-1}$
        	       	   \ENDIF
            \ENDFOR
        \ENDFOR
   \ENDIF   
\ENDFOR
\RETURN $P$
\end{algorithmic}
\caption{Preference inference for \ac{PPM}.}
\label{alg:ppm-inference}
\end{algorithm}

When constructing a multileaved result list $\mathbf{m}$ we want to be able to infer unbiased preferences while simultaneously being \emph{considerate}. Thus, with the requirement for \emph{considerateness} in mind we define a choice set as:
\begin{align}
\Omega(i,\mathcal{R},D) = \{ d \mid  d \in D \land \exists \mathbf{r}_j \in \mathcal{R}, r(d,\mathbf{l}_j) \leq i \}.
\end{align}
This definition is chosen so that any document in $\Omega(i,\mathcal{R}, D)$ can be placed at rank $i$ without breaking the \emph{considerateness} requirement (Equation~\ref{eq:obstruction}). The multileaving method of \ac{PPM} is described in Algorithm~\ref{alg:ppm-listconstruction}. The approach is straightforward: at each rank $n$ the set of documents $\hat{\Omega}_n$ is determined (Line~\ref{alg:ppm:docset}). This set of documents is $\Omega(n,\mathcal{R},D)$ with the previously added documents removed to avoid document repetition. Then, the next document is sampled uniformly from $\hat{\Omega}_n$  (Line~\ref{alg:ppm:next}), thus every document in $\hat{\Omega}_n$ has a probability:
\begin{align}
\frac{1}{|\Omega(n,\mathcal{R},D)| - n + 1}
\end{align}
of being placed at position $n$ (Line~\ref{alg:ppm:append}). Since $\hat{\Omega}_n \subseteq \Omega(n,\mathcal{R},D)$ the resulting $\mathbf{m}$ is guaranteed to be \emph{considerate}.

While the multileaved result list creation method used by \ac{PPM} is simple, its preference inference method is more complicated as it has to meet the \emph{fidelity} requirements. First, the preference found between a ranker $\mathbf{r}_n$ and $\mathbf{r}_m$ from a single interaction $\mathbf{c}$ is determined by:
\begin{align}
P_{nm} = \sum_{d_i >_\mathbf{c} d_j} \phi(d_i,d_j,\mathbf{r}_n, \mathbf{m}, \mathcal{R}) - \phi(d_i,d_j,\mathbf{r}_m, \mathbf{m}, \mathcal{R}),
\end{align}
which sums over all document pairs $(d_i,d_j)$ where interaction $\mathbf{c}$ inferred a preference. Before the scoring function $\phi$ can be defined we introduce the following function:
\begin{align}
 \bar{r}(i,j,\mathcal{R}) = \max_{d \in \{d_i,d_j\}} \min_{\mathbf{r}_n \in \mathcal{R}} r(d,\mathbf{l}_n).
\end{align}
For succinctness we will note $\bar{r}(i,j) =  \bar{r}(i,j,\mathcal{R})$.
Here, $\bar{r}(i,j)$ provides the highest rank at which both documents $d_i$ and $d_j$ can appear in $\mathbf{m}$.
Position $\bar{r}(i,j)$ is important to the document pair $(d_i,d_j)$, since if both documents are in the remaining documents $\hat{\Omega}_{\bar{r}(i,j)}$, then the rest of the multileaved result list creation process is identical for both. To keep notation short we introduce:
\begin{align}
\ubar{r}(i,j, \mathbf{m}) = \min_{d \in \{d_i,d_j\}} r(d,\mathbf{m}).
\end{align}
Therefore, if $\ubar{r}(i,j, \mathbf{m})  \geq \bar{r}(i,j)$ then both documents appear below $\bar{r}(i,j)$. This, in turn, means that both documents are equally likely to appear at any rank:
\begin{equation}
\begin{split}
&\forall n,\,
P\mleft(\mathbf{m}_n = d_i \mid \ubar{r}(i,j, \mathbf{m})  \geq \bar{r}(i,j)\mright)
\\
& \hspace{4cm}
= P\mleft(\mathbf{m}_n = d_j \mid  \ubar{r}(i,j, \mathbf{m})  \geq \bar{r}(i,j)\mright).
\end{split}
\label{eq:uniformpos}
\end{equation}
The scoring function $\phi$ is then defined as follows:
\begin{align}
 \phi(d_i,d_j,\mathbf{r}, \mathbf{m}) = 
 \begin{cases}
 0, &\ubar{r}(i,j, \mathbf{m}) < \bar{r}(i,j)
 \\
 \frac{-1}{P(\ubar{r}(i,j, \mathbf{m})  \geq \bar{r}(i,j))}, & d_i <_\mathbf{r} d_j
 \\
 \frac{1}{P(\ubar{r}(i,j, \mathbf{m})  \geq \bar{r}(i,j))}, & d_i >_\mathbf{r} d_j,
 \end{cases}
 \label{eq:scoringfunction}
\end{align}
indicating that a zero score is given if one of the documents appears above $\bar{r}(i,j)$.
Otherwise, the value of $\phi$ is positive or negative depending on whether the ranker $\mathbf{r}$ agrees with the inferred preference between $d_i$ and $d_j$. Furthermore, this score is inversely weighted by the probability $P(\ubar{r}(i,j, \mathbf{m})  \geq \bar{r}(i,j))$. Therefore, pairs that are less likely to appear below their threshold $\bar{r}(i,j)$ result in a higher score than for more commonly occuring pairs. Algorithm~\ref{alg:ppm-inference} displays how the inference of \ac{PPM} can be computed. The scoring function $\phi$ was carefully chosen to guarantee \emph{fidelity}, the remainder of this section will sketch the proof for \ac{PPM} meeting its requirements. 

The two requirements for \emph{fidelity} will be discussed in order:

\subsubsection*{Requirement~\ref{fidelity:unbias}} The first fidelity requirement states that under uncorrelated clicks the expected outcome should be zero. Consider the expected preference:
\begin{equation}
\begin{split}
E[P_{nm}]
= \sum_{d_i,d_j} \sum_{\mathbf{m}}  P(d_i >_c d_j \mid  \mathbf{m}) P(\mathbf{m}) 
 (&\phi(d_i,d_j,\mathbf{r}_n, \mathbf{m}) \\ & \, -  \phi(d_i,d_j,\mathbf{r}_m, \mathbf{m})).
 \end{split}
\label{eq:expectedoutcome}
\end{equation}
To see that $E[P_{nm}] = 0$ under uncorrelated clicks, take any multileaving $\mathbf{m}$ where $P(\mathbf{m}) > 0$ and $\phi(d_i,d_j,\mathbf{r}, \mathbf{m}) \not= 0$ with $\mathbf{m}_x = d_i$ and $\mathbf{m}_y = d_j$.  Then there is always a multileaved result list $\mathbf{m}'$ that is identical expect for swapping the two documents so that $\mathbf{m}'_x = d_j$ and $\mathbf{m}'_y = d_i$.
The scoring function only gives non-zero values if both documents appear below the threshold $\ubar{r}(i,j, \mathbf{m}) < \bar{r}(i,j)$ (Equation~\ref{eq:scoringfunction}). At this point the probability of each document appearing at any position is the same (Equation~\ref{eq:uniformpos}), thus the following holds:
\begin{align}
P(\mathbf{m}) &= P(\mathbf{m}'),\\
\phi(d_i,d_j,\mathbf{r}_n, \mathbf{m})  &= -\phi(d_j,d_i,\mathbf{r}_n, \mathbf{m}').
\end{align}
Finally, from the definition of uncorrelated clicks (Equation~\ref{eq:uncorrelated}) the following holds:
\begin{align}
P(d_i >_c d_j \mid  \mathbf{m}) &= P(d_j >_c d_i \mid  \mathbf{m}').
\end{align}
As a result, any document pair $(d_i,d_j)$ and multileaving $\mathbf{m}$ that affects the expected outcome is cancelled by the multileaving $\mathbf{m}'$. Therefore, we can conclude that $E[P_{nm}] = 0$ under uncorrelated clicks, and that \ac{PPM} meets the first requirement of \emph{fidelity}.

\subsubsection*{Requirement~\ref{fidelity:pareto}} The second \emph{fidelity} requirement states that under correlated clicks a ranker that Pareto dominates all other rankers should win the multileaved comparison. Therefore, the expected value for a Pareto dominating ranker $r_n$ should be:
\begin{align}
\forall m, n \not = m \rightarrow E[P_{nm}] > 0.
\end{align}
Take any other ranker $\mathbf{r}_m$ that is thus Pareto dominated by $\mathbf{r}_n$.
The proof for the first requirement shows that $E[P_{nm}]$ is not affected by any pair of documents $d_i,d_j$  with the same relevance label.
Furthermore, any pair on which $\mathbf{r}_n$ and $\mathbf{r}_m$ agree will not affect the expected outcome since:
\begin{equation}
(d_i >_{\mathbf{r}_n} d_j \leftrightarrow d_i >_{\mathbf{r}_m} d_j) \Rightarrow \phi(d_i,d_j,\mathbf{r}_n, \mathbf{m}) -  \phi(d_i,d_j,\mathbf{r}_m, \mathbf{m}) = 0.
\end{equation}
Then, for any relevant document $d_i$, consider the set of documents that $\mathbf{r}_n$ incorrectly prefers over $d_i$:
\begin{align}
A = \{d_j \mid  \neg \mathit{rel}(d_j) \land d_j >_{\mathbf{r}_n} d_i \}
\end{align}
and the set of documents that $\mathbf{r}_m$ incorrectly prefers over $d_i$ and places higher than where $\mathbf{r}_n$ places $d_i$:
\begin{align}
B = \{d_j \mid  \neg \mathit{rel}(d_j) \land  d_j >_{\mathbf{r}_m} d_i \land r(d_j, \mathbf{l}_m) <  r(d_i, \mathbf{l}_n)\}.
\end{align}
Since $\mathbf{r}_n$ Pareto dominates $\mathbf{r}_m$, it has the same or fewer incorrect preferences: $|A| \leq |B|$. Furthermore, for any document $d_j$ in either $A$ or $B$ the threshold of the pair $d_i,d_j$ is the same:
\begin{align}
\forall d_j \in A \cup B, \bar{r}(i,j) = r(d_i, \mathbf{l}_n).
\end{align}
Therefore, all pairs with documents from $A$ and $B$ will only get a non-zero value from $\phi$ if they both appear at or below $r(d_i, \mathbf{l}_n)$.
Then, using  Equation~\ref{eq:uniformpos} and the Bayes rule we see:
\begin{equation}
\begin{aligned}
\forall (d_j, d_l) \in  A \cup B, \quad
&
\frac{
P(\mathbf{m}_x = d_j, \ubar{r}(i,j, \mathbf{m})  \geq \bar{r}(i,j,\mathcal{R}))
}{
P(\ubar{r}(i,j, \mathbf{m})  \geq \bar{r}(i,j,\mathcal{R}))
}
\\
&{}=
\frac{
P(\mathbf{m}_x = d_l, \ubar{r}(i,l, \mathbf{m})  \geq \bar{r}(i,l,\mathcal{R}))
}{
P(\ubar{r}(i,l, \mathbf{m})  \geq \bar{r}(i,l,\mathcal{R}))
}.
\end{aligned}
\end{equation}
Similarly, the reweighing of $\phi$ ensures that every pair in $A$ and $B$ contributes the same to the expected outcome. Thus, if both rankers rank $d_i$ at the same position the following sum:
\begin{equation}
\begin{aligned}
 &\sum_{d_j \in A \cup B} \sum_{\mathbf{m}} P(\mathbf{m}) 
 \\
 & \hspace{2cm} \cdot \mleft[
 P(d_i >_c d_j \mid  \mathbf{m}) 
 (\phi(d_i,d_j,\mathbf{r}_n, \mathbf{m}) -  \phi(d_i,d_j,\mathbf{r}_m, \mathbf{m}))\mright.
 \\
& \hspace{2cm} \mleft.{}+ P(d_j >_c d_i \mid  \mathbf{m})
 (\phi(d_j,d_i,\mathbf{r}_n, \mathbf{m}) -  \phi(d_j,d_i,\mathbf{r}_m, \mathbf{m}))\mright]
 \end{aligned}
 \end{equation}
 will be zero if $|A| = |B|$ and positive if $|A| < |B|$ under correlated clicks. Moreover, since $\mathbf{r}_n$ Pareto dominates $\mathbf{r}_m$, there will be at least one document $d_j$ where:
 \begin{align}
 \exists d_i, \exists d_j, \mathit{rel}(d_i) \land \neg \mathit{rel}(d_j) \land r(d_i, \mathbf{l}_n) = r(d_j, \mathbf{l}_m).
 \end{align}
This means that the expected outcome (Equation~\ref{eq:expectedoutcome}) will always be positive under correlated clicks, i.e., $E[P_{nm}] > 0$, for a Pareto dominating ranker $\mathbf{r}_n$ and any other ranker $\mathbf{r}_m$.

In summary, we have introduced a new multileaved comparison method, \ac{PPM}.
Furthermore, we answered \ref{rq:theory} in the affirmative since we have shown it to be \emph{considerate} and to have \emph{fidelity}. We further note that \ac{PPM} has polynomial complexity: to calculate $P(\ubar{r}(i,j, \mathbf{m})  \geq \bar{r}(i,j))$ only the size of the choice sets $\Omega$ and the first positions at which $d_i$ and $d_j$ occur in $\Omega$ have to be known.

\section{Experiments}
\label{sec:multileave:experiments}

In order to answer Research Question~\ref{rq:sensitive} posed in Section~\ref{sec:multileave:intro} several experiments were performed to evaluate the \emph{sensitivity} of \ac{PPM}. The methodology of evaluation follows previous work on interleaved and multileaved comparison methods \cite{Schuth2014a, hofmann2011probabilistic, schuth2015probabilistic, hofmann2013fidelity, brost2016improved} and is completely reproducible.%

\subsection{Ranker selection and comparisons}
In order to make fair comparisons between rankers, we will use the \ac{LTR} datasets described in Section~\ref{sec:datasets} below. From the feature representations in these datasets a handpicked set of features was taken and used as ranking models. To match the real-world scenario as best as possible this selection consists of features that are known to perform well as relevance signals independently. This selection includes but is not limited to: BM25, LMIR.JM, Sitemap, PageRank, HITS and TF.IDF~\citep{Schuth2014a}.

Then the ground-truth comparisons between the rankers are based on their NDCG scores computed on a held-out test set, resulting in a binary preference matrix $P_{nm}$ for all ranker pairs $(\mathbf{r}_n,\mathbf{r}_m)$:
\begin{align}
P_{nm} = NDCG(\mathbf{r}_n) - NDCG(\mathbf{r}_m).
\end{align}
The metric by which multileaved comparison methods are compared is the \emph{binary error}, $E_{bin}$~ \citep{Schuth2014a, brost2016improved, schuth2015probabilistic}. Let $\hat{P}_{nm}$ be the preference inferred by a multileaved comparison method; then the error is:
\begin{align}
E_{bin} = \frac{\sum_{n,m\in\mathcal{R} \land n \not = m}\sgn(\hat{P}_{nm}) \not = sgn(P_{nm})}{|\mathcal{R}| \times (|\mathcal{R}| - 1)}.
\end{align}

\subsection{Datasets}
\label{sec:datasets}

Our experiments are performed over ten publicly available \acs{LTR} datasets with varying sizes and representing different search tasks. Each dataset consists of a set of queries and a set of corresponding documents for every query. While queries are represented only by their identifiers, feature representations and relevance labels are available for every document-query pair. Relevance labels are graded differently by the datasets depending on the task they model, for instance, navigational datasets have binary labels for not relevant (0), and relevant (1), whereas  most informational tasks have labels ranging from not relevant (0), to perfect relevancy (4).
Every dataset consists of five folds, each dividing the dataset in different training, validation and test partitions.

The first publicly available \ac{LTR} datasets are distributed as LETOR 3.0 and 4.0~\cite{letor}; they use representations of 45, 46, or 64 features encoding ranking models such as TF.IDF, BM25, Language Modelling, PageRank, and HITS on different parts of the documents. The datasets in LETOR are divided by their tasks, most of which come from the TREC Web Tracks between 2003 and 2008 \cite{craswell2003overview,clarke2009overview}.
\emph{HP2003, HP2004, NP2003, NP2004, TD2003} and \emph{TD2004} each contain between 50 and 150 queries and 1,000 judged documents per query and use binary relevance labels. Due to their similarity we report average results over these six datasets noted as \emph{LETOR 3.0}.
The \emph{OH\-SU\-MED} dataset is based on the query log of the search engine on the MedLine abstract database, and contains 106 queries. The last two datasets, \emph{MQ2007} and \emph{MQ2008}, were based on the Million Query Track \cite{allan2007million} and consist of 1,700 and 800 queries, respectively, but have far fewer assessed documents per query.

The \emph{MLSR-WEB10K} dataset \cite{qin2013introducing} consists of 10,000 queries obtained from a retired labelling set of a commercial web search engine. %
The datasets uses 136 features to represent its documents, each query has around 125 assessed documents.

Finally, we note that there are more \ac{LTR} datasets that are publicly available \cite{Chapelle2011, dato2016fast}, but there is no public information about their feature representations. Therefore, they are unfit for our evaluation as no selection of well performing ranking features can be made.

\subsection{Simulating user behavior}
\label{sec:experiments:users}

\begin{table}[tb]
\caption{Instantiations of Cascading Click Models~\cite{guo09:efficient} as used for simulating user behaviour in experiments.}
\label{tab:multileave:clickmodels}
\centering
\begin{tabular}{ @{} l c c c c c c c c c c }
\toprule
& \multicolumn{5}{c}{\small $P(\mathit{click}=1\mid R)$} & \multicolumn{5}{c}{\small $P(\mathit{stop}=1\mid R)$} \\
\cmidrule(r){2-6}\cmidrule(l){7-11}
\small $R$ & \small \emph{$ 0$} & \small \emph{$ 1$}  & \small \emph{$ 2$} & \small \emph{$ 3$} & \small \emph{$ 4$}
 & \small \emph{$0$} & \small \emph{$ 1$} & \small \emph{$ 2$} & \small \emph{$ 3$} & \small \emph{$ 4$} \\
\midrule
\small \emph{perfect} & \small 0.0 & \small 0.2 & \small 0.4 & \small 0.8 & \small 1.0 & \small 0.0 & \small 0.0 & \small 0.0 & \small 0.0 & \small 0.0 \\
\small \emph{navigational} & \small ~~0.05 & \small 0.3 & \small 0.5 & \small 0.7 & \small ~~0.95 & \small 0.2 & \small 0.3 & \small 0.5 & \small 0.7 & \small 0.9 \\
\small \emph{informational} & \small 0.4 & \small 0.6 & \small 0.7 & \small 0.8 & \small 0.9 & \small 0.1 & \small 0.2 & \small 0.3 & \small 0.4 & \small 0.5 \\
\bottomrule
\end{tabular}
\end{table}

While experiments using real users are preferred \cite{chuklin2015comparative, chapelle2012large, kharitonov2015generalized, yue2010learning}, most researchers do not have access to search engines. As a result the most common way of comparing online evaluation methods is by using simulated user behaviour~\citep{Schuth2014a, hofmann2011probabilistic, schuth2015probabilistic, hofmann2013fidelity, brost2016improved}. Such simulated experiments show the performance of multileaved comparison methods when user behaviour adheres to a few simple assumptions. 

Our experiments follow the precedent set by previous work on online evaluation:
First, a user issues a query simulated by uniformly sampling a query from the static dataset. Subsequently, the multileaved comparison method constructs the multileaved result list of documents to display. The behavior of the user after receiving this list is simulated using a \emph{cascade click model}~\cite{chuklin-click-2015,guo09:efficient}. This model assumes a user to examine documents in their displayed order. For each document that is considered the user decides whether it warrants a click, which is modeled as the conditional probability $P(click=1\mid R)$ where $R$ is the relevance label provided by the dataset. Accordingly, \emph{cascade click model} instantiations increase the probability of a click with the degree of the relevance label. After the user has clicked on a document their information need may be satisfied; otherwise they continue considering the remaining documents. The probability of the user not examining more documents after clicking is modeled as $P(stop=1\mid R)$,
where it is more likely that the user is satisfied from a very relevant document. At each impression we display $k=10$ documents to the user.

Table~\ref{tab:multileave:clickmodels} lists the three instantiations of cascade click models that we use for this chapter.
The first models a \emph{perfect} user who considers every document and clicks on all relevant documents and nothing else.
Secondly, the \emph{navigational} instantiation models a user performing a navigational task who is mostly looking for a single highly relevant document. Finally, the \emph{informational} instantiation models a user without a very specific information need who typically clicks on multiple documents.
These three models have increasing levels of noise, as the behavior of each depends less on the relevance labels of the displayed documents.

\subsection{Experimental runs}
\label{sec:experiments:runs}

Each experimental run consists of applying a multileaved comparison method to a sequence of $T=10,000$ simulated user impressions. To see the effect of the number of rankers in a comparison, our runs consider $|\mathcal{R}| = 5$, $|\mathcal{R}| = 15$, and $|\mathcal{R}| = 40$. However only the \emph{MSLR} dataset contains $|\mathcal{R}| = 40$ rankers. Every run is repeated for every click model to see how different behaviours affect performance. For statistical significance every run is repeated 25 times per fold, which means that 125 runs are conducted for every dataset and click model pair. Since our evaluation covers five multileaved comparison methods, we generate over 393 million impressions in total. We test for statistical significant differences using a two tailed t-test. Note that the results reported on the LETOR 3.0 data are averaged over six datasets and thus span 750 runs per datapoint.

The parameters of the baselines are selected based on previous work on the same datasets; for \ac{OM} the sample size $\eta=10$ was chosen as reported by \citet{Schuth2014a}; for \ac{PM} the degree $\tau=3.0$ was chosen according to \citet{hofmann2011probabilistic} and the sample size $\eta=10,000$ in accordance with \citet{schuth2015probabilistic}.

\begin{figure}[tb]
\centering
\includegraphics[width=\textwidth]{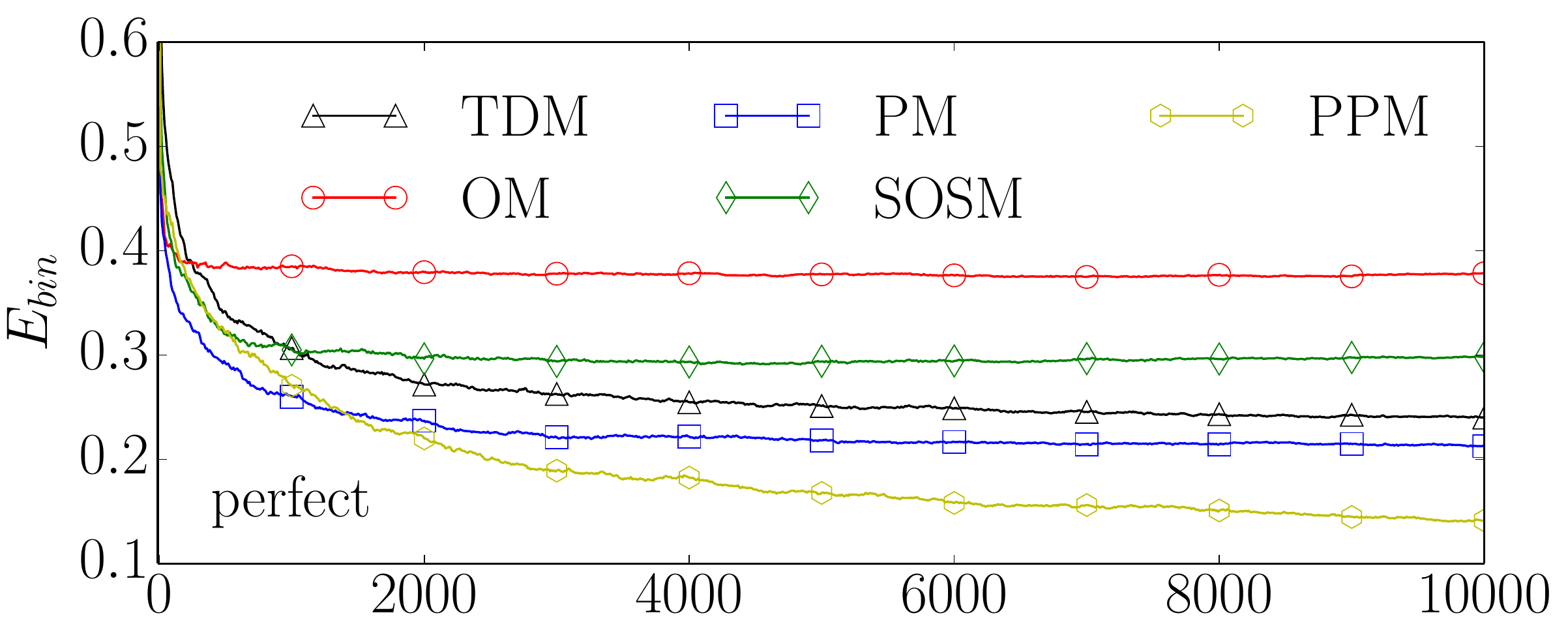}
\includegraphics[width=\textwidth]{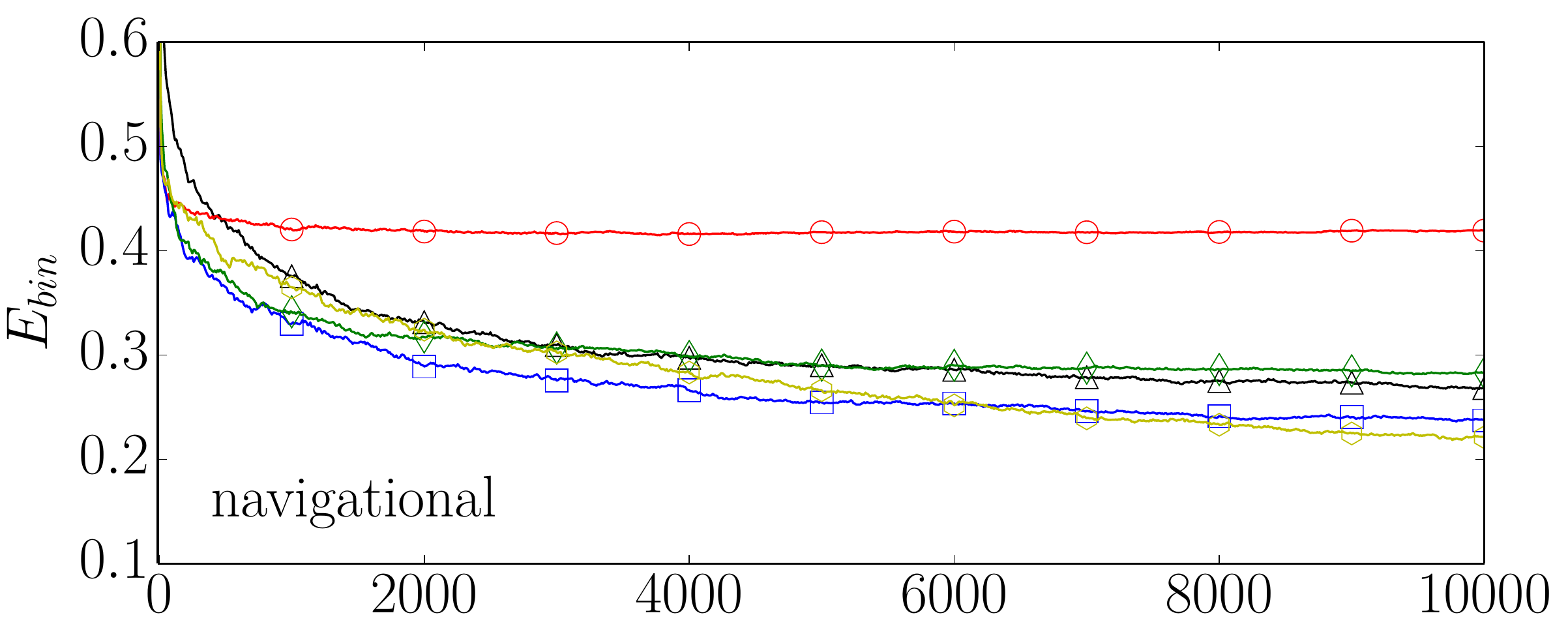}
\includegraphics[width=\textwidth]{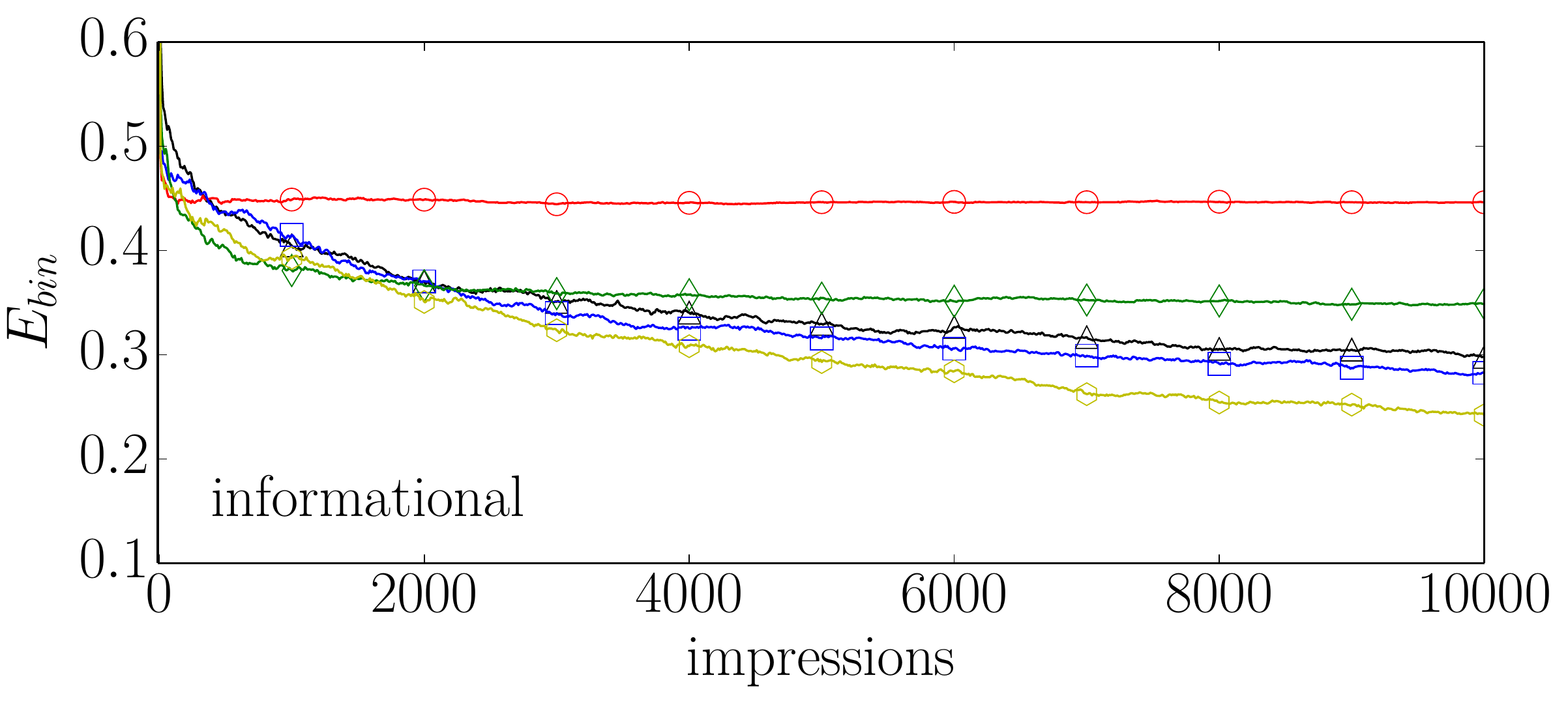}
\caption{The binary error of different multileaved comparison methods on comparisons of $|\mathcal{R}| = 15$ rankers on the \emph{MSLR-WEB10k} dataset.}
\label{fig:binaryerror}
\end{figure}

\begin{table*}[tb]
\centering
\caption{
The binary error $E_{bin}$ of all multileaved comparison methods after 10,000 impressions on comparisons of $|\mathcal{R}| = 5$ rankers. Average per dataset and click model; standard deviation in brackets. The best performance per click model and dataset is noted in bold, statistically significant improvements of \acs{PPM} are noted by \dubbelneer $(p < 0.01)$ and \enkelneer $(p < 0.05)$ and losses by \dubbelop $ $ and \enkelop $ $ respectively or $\circ$ for no difference, per baseline.
}

\begin{tabularx}{\textwidth}{ X  c c c c l @{~}c@{}c@{}c@{}c }
\toprule
 & \multicolumn{1}{l}{ \small \textbf{\acs{TDM}}}  & \multicolumn{1}{l}{ \small \textbf{\acs{OM}}}  & \multicolumn{1}{l}{ \small \textbf{\acs{PM}}}  & \multicolumn{1}{l}{ \small \textbf{SOSM}}  & \multicolumn{1}{l}{ \small \textbf{\acs{PPM}}} \\
\midrule
& \multicolumn{5}{c}{\textit{perfect}} \\
\midrule
LETOR 3.0 & {\small 0.16 {\tiny( 0.13)}} & \textbf {\small 0.14 {\tiny( 0.15)}} & {\small 0.15 {\tiny( 0.15)}} & {\small 0.16 {\tiny( 0.15)}} & \textbf {\small 0.14 {\tiny( 0.13)}} &{$\circ$}& {$\circ$} & {$\circ$} & {$\circ$} 
\\
MQ2007 & {\small 0.19 {\tiny( 0.16)}} & {\small 0.22 {\tiny( 0.18)}} & \textbf {\small 0.16 {\tiny( 0.14)}} & {\small 0.18 {\tiny( 0.16)}} & \textbf {\small 0.16 {\tiny( 0.14)}} & {$\circ$} & {\tiny \dubbelneer} & {$\circ$} & {$\circ$} 
\\
MQ2008 & {\small 0.15 {\tiny( 0.12)}} & {\small 0.19 {\tiny( 0.14)}} & {\small 0.16 {\tiny( 0.12)}} & {\small 0.18 {\tiny( 0.15)}} & \textbf {\small 0.14 {\tiny( 0.12)}} & {$\circ$} & {\tiny \dubbelneer} & {$\circ$} & {\tiny \enkelneer} 
\\
MSLR-WEB10k & {\small 0.23 {\tiny( 0.13)}} & {\small 0.27 {\tiny( 0.17)}} & {\small 0.20 {\tiny( 0.14)}} & {\small 0.25 {\tiny( 0.18)}} & \textbf {\small 0.14 {\tiny( 0.13)}} & {\tiny \dubbelneer} & {\tiny \dubbelneer} & {\tiny \dubbelneer} & {\tiny \dubbelneer} 
\\
OHSUMED & {\small 0.14 {\tiny( 0.12)}} & {\small 0.19 {\tiny( 0.15)}} & \textbf {\small 0.11 {\tiny( 0.09)}} & \textbf {\small 0.11 {\tiny( 0.10)}} & \textbf {\small 0.11 {\tiny( 0.10)}} & {\tiny \dubbelneer} & {\tiny \dubbelneer} & {$\circ$} & {$\circ$} 
\\
\midrule
& \multicolumn{5}{c}{\textit{navigational}} \\
\midrule
LETOR 3.0 & {\small 0.16 {\tiny( 0.13)}} & \textbf {\small 0.15 {\tiny( 0.15)}} & \textbf {\small 0.15 {\tiny( 0.14)}} & {\small 0.17 {\tiny( 0.15)}} & {\small 0.16 {\tiny( 0.14)}} & {$\circ$} & {$\circ$} & {$\circ$} & {$\circ$} 
\\
MQ2007 & {\small 0.21 {\tiny( 0.17)}} & {\small 0.33 {\tiny( 0.21)}} & {\small 0.18 {\tiny( 0.12)}} & {\small 0.29 {\tiny( 0.23)}} & \textbf {\small 0.17 {\tiny( 0.14)}} & {$\circ$} & {\tiny \dubbelneer} & {$\circ$} & {\tiny \dubbelneer} 
\\
MQ2008 & {\small 0.17 {\tiny( 0.14)}} & {\small 0.21 {\tiny( 0.20)}} & {\small 0.17 {\tiny( 0.15)}} & {\small 0.23 {\tiny( 0.18)}} & \textbf {\small 0.15 {\tiny( 0.13)}} & {$\circ$} & {\tiny \dubbelneer} & {$\circ$} & {\tiny \dubbelneer} 
\\
MSLR-WEB10k & {\small 0.24 {\tiny( 0.14)}} & {\small 0.32 {\tiny( 0.20)}} & {\small 0.24 {\tiny( 0.17)}} & {\small 0.31 {\tiny( 0.19)}} & \textbf {\small 0.20 {\tiny( 0.15)}} & {\tiny \enkelneer} & {\tiny \dubbelneer} & {\tiny \enkelneer} & {\tiny \dubbelneer} \\
OHSUMED & \textbf {\small 0.12 {\tiny( 0.11)}} & {\small 0.27 {\tiny( 0.19)}} & {\small 0.14 {\tiny( 0.12)}} & {\small 0.23 {\tiny( 0.17)}} & {\small 0.13 {\tiny( 0.12)}} & {$\circ$} & {\tiny \dubbelneer} & {$\circ$} & {\tiny \dubbelneer} 
\\
\midrule
& \multicolumn{5}{c}{\textit{informational}} \\
\midrule
LETOR 3.0 & {\small 0.16 {\tiny( 0.14)}} & {\small 0.22 {\tiny( 0.19)}} & \textbf {\small 0.14 {\tiny( 0.11)}} & {\small 0.17 {\tiny( 0.15)}} & {\small 0.15 {\tiny( 0.13)}} & {$\circ$} & {\tiny \dubbelneer} & {$\circ$} & {$\circ$} 
\\
MQ2007 & {\small 0.23 {\tiny( 0.15)}} & {\small 0.41 {\tiny( 0.26)}} & {\small 0.23 {\tiny( 0.15)}} & {\small 0.37 {\tiny( 0.23)}} & \textbf {\small 0.17 {\tiny( 0.16)}} & {\tiny \dubbelneer} & {\tiny \dubbelneer} & {\tiny \dubbelneer} & {\tiny \dubbelneer} 
\\
MQ2008 & {\small 0.18 {\tiny( 0.13)}} & {\small 0.28 {\tiny( 0.19)}} & {\small 0.18 {\tiny( 0.16)}} & {\small 0.23 {\tiny( 0.18)}} & \textbf {\small 0.17 {\tiny( 0.14)}} & {$\circ$} & {\tiny \dubbelneer} & {$\circ$} & {\tiny \dubbelneer} 
\\
MSLR-WEB10k & {\small 0.27 {\tiny( 0.18)}} & {\small 0.42 {\tiny( 0.23)}} & {\small 0.24 {\tiny( 0.17)}} & {\small 0.36 {\tiny( 0.20)}} & \textbf {\small 0.19 {\tiny( 0.17)}} & {\tiny \dubbelneer} & {\tiny \dubbelneer} & {\tiny \enkelneer} & {\tiny \dubbelneer} 
\\
OHSUMED & {\small 0.13 {\tiny( 0.10)}} & {\small 0.37 {\tiny( 0.24)}} & \textbf {\small 0.12 {\tiny( 0.11)}} & {\small 0.27 {\tiny( 0.21)}} & \textbf {\small 0.12 {\tiny( 0.10)}} & {$\circ$} & {\tiny \dubbelneer} & {$\circ$} & {\tiny \dubbelneer} 
\\
\bottomrule
\end{tabularx}

\label{tab:5rankers}
\end{table*}

\begin{table*}[tb]
\centering
\caption{
The binary error $E_{bin}$ after 10,000 impressions on comparisons of $|\mathcal{R}| = 15$ rankers. Notation is identical to Table~\ref{tab:5rankers}.
}

\begin{tabularx}{\textwidth}{ X  c c c c l @{~}c@{}c@{}c@{}c }
\toprule
 & \multicolumn{1}{l}{ \small \textbf{\acs{TDM}}}  & \multicolumn{1}{l}{ \small \textbf{\acs{OM}}}  & \multicolumn{1}{l}{ \small \textbf{\acs{PM}}}  & \multicolumn{1}{l}{ \small \textbf{SOSM}}  & \multicolumn{1}{l}{ \small \textbf{\acs{PPM}}} \\
\midrule
& \multicolumn{5}{c}{\textit{perfect}} \\
\midrule
LETOR 3.0 & {\small 0.16 {\tiny( 0.07)}} & \textbf {\small 0.14 {\tiny( 0.08)}} & {\small 0.15 {\tiny( 0.07)}} & {\small 0.17 {\tiny( 0.08)}} & {\small 0.16 {\tiny( 0.08)}} & {$\circ$} & {$\circ$} & {$\circ$} & {$\circ$} 
\\
MQ2007 & {\small 0.20 {\tiny( 0.07)}} & {\small 0.25 {\tiny( 0.09)}} & {\small 0.18 {\tiny( 0.06)}} & \textbf {\small 0.15 {\tiny( 0.07)}} & {\small 0.19 {\tiny( 0.07)}} & {$\circ$} & {\tiny \dubbelneer} & {$\circ$} & {\tiny \dubbelop} 
\\
MQ2008 & {\small 0.16 {\tiny( 0.05)}} & {\small 0.17 {\tiny( 0.05)}} & {\small 0.16 {\tiny( 0.05)}} & \textbf {\small 0.15 {\tiny( 0.07)}} & \textbf {\small 0.15 {\tiny( 0.06)}} & {$\circ$} & {\tiny \enkelneer} & {$\circ$} & {$\circ$} 
\\
MSLR-WEB10k & {\small 0.24 {\tiny( 0.07)}} & {\small 0.38 {\tiny( 0.11)}} & {\small 0.21 {\tiny( 0.06)}} & {\small 0.30 {\tiny( 0.08)}} & \textbf {\small 0.14 {\tiny( 0.05)}} & {\tiny \dubbelneer} & {\tiny \dubbelneer} & {\tiny \dubbelneer} & {\tiny \dubbelneer} 
\\
OHSUMED & {\small 0.14 {\tiny( 0.03)}} & {\small 0.18 {\tiny( 0.05)}} & {\small 0.13 {\tiny( 0.03)}} & {\small 0.13 {\tiny( 0.03)}} & \textbf {\small 0.11 {\tiny( 0.03)}} & {\tiny \dubbelneer} & {\tiny \dubbelneer} & {\tiny \dubbelneer} & {\tiny \dubbelneer} 
\\
\midrule
& \multicolumn{5}{c}{\textit{navigational}} \\
\midrule
LETOR 3.0 & {\small 0.16 {\tiny( 0.08)}} & {\small 0.16 {\tiny( 0.09)}} & \textbf {\small 0.15 {\tiny( 0.08)}} & {\small 0.17 {\tiny( 0.08)}} & {\small 0.17 {\tiny( 0.08)}} & {$\circ$} & {$\circ$} & {$\circ$} & {$\circ$} 
\\
MQ2007 & {\small 0.24 {\tiny( 0.07)}} & {\small 0.33 {\tiny( 0.11)}} & \textbf {\small 0.20 {\tiny( 0.07)}} & {\small 0.22 {\tiny( 0.08)}} & {\small 0.21 {\tiny( 0.08)}} & {\tiny \dubbelneer} & {\tiny \dubbelneer} & {$\circ$} & {$\circ$} 
\\
MQ2008 & {\small 0.19 {\tiny( 0.05)}} & {\small 0.21 {\tiny( 0.07)}} & \textbf {\small 0.16 {\tiny( 0.05)}} & {\small 0.18 {\tiny( 0.06)}} & \textbf {\small 0.16 {\tiny( 0.06)}} & {\tiny \dubbelneer} & {\tiny \dubbelneer} & {$\circ$} & {\tiny \dubbelneer} 
\\
MSLR-WEB10k & {\small 0.27 {\tiny( 0.07)}} & {\small 0.42 {\tiny( 0.12)}} & {\small 0.24 {\tiny( 0.06)}} & {\small 0.28 {\tiny( 0.09)}} & \textbf {\small 0.22 {\tiny( 0.08)}} & {\tiny \dubbelneer} & {\tiny \dubbelneer} & {$\circ$} & {\tiny \dubbelneer} 
\\
OHSUMED & {\small 0.14 {\tiny( 0.04)}} & {\small 0.25 {\tiny( 0.07)}} & \textbf {\small 0.13 {\tiny( 0.03)}} & {\small 0.18 {\tiny( 0.06)}} & \textbf {\small 0.13 {\tiny( 0.04)}} & {$\circ$} & {\tiny \dubbelneer} & {$\circ$} & {\tiny \dubbelneer} 
\\
\midrule
& \multicolumn{5}{c}{\textit{informational}} \\
\midrule
LETOR 3.0 & {\small 0.18 {\tiny( 0.07)}} & {\small 0.20 {\tiny( 0.11)}} & {\small 0.17 {\tiny( 0.08)}} & \textbf {\small 0.16 {\tiny( 0.08)}} & {\small 0.18 {\tiny( 0.08)}} & {$\circ$} & {\tiny \enkelneer} & {$\circ$} & {$\circ$} 
\\
MQ2007 & {\small 0.28 {\tiny( 0.07)}} & {\small 0.42 {\tiny( 0.14)}} & {\small 0.26 {\tiny( 0.08)}} & {\small 0.28 {\tiny( 0.11)}} & \textbf {\small 0.21 {\tiny( 0.08)}} & {\tiny \dubbelneer} & {\tiny \dubbelneer} & {\tiny \dubbelneer} & {\tiny \dubbelneer} 
\\
MQ2008 & {\small 0.23 {\tiny( 0.06)}} & {\small 0.26 {\tiny( 0.11)}} & {\small 0.18 {\tiny( 0.06)}} & {\small 0.20 {\tiny( 0.06)}} & \textbf {\small 0.15 {\tiny( 0.06)}} & {\tiny \dubbelneer} & {\tiny \dubbelneer} & {\tiny \dubbelneer} & {\tiny \dubbelneer} 
\\
MSLR-WEB10k & {\small 0.30 {\tiny( 0.09)}} & {\small 0.45 {\tiny( 0.12)}} & {\small 0.28 {\tiny( 0.08)}} & {\small 0.35 {\tiny( 0.11)}} & \textbf {\small 0.24 {\tiny( 0.08)}} & {\tiny \dubbelneer} & {\tiny \dubbelneer} & {\tiny \dubbelneer} & {\tiny \dubbelneer} 
\\
OHSUMED & {\small 0.15 {\tiny( 0.03)}} & {\small 0.42 {\tiny( 0.09)}} & \textbf {\small 0.13 {\tiny( 0.03)}} & {\small 0.25 {\tiny( 0.06)}} & \textbf {\small 0.13 {\tiny( 0.04)}} & {\tiny \dubbelneer} & {\tiny \dubbelneer} & {$\circ$} & {\tiny \dubbelneer} 
\\
\bottomrule
\end{tabularx}
\label{tab:15rankers}
\end{table*}

\begin{table*}[t]
\centering
\caption{The binary error $E_{bin}$ of all multileaved comparison methods after 10,000 impressions on comparisons of $|\mathcal{R}| = 40$ rankers.
Averaged over the \emph{MSLR-WEB10k}, notation is identical to Table~\ref{tab:5rankers}.
}

\begin{tabularx}{\textwidth}{ X  c c c c l @{~}c@{}c@{}c@{}c }
\toprule
 & \multicolumn{1}{l}{ \small \textbf{\acs{TDM}}}  & \multicolumn{1}{l}{ \small \textbf{\acs{OM}}}  & \multicolumn{1}{l}{ \small \textbf{\acs{PM}}}  & \multicolumn{1}{l}{ \small \textbf{SOSM}}  & \multicolumn{1}{l}{ \small \textbf{\acs{PPM}}} 
 \\
 \midrule
\textit{perfect} & {\small 0.26 {\tiny( 0.03)}} & {\small 0.43 {\tiny( 0.02)}} & {\small 0.23 {\tiny( 0.02)}} & {\small 0.31 {\tiny( 0.02)}} & \textbf {\small 0.18 {\tiny( 0.04)}} & {\tiny \dubbelneer} & {\tiny \dubbelneer} & {\tiny \dubbelneer} & {\tiny \dubbelneer} 
\\
\textit{navigational} & {\small 0.31 {\tiny( 0.03)}} & {\small 0.44 {\tiny( 0.01)}} & {\small 0.25 {\tiny( 0.03)}} & \textbf {\small 0.23 {\tiny( 0.03)}} & {\small 0.24 {\tiny( 0.05)}} & {\tiny \dubbelneer} & {\tiny \dubbelneer} & {\tiny \dubbelneer} & {$\circ$} 
\\
\textit{informational} & {\small 0.37 {\tiny( 0.04)}} & {\small 0.47 {\tiny( 0.01)}} & {\small 0.30 {\tiny( 0.05)}} & {\small 0.34 {\tiny( 0.05)}} & \textbf {\small 0.27 {\tiny( 0.06)}} & {\tiny \dubbelneer} & {\tiny \dubbelneer} & {\tiny \dubbelneer} & {\tiny \dubbelneer} 
\\
\bottomrule
\end{tabularx}
\label{tab:40rankers}
\end{table*}

\section{Results and Analysis}
\label{sec:multileave:results}

We answer Research Question~\ref{rq:sensitive} by evaluating the \emph{sensitivity} of \ac{PPM} based on the results of the experiments detailed in Section~\ref{sec:multileave:experiments}.

The results of the experiments with a smaller number of rankers: $|\mathcal{R}| = 5$ are displayed in Table~\ref{tab:5rankers}. Here we see that after 10,000 impressions \ac{PPM} has a significantly lower error on many datasets and at all levels of interaction noise. Furthermore, for $|\mathcal{R}| = 5$ there are no significant losses in performance under any circumstances.

When $|\mathcal{R}| = 15$ as displayed in Table~\ref{tab:15rankers}, we see a single case where \ac{PPM} performs worse than a previous method: on \emph{MQ2007} under the \emph{perfect} click model \ac{SOSM} performs significantly better than \ac{PPM}. However, on the same dataset \ac{PPM} performs significantly better under the \emph{informational} click model. Furthermore, there are more significant improvements for $|\mathcal{R}| = 15$ than when the number of rankers is the smaller $|\mathcal{R}| = 5$.

Finally, when the number of rankers in the comparison is increased to $|\mathcal{R}| = 40$ as displayed in Table~\ref{tab:40rankers}, \ac{PPM} still provides significant improvements.

We conclude that \ac{PPM}, in the experimental conditions that we considered, provides a performance that is at least as good as any existing method. Moreover, \ac{PPM} is robust to noise as we see more significant improvements under click-models with increased noise. Furthermore, since improvements are found with the number of rankers $|\mathcal{R}|$ varying from $5$ to $40$, we conclude that \ac{PPM} is scalable in the comparison size.
Additionally, the dataset type seems to affect the relative performance of the methods. For instance, on \emph{LETOR 3.0} little significant differences are found, whereas the \emph{MSLR} dataset displays the most significant improvements. This suggests that on more artificial data, i.e., the smaller datasets simulating navigational tasks, the differences are fewer, while on the other hand on large commercial data the preference for \ac{PPM} increases further.
Lastly, Figure~\ref{fig:binaryerror} displays the binary error of all multileaved comparison methods on the \emph{MSLR} dataset over 10,000 impressions. Under the \emph{perfect} click model we see that all of the previous methods display converging behavior around 3,000 impressions. In contrast, the error of \ac{PPM} continues to drop throughout the experiment. The fact that the existing methods converge at a certain level of error in the absence of click-noise is indicative that they are lacking in \emph{sensitivity}.

Overall, our results show that \ac{PPM} reaches a lower level of error than previous methods seem to be capable of. This feat can be observed on a diverse set of datasets, various levels of interaction noise and for different comparison sizes. To answer Research Question~\ref{rq:sensitive}: from our results we conclude that \ac{PPM} is more sensitive than any existing multileaved comparison method.

\section{Conclusion}
\label{sec:multileave:conclusion}

In this chapter we have examined multileaved comparison methods for evaluating ranking models online.

We have presented a new multileaved comparison method, \acf{PPM}, that is more sensitive to user preferences than existing methods. Additionally, we have proposed a theoretical framework for assessing multileaved comparison methods, with \emph{considerateness} and \emph{fidelity} as the two key requirements. We have shown that no method published prior to \ac{PPM} has \emph{fidelity} without lacking \emph{considerateness}. In other words, prior to \ac{PPM} no multileaved comparison method has been able to infer correct preferences without degrading the search experience of the user. In contrast, we prove that \ac{PPM} has both \emph{considerateness} and \emph{fidelity}, thus it is guaranteed to correctly identify a Pareto dominating ranker without altering the search experience considerably. Furthermore, our experimental results spanning ten datasets show that \ac{PPM} is more sensitive than existing methods, meaning that it can reach a lower level of error than any previous method. Moreover, our experiments show that the most significant improvements are obtained on the more complex datasets, i.e., larger datasets with more grades of relevance. Additionally, similar improvements are observed under different levels of noise and numbers of rankers in the comparison, indicating that \ac{PPM} is robust to interaction noise and scalable to large comparisons. As an extra benefit, the computational complexity of \ac{PPM} is polynomial and, unlike previous methods, does not depend on sampling or approximations.

With these findings we can answer the thesis research question \ref{thesisrq:multileaving} positively: with the introduction of our novel \acf{PPM} method the effectiveness of online evaluation scales to large comparisons.

The theoretical framework that we have introduced allows future research into multileaved comparison methods to guarantee improvements that generalize better than empirical results alone. In turn, properties like \emph{considerateness} can further stimulate the adoption of multileaved comparison methods in production environments; future work with real-world users may yield further insights into the effectiveness of the multileaving paradigm.
Rich interaction data enables the introduction of multileaved comparison methods that consider more than just clicks, as has been done for interleaving methods \cite{kharitonov2015generalized}. These methods could be extended to consider other signals such as \emph{dwell-time} or \emph{the order of clicks in an impression}, etc.

Furthermore, the field of \ac{OLTR} has depended on online evaluation from its inception \cite{yue2009interactively}. The introduction of multileaving and subsequent novel multileaved comparison methods brought substantial improvements to both fields \cite{schuth2016mgd, oosterhuis2016probabilistic}. Similarly, \ac{PPM} and any future extensions are likely to  benefit the \ac{OLTR} field too.

Finally, while the theoretical and empirical improvements of \ac{PPM} are convincing, future work should investigate whether the sensitivity can be made even stronger. For instance, it is possible to have clicks from which no preferences between rankers can be inferred. Can we devise a method that avoids such situations as much as possible without introducing any form of bias, thus increasing the sensitivity even further while maintaining theoretical guarantees?

In Chapter~\ref{chapter:06-onlinecountereval} we will take another look at online ranker evaluation and contrast it with counterfactual evaluation.
We will see that existing interleaving methods (and by extension some multileaving methods) are biased w.r.t.\ the definition of position bias common in counterfactual evaluation.
The novel method introduced in Chapter~\ref{chapter:06-onlinecountereval} combines aspects of counterfactual and online ranker evaluation, creating a method with strong theoretical guarantees while also being very effective.

Furthermore, similar to this chapter, Chapter~\ref{chapter:02-pdgd} will look at whether a pairwise \ac{LTR} method is suitable for online \ac{LTR}.
While different from \ac{PPM}, the method introduced in Chapter~\ref{chapter:02-pdgd} also infers pairwise preferences between documents, and weights inferred preferences to account for position bias.

\begin{subappendices}

\section{Notation Reference for Chapter~\ref{chapter:01-online-evaluation}}
\label{notation:01-online-evaluation}

\begin{center}
\begin{tabular}{l l}
 \toprule
\bf Notation  & \bf Description \\
\midrule
$q$ & a user-issued query \\
$T$ & the total number of interactions \\
$\mathbf{r}_i$ & an individual ranker a.k.a. a single ranking system or ranking model \\
$\mathcal{R}$ & a set of rankers to compare \\
$\mathbf{l}_i$ & a ranking generated by ranker $\mathbf{r}_i$ \\
$\mathbf{m}$ & a multileaved result list \\
$k$ & the length of the multileaved result lists \\
$\mathbf{c}$ & a vector indicating clicks on a displayed multileaved result list \\
$P$ & a preference matrix to store inferred preferences between rankers \\
$r(d, \mathbf{l}_i)$ & the rank at which ranker $\mathbf{r}_i$ places document $d$ \\
\bottomrule
\end{tabular}
\end{center}
\end{subappendices}

\chapter{Differentiable Online Learning to Rank}
\label{chapter:02-pdgd}

\newcommand{\OurMethod}{PDGD}
\newcommand{\OtherDocuments}{\{\ldots\}}

\footnote[]{This chapter was published as~\citep{oosterhuis2018differentiable}.
Appendix~\ref{notation:02-pdgd} gives a reference for the notation used in this chapter.
}

\acf{OLTR} methods optimize rankers based on direct interaction with users.
State-of-the-art \acs{OLTR} methods rely on online evaluation and sampling model variants, they were designed specifically for linear models. 
Their approaches do not extend well to non-linear models such as neural networks.

To address this limitation, this chapter will consider the thesis research question:
\begin{itemize}
\item[\ref{thesisrq:pdgd}] \emph{Is online \ac{LTR} possible without relying on model-sampling and online evaluation?}
\end{itemize}
\noindent
We introduce an entirely novel approach to \acs{OLTR} that constructs a weighted differentiable pairwise loss after each interaction: \acf{\OurMethod}.
\ac{\OurMethod} breaks away from the traditional approach that relies on interleaving or multileaving and extensive sampling of models to estimate gradients.
Instead, its gradient is based on inferring preferences between document pairs from user clicks and can optimize any differentiable model.
We prove that the gradient of \acs{\OurMethod} is unbiased w.r.t.\ user document pair preferences.
Our experiments on the largest publicly available \ac{LTR} datasets show considerable and significant improvements under all levels of interaction noise.
\acs{\OurMethod} outperforms existing \ac{OLTR} methods both in terms of learning speed as well as final convergence.
Furthermore, unlike previous \ac{OLTR} methods, \ac{\OurMethod} also allows for non-linear models to be optimized effectively.
Our results show that using a neural network leads to even better performance at convergence than a linear model.
In summary, \acs{\OurMethod} is an efficient and unbiased \ac{OLTR} approach that provides a better user experience than previously possible.

\section{Introduction}
\label{sec:pdgd:intro}

In order to benefit from unprecedented volumes of content, users rely on ranking systems to provide them with the content of their liking.
\ac{LTR} in \ac{IR} concerns methods that optimize ranking models so that they order documents according to user preferences.
In web search engines such models combine hundreds of signals to rank web-pages according to their relevance to user queries \cite{liu2009learning}.
Similarly, ranking models are a vital part of recommender systems where there is no explicit search intent \cite{karatzoglou2013learning}.
\ac{LTR} is also prevalent in settings where other content is ranked, e.g., videos~\citep{chelaru2014useful}, products~\citep{karmaker2017application}, conversations~\citep{DBLP:conf/chiir/RadlinskiC17} or personal documents~\citep{wang2016learning}.

Traditionally, \ac{LTR} has been applied in the \emph{offline} setting where a dataset with annotated query-document pairs is available.
Here, the model is optimized to rank documents according to the relevance annotations, which are based on the judgements of human annotators.
Over time the limitations of this supervised approach have become apparent: annotated sets are expensive and time-consuming to create~\citep{letor, Chapelle2011}; when personal documents are involved such a dataset would breach privacy~\citep{wang2016learning}; the relevance of documents to queries can change over time, like in a news search engine~\citep{dumais-web-2010,lefortier-online-2014}; and judgements of raters are not necessarily aligned with the actual users~\citep{sanderson2010}.

In order to overcome the issues with annotated datasets, previous work in \ac{LTR} has looked into learning from user interactions.
Work along these lines can be divided into \emph{approaches that learn from historical interactions}, i.e., in the form of interaction logs~\citep{Joachims2002}, and \emph{approaches that learn in an online setting}~\citep{yue2009interactively}.
The latter regard methods that determine what to display to the user at each impression, and then immediately learn from observed user interactions and update their behavior accordingly.
This online approach has the advantage that it does not require an existing ranker of decent quality, and thus can handle cold-start situations.
Additionally, it is more responsive to the user by updating continuously and instantly, therefore allowing for a better experience.
However, it is important that an online method can handle biases that come with user behavior:
for instance, the observed interactions only take place with the displayed results, i.e., there is item-selection bias, and are more likely to occur with higher ranked items, i.e., there is position bias.
Accordingly, a method should learn user preferences w.r.t.\ document relevance, and be robust to the forms of noise and bias present in the online setting.
Overall, the online \ac{LTR} approach promises to learn ranking models that are in line with user preferences, in a responsive matter, reaching good performance from few interactions, even in cold-start situations.

Despite these highly beneficial properties, previous work in \ac{OLTR} has only considered linear models \cite{schuth2016mgd, hofmann12:balancing, yue2009interactively} or trivial variants thereof~\cite{oosterhuis2017balancing}.
The reason for this is that existing work in \ac{OLTR} has worked with the \ac{DBGD} algorithm \cite{yue2009interactively} as a basis.
While very influential and effective, we identify two main problems with the gradient estimation of the \ac{DBGD} algorithm:
\begin{enumerate}[align=left,leftmargin=*]
\item Gradient estimation is based on sampling model variants from a unit circle around the current model. 
This concept does not extend well to non-linear models. Computing rankings for variants is also computationally costly for larger complex models.
\item It uses online evaluation methods, i.e., interleaving or multileaving, to determine the gradient direction from the resulting set of models.
However, these evaluation methods are designed for finding preferences between ranking systems, not (primarily) for determining how a model should be updated.
\end{enumerate}
As an alternative we introduce \acfi{\OurMethod}\acused{\OurMethod}, the first unbiased \ac{OLTR} method that is applicable to any differentiable ranking model.
\ac{\OurMethod} infers pairwise document preferences from user interactions and constructs an unbiased gradient after each user impression.
In addition, \ac{\OurMethod} does not rely on sampling models for exploration, but instead models rankings as probability distributions over documents.
Therefore, it allows the \ac{OLTR} model to be very certain for specific queries and perform less exploration in those cases, while being much more explorative in other, uncertain cases.
Our results show that, consequently, \ac{\OurMethod} provides significant and considerable improvements over previous \ac{OLTR} methods.
This indicates that its gradient estimation is more in line with the preferences to be learned.

In this chapter, we address the thesis research question \ref{thesisrq:pdgd} by answering the following three specific research questions:
 \begin{enumerate}[align=left, label={\bf RQ3.\arabic*},leftmargin=*]
    \item Does using \ac{\OurMethod} result in significantly better performance than the current state-of-the-art \acl{MGD}?\label{rq:performance}
    \item Is the gradient estimation of \ac{\OurMethod} unbiased? \label{rq:unbiased}
    \item Is \ac{\OurMethod} capable of effectively optimizing different types of ranking models? \label{rq:nonlinear}
\end{enumerate}
To facilitate replicability and repeatability of our findings, we provide open source implementations of \ac{\OurMethod} and our experiments under the permissive MIT open-source license.\footnote{https://github.com/HarrieO/OnlineLearningToRank}

\section{Related Work}
\label{sec:relatedwork}

\subsection{Learning to rank}
\ac{LTR} can be applied to the offline and online setting.
In the offline setting \ac{LTR} is approached as a supervised problem where the relevance of each query-document pair is known.
Most of the challenges with offline \ac{LTR} come from obtaining annotations.
For instance, gathering annotations is time-consuming and expensive \cite{letor, qin2013introducing, Chapelle2011}.
Furthermore, in privacy sensitive-contexts it would be unethical to annotate items, e.g., for personal emails or documents \cite{wang2016learning}.
Moreover, for personalization problems annotators are unable to judge what specific users would prefer.
Also, (perceived) relevance chances over time, due to cognitive changes on the user's end~\citep{vakkari-changes-2000} or due to changes in document collections~\citep{dumais-web-2010} or the real world~\citep{lefortier-online-2014}.
Finally, annotations are not necessarily aligned with user satisfaction, as judges may interpret queries differently from actual users~\cite{sanderson2010}.
Consequently, the limitations of offline \ac{LTR} have led to an increased interest in alternative approaches to \ac{LTR}.

\subsection{Online learning to rank}
\ac{OLTR} is an attractive alternative to offline \ac{LTR} as it learns directly from interacting with users~\cite{yue2009interactively}.
By doing so it attempts to solve the issues with offline annotations that occur in \ac{LTR}, as user preferences are expected to be better represented by interactions than by offline annotations~\cite{radlinski2008does}.
Unlike methods in the offline setting, \ac{OLTR} algorithms have to simultaneously perform ranking while also optimizing their ranking model.
In other words, an \ac{OLTR} algorithm decides what rankings to display to users, while at the same time learning from the interactions with the presented rankings.
While the potential of learning in the online setting is great, it has its own challenges.
In particular, the main difficulties of the \ac{OLTR} task are \emph{bias} and \emph{noise}.
Any user interaction that does not reflect their true preference is considered noise, this happens frequently e.g., clicks often occur for unexpected reasons~\citep{sanderson2010}.
Bias comes in many forms, for instance, item-selection bias occurs because interactions only involve displayed documents~\citep{wang2016learning}.
Another common bias is position bias, a consequence from the fact documents at the top of a ranking are more likely to be considered~\citep{yue2010beyond}.
An \ac{OLTR} method should thus take into account the biases that affect user behavior while also being robust to noise, in order to learn the \emph{true} user preferences.

\ac{OLTR} methods can be divided into two groups~\citep{zoghi-online-2017}: \emph{tabular} methods that learn the best ranked list under some model of user interaction with the list~\cite{Radlinski2008,Slivkins2013}, such as a click model~\citep{chuklin-click-2015}, and
\emph{feature-based} algorithms that learn the best ranker in a family of rankers~\citep{yue2009interactively,hofmann2013reusing}.
Model-based methods may have greater statistical efficiency but they give up generality, essentially requiring us to learn a separate model for every query.
For the remainder of this chapter, we focus on model-free \ac{OLTR} methods.

\subsection{DBGD and beyond}
State-of-the-art (model-free) \ac{OLTR} approaches learn user preferences by approaching optimization as a dueling bandit problem~\cite{yue2009interactively}.
They estimate the gradient of the model w.r.t.\ user satisfaction by comparing the current model to sampled variations of the model.
The original \ac{DBGD} algorithm~\cite{yue2009interactively} uses interleaving methods to make these comparisons: at each interaction the rankings of two rankers are combined to create a single result list.
From a large number of clicks on such a combined result list a user preference between the two rankers can reliably be inferred~\cite{hofmann2011probabilistic}.
Conversely, \ac{DBGD} compares its current ranking model to a different slight variation at each impression.
Then, if a click is indicative of a preference for the variation, the current model is slightly updated towards it.
Accordingly, the model of \ac{DBGD} will continuously update itself and oscillate towards an inferred optimum.

Other work in \ac{OLTR} has used \ac{DBGD} as a basis and extended upon it.
Notably, \citet{hofmann2013reusing} have proposed a method that guides exploration by only sampling variations that seem promising from historical interaction data.
Unfortunately, while this approach provides faster initial learning, the historical data introduces bias which leads to the quality of the ranking model to steadily decrease over time~\cite{oosterhuis2016probabilistic}.
Alternatively, \citet{schuth2016mgd} introduced \acf{MGD}, this extension replaced the interleaving of \ac{DBGD} with multileaving methods.
In turn the multileaving paradigm is an extension of interleaving where a set of rankers are compared efficiently~\citep{oosterhuis2017sensitive, schuth2015probabilistic, Schuth2014a}.
Conversely, multileaving methods can combine the rankings of more than two rankers and thus infer preferences over a set of rankers from a single click.
\ac{MGD} uses this property to estimate the gradient more effectively by comparing a large number of model variations per user impression~\citep{schuth2016mgd, oosterhuis2016probabilistic}.
As a result, \ac{MGD} requires fewer user interactions to converge on the same level of performance as \ac{DBGD}.
Another alternative approach was considered by \citet{hofmann11:balancing}, who inject the ranking from the current model with randomly sampled documents.
Then, after each user impression, a pairwise loss is constructed from inferred preferences between documents.
This pairwise approach was not found to be more effective than \ac{DBGD}.

Quite remarkably, all existing work in \ac{OLTR} has only considered linear models.
Recently, \citet{oosterhuis2017balancing} recognized that a tradeoff unique to \ac{OLTR} arises when choosing models.
High capacity models such as neural networks~\cite{burges2010ranknet} require more data than simpler models.
On the one hand, this means that high capacity models need more user interactions to reach the same level of performance, thus giving a worse initial user experience.
On the other hand, high capacity models are capable of finding better optima, thus lead to better final convergence and a better long-term user experience.
This dilemma is named the \emph{speed}-\emph{quality} tradeoff, and as a solution a cascade of models can be optimized: combining the initial learning speed of a simple model with the convergence of a complex one.
But there are more reasons why non-linear models have so far been absent from \ac{OLTR}.
Importantly, the \ac{DBGD} algorithm was designed for linear models from the ground up; relying on a unit circle to sample model variants and averaging models to estimate the gradient.
Furthermore, the computational cost of maintaining an extensive set of model variants for large and complex models makes this approach very impractical.

Our contribution over the work listed above is an \ac{OLTR} method that is not an extension of \ac{DBGD}, instead it computes a differentiable pairwise loss to update its model.
Unlike the existing pairwise approach, our loss function is unbiased and our exploration is performed using the model's confidence over documents.
Finally, we also show that this is the first \ac{OLTR} method to effectively optimize neural networks in the online setting.

\section{Method}
\label{sec:simmgd}

In this section we introduce a novel \ac{OLTR} algorithm: \acf{\OurMethod}.
First, Section~\ref{sec:method:description} describes \ac{\OurMethod} in detail, before Section~\ref{sec:method:unbiased} formalizes and proves the unbiasedness of the method.
Appendix~\ref{notation:02-pdgd} lists the notation we use.

\subsection{\acl{\OurMethod}}
\label{sec:method:description}

\ac{\OurMethod} revolves around optimizing a ranking model $f_{\theta}(\mathbf{d})$ that takes a feature representation of a query-document pair $\mathbf{d}$ as input and outputs a score. 
The aim of the algorithm is to find the parameters $\theta$ so that sorting the documents by their scores in descending order provides the most optimal rankings.
Because this is an online algorithm, the method must first decide what ranking to display to the user, then after the user has interacted with the displayed ranking, it may update $\theta$ accordingly.

Unlike previous \ac{OLTR} approaches, \ac{\OurMethod} does not rely on any online evaluation methods. 
Instead, a \ac{PL} model is applied to the ranking function $f_{\theta}(\cdot)$ resulting in a distribution over the document set $D$:
\begin{equation}
P(d|D) = \frac{e^{f_{\theta}(\mathbf{d})}}{\sum_{d' \in D} e^{f_{\theta}(\mathbf{d'})}}.
\label{eq:pdgd:docprob}
\end{equation}
A ranking $R$ to display to the user is then created by sampling from the distribution $k$ times, where after each placement the distribution is renormalized to prevent duplicate placements.
\ac{PL} models have been used before in \ac{LTR}. 
For instance, the ListNet method~\cite{Cao2007} optimizes such a model in the offline setting.
With $R_i$ denoting the document at position $i$, the probability of the ranking $R$ then becomes:
\begin{equation}
P(R|D) = \prod^k_{i=1} P(R_i | D \setminus \{ R_1, \ldots, R_{i-1} \}).
\end{equation}
After the ranking $R$ has been displayed to the user, they have the option to interact with it.
The user may choose to click on some (or none) of the documents.
Based on these clicks, \ac{\OurMethod} will infer preferences between the displayed documents.
We assume that clicked documents are preferred over observed unclicked documents.
However, to the algorithm it is unknown which unclicked documents the user has considered.
As a solution, \ac{\OurMethod} relies on the assumption that every document preceding a clicked document and the first subsequent unclicked document was observed, as illustrated in Figure~\ref{fig:clickselection}.
This preference assumption has been proven useful in \ac{IR} before, for instance in pairwise \ac{LTR} on click logs \cite{Joachims2002} and recently in online evaluation \cite{oosterhuis2017sensitive}.
We will denote preferences between documents inferred from clicks as: $d_k >_\mathbf{c} d_l$ where $d_k$ is preferred over $d_l$.

\begin{figure}[tb]
\centering

\subfloat[]{
\includegraphics[height=9em]{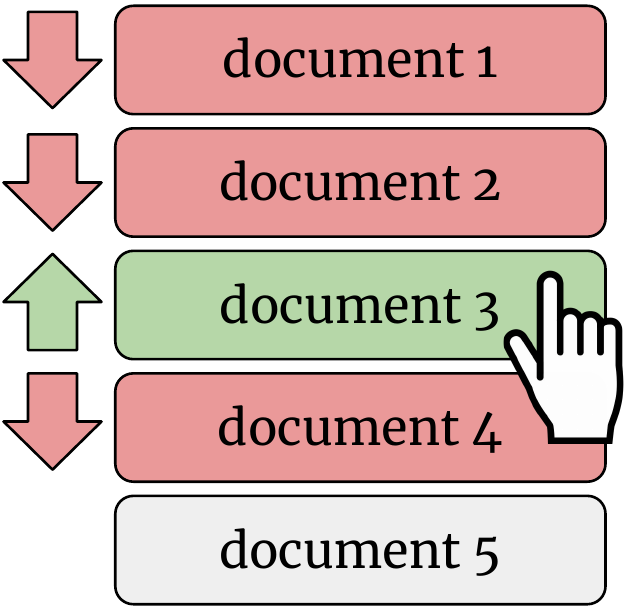}
\label{fig:clickselection}
\hspace{1em} %
}
\subfloat[]{
\includegraphics[height=9em]{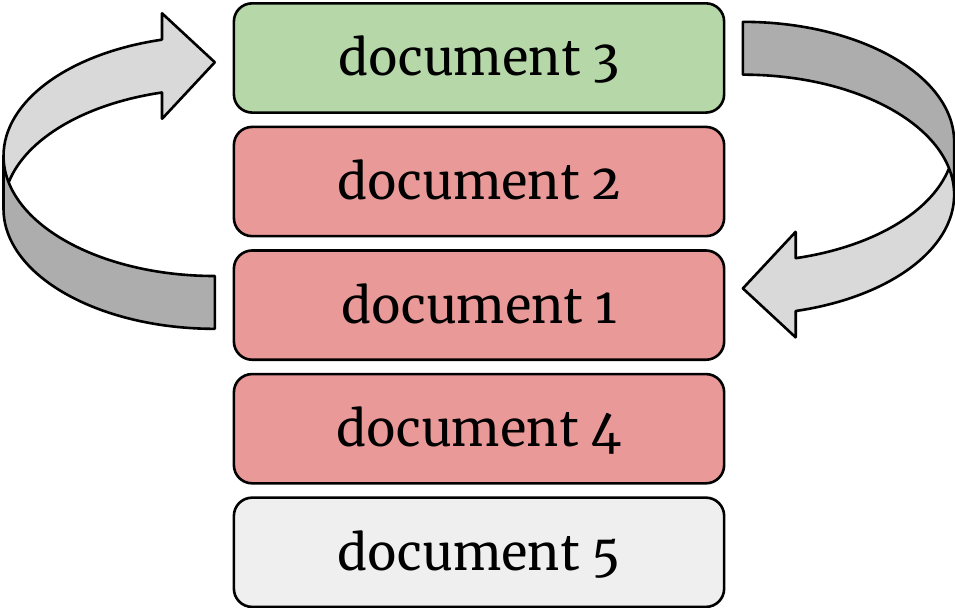}
\label{fig:reversepair}
}
\caption{
Left: a click on a document ranking $R$ and the inferred preferences of $d_3$ over $\{d_1, d_2, d_4\}$.
Right: the reversed pair ranking $R^*(d_1, d_3, R)$ for the document pair $d_1$ and $d_3$.
}
\label{fig:selectionandreverse}
\end{figure}

Then $\theta$ is updated by optimizing pairwise probabilities over the preference pairs;
for each inferred document preference $d_k >_\mathbf{c} d_l$, the probability that the preferred document $d_k$ is sampled before $d_l$ is sampled is increased \citep{szorenyi2015online}:
\begin{equation}
P(d_k \succ d_l)
=
\frac{P(d_k | D)}{P(d_k | D) + P(d_l | D)}
=
\frac{ e^{f(\mathbf{d}_k)} }{
e^{f(\mathbf{d}_k)}
+
e^{f(\mathbf{d}_l)}
}.
\label{eq:pairprob}
\end{equation}
We have chosen for pairwise optimization over listwise optimization because a pairwise method can be made unbiased by reweighing preference pairs.
To do this we introduce the weighting function $\rho(d_k, d_l, R, D)$ and estimate the gradient of the user preferences by the weighted sum:
\begin{equation}
\begin{split}
 \nabla f_\theta (\cdot)
&\approx\!\!
\sum_{ d_k >_\mathbf{c} d_l } \!\!\rho(d_k, d_l, R, D) \left[\nabla P(d_k \succ d_l)\right] \\ 
& =\!\!
\sum_{ d_k >_\mathbf{c} d_l } \!\!\rho(d_k, d_l, R, D)
\frac{ 
e^{f_{\theta}(\mathbf{d}_k)}e^{f_{\theta}(\mathbf{d}_l)}
}{
(e^{f_{\theta}(\mathbf{d}_k)} + e^{f_{\theta}(\mathbf{d}_l)}) ^ 2
}
\!\left(f'_{\theta}(\mathbf{d}_k) - f'_{\theta}(\mathbf{d}_l)\right)
\!.\!\!
\end{split}
\label{eq:novelgradient}
\end{equation}
The $\rho$ function is based on the reversed pair ranking $R^*(d_k, d_l, R)$, which is the same ranking as $R$ with the position of $d_k$ and $d_l$ swapped.
An example of a reversed pair ranking is illustrated in Figure~\ref{fig:reversepair}.
The idea is that if a preference for $d_k >_\mathbf{c} d_l$ is inferred in $R$ and both documents are equally relevant, then the reverse preference $d_l >_\mathbf{c} d_k$ is equally likely to be inferred in $R^*(d_k, d_l, R)$.
The $\rho$ function reweighs the found preferences to the ratio between the probabilities of $R$ or $R^*(d_k, d_l, R)$ occurring:
\begin{equation}
\rho(d_k, d_l, R, D) = \frac{P(R^*(d_k, d_l, R)|D)}{P(R|D) + P(R^*(d_k, d_l, R)|D)}.
\label{eq:pdgd:rho}
\end{equation}
This procedure has similarities with importance sampling~\cite{owen2013monte}; %
however, we found that reweighing according to the ratio between $R$ and $R^*$ provides a more stable performance, since it produces less extreme values. 
Section~\ref{sec:method:unbiased} details exactly how $\rho$ creates an unbiased gradient.

\begin{algorithm}[t]
\caption{\acf{\OurMethod}.} 
\label{alg:novel}
\begin{algorithmic}[1]
\STATE \textbf{Input}: initial weights: $\mathbf{\theta}_1$; scoring function: $f$; learning rate $\eta$.  \label{line:novel:initmodel}
\FOR{$t \leftarrow  1 \ldots \infty$ }
	\STATE $q_t \leftarrow \text{receive\_query}(t)$\hfill \textit{\small // obtain a query from a user} \label{line:novel:query}
	\STATE $D_t \leftarrow \text{preselect\_documents}(q_t)$\hfill \textit{\small // preselect documents for query} \label{line:novel:preselect}
	\STATE $\mathbf{R}_t \leftarrow \text{sample\_list}(f_{\theta_t}, D_t)$ \hfill \textit{\small // sample list according to Eq.~\ref{eq:pdgd:docprob}} \label{line:novel:samplelist}
	\STATE $\mathbf{c}_t \leftarrow \text{receive\_clicks}(\mathbf{R}_t)$ \hfill \textit{\small // show result list to the user} \label{line:novel:clicks}
	\STATE $\nabla f_{\theta_{t}} \leftarrow \mathbf{0}$ \hfill \textit{\small // initialize gradient} \label{line:novel:initgrad}
	\FOR{$d_k >_{\mathbf{c}} d_l \in \mathbf{c}_t$} \label{line:novel:prefinfer}
	\STATE $w \leftarrow \rho(d_k, d_l, R, D)$  \hfill \textit{\small // initialize pair weight (Eq.~\ref{eq:pdgd:rho})} \label{line:novel:initpair}
	\STATE $w \leftarrow w 
            \frac{ 
            e^{f_{\theta_t}(\mathbf{d}_k)}e^{f_{\theta_t}(\mathbf{d}_l)}
            }{
            \mleft(e^{f_{\theta_t}(\mathbf{d}_k)} + e^{f_{\theta_t}(\mathbf{d}_l)}\mright) ^ 2
            }$
             \hfill \textit{\small // pair gradient (Eq.~\ref{eq:novelgradient})} \label{line:novel:pairgrad}
	\STATE  $\nabla f_{\theta_{t}} \leftarrow \nabla \theta_t + w (f'_{\theta_t}(\mathbf{d}_k) - f'_{\theta_t}(\mathbf{d}_l))$
	  \hfill \textit{\small // model gradient (Eq.~\ref{eq:novelgradient})} \label{line:novel:modelgrad}
	\ENDFOR
	\STATE $\theta_{t+1} \leftarrow \theta_{t} + \eta \nabla f_{\theta_{t}}$
	\hfill \textit{\small // update the ranking model} \label{line:novel:update}
\ENDFOR
\end{algorithmic}
\end{algorithm}

Algorithm~\ref{alg:novel} describes the \ac{\OurMethod} method step by step:
Given the initial parameters $\theta_1$ and a differentiable scoring function $f$ (Line~\ref{line:novel:initmodel}), the method waits for a user-issued query $q_t$ to arrive (Line~\ref{line:novel:query}).
Then the preselected set of documents $D_t$ for the query is fetched (Line~\ref{line:novel:preselect}), in our experiments these preselections are given in the \ac{LTR} datasets that we use.
A result list $R$ is sampled from the current model (Line~\ref{line:novel:samplelist} and Equation~\ref{eq:pdgd:docprob}) and displayed to the user.
The clicks from the user are logged (Line~\ref{line:novel:clicks}) and preferences between the displayed documents inferred (Line~\ref{line:novel:prefinfer}).
The gradient is initialized (Line~\ref{line:novel:initgrad}), and for each pair document pair $d_k$, $d_l$ such that $d_k >_{\mathbf{c}} d_l$, the weight $\rho(d_k, d_l, R, D)$ is calculated (Line~\ref{line:novel:initpair} and Equation~\ref{eq:pdgd:rho}), followed by the gradient for the pair probability (Line~\ref{line:novel:pairgrad} and Equation~\ref{eq:novelgradient}).
Finally, the gradient for the scoring function $f$ is weighted and added to the gradient (Line~\ref{line:novel:modelgrad}), resulting in the estimated gradient.
The model is then updated by taking an $\eta$ step in the direction of the gradient (Line~\ref{line:novel:update}).
The algorithm again waits for the next query to arrive and thus the process continues indefinitely.

\ac{\OurMethod} has some notable advantages over \ac{MGD}~\citep{schuth2016mgd}.
Firstly, it explicitly models uncertainty over the documents per query, thus \ac{\OurMethod} is able to have high confidence in its ranking for one query, while being completely uncertain for another query.
As a result, it will vary the amount of exploration per query, allowing it to avoid exploration in cases where it is not required and focussing on areas where it can improve.
In contrast, \ac{MGD} does not explicitly model confidence: its degree of exploration is only affected by the norm of its linear model \citep{oosterhuis2017balancing}.
Consequently, \ac{MGD} is unable to vary exploration per query nor is there a way to directly measure its level of confidence.
Secondly, \ac{\OurMethod} works for any differentiable scoring function $f$ and does not rely on sampling model variants.
Conversely, \ac{MGD} is based around sampling from the unit sphere around a model; this approach is very ineffective for non-linear models.
Additionally, sampling large models and producing rankings for them can be very computationally expensive.
Besides these beneficial properties, our experimental results in Section~\ref{sec:pdgd:results} show that \ac{\OurMethod} achieves significantly higher levels of performance than \ac{MGD} and other previous methods.

\subsection{Unbiased gradient estimation}
\label{sec:method:unbiased}

The previous section introduced \ac{\OurMethod}; this section answers \ref{rq:unbiased}:
 \begin{enumerate}[align=left,leftmargin=*]
    \item[\ref{rq:unbiased}] Is the gradient estimation of \ac{\OurMethod} unbiased?
\end{enumerate}

First, Theorem~\ref{theorem:unbiased} will provide a definition of unbiasedness w.r.t.\ user document pair preferences.
Then we state the assumptions we make about user behavior and use them to prove Theorem~\ref{theorem:unbiased}.
Our notation will use $d_k =_{rel} d_l$ to indicate no user preference between two documents t $d_k$ and $d_l$; and $d_k >_{rel} d_l$ to indicate a preference for $d_k$ over $d_l$; and $d_k <_{rel} d_l$ for the opposite preference.

\begin{theorem}
The expected estimated gradient of \ac{\OurMethod} can be written as a weighted sum, with a unique weight $\alpha_{k,l}$ for each possible document pair $d_k$ and $d_l$ in the document collection $D$:
\begin{equation}
E[\nabla f_\theta(\cdot)] = \sum_{d_k, d_l \in D} \alpha_{k,l} (f'_{\theta_t}(\mathbf{d}_k) - f'_{\theta_t}(\mathbf{d}_l)).
\label{eq:theorem:pair}
\end{equation}
The signs of the weights $\alpha_{k,l}$  adhere to user preferences between documents.  
That is, if there is no preference:
\begin{equation}
d_k =_{rel} d_l \Leftrightarrow \alpha_{k,l} = 0;
\label{eq:theorem:equal}
\end{equation}
if $d_k$ is preferred over $d_l$:
\begin{equation}
d_k >_{rel} d_l \Leftrightarrow \alpha_{k,l} > 0; 
\label{eq:theorem:greater}
\end{equation}
and if $d_l$ is preferred over $d_k$:
\begin{equation}
d_k <_{rel} d_l \Leftrightarrow \alpha_{k,l} < 0.
\label{eq:theorem:lesser}
\end{equation}
Therefore, in expectation \ac{\OurMethod} will perform updates that adhere to the preferences between the documents in every possible document pair.
\label{theorem:unbiased}
\end{theorem}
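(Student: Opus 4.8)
The plan is to compute the expectation in Equation~\ref{eq:novelgradient} by averaging over both the sampled ranking $R$ (drawn from the \ac{PL} model) and the clicks $\mathbf{c}$ conditioned on $R$, and then to collect, for each unordered document pair $\{d_k, d_l\}$, the total coefficient multiplying $(f'_{\theta_t}(\mathbf{d}_k) - f'_{\theta_t}(\mathbf{d}_l))$. Writing $w(d_k,d_l) = e^{f_\theta(\mathbf{d}_k)}e^{f_\theta(\mathbf{d}_l)}(e^{f_\theta(\mathbf{d}_k)} + e^{f_\theta(\mathbf{d}_l)})^{-2}$, which is symmetric in its arguments, and using the antisymmetry $(f'_{\theta_t}(\mathbf{d}_l) - f'_{\theta_t}(\mathbf{d}_k)) = -(f'_{\theta_t}(\mathbf{d}_k) - f'_{\theta_t}(\mathbf{d}_l))$, the contribution of the inferred event $d_l >_\mathbf{c} d_k$ merges with that of $d_k >_\mathbf{c} d_l$ but with opposite sign. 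This immediately produces the additive-over-pairs form of Equation~\ref{eq:theorem:pair} with
$$\alpha_{k,l} = w(d_k,d_l)\sum_R P(R|D)\,\rho(d_k,d_l,R,D)\mleft[P(d_k >_\mathbf{c} d_l \mid R) - P(d_l >_\mathbf{c} d_k \mid R)\mright],$$
where I have already used that $\rho$ is symmetric in its first two arguments, since $R^*(d_k,d_l,R) = R^*(d_l,d_k,R)$.

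The second step is to exploit the reweighting. Abbreviating $R^* = R^*(d_k,d_l,R)$, the swap $R \mapsto R^*$ is an involution that pairs each ranking with a distinct partner, so I would group the sum over $R$ into these two-element orbits. The defining identity $\rho(d_k,d_l,R,D) = P(R^*|D)(P(R|D)+P(R^*|D))^{-1}$ gives the crucial symmetry $P(R|D)\rho(d_k,d_l,R,D) = P(R^*|D)\rho(d_k,d_l,R^*,D)$, since applying the swap to $R^*$ returns $R$; call this common value $W(R) = W(R^*) \ge 0$. Substituting and summing the two members of each orbit collapses $\alpha_{k,l}$ into a sum of terms $w(d_k,d_l)\,W(R)\,\Delta(R)$ over orbits, where
$$\Delta(R) = \mleft[P(d_k >_\mathbf{c} d_l\mid R) + P(d_k >_\mathbf{c} d_l\mid R^*)\mright] - \mleft[P(d_l >_\mathbf{c} d_k\mid R) + P(d_l >_\mathbf{c} d_k\mid R^*)\mright].$$
Because $w(d_k,d_l) > 0$ and $W(R) \ge 0$, the sign of each $\alpha_{k,l}$ is governed entirely by the signs of the $\Delta(R)$.

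The third step is the sign argument, resting on the assumed user model: clicks are position-biased but relevance-monotonic, so a document's click behaviour is determined by its relevance label and its surrounding context, and a more relevant document placed in a fixed context is at least as likely to attract the winning click pattern. The key observation is that $R$ and $R^*$ differ only in the contents of the two ranks occupied by $d_k$ and $d_l$; every other document sits at an identical rank in both. Thus $\Delta(R)$ compares, with all surrounding context held fixed, how likely $d_k$ is to win the inferred pairwise preference against $d_l$ when each is placed at each of the two ranks. When $d_k =_{rel} d_l$ the two configurations are exchanged images of one another, so $P(d_k >_\mathbf{c} d_l \mid R) = P(d_l >_\mathbf{c} d_k \mid R^*)$ and likewise for the remaining term, forcing $\Delta(R) = 0$ and hence $\alpha_{k,l} = 0$, which is Equation~\ref{eq:theorem:equal}. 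When $d_k >_{rel} d_l$, relevance-monotonicity makes $d_k$ at least as likely to win at each rank, so $\Delta(R) \ge 0$ with strict inequality for at least one positively weighted orbit, giving $\alpha_{k,l} > 0$ (Equation~\ref{eq:theorem:greater}); the case $d_k <_{rel} d_l$ is symmetric and yields Equation~\ref{eq:theorem:lesser}. Since the three relevance cases are exhaustive and map to the three sign cases, the implications upgrade to the stated equivalences.

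I expect the main obstacle to be this sign argument rather than the algebra of the first two steps. The difficulty is that $d_k >_\mathbf{c} d_l$ is not a single-click event but a joint event involving clicks on other documents together with the observation rule of Figure~\ref{fig:clickselection}; I must argue carefully that, once the surrounding context is frozen by the $R$/$R^*$ pairing, the inference probability depends only on which of $d_k, d_l$ occupies which of the two ranks, so that the relevance comparison cleanly decouples from position bias. Making the relevance-monotonicity assumption precise enough to guarantee strict positivity of $\Delta(R)$ for at least one positively weighted orbit, which is what secures the equivalences rather than mere implications, is the delicate part; the symmetry of $\rho$ and the involution structure underlying the first two steps are routine by comparison.
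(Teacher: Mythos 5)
Your proposal is correct and follows essentially the same route as the paper's proof: factoring out the symmetric pair-gradient weight, exploiting the antisymmetry of $(f'_{\theta_t}(\mathbf{d}_k) - f'_{\theta_t}(\mathbf{d}_l))$, pairing each ranking $R$ with $R^*(d_k,d_l,R)$ via the identity $P(R)\rho(d_k,d_l,R,D)=P(R^*)\rho(d_k,d_l,R^*,D)$ (your orbit sum is just the paper's $\tfrac{\omega^R_{k,l}}{2}\big(P(d_k >_\mathbf{c} d_l \mid R) - P(d_l >_\mathbf{c} d_k \mid R^*)\big)$ regrouped), and reducing the sign of $\alpha_{k,l}$ to these cross-comparisons between $R$ and $R^*$. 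The delicate step you correctly flag — that once the surrounding context is frozen the inference probability depends only on which of $d_k, d_l$ occupies which of the two swapped ranks — is exactly what the paper settles by a three-way case analysis on the relative positions of the clicked and unclicked documents using Assumptions 1--5.
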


\paragraph{Assumptions.}
To prove Theorem~\ref{theorem:unbiased} the following assumptions about user behavior will be used:

\smallskip
\subparagraph{Assumption 1.} 
We assume that clicks from a user are position biased and conditioned on the relevance of the current document and the previously considered documents.
For a click on a document in ranking $R$ at position $i$ the probability can be written as:
\begin{equation}
P(\textit{click}(R_i) | \{R_0, \ldots, R_{i-1}, R_{i+1}\}).
\label{eq:userassumption}
\end{equation}
For ease of notation, we will denote the set of ``other documents'' as $\OtherDocuments{}$ from here on.

\subparagraph{Assumption 2.} 
If there is no user preference between two documents $d_k, d_l$, denoted by $d_k =_\textit{rel} d_l$, we assume that each is equally likely to be clicked given the same context:
\begin{equation}
d_k =_\textit{rel} d_l \Rightarrow P(\textit{click}(d_k)| \OtherDocuments{}) = P(\textit{click}(d_l) | \OtherDocuments{} ).
\label{eq:equalsame}
\end{equation}

\subparagraph{Assumption 3.}
If a document in the set of documents being considered is replaced with an equally preferred document the click probability is not affected:
\begin{equation}
d_k =_\textit{rel} d_l \Rightarrow  P(\textit{click}(R_i)| \{\ldots, d_k\} ) = P(\textit{click}(R_i)|  \{\ldots, d_l\} ).
\label{eq:equaladded}
\end{equation}

\subparagraph{Assumption 4.}
Similarly, given the same context if one document is preferred over another, then it is more likely to be clicked:
\begin{equation}
d_k >_\textit{rel} d_l \Rightarrow P(\textit{click}(d_k)| \OtherDocuments{}) > P(\textit{click}(d_l)| \OtherDocuments{} ).
\label{eq:greatersame}
\end{equation}

\subparagraph{Assumption 5.}
Lastly, for any pair $d_k >_\textit{rel} d_l$, the considered document set $\{\ldots, d_k\}$ and the same set with $d_k$ replaced by $d_l$ $\{\ldots, d_l\}$,
we assume that the preferred $d_k$ in the context of $\{\ldots, d_l\}$ is more likely to be clicked than $d_l$ in the context of $\{\ldots, d_k\}$:
\begin{equation}
d_k >_\textit{rel} d_l \Rightarrow  P(\textit{click}(d_k)| \{\ldots, d_k\} ) > P(\textit{click}(d_l)|  \{\ldots, d_l\} ).
\label{eq:greateradded}
\end{equation}
These are all the assumptions we make about the user. 
With these assumptions, we can proceed to prove Theorem~\ref{theorem:unbiased}.

\begin{proof}[Proof of Theorem~\ref{theorem:unbiased}.]
We denote the probability of inferring the preference of $d_k$ over $d_l$ in ranking $R$ as $P(d_k >_\mathbf{c} d_l | R)$. 
Then the expected gradient $\nabla f_\theta(\cdot)$ of \ac{\OurMethod} can be written as:
\begin{equation}
\begin{split}
&E[\nabla f_\theta(\cdot)] = \\
&\qquad \sum_R \sum_{d_k, d_l \in D} \mleft[ \vphantom{\frac{\delta}{\delta \theta}} 
P(d_k >_\mathbf{c} d_l | R) \cdot P(R) \cdot 
 \rho(d_k, d_l, R, D)  
 \mleft[\nabla P(d_k \succ d_l)\mright]\mright] .
\end{split}
\end{equation}
We will rewrite this expectation using the symmetry property of the reversed pair ranking:
\begin{equation}
R^n = R^*(d_k, d_l, R^m) \Leftrightarrow R^m = R^*(d_k, d_l, R^n).
\end{equation}
First, we define a weight $\omega_{k,l}^R$ for every document pair $d_k,d_l$ and ranking $R$ so that:
\begin{equation}
\omega_{k,l}^R = P(R) \rho(d_k, d_l, R, D) 
= \frac{P(R|D)P(R^*(d_k, d_l, R)|D)}{P(R|D) + P(R^*(d_k, d_l, R)|D)}.
\end{equation}
Therefore, the weight for the reversed pair ranking is equal:
\begin{equation}
\begin{split}
\omega_{k,l}^{R^*(d_k, d_l, R)} &= P(R^*(d_k, d_l, R)) \rho(d_k, d_l, R^*(d_k, d_l, R), D) \\
&= \omega_{k,l}^R.
\end{split}
\end{equation}
Then, using the symmetry of Equation~\ref{eq:pairprob} we see that:
\begin{equation}
\nabla P(d_k \succ d_l) = -\nabla P(d_l \succ  d_k).
\end{equation}
Thus, with $R^*$ as a shorthand for $R^*(d_k, d_l, R)$, the expectation can be rewritten as:
\begin{align}
& E[\nabla f_\theta(\cdot)] = \\
& \qquad \sum_{d_k, d_l \in D} \sum_R \frac{\omega_{i,j}^R}{2}  
  \left( \vphantom{\frac{\omega_{i,j}^R}{2}  }P(d_k >_\mathbf{c} d_l \mid R)  - P(d_l >_\mathbf{c} d_k \mid R^*)\right) \left[\vphantom{\frac{\omega_{i,j}^R}{2}  }\nabla P(d_k \succ d_l)\right],
  \nonumber
\end{align}
proving that the expected gradient matches the form of Equation~\ref{eq:theorem:pair}.
Then to prove that Equations~\ref{eq:theorem:equal},~\ref{eq:theorem:greater}, and~\ref{eq:theorem:lesser} are correct we will show that:
\begin{align}
d_k =_\textit{rel} d_l &\Rightarrow  P(d_k >_\mathbf{c} d_l | R) = P(d_l >_\mathbf{c} d_k | R^*),
\label{eq:equalrequirement}
\\
d_k >_\textit{rel} d_l &\Rightarrow  P(d_k >_\mathbf{c} d_l | R) > P(d_l >_\mathbf{c} d_k | R^*),
\label{eq:greaterrequirement}
\\
d_k <_\textit{rel} d_l &\Rightarrow  P(d_k >_\mathbf{c} d_l | R) < P(d_l >_\mathbf{c} d_k | R^*).
\label{eq:lesserrequirement}
\end{align}
If a preference $R_i >_\mathbf{c} R_j$ is inferred then there are only three possible cases based on the positions:
\begin{enumerate}
\item The clicked document succeeds the unclicked document by more than one position: $i + 1 > j$.
\item The clicked document precedes the unclicked document by more than one position: $i - 1 < j$.
\item The clicked document is one position before or after the unclicked document: $i = j +1 \lor i = j-1$.
\end{enumerate}
In the first case the clicked document succeeds the other by more than one position, the probability of an inferred preference is then:
\begin{equation}
i + 1 > j 
\Rightarrow
P(R_i >_\mathbf{c} R_j | R) =  P(\mathbf{c}_i | R_i, \{\ldots, R_j\}\ )  (1- P(\mathbf{c}_j | R_j, \OtherDocuments{} )).
\label{eq:simplecase}
\end{equation}
Combining Assumption~2~and~3 with Equation~\ref{eq:simplecase} proves Equation~\ref{eq:equalrequirement} for this case.
Furthermore, combining Assumption~4~and~5 with Equation~\ref{eq:simplecase} proves Equations~\ref{eq:greaterrequirement}~and~\ref{eq:lesserrequirement} for this case as well.

Then the second case is when the clicked document appears more than one position before the unclicked document, the probability of the inferred preference is then:
\begin{equation}
\begin{split}
& i + 1 < j \Rightarrow 
\\ & \qquad
P(R_i >_\mathbf{c} R_j | R) =  P(\mathbf{c}_i | R_i, \OtherDocuments{} )   (1- P(\mathbf{c}_j | R_j,\{\ldots, R_i\}\ ))  P(\mathbf{c}_\textit{rem}),
\end{split}
\label{eq:complexcase}
\end{equation}
where $P(\mathbf{c}_\textit{rem})$ denotes the probability of an additional click that is required to add $R_j$ to the inferred observed documents.
First, due to Assumption~1 this probability will be the same for $R$ and $R^*$:
\begin{equation}
P(\mathbf{c}_\textit{rem}| R_i, R_j , R) = P(\mathbf{c}_\textit{rem}| R_i, R_j , R^*).
\end{equation}
Combining Assumption~2~and~3 with Equation~\ref{eq:complexcase} also proves Equation~\ref{eq:equalrequirement} for this case.
Furthermore, combining Assumption~4~and~5 with Equation~\ref{eq:complexcase} also proves Equation~\ref{eq:greaterrequirement}~and~\ref{eq:lesserrequirement} for this case as well.

Lastly, in the third case the clicked document is one position before or after the other document, the probability of the inferred preference is then:
\begin{align}
&i = j +1 \lor i = j-1 \\
&\qquad \Rightarrow
P(R_i >_\mathbf{c} R_j | R) =  P(\mathbf{c}_i | R_i, \{\ldots, R_j\}\ )(1- P(\mathbf{c}_j | R_j, \{\ldots, R_i\} )).
\nonumber
\label{eq:specialcase}
\end{align}
Combining Assumption~3 with Equation~\ref{eq:specialcase} proves Equation~\ref{eq:equalrequirement} for this case as well.
Then, combining Assumption~5 with Equation~\ref{eq:specialcase} also proves Equation~\ref{eq:greaterrequirement}~and~\ref{eq:lesserrequirement} for this case.
\end{proof}

\noindent%
This concludes our proof of the unbiasedness of \ac{\OurMethod}. 
Hence, we answer \ref{rq:unbiased} positively: the gradient estimation of \ac{\OurMethod} is unbiased.
We have shown that the expected gradient is in line with user preferences between document pairs.
\if0
That is, if there is no preference between two documents they will have no effect on the gradient in expectation.
And, conversely, if there is a preference between a pair of documents, the expected gradient will move towards that direction.
\fi

\begin{table}[tb]
\caption{Instantiations of Cascading Click Models~\citep{guo09:efficient} as used for simulating user behavior in experiments.}
\centering
\begin{tabularx}{\columnwidth}{ X c c c c c c c c c c }
\toprule
& \multicolumn{5}{c}{ $P(\mathit{click}=1\mid R)$} & \multicolumn{5}{c}{ $P(\mathit{stop}=1\mid click=1,  R)$} \\
\cmidrule(lr){2-6} \cmidrule(l){7-11}
$R$ & \emph{$ 0$} & \emph{$ 1$}  &  \emph{$ 2$} & \emph{$ 3$} & \emph{$ 4$}
 & \emph{$0$} & \emph{$ 1$} & \emph{$ 2$} & \emph{$ 3$} & \emph{$ 4$} \\
\midrule
 \emph{perfect} &  0.0 &  0.2 &  0.4 &  0.8 &  1.0 &  0.0 &  0.0 &  0.0 &  0.0 &  0.0 \\
 \emph{navigational} &  \phantom{5}0.05 &  0.3 &  0.5 &  0.7 &  \phantom{5}0.95 &  0.2 &  0.3 &  0.5 &  0.7 &  0.9 \\
 \emph{informational} &  0.4 &  0.6 &  0.7 &  0.8 &  0.9 &  0.1 &  0.2 &  0.3 &  0.4 &  0.5 \\
\bottomrule
\end{tabularx}
\label{tab:clickmodels}
\end{table}

\section{Experiments}
\label{sec:experiments}

In this section we detail the experiments that were performed to answer the research questions in Section~\ref{sec:pdgd:intro}.

\subsection{Datasets}
\label{sec:experiments:datasets}

Our experiments are performed over five publicly available \acs{LTR} datasets; we have selected three large labelled dataset from commercial search engines and two smaller research datasets.
Every dataset consists of a set of queries with each query having a corresponding preselected document set.
The exact content of the queries and documents are unknown, each query is represented only by an identifier, but each query-document pair has a feature representation and relevance label.
Depending on the dataset, the relevance labels are graded differently; we have purposefully chosen datasets that have at least two grades of relevance.
Each dataset is divided in training, validation and test partitions.

The oldest datasets we use are \emph{MQ2007} and \emph{MQ2008}~\cite{qin2013introducing} which are based on the Million Query Track \cite{allan2007million} and consist of \numprint{1700} and 800 queries. They use representations of 46~features that encode ranking models such as TF.IDF, BM25, Language Modeling, Page\-Rank, and HITS on different parts of the documents. They are divided into five folds and the labels are on a three-grade scale from \emph{not relevant} (0) to \emph{very relevant}~(2).

In 2010 Microsoft released the \emph{MSLR-WEB30k} and \emph{MLSR-WEB10K} datasets \cite{qin2013introducing}, which are both created from a retired labelling set of a commercial web search engine (Bing).
The former contains \numprint{30000} queries with each query having 125~assessed documents on average, query-document pairs are encoded in 136~features,
The latter is a subsampling of \numprint{10000} queries from the former dataset.
For practical reasons only \emph{MLSR-WEB10K} was used for this chapter.
Also in 2010 Yahoo!\ released an \ac{LTR} dataset \cite{Chapelle2011}.
It consists of \numprint{29921} queries and \numprint{709877} documents encoded in 700~features, all sampled from query logs of the Yahoo! search engine.
Finally, in 2016 a \ac{LTR} dataset was released by the Istella search engine \cite{dato2016fast}.
It is the largest with \numprint{33118} queries, an average of~315 documents per query and 220~features.
These three commercial datasets all label relevance on a five-grade scale: from \emph{not relevant} (0) to \emph{perfectly relevant}~(4).

\subsection{Simulating user behavior}
\label{sec:experiments:users}

For simulating users we follow the standard setup for \acs{OLTR} simulations \cite{he2009evaluation,hofmann11:balancing,schuth2016mgd,oosterhuis2016probabilistic,zoghi:wsdm14:relative}.
First, queries issued by  users are simulated by uniformly sampling from the static dataset.
Then the algorithm determines the result list of documents to display.
User interactions with the displayed list are then simulated using a \emph{cascade click model}~\cite{chuklin-click-2015,guo09:efficient}.
This models a user who goes through the documents one at a time in the displayed order.
At each document, the user decides whether to click it or not, modelled as a probability conditioned on the relevance label $R$: $P(click=1\mid R)$.
After a click has occurred, the user's information need may be satisfied and they may then stop considering documents.
The probability of a user stopping after a click is modelled as $P(stop=1\mid click=1,  R)$.
For our experiments $\kappa=10$ documents are displayed at each impression.

The three instantiations of cascade click models that we used are listed in Table~\ref{tab:clickmodels}.
First, a \emph{perfect} user is modelled who considers every document and solely clicks on all relevant documents.
The second models a user with a \emph{navigational} task, where a single highly relevant document is searched.
Finally, an \emph{informational} instantiation models a user without a specific information need, and thus typically clicks on many documents.
These models have varying levels of noise, as each behavior depends on the relevance labels of documents with a different degree.

\subsection{Experimental runs}
\label{sec:experiments:runs}

For our experiments three baselines are used. 
First, \ac{MGD} with Probabilistic Multileaving~\citep{oosterhuis2016probabilistic}; this is the highest performing existing \ac{OLTR} method~\citep{oosterhuis2016probabilistic, oosterhuis2017balancing}.
For this chapter $n=49$ candidates were sampled per iteration from the unit sphere with $\delta=1$; updates are performed with $\eta = 0.01$ and zero initialization was used.
Additionally, \ac{DBGD} is used for comparison as it is one of the most influential methods, it was run with the same parameters except that only $n=1$ candidate is sampled per iteration.
Furthermore, we also let \ac{DBGD} optimize a single hidden-layer neural network with 64 hidden nodes and sigmoid activation functions with \emph{Xavier} initialization~\cite{glorot2010understanding}.
These parameters were also found most effective in previous work~\citep{yue2009interactively, hofmann11:balancing, schuth2016mgd, oosterhuis2016probabilistic}.

Additionally, the pairwise method introduced by \citet{hofmann11:balancing} is used as a baseline. Despite not showing significant improvements over \ac{DBGD} in the past~\cite{hofmann11:balancing}, the comparison with \ac{\OurMethod} is interesting because they both estimate gradients from pairwise preferences. 
For this baseline, $\eta = 0.01$ and $\epsilon = 0.8$ is used; these parameters are chosen to maximize the performance at convergence~\cite{hofmann11:balancing}.

Runs with \ac{\OurMethod} are performed with both a linear and neural ranking model.
For the linear ranking model $\eta = 0.1$ and zero initialization was used.
The neural network has the same parameters as the one optimized by \ac{DBGD}, except for $\eta = 0.1$.

\subsection{Metrics and tests}
\label{sec:experiments:evaluation}

Two aspects of performance are evaluated seperately: the final convergence and the ranking quality during training.

Final convergence is addressed in \emph{offline performance} which is the average NDCG@10 of the ranking model over the queries in the held-out test-set.
The offline performance is measured after 10,000 impressions at which point most ranking models have reached convergence.
The user experience during optimization should be considered as well, since deterring users during training would compromise the goal of \ac{OLTR}.
To address this aspect of evaluation \emph{online performance} has been introduced~\citep{Hofmann2013a}; it is the cumulative discounted NDCG@10 of the rankings displayed during training.
For $T$ sequential queries with $R^t$ as the ranking displayed to the user at timestep $t$, this is:
\begin{align}
\mathit{Online\_Performance} = \sum_{t=1}^T \mathit{NDCG}(R^t) \cdot \gamma^{(t-1)}.
\end{align}
This metric models the expected reward a user receives with a $\gamma$ probability that the user stops searching after each query.
We follow previous work~\cite{oosterhuis2017balancing, oosterhuis2016probabilistic} by choosing a discount factor of $\gamma = 0.9995$, consequently queries beyond the horizon of \numprint{10000} queries have a less than $1\%$ impact.

Lastly, all experimental runs are repeated 125 times, spread evenly over the available dataset folds.
Results are averaged and a two-tailed Student's t-test is used for significance testing.
In total, our results are based on more than \numprint{90000000} user impressions.

\begin{figure}[tb]
\centering
\includegraphics[scale=0.47]{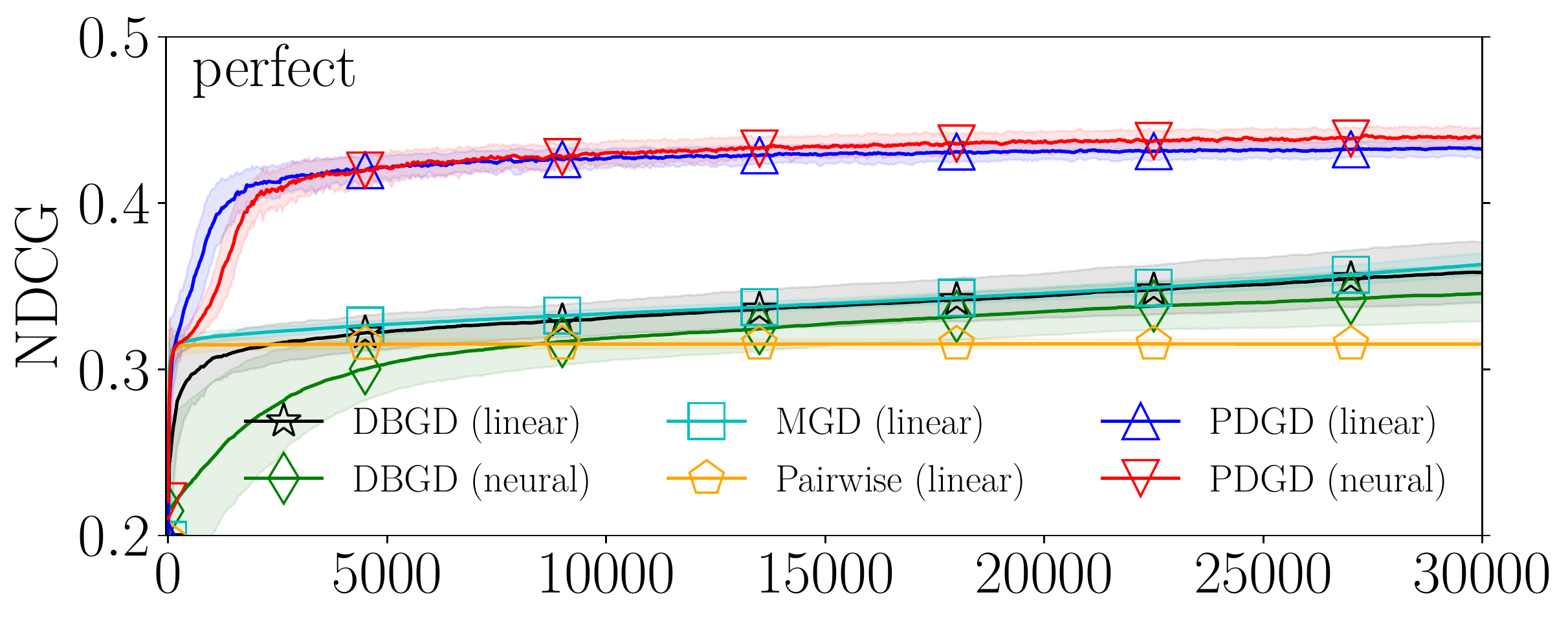}
\includegraphics[scale=0.47]{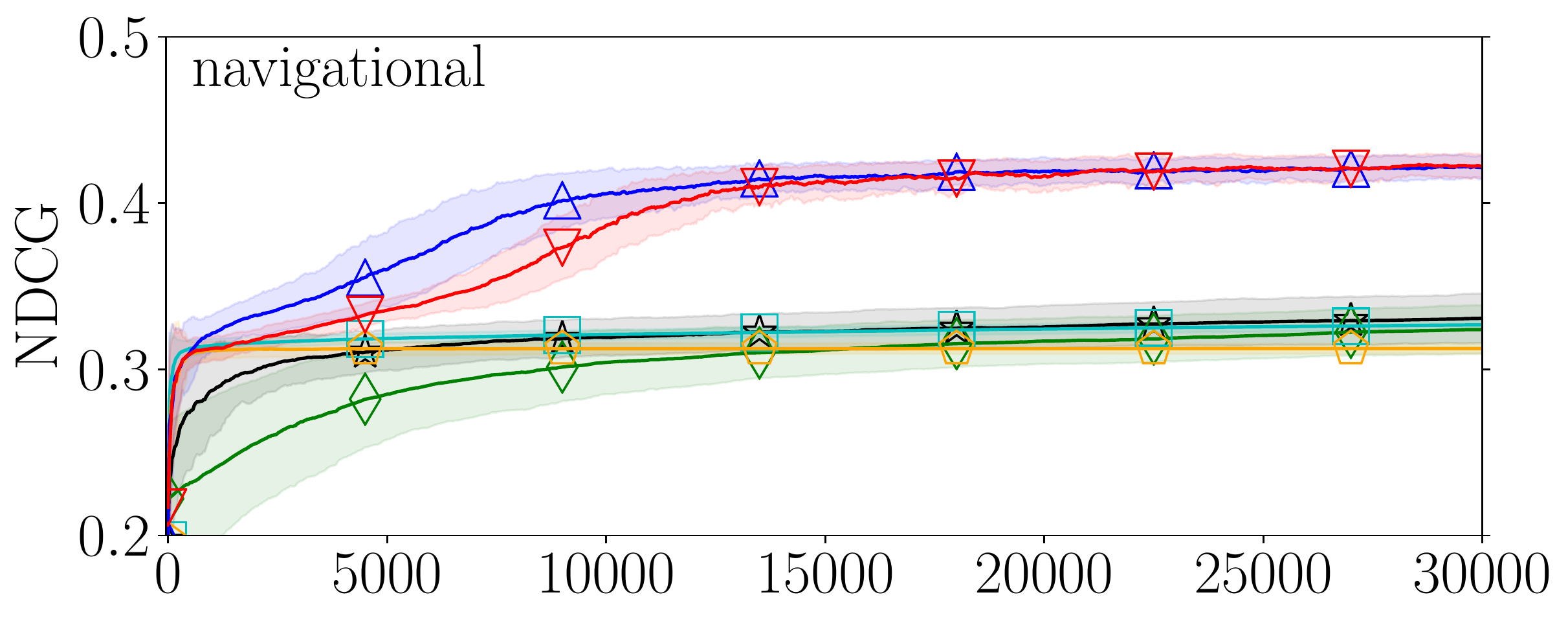}
\includegraphics[scale=0.47]{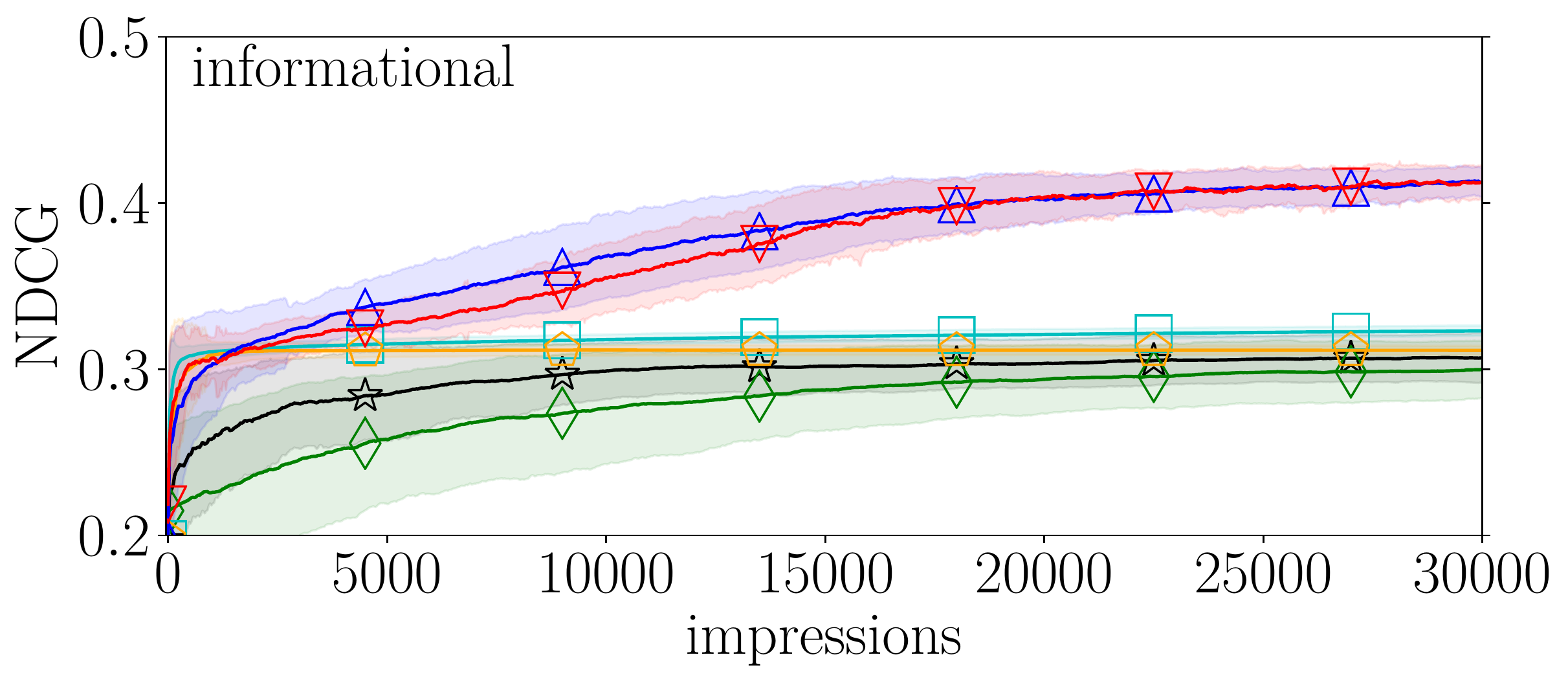}
\caption{Offline performance (NDCG) on the MSLR-WEB10k dataset under three different click models, the shaded areas indicate the standard deviation.}
\label{fig:offline}
\end{figure}

\begin{figure}[tb]
\centering
\includegraphics[scale=0.47]{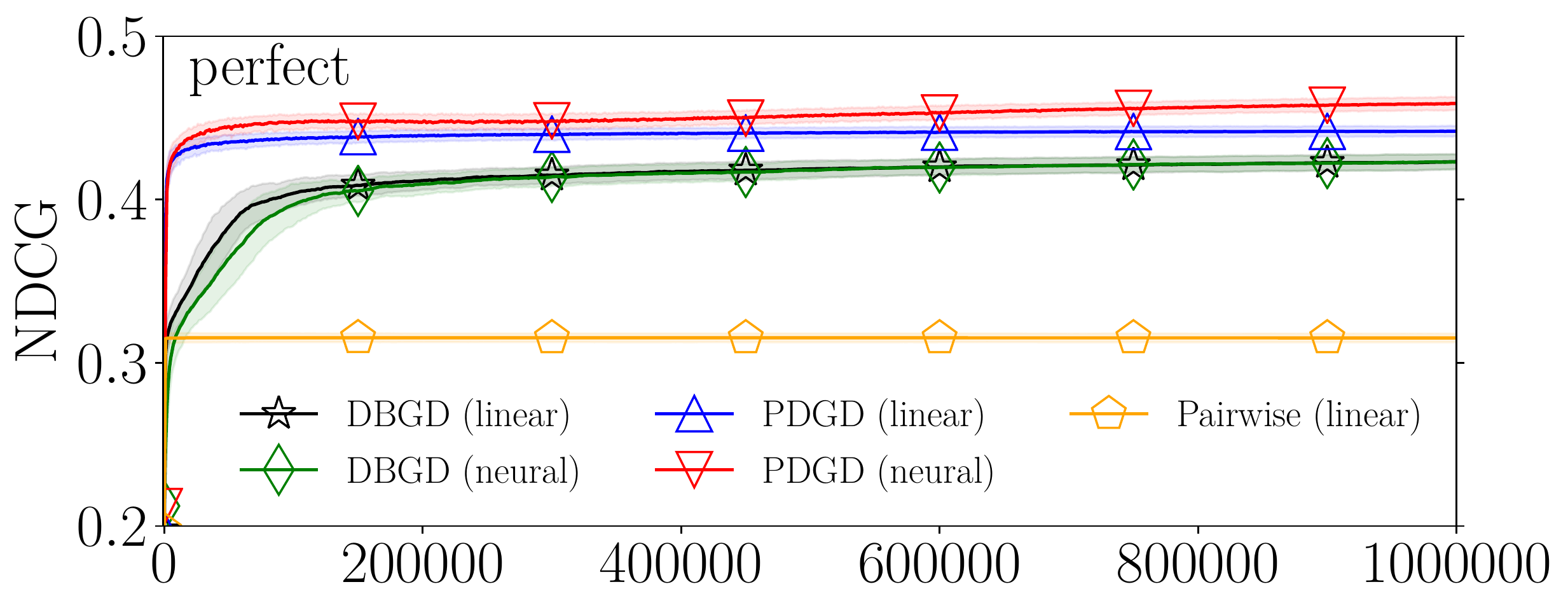}
\includegraphics[scale=0.47]{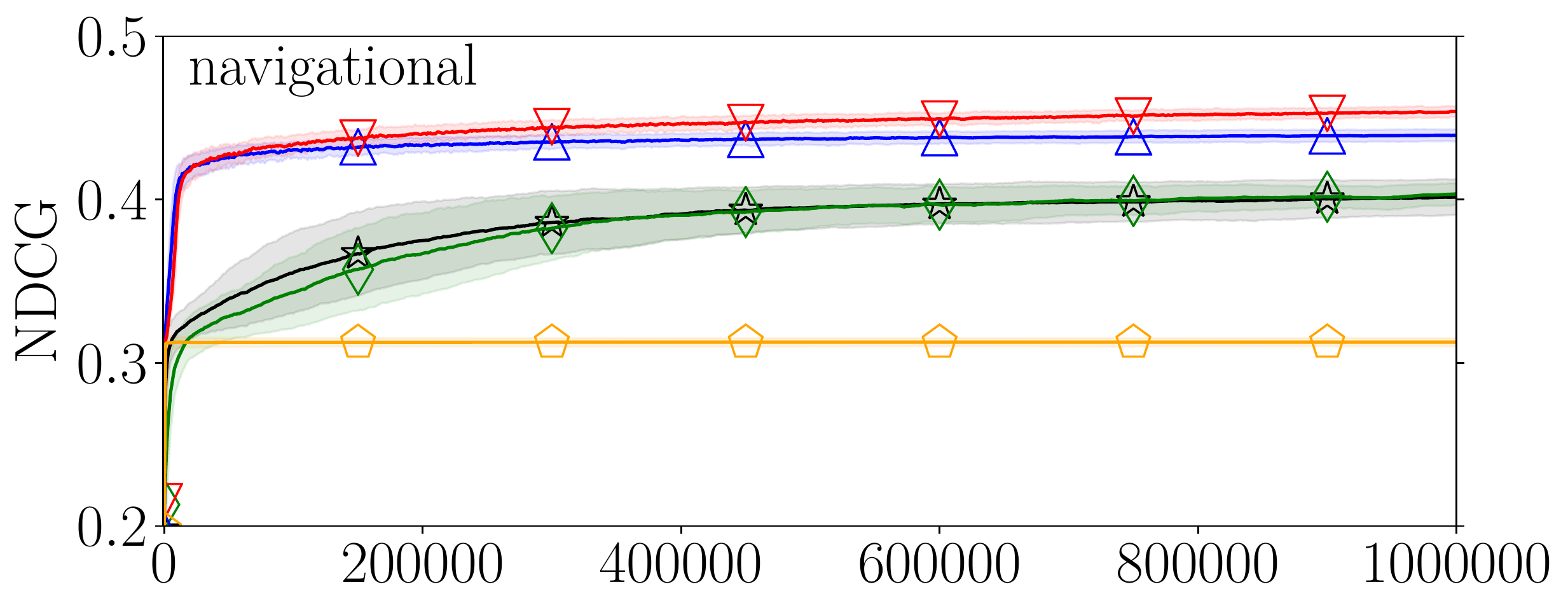}
\includegraphics[scale=0.47]{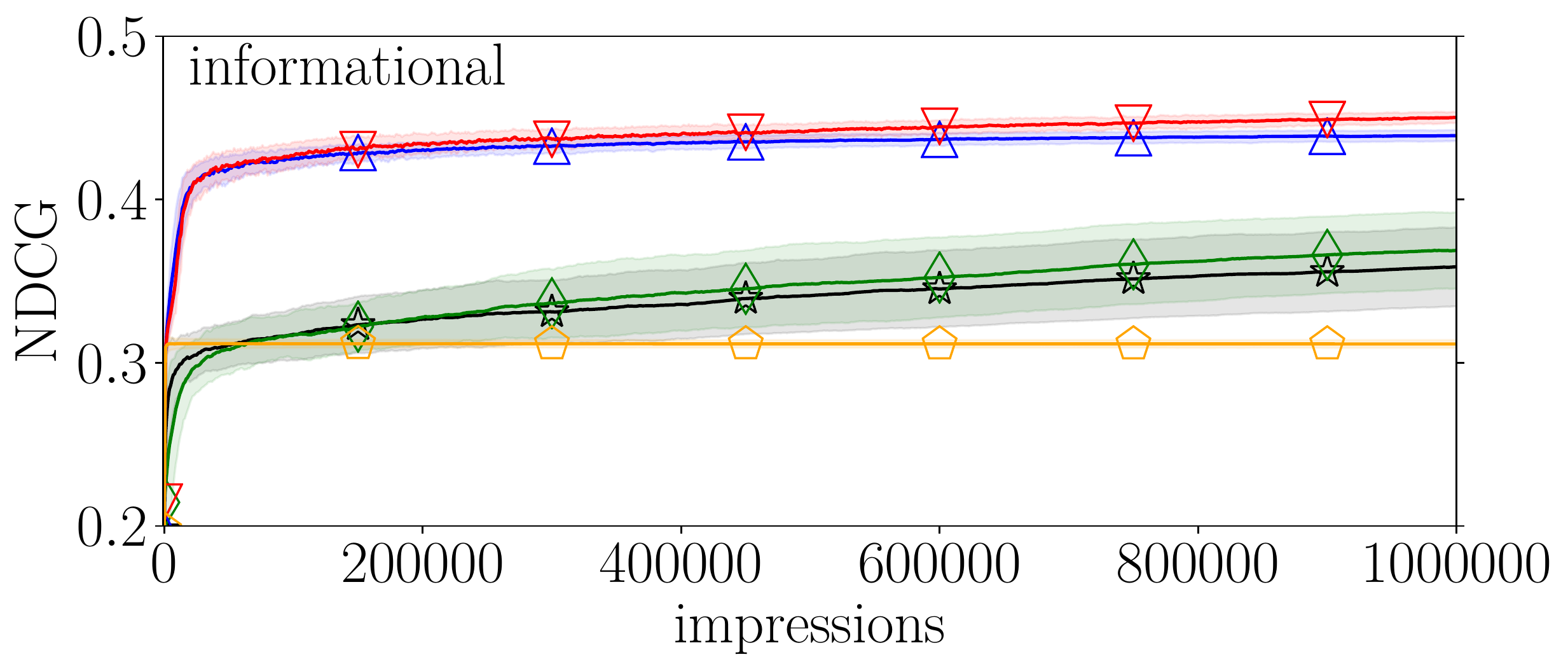}
\caption{Long-term offline performance (NDCG) on the MSLR-WEB10k dataset under three click models, the shaded areas indicate the standard deviation.}
\label{fig:long}
\end{figure}

\begin{sidewaystable}[t]
\centering
\caption{Offline performance (NDCG) for different instantiations of CCM (Table~\ref{tab:clickmodels}). The standard deviation is shown in brackets, bold values indicate the highest performance per dataset and click model, significant improvements over the \acs{DBGD}, \acs{MGD} and pairwise baselines are indicated by  \enkelop\ (p $<$ 0.05) and \dubbelop\ (p $<$ 0.01), no losses were measured.}
\vspace{-0.6\baselineskip}
\begin{tabular*}{\textwidth}{@{\extracolsep{\fill} } l  l l l l l  }
\toprule
 & { \small \textbf{MQ2007}}  & { \small \textbf{MQ2008}}  & { \small \textbf{MSLR-WEB10k}}  & { \small \textbf{Yahoo}}  & { \small \textbf{istella}} \\
\midrule
& \multicolumn{5}{c}{\textit{perfect}} \\
\midrule
DBGD (linear) & 0.483 {\tiny (0.023)} & 0.683 {\tiny (0.024)} & 0.331 {\tiny (0.010)} & 0.684 {\tiny (0.010)} & 0.448 {\tiny (0.014)} \\
DBGD (neural) & 0.463 {\tiny (0.025)} & 0.670 {\tiny (0.026)} & 0.319 {\tiny (0.014)} & 0.676 {\tiny (0.016)} & 0.429 {\tiny (0.017)} \\
MGD (linear) & 0.494 {\tiny (0.022)} & 0.690 {\tiny (0.019)} & 0.333 {\tiny (0.003)} & 0.714 {\tiny (0.002)} & 0.496 {\tiny (0.004)} \\
Pairwise (linear) & 0.479 {\tiny (0.022)} & 0.674 {\tiny (0.017)} & 0.315 {\tiny (0.003)} & 0.709 {\tiny (0.001)} & 0.252 {\tiny (0.002)} \\
PDGD (linear) & \bf 0.511 {\tiny (0.017)} {\tiny \dubbelop} {\tiny \dubbelop} {\tiny \dubbelop} {\tiny \dubbelop} & \bf 0.699 {\tiny (0.024)} {\tiny \dubbelop} {\tiny \dubbelop} {\tiny \dubbelop} {\tiny \dubbelop} & 0.427 {\tiny (0.005)} {\tiny \dubbelop} {\tiny \dubbelop} {\tiny \dubbelop} {\tiny \dubbelop} & \bf 0.736 {\tiny (0.004)} {\tiny \dubbelop} {\tiny \dubbelop} {\tiny \dubbelop} {\tiny \dubbelop} & 0.573 {\tiny (0.004)} {\tiny \dubbelop} {\tiny \dubbelop} {\tiny \dubbelop} {\tiny \dubbelop} \\
PDGD (neural) & 0.509 {\tiny (0.020)} {\tiny \dubbelop} {\tiny \dubbelop} {\tiny \dubbelop} {\tiny \dubbelop} & 0.698 {\tiny (0.024)} {\tiny \dubbelop} {\tiny \dubbelop} {\tiny \dubbelop} {\tiny \dubbelop} & \bf 0.430 {\tiny (0.006)} {\tiny \dubbelop} {\tiny \dubbelop} {\tiny \dubbelop} {\tiny \dubbelop} & 0.733 {\tiny (0.005)} {\tiny \dubbelop} {\tiny \dubbelop} {\tiny \dubbelop} {\tiny \dubbelop} & \bf 0.575 {\tiny (0.006)} {\tiny \dubbelop} {\tiny \dubbelop} {\tiny \dubbelop} {\tiny \dubbelop} \\
\midrule
& \multicolumn{5}{c}{\textit{navigational}} \\
\midrule
DBGD (linear) & 0.461 {\tiny (0.025)} & 0.670 {\tiny (0.025)} & 0.319 {\tiny (0.011)} & 0.661 {\tiny (0.023)} & 0.401 {\tiny (0.015)} \\
DBGD (neural) & 0.430 {\tiny (0.033)} & 0.646 {\tiny (0.031)} & 0.304 {\tiny (0.019)} & 0.649 {\tiny (0.029)} & 0.382 {\tiny (0.024)} \\
MGD (linear) & 0.426 {\tiny (0.020)} & 0.662 {\tiny (0.015)} & 0.321 {\tiny (0.003)} & 0.706 {\tiny (0.009)} & 0.405 {\tiny (0.004)} \\
Pairwise (linear) & 0.476 {\tiny (0.022)} & 0.677 {\tiny (0.018)} & 0.312 {\tiny (0.003)} & 0.696 {\tiny (0.004)} & 0.209 {\tiny (0.002)} \\
PDGD (linear) & \bf 0.496 {\tiny (0.019)} {\tiny \dubbelop} {\tiny \dubbelop} {\tiny \dubbelop} {\tiny \dubbelop} & \bf 0.695 {\tiny (0.021)} {\tiny \dubbelop} {\tiny \dubbelop} {\tiny \dubbelop} {\tiny \dubbelop} & \bf 0.406 {\tiny (0.015)} {\tiny \dubbelop} {\tiny \dubbelop} {\tiny \dubbelop} {\tiny \dubbelop} & \bf 0.725 {\tiny (0.005)} {\tiny \dubbelop} {\tiny \dubbelop} {\tiny \dubbelop} {\tiny \dubbelop} & \bf 0.540 {\tiny (0.008)} {\tiny \dubbelop} {\tiny \dubbelop} {\tiny \dubbelop} {\tiny \dubbelop} \\
PDGD (neural) & 0.493 {\tiny (0.020)} {\tiny \dubbelop} {\tiny \dubbelop} {\tiny \dubbelop} {\tiny \dubbelop} & 0.692 {\tiny (0.019)} {\tiny \dubbelop} {\tiny \dubbelop} {\tiny \dubbelop} {\tiny \dubbelop} & 0.386 {\tiny (0.019)} {\tiny \dubbelop} {\tiny \dubbelop} {\tiny \dubbelop} {\tiny \dubbelop} & 0.722 {\tiny (0.006)} {\tiny \dubbelop} {\tiny \dubbelop} {\tiny \dubbelop} {\tiny \dubbelop} & 0.532 {\tiny (0.011)} {\tiny \dubbelop} {\tiny \dubbelop} {\tiny \dubbelop} {\tiny \dubbelop} \\
\midrule
& \multicolumn{5}{c}{\textit{informational}} \\
\midrule
DBGD (linear) & 0.411 {\tiny (0.036)} & 0.631 {\tiny (0.036)} & 0.299 {\tiny (0.017)} & 0.620 {\tiny (0.035)} & 0.360 {\tiny (0.028)} \\
DBGD (neural) & 0.383 {\tiny (0.047)} & 0.595 {\tiny (0.053)} & 0.276 {\tiny (0.033)} & 0.603 {\tiny (0.040)} & 0.316 {\tiny (0.057)} \\
MGD (linear) & 0.406 {\tiny (0.021)} & 0.647 {\tiny (0.036)} & 0.318 {\tiny (0.003)} & 0.676 {\tiny (0.043)} & 0.387 {\tiny (0.005)} \\
Pairwise (linear) & 0.478 {\tiny (0.022)} & 0.677 {\tiny (0.018)} & 0.311 {\tiny (0.003)} & 0.690 {\tiny (0.006)} & 0.183 {\tiny (0.001)} \\
PDGD (linear) & \bf 0.487 {\tiny (0.021)} {\tiny \dubbelop} {\tiny \dubbelop} {\tiny \dubbelop} {\tiny \dubbelop} & \bf 0.690 {\tiny (0.022)} {\tiny \dubbelop} {\tiny \dubbelop} {\tiny \dubbelop} {\tiny \dubbelop} & \bf 0.368 {\tiny (0.025)} {\tiny \dubbelop} {\tiny \dubbelop} {\tiny \dubbelop} {\tiny \dubbelop} & \bf 0.713 {\tiny (0.008)} {\tiny \dubbelop} {\tiny \dubbelop} {\tiny \dubbelop} {\tiny \dubbelop} & \bf 0.532 {\tiny (0.010)} {\tiny \dubbelop} {\tiny \dubbelop} {\tiny \dubbelop} {\tiny \dubbelop} \\
PDGD (neural) & 0.483 {\tiny (0.022)} {\tiny \dubbelop} {\tiny \dubbelop} {\tiny \dubbelop} \hphantom{\tiny \dubbelneer} & 0.686 {\tiny (0.022)} {\tiny \dubbelop} {\tiny \dubbelop} {\tiny \dubbelop} {\tiny \dubbelop} & 0.355 {\tiny (0.021)} {\tiny \dubbelop} {\tiny \dubbelop} {\tiny \dubbelop} {\tiny \dubbelop} & 0.709 {\tiny (0.009)} {\tiny \dubbelop} {\tiny \dubbelop} {\tiny \dubbelop} {\tiny \dubbelop} & 0.525 {\tiny (0.012)} {\tiny \dubbelop} {\tiny \dubbelop} {\tiny \dubbelop} {\tiny \dubbelop} \\
\bottomrule
\end{tabular*}

\label{tab:offline}
\end{sidewaystable}

\begin{sidewaystable}[t]
\centering
\caption{Online performance (Discounted Cumulative NDCG, Section~\ref{sec:experiments:evaluation}) for different instantiations of CCM (Table~\ref{tab:clickmodels}). The standard deviation is shown in brackets, bold values indicate the highest performance per dataset and click model, significant improvements and losses over the \acs{DBGD}, \acs{MGD} and pairwise baselines are indicated by  \enkelop\ (p $<$ 0.05) and \dubbelop\ (p $<$ 0.01) and by \enkelneer\ and \dubbelneer, respectively.}
\vspace{-0.6\baselineskip}
\begin{tabular*}{\textwidth}{@{\extracolsep{\fill} } l  l l l l l  }
\toprule
 & { \small \textbf{MQ2007}}  & { \small \textbf{MQ2008}}  & { \small \textbf{MSLR-WEB10k}}  & { \small \textbf{Yahoo}}  & { \small \textbf{istella}} \\
\midrule
& \multicolumn{5}{c}{\textit{perfect}} \\
\midrule
DBGD (linear) & 675.7 {\tiny (21.8)} & 843.6 {\tiny (40.8)} & 533.6 {\tiny (15.6)} & 1159.3 {\tiny (31.6)} & 589.9 {\tiny (19.2)} \\
DBGD (neural) & 602.7 {\tiny (58.1)} & 776.9 {\tiny (67.4)} & 481.2 {\tiny (53.0)} & 1135.7 {\tiny (41.3)} & 494.3 {\tiny (60.5)} \\
MGD (linear) & 689.6 {\tiny (15.3)} & 858.6 {\tiny (40.6)} & 558.7 {\tiny (6.4)} & 1203.9 {\tiny (9.9)} & 670.9 {\tiny (8.6)} \\
Pairwise (linear) & 458.4 {\tiny (13.3)} & 616.6 {\tiny (25.8)} & 345.3 {\tiny (4.6)} & 1027.2 {\tiny (9.2)} & 64.5 {\tiny (2.1)} \\
PDGD (linear) & \bf 797.3 {\tiny (17.3)} {\tiny \dubbelop} {\tiny \dubbelop} {\tiny \dubbelop} {\tiny \dubbelop} & \bf 959.7 {\tiny (43.4)} {\tiny \dubbelop} {\tiny \dubbelop} {\tiny \dubbelop} {\tiny \dubbelop} & \bf 691.4 {\tiny (12.3)} {\tiny \dubbelop} {\tiny \dubbelop} {\tiny \dubbelop} {\tiny \dubbelop} & \bf 1360.3 {\tiny (10.8)} {\tiny \dubbelop} {\tiny \dubbelop} {\tiny \dubbelop} {\tiny \dubbelop} & \bf 957.5 {\tiny (9.4)} {\tiny \dubbelop} {\tiny \dubbelop} {\tiny \dubbelop} {\tiny \dubbelop} \\
PDGD (neural) & 743.7 {\tiny (18.8)} {\tiny \dubbelop} {\tiny \dubbelop} {\tiny \dubbelop} {\tiny \dubbelop} & 925.4 {\tiny (43.3)} {\tiny \dubbelop} {\tiny \dubbelop} {\tiny \dubbelop} {\tiny \dubbelop} & 619.2 {\tiny (13.6)} {\tiny \dubbelop} {\tiny \dubbelop} {\tiny \dubbelop} {\tiny \dubbelop} & 1319.6 {\tiny (10.1)} {\tiny \dubbelop} {\tiny \dubbelop} {\tiny \dubbelop} {\tiny \dubbelop} & 834.0 {\tiny (22.2)} {\tiny \dubbelop} {\tiny \dubbelop} {\tiny \dubbelop} {\tiny \dubbelop} \\
\midrule
& \multicolumn{5}{c}{\textit{navigational}} \\
\midrule
DBGD (linear) & 638.6 {\tiny (29.7)} & 816.9 {\tiny (42.0)} & 508.2 {\tiny (21.6)} & 1129.9 {\tiny (32.2)} & 538.2 {\tiny (29.0)} \\
DBGD (neural) & 573.7 {\tiny (68.4)} & 740.3 {\tiny (69.7)} & 465.8 {\tiny (52.0)} & 1116.0 {\tiny (45.7)} & 414.3 {\tiny (96.2)} \\
MGD (linear) & 635.9 {\tiny (14.7)} & 824.5 {\tiny (34.0)} & 538.1 {\tiny (7.6)} & 1181.7 {\tiny (20.0)} & 593.2 {\tiny (9.7)} \\
Pairwise (linear) & 459.9 {\tiny (12.9)} & 618.6 {\tiny (25.2)} & 347.3 {\tiny (5.4)} & 1031.2 {\tiny (9.0)} & 72.6 {\tiny (2.2)} \\
PDGD (linear) & \bf 703.0 {\tiny (17.9)} {\tiny \dubbelop} {\tiny \dubbelop} {\tiny \dubbelop} {\tiny \dubbelop} & \bf 903.1 {\tiny (40.7)} {\tiny \dubbelop} {\tiny \dubbelop} {\tiny \dubbelop} {\tiny \dubbelop} & \bf 578.1 {\tiny (16.0)} {\tiny \dubbelop} {\tiny \dubbelop} {\tiny \dubbelop} {\tiny \dubbelop} & \bf 1298.4 {\tiny (33.4)} {\tiny \dubbelop} {\tiny \dubbelop} {\tiny \dubbelop} {\tiny \dubbelop} & \bf 704.1 {\tiny (33.5)} {\tiny \dubbelop} {\tiny \dubbelop} {\tiny \dubbelop} {\tiny \dubbelop} \\
PDGD (neural) & 560.9 {\tiny (14.6)} {\tiny \dubbelneer} {\tiny \enkelneer} {\tiny \dubbelneer} {\tiny \dubbelop} & 788.7 {\tiny (38.5)} {\tiny \dubbelneer} {\tiny \dubbelop} {\tiny \dubbelneer} {\tiny \dubbelop} & 448.1 {\tiny (12.3)} {\tiny \dubbelneer} {\tiny \dubbelneer} {\tiny \dubbelneer} {\tiny \dubbelop} & 1176.1 {\tiny (17.0)} {\tiny \dubbelop} {\tiny \dubbelop} {\tiny \enkelneer} {\tiny \dubbelop} & 390.2 {\tiny (35.1)} {\tiny \dubbelneer} {\tiny \dubbelneer} {\tiny \dubbelneer} {\tiny \dubbelop} \\
\midrule
& \multicolumn{5}{c}{\textit{informational}} \\
\midrule
DBGD (linear) & 584.2 {\tiny (41.1)} & 757.4 {\tiny (56.9)} & 477.2 {\tiny (32.2)} & 1110.0 {\tiny (37.0)} & 436.8 {\tiny (57.4)} \\
DBGD (neural) & 550.8 {\tiny (75.7)} & 720.9 {\tiny (79.0)} & 444.7 {\tiny (60.9)} & 1091.2 {\tiny (48.6)} & 322.9 {\tiny (121.0)} \\
MGD (linear) & 618.8 {\tiny (21.7)} & 815.1 {\tiny (44.5)} & 540.0 {\tiny (7.7)} & 1159.1 {\tiny (40.0)} & 581.8 {\tiny (10.7)} \\
Pairwise (linear) & 462.6 {\tiny (14.4)} & 619.6 {\tiny (25.0)} & 349.7 {\tiny (6.6)} & 1034.1 {\tiny (9.0)} & 77.0 {\tiny (2.4)} \\
PDGD (linear) & \bf 704.8 {\tiny (30.5)} {\tiny \dubbelop} {\tiny \dubbelop} {\tiny \dubbelop} {\tiny \dubbelop} & \bf 907.9 {\tiny (42.0)} {\tiny \dubbelop} {\tiny \dubbelop} {\tiny \dubbelop} {\tiny \dubbelop} & \bf 567.3 {\tiny (36.5)} {\tiny \dubbelop} {\tiny \dubbelop} {\tiny \dubbelop} {\tiny \dubbelop} & \bf 1266.7 {\tiny (50.0)} {\tiny \dubbelop} {\tiny \dubbelop} {\tiny \dubbelop} {\tiny \dubbelop} & \bf 731.5 {\tiny (80.0)} {\tiny \dubbelop} {\tiny \dubbelop} {\tiny \dubbelop} {\tiny \dubbelop} \\
PDGD (neural) & 594.6 {\tiny (23.0)} {\tiny \enkelop} {\tiny \dubbelop} {\tiny \dubbelneer} {\tiny \dubbelop} & 818.3 {\tiny (39.6)} {\tiny \dubbelop} {\tiny \dubbelop} \hphantom{\tiny \dubbelneer} {\tiny \dubbelop} & 470.1 {\tiny (19.4)} {\tiny \enkelneer} {\tiny \dubbelop} {\tiny \dubbelneer} {\tiny \dubbelop} & 1178.1 {\tiny (22.8)} {\tiny \dubbelop} {\tiny \dubbelop} {\tiny \dubbelop} {\tiny \dubbelop} & 484.3 {\tiny (64.8)} {\tiny \dubbelop} {\tiny \dubbelop} {\tiny \dubbelneer} {\tiny \dubbelop} \\
\bottomrule
\end{tabular*}

\label{tab:online}
\end{sidewaystable}

\section{Results and Analysis}
\label{sec:pdgd:results}

Our main results are displayed in Table~\ref{tab:offline} and Table~\ref{tab:online}, showing the offline and online performance of all methods, respectively.
Additionally, Figure~\ref{fig:offline} displays the offline performance on the MSLR-WEB10k dataset over \numprint{30000} impressions and Figure~\ref{fig:long} over \numprint{1000000} impressions.
We use these results to answer \ref{rq:performance} -- whether \ac{\OurMethod} provides significant improvements over existing \ac{OLTR} methods -- and \ref{rq:nonlinear} -- whether \ac{\OurMethod} is successful at optimizing different types of ranking models.

\subsection{Convergence of ranking models}
First, we consider the offline performance after \numprint{10000} impressions as reported in Table~\ref{tab:offline}.
We see that the \ac{DBGD} and \ac{MGD} baselines reach similar levels of performance, with marginal differences at low levels of noise.
Our results seem to suggest that \ac{MGD} provides an efficient alternative to \ac{DBGD} that requires fewer user interactions and is more robust to noise.
However, \ac{MGD} does not appear to have an improved point of convergence over \ac{DBGD}, Figure~\ref{fig:offline} further confirms this conclusion.
Additionally, Table~\ref{tab:offline} and Figure~\ref{fig:long} reveal thats \ac{DBGD} is incapable of training its neural network so that it improves over the linear model, even after \numprint{1000000} impressions.

Alternatively, the pairwise baseline displays different behavior, providing improvements over \ac{DBGD} and \ac{MGD} on most datasets under all levels of noise.
However, on the istella dataset large decreases in performance are observed.
Thus it is unclear if this method provides a reliable alternative to  \ac{DBGD} or \ac{MGD} in terms of convergence.
Figure~\ref{fig:offline} also reveals that it converges within several hundred impressions, while  \ac{DBGD} or \ac{MGD} continue to learn and considerably improve over the total  \numprint{30000} impressions.
Because the pairwise baseline also converges sub-optimally under the perfect click model, we do not attribute its suboptimal convergence to noise but to the method being biased.

Conversely, Table~\ref{tab:offline} shows that \ac{\OurMethod} reaches significantly higher performance than all the baselines within \numprint{10000} impressions.
Improvements are observed on all datasets under all levels of noise, especially on the commercial datasets where increases up to $0.17$ NDCG are observed.
Our results also show that \ac{\OurMethod} learns faster than the baselines; at all time-steps the offline performance of \ac{\OurMethod} is at least as good or better than all other methods,  across all datasets.
This increased learning speed can also be observed in Figure~\ref{fig:offline}.
Besides the faster learning it also appears as if \ac{\OurMethod} converges at a better optimum than \ac{DBGD} or \ac{MGD}.
However, Figure~\ref{fig:offline} reveals that \ac{DBGD} does not fully converge within \numprint{30000} iterations.
Therefore, we performed an additional experiment where \ac{\OurMethod} and \ac{DBGD} optimize models over \numprint{1000000} impressions on the MSLR-WEB10k dataset, as displayed in Figure~\ref{fig:long}.
Clearly the performance of \ac{DBGD} plateaus at a considerably lower level than that of \ac{\OurMethod}.
Therefore, we conclude that \ac{\OurMethod} indeed has an improved point of final convergence compared to \ac{DBGD} and \ac{MGD}.

Finally, Figure~\ref{fig:offline}~and~\ref{fig:long} also shows the behavior predicted by the speed-quality tradeoff~\cite{oosterhuis2017balancing}: a more complex model will have a worse initial performance but a better final convergence.
Here, we see that depending on the level of interaction noise the neural model requires \numprint{3000} to \numprint{20000} iterations to match the performance of a linear model.
However, in the long run the neural model does converge at a significantly  better point of convergence.
Thus, we conclude that \ac{\OurMethod} is capable of effectively optimizing different kinds of models in terms of offline performance.

In conclusion, our results show that \ac{\OurMethod} learns faster than existing \ac{OLTR} methods while also converging at significantly better levels of performance.

\subsection{User experience during training}
\label{sec:results:onlineperformance}

Besides the ranking models learned by the \ac{OLTR} methods, we also consider the user experience during optimization.
Table~\ref{tab:online} shows that the online performance of \ac{DBGD} and \ac{MGD} are close to each other; \ac{MGD} has a higher online performance due to its faster learning speed~\cite{oosterhuis2016probabilistic, schuth2016mgd}.
In contrast, the pairwise baseline has a substantially lower online performance in all cases.
Because Figure~\ref{fig:offline} shows that the learning speed of the pairwise baseline sometimes matches that of \ac{DBGD} and \ac{MGD}, we attribute this difference to the exploration strategy it uses.
Namely, the random insertion of uniformly sampled documents by this baseline appears to have a strong negative effect on the user experience.

The linear model optimized by \ac{\OurMethod} has significant improvements over all baseline methods on all datasets and under all click models.
This improvement indicates that the exploration of \ac{\OurMethod}, which uses a distribution over documents, does not lead to a worse user experience.
In conclusion, \ac{\OurMethod} provides a considerably better user experience than all existing methods.

Finally, we also discuss the performance of the neural models optimized by \ac{\OurMethod} and \ac{DBGD}.
This model has both significant increases and decreases in online performance varying per dataset and amount of interaction noise.
The decrease in user experience is predicted by the speed-quality tradeoff~\cite{oosterhuis2017balancing}, as Figure~\ref{fig:offline} also shows, the neural model has a slower learning speed leading to a worse initial user experience.
A solution to this tradeoff has been proposed by \citet{oosterhuis2017balancing}, which optimizes a cascade of models.
In this case, the cascade could combine the user experience of the linear model with the final convergence of the neural model, providing the best of both worlds.

\subsection{Improvements of \acs{\OurMethod}}

After having discussed the offline and online performance of \ac{\OurMethod}, we will now answer \ref{rq:performance} and \ref{rq:nonlinear}.

First, concerning \ref{rq:performance} (whether \ac{\OurMethod} performs significantly better than \ac{MGD}), the results of our experiments show that models optimized with \ac{\OurMethod} learn faster and converge at better optima than \ac{MGD}, \ac{DBGD}, and the pairwise baseline, regardless of dataset or level of interaction noise.
Moreover, the level of performance reached with \ac{\OurMethod} is significantly higher than the final convergence of any other method.
Thus, even in the long run \ac{DBGD} and \ac{MGD} are incapable of reaching the offline performance of \ac{\OurMethod}.
Additionally, the online performance of a linear model optimized with \ac{\OurMethod} is significantly better across all datasets and user models.
Therefore, we answer \ref{rq:performance} positively: \ac{\OurMethod} outperforms existing methods both in terms of model convergence and user experience during learning.

Then, with regards to \ref{rq:nonlinear} (whether \ac{\OurMethod} can effectively optimize different types of models), in our experiments we have successfully optimized models from two families: linear models and neural networks.
Both models reach a significantly higher level of performance of model convergence than previous \ac{OLTR} methods, across all datasets and degrees of interaction noise.
As expected, the simpler linear model has a better initial user experience, while the more complex neural model has a better point of convergence.
In conclusion, we answer \ref{rq:nonlinear} positively: \ac{\OurMethod} is applicable to different ranking models and effective for both linear and non-linear models.

\section{Conclusion}
\label{sec:conclusion}

In this chapter, we have introduced a novel \ac{OLTR} method: \ac{\OurMethod} that estimates its gradient using inferred pairwise document preferences.
In contrast with previous \ac{OLTR} approaches \ac{\OurMethod} does not rely on online evaluation to update its model.
Instead after each user interaction it infers preferences between document pairs.
Subsequently, it constructs a pairwise gradient that updates the ranking model according to these preferences.

We have proven that this gradient is unbiased w.r.t.\ user preferences, that is, if there is a preference between a document pair, then in expectation the gradient will update the model to meet this preference.
Furthermore, our experimental results show that \ac{\OurMethod} learns faster and converges at a higher performance level than existing \ac{OLTR} methods.
Thus, it provides better performance in the short and long term, leading to an improved user experience during training as well.
On top of that, \ac{\OurMethod} is also applicable to any differentiable ranking model, in our experiments a linear and a neural network were optimized effectively.
Both reached significant improvements over \ac{DBGD} and \ac{MGD} in performance at convergence.
In conclusion, the novel unbiased \ac{\OurMethod} algorithm provides better performance than existing methods in terms of convergence and user experience.
Unlike the previous state-of-the-art, it can be applied to any differentiable ranking model.

We can now answer thesis research question \ref{thesisrq:pdgd} positively: \ac{OLTR} is possible without relying on model-sampling and online evaluation.
Moreover, our results shows that using \ac{\OurMethod} instead leads to much higher performance, and is much more effective at optimizing non-linear models.

Future research could consider the regret bounds of \ac{\OurMethod}; these could give further insights into why it outperforms \ac{DBGD} based methods.
Furthermore, while we proved the unbiasedness of our method w.r.t.\ document pair preferences, the expected gradient weighs document pairs differently.
Offline \ac{LTR} methods like Lambda\-MART~\cite{burges2010ranknet} use a weighted pairwise loss to create a listwise method that directly optimizes \ac{IR} metrics.
However, in the online setting there is no metric that is directly optimized. 
Instead, future work could see if different weighing approaches are more in line with user preferences.
Another obvious avenue for future research is to explore the effectiveness of different ranking models in the online setting.
There is a large collection of research in ranking models in offline \ac{LTR}, with the introduction of \ac{\OurMethod} such an extensive exploration in models is now also possible in \ac{OLTR}.

Based on the big difference in observed performance between \ac{PDGD} and \ac{DBGD}, Chapter~\ref{chapter:03-oltr-comparison} will further extend this comparison to more extreme experimental conditions.
Furthermore, Chapter~\ref{chapter:06-onlinecounterltr} will also consider the performance of \ac{PDGD} and compare it with methods inspired by counterfactual \ac{LTR}.
Additionally, Chapter~\ref{chapter:06-onlinecounterltr} will consider applying \ac{PDGD} as a counterfactual method and without debiasing weights, and finds that in both these scenarios this leads to biased convergence.

\begin{subappendices}

\section{Notation Reference for Chapter~\ref{chapter:02-pdgd}}
\label{notation:02-pdgd}

\begin{center}
\begin{tabular}{l l}
 \toprule
\bf Notation  & \bf Description \\
\midrule
$q$ & a user-issued query \\
$d$, $d_k$, $d_l$ & document\\
$\mathbf{d}$ & feature representation of a query-document pair \\
$D$ & set of documents\\
$R$ & ranked list \\
$R^*$ & the reversed pair ranking $R^*(d_k, d_l, R)$ \\
$R_i$ & document placed at rank $i$ \\
$\rho$ & preference pair weighting function \\
$\theta$ & parameters of the ranking model\\
$f_\theta(\cdot)$ & ranking model with parameters $\theta$ \\
$f(\mathbf{d}_k)$ & ranking score for a document from model \\
$\mathit{click}(d)$ & a click on document $d$ \\
$d_k =_\mathit{rel} d_l$ & two documents equally preferred by users \\
$d_k >_\mathit{rel} d_l$ & a user preference between two documents \\
$d_k >_\mathbf{c} d_l$ & document preference inferred from clicks \\
\bottomrule
\end{tabular}
\end{center}
\end{subappendices}

\chapter{A Critical Comparison of Online Learning to Rank Methods}
\label{chapter:03-oltr-comparison}

\footnote[]{This chapter was published as~\citep{oosterhuis2019optimizing}.
Appendix~\ref{notation:03-oltr-comparison} gives a reference for the notation used in this chapter.
}

\ac{OLTR} methods optimize ranking models by directly interacting with users, which allows them to be very efficient and responsive.
All \ac{OLTR} methods introduced during the past decade have extended on the original \ac{OLTR} method: \ac{DBGD}.
In Chapter~\ref{chapter:02-pdgd}, a fundamentally different approach was introduced with the \ac{PDGD} algorithm.
The empirical comparisons in Chapter~\ref{chapter:02-pdgd} suggested that \ac{PDGD} converges at much higher levels of performance and learns considerably faster than \ac{DBGD}-based methods.
In contrast, \ac{DBGD} appeared unable to converge on the optimal model in scenarios with little noise or bias.
Furthermore, it seemed \ac{DBGD} is not effective at optimizing non-linear models.
These observations are quite surprising and prompted us to further investigate \ac{DBGD}.
As a result, this Chapter will address the thesis research question:
\begin{itemize}
\item[\ref{thesisrq:dbgd}] \emph{Are \acs{DBGD} \acl{LTR} methods reliable in terms of theoretical soundness and empirical performance?}
\end{itemize}
In this chapter, we investigate whether the previous conclusions about the \ac{PDGD} and \ac{DBGD} comparison generalize from ideal to worst-case circumstances.
We do so in two ways.
First, we compare the theoretical properties of \ac{PDGD} and \ac{DBGD}, by taking a critical look at previously proven properties in the context of ranking.
Second, we estimate an upper and lower bound on the performance of methods by simulating both \emph{ideal} user behavior and extremely \emph{difficult} behavior, i.e., almost-random non-cascading user models.
Our findings show that the theoretical bounds of \ac{DBGD} do not apply to any common ranking model and, furthermore, that the performance of \ac{DBGD} is substantially worse than \ac{PDGD} in both ideal and worst-case circumstances.
These results reproduce previously published findings about the relative performance of \ac{PDGD} vs.\ \ac{DBGD} and generalize them to extremely noisy and non-cascading circumstances.
Overall they show that \ac{DBGD} is a very flawed method for \ac{OLTR} both in terms of theoretical guarantees and performance.

\section{Introduction}
\label{sec:ecir:intro}
\ac{LTR} plays a vital role in information retrieval. It allows us to optimize models that combine hundreds of signals to produce rankings, thereby making large collections of documents accessible to users through effective search and recommendation.
Traditionally, \ac{LTR} has been approached as a supervised learning problem, where annotated datasets provide human judgements indicating relevance.
Over the years, many limitations of such datasets have become apparent: they are costly to produce~\cite{Chapelle2011,qin2013introducing} and actual users often disagree with the relevance annotations~\cite{sanderson2010}.
As an alternative, research into \ac{LTR} approaches that learn from user behavior has increased.
By learning from the implicit feedback in user behavior, users' true preferences can potentially be learned.
However, such methods must deal with the noise and biases that are abundant in user interactions~\cite{yue2010beyond}.
Roughly speaking, there are two approaches to \ac{LTR} from user interactions: learning from historical interactions and \acf{OLTR}\acused{OLTR}.
Learning from historical data allows for optimization without gathering new data~\cite{joachims2017unbiased}, though it does require good models of the biases in logged user interactions~\cite{chuklin-click-2015}.
In contrast, \ac{OLTR} methods learn by interacting with the user, thus they gather their own learning data.
As a result, these methods can adapt instantly and are potentially much more responsive than methods that use historical data.

\acf{DBGD}~\cite{yue2009interactively} is the most prevalent \ac{OLTR} method; it has served as the basis of the field for the past decade.
\ac{DBGD} samples variants of its ranking model, and compares them using interleaving to find improvements~\cite{hofmann2011probabilistic,radlinski2013optimized}.
Subsequent work in \ac{OLTR} has extended on this approach~\cite{hofmann2013reusing,schuth2016mgd,wang2018efficient}.
In Chapter~\ref{chapter:02-pdgd}, the first alternative approach to \ac{DBGD} was introduced with \acfi{PDGD}\acused{PDGD}~\cite{oosterhuis2018differentiable}.
\ac{PDGD} estimates a pairwise gradient that is reweighed to be unbiased w.r.t.\ users' document pair preferences.
Chapter~\ref{chapter:02-pdgd} showed considerable improvements over \ac{DBGD} under simulated user behavior~\cite{oosterhuis2019optimizing}: a substantially higher point of performance at convergence and a much faster learning speed.
The results in Chapter~\ref{chapter:02-pdgd} are based on simulations using low-noise cascading click models.
The pairwise assumption that \ac{PDGD} makes, namely, that all documents preceding a clicked document were observed by the user, is always correct in these circumstances, thus potentially giving it an unfair advantage over \ac{DBGD}.
Furthermore, the low level of noise presents a close-to-ideal situation, and it is unclear whether the findings in Chapter~\ref{chapter:02-pdgd} generalize to less perfect circumstances.

In this chapter, we contrast \ac{PDGD} and \ac{DBGD}.
Prior to an experimental comparison, we determine whether there is a theoretical advantage of \ac{DBGD} over \ac{PDGD} and examine the regret bounds of \ac{DBGD} for ranking problems.
We then investigate whether the benefits of \ac{PDGD} over \ac{DBGD} reported in Chapter~\ref{chapter:02-pdgd} generalize to circumstances ranging from ideal to worst-case.
We simulate circumstances that are perfect for both methods -- behavior without noise or position-bias --  and circumstances that are the worst possible scenario -- almost-random, extremely-biased, non-cascading behavior.
These settings provide estimates of upper and lower bounds on performance, and indicate how well previous comparisons generalize to different circumstances.
Additionally, we introduce a version of \ac{DBGD} that is provided with an oracle interleaving method; its performance shows us the maximum performance \ac{DBGD} could reach from hypothetical extensions.

In summary, we map thesis research question \ref{thesisrq:dbgd} into the following more fine-grained research questions:
\begin{enumerate}[align=left, label={\bf RQ4.\arabic*}, leftmargin=*]
    \item Do the regret bounds of \ac{DBGD} provide a benefit over \ac{PDGD}?\label{rq:regret}
    \item Do the advantages of \ac{PDGD} over \ac{DBGD} observed in Chapter~\ref{chapter:02-pdgd} generalize to extreme levels of noise and bias? \label{rq:noisebias}
    \item Is the performance of \ac{PDGD} reproducible under non-cascading user behavior? \label{rq:cascading}
\end{enumerate}

\section{Related Work}
\label{sec:relatedwork}

This section provides a brief overview of traditional \ac{LTR} (Section~\ref{sec:ltr:annotated}), of \ac{LTR} from historical interactions (Section~\ref{sec:ltr:history}), and \ac{OLTR} (Section~\ref{sec:ltr:online}).

\subsection{Learning to rank from annotated datasets}
\label{sec:ltr:annotated}

Traditionally, \ac{LTR} has been approached as a supervised problem; in the context of \ac{OLTR} this approach is often referred to as \emph{offline} \ac{LTR}.
It requires a dataset containing relevance annotations of query-document pairs, after which a variety of methods can be applied \cite{liu2009learning}.
The limitations of offline \ac{LTR} mainly come from obtaining such annotations.
The costs of gathering annotations are high as it is both time-consuming and expensive \cite{Chapelle2011,qin2013introducing}.
Furthermore, annotators cannot judge for very specific users, i.e., gathering data for personalization problems is infeasible.
Moreover, for certain applications it would be unethical to annotate items, e.g., for search in personal emails or documents \cite{wang2016learning}.
Additionally, annotations are stationary and cannot account for (perceived) relevance changes~\cite{dumais-web-2010,lefortier-online-2014,vakkari-changes-2000}.
Most importantly, though, annotations are not necessarily aligned with user preferences; judges often interpret queries differently from actual users~\cite{sanderson2010}.
As a result, there has been a shift of interest towards \ac{LTR} approaches that do not require annotated data.

\subsection{Learning to rank from historical interactions}
\label{sec:ltr:history}

The idea of \ac{LTR} from user interactions is long-established; one of the earliest examples is the original pairwise \ac{LTR} approach \cite{Joachims2002}.
This approach uses historical click-through interactions from a search engine and considers clicks as indications of relevance.
Though very influential and quite effective, this approach ignores the \emph{noise} and \emph{biases} inherent in user interactions.
Noise, i.e., any user interaction that does not reflect the user's true preference, occurs frequently, since many clicks happen for unexpected reasons~\cite{sanderson2010}.
Biases are systematic forms of noise that occur due to factors other than relevance.
For instance, interactions will only involve displayed documents, resulting in selection bias~\cite{wang2016learning}.
Another important form of bias in \ac{LTR} is position bias, which occurs because users are less likely to consider documents that are ranked lower~\cite{yue2010beyond}.
Thus, to effectively learn true preferences from user interactions, a \ac{LTR} method should be robust to noise and handle biases correctly.

In recent years counterfactual \ac{LTR} methods have been introduced that correct for some of the bias in user interactions.
Such methods use inverse propensity scoring to account for the probability that a user observed a ranking position~\cite{joachims2017unbiased}.
Thus, clicks on positions that are observed less often due to position bias will have greater weight to account for that difference.
However, the position bias must be learned and estimated somewhat accurately~\cite{ai2018unbiased}.
On the other side of the spectrum are click models, which attempt to model user behavior completely~\cite{chuklin-click-2015}.
By predicting behavior accurately, the effect of relevance on user behavior can also be estimated~\cite{borisov2016neural,wang2016learning}.

An advantage of these approaches over \ac{OLTR} is that they only require historical data and thus no new data has to be gathered. However, unlike \ac{OLTR}, they do require a fairly accurate user model, and thus they cannot be applied in cold-start situations.

\subsection{Online learning to rank}
\label{sec:ltr:online}

\ac{OLTR} differs from the approaches listed above because its methods intervene in the search experience.
They have control over what results are displayed, and can learn from their interactions instantly.
Thus, the online approach performs \ac{LTR} by interacting with users directly~\cite{yue2009interactively}.
Similar to \ac{LTR} methods that learn from historical interaction data, \ac{OLTR} methods have the potential to learn the true user preferences.
However, they also have to deal with the noise and biases that come with user interactions.
Another advantage of \ac{OLTR} is that the methods are very responsive, as they can apply their learned behavior instantly.
Conversely, this also brings a danger as an online method that learns incorrect preferences can also worsen the experience immediately.
Thus, it is important that \ac{OLTR} methods are able to learn reliably in spite of noise and biases.
Thus, \ac{OLTR} methods have a two-fold task: they have to simultaneously present rankings that provide a good user experience \emph{and} learn from user interactions with the presented rankings.

The original \ac{OLTR} method is \acf{DBGD}; it approaches optimization as a dueling bandit problem~\cite{yue2009interactively}.
This approach requires an online comparison method that can compare two rankers w.r.t. user preferences; traditionally, \ac{DBGD} methods use interleaving.
Interleaving methods take the rankings produced by two rankers and combine them in a single result list, which is then displayed to users.
From a large number of clicks on the presented list the interleaving methods can reliably infer a preference between the two rankers~\cite{hofmann2011probabilistic,radlinski2013optimized}.
At each timestep, \ac{DBGD} samples a candidate model, i.e., a slight variation of its current model, and compares the current and candidate models using interleaving.
If a preference for the candidate is inferred, the current model is updated towards the candidate slightly.
By doing so, \ac{DBGD} will update its model continuously and should oscillate towards an inferred optimum.
Section~\ref{sec:DBGD} provides a complete description of the \ac{DBGD} algorithm.

Virtually all work in \ac{OLTR} in the decade since the introduction of \ac{DBGD} has used \ac{DBGD} as a basis.
A straightforward extension comes in the form of Multileave Gradient Descent~\cite{schuth2016mgd} which compares a large number of candidates per interaction~\cite{oosterhuis2017sensitive,schuth2015probabilistic,Schuth2014a}.
This leads to a much faster learning process, though in the long run this method does not seem to improve the point of convergence.

One of the earliest extensions of \ac{DBGD} proposed a method for reusing historical interactions to guide exploration for faster learning \cite{hofmann2013reusing}.
While the initial results showed great improvements~\cite{hofmann2013reusing}, later work showed performance drastically decreasing in the long term due to bias introduced by the historical data~\cite{oosterhuis2016probabilistic}.
Unfortunately, \ac{OLTR} work that continued this historical approach~\cite{wang2018efficient} also only considered short term results; moreover, the results of some work~\cite{zhao2016constructing} are not based on held-out data.
As a result, we do not know whether these extensions provide decent long-term performance and it is unclear whether the findings of these studies generalize to more realistic settings.

In Chapter~\ref{chapter:02-pdgd}, an inherently different approach to \ac{OLTR} was introduced with \ac{PDGD}~\cite{oosterhuis2018differentiable}.
\ac{PDGD} interprets its ranking model as a distribution over documents; it estimates a pairwise gradient from user interactions with sampled rankings.
This gradient is differentiable, allowing for non-linear models like neural networks to be optimized, something \ac{DBGD} is ineffective at~\cite{oosterhuis2017balancing,oosterhuis2018differentiable}.
Section~\ref{sec:PDGD} provides a detailed description of \ac{PDGD}.
In the chapter in which we introduced \ac{PDGD} (Chapter~\ref{chapter:02-pdgd}), we claim that it provides substantial improvements over \ac{DBGD}.
However, those claims are based on cascading click models with low levels of noise.
This is problematic because \ac{PDGD} assumes a cascading user, and could thus have an unfair advantage in this setting.
Furthermore, it is unclear whether \ac{DBGD} with a perfect interleaving method could still improve over \ac{PDGD}.
Lastly, \ac{DBGD} has proven regret bounds while \ac{PDGD} has no such guarantees.

In this chapter, we clear up these questions about the relative strengths of \ac{DBGD} and \ac{PDGD} by comparing the two methods under non-cascading, high-noise click models.
Additionally, by providing \ac{DBGD} with an oracle comparison method, its hypothetical maximum performance can be measured; thus, we can study whether an improvement over \ac{PDGD} is hypothetically possible.
Finally, a brief analysis of the theoretical regret bounds of \ac{DBGD} shows that they do not apply to any common ranking model, therefore hardly providing a guaranteed advantage over \ac{PDGD}.

\section{Dueling Bandit Gradient Descent}
\label{sec:DBGD}

This section describes the \ac{DBGD} algorithm in detail, before discussing the regret bounds of the algorithm.

\begin{algorithm}[t]
\caption{\acf{DBGD}.} 
\label{alg:dbgd}
\begin{algorithmic}[1]
\STATE \textbf{Input}: initial weights: $\theta_1$; unit: $u$; learning rate $\eta$.  \label{line:dbgd:initmodel}
\FOR{$t \leftarrow  1 \ldots \infty$ }
	\STATE $q_t \leftarrow \text{receive\_query}(t)$\hfill \textit{\small //  obtain a query from a user} \label{line:dbgd:query}
	\STATE $\theta_t^{c} \gets \theta_t +  \text{sample\_from\_unit\_sphere}(u)$  \hfill \textit{\small //  create candidate ranker} \label{line:dbgd:candidate}
	\STATE $R_t \leftarrow \text{get\_ranking}(\theta_t, D_{q_t})$  \hfill \textit{\small //  get current ranker ranking} \label{line:dbgd:ranking}
	\STATE $R_t^c \leftarrow \text{get\_ranking}(\theta_t^c, D_{q_t})$  \hfill \textit{\small //  get candidate ranker ranking} \label{line:dbgd:candidateranking}
	\STATE $I_t \leftarrow \text{interleave}(R_t, R_t^c)$ \hfill \textit{\small //  interleave both rankings} \label{line:dbgd:interleave}
	\STATE $\mathbf{c}_t \leftarrow \text{display\_to\_user}(I_t)$ \hfill \textit{\small //  displayed interleaved list, record clicks} \label{line:dbgd:display}
	\IF{$\text{preference\_for\_candidate}(I_t, \mathbf{c}_t, R_t, R_t^c)$}
    		\STATE $\theta_{t+1} \leftarrow \theta_t + \eta (\theta^c_t - \theta_t)$ \hfill \textit{\small //  update model towards candidate} \label{line:dbgd:update}
	\ELSE
		\STATE $\theta_{t+1} \leftarrow \theta_t$ \hfill \textit{\small //  no update} \label{line:dbgd:noupdate}
	\ENDIF
\ENDFOR
\end{algorithmic}
\end{algorithm}

\subsection{The \acl{DBGD} method}

The \ac{DBGD} algorithm~\cite{yue2009interactively} describes an indefinite loop that aims to improve a ranking model at each step; Algorithm~\ref{alg:dbgd} provides a formal description.
The algorithm starts a given model with weights $\theta_1$ (Line~\ref{line:dbgd:initmodel}); then it waits for a user-submitted query (Line~\ref{line:dbgd:query}).
At this point a candidate ranker is sampled from the unit sphere around the current model (Line~\ref{line:dbgd:candidate}), and the current and candidate model both produce a ranking for the current query (Line~\ref{line:dbgd:ranking}~and~\ref{line:dbgd:candidateranking}).
These rankings are interleaved (Line~\ref{line:dbgd:interleave}) and displayed to the user (Line~\ref{line:dbgd:display}).
If the interleaving method infers a preference for the candidate ranker from subsequent user interactions the current model is updated towards the candidate (Line~\ref{line:dbgd:update}), otherwise no update is performed (Line~\ref{line:dbgd:noupdate}).
Thus, the model optimized by \ac{DBGD} should converge and oscillate towards an optimum.

\subsection{Regret bounds of \acl{DBGD}}
\label{subsection:regretbounds}
 
Unlike \ac{PDGD}, \ac{DBGD} has proven regret bounds~\cite{yue2009interactively}, potentially providing an advantage in the form of theoretical guarantees.
In this section we answer \ref{rq:regret} by critically looking at the assumptions which form the basis of \ac{DBGD}'s proven regret bounds.

The original \ac{DBGD} paper~\cite{yue2009interactively} proved a sublinear regret under several assumptions.
\ac{DBGD} works with the parameterized space of ranking functions $\mathcal{W}$, that is, every $\theta \in \mathcal{W}$ is a different set of parameters for a ranking function.
For this chapter we will only consider deterministic linear models because all existing \ac{OLTR} work has dealt with them \cite{hofmann2013reusing,hofmann11:balancing,oosterhuis2018differentiable,oosterhuis2016probabilistic,schuth2016mgd,wang2018efficient,yue2009interactively,zhao2016constructing}.
But we note that the proof is easily extendable to neural networks where the output is a monotonic function applied to a linear combination of the last layer.
Then there is assumed to be a concave utility function $u : \mathcal{W} \rightarrow \mathbb{R}$; since this function is concave, there should only be a single instance of weights that are optimal $\theta^*$.
Furthermore, this utility function is assumed to be L-Lipschitz smooth:
\begin{align}
\exists L \in \mathbb{R},\quad \forall (\theta_a, \theta_b) \in \mathcal{W},\quad |u(\theta_a) - u(\theta_b)| < L \|\theta_a - \theta_b \|.
\end{align}
We will show that these assumptions are \emph{incorrect}: there is an infinite number of optimal weights, and the utility function $u$ cannot be L-Lipschitz smooth.
Our proof relies on two assumptions that avoid cases where the ranking problem is trivial.
First, the zero ranker is not the optimal model:
\begin{align}
\theta^* \not= \mathbf{0}.
\end{align}
Second, there should be at least two models with different utility values:
\begin{align}
\exists (\theta, \theta') \in \mathcal{W},\quad u(\theta) \not= u(\theta').
\end{align}
We will start by defining the set of rankings a model $f(\cdot, \theta)$ will produce as:
\begin{align}
\mathcal{R}_D(f(\cdot, \theta))
=
\{ R \mid \forall (d, d') \in D, [f(d, \theta) > f(d', \theta) \rightarrow d \succ_R d']\} .
\end{align}
It is easy to see that multiplying a model with a positive scalar $\alpha > 0$ will not affect this set:
\begin{align}
\forall \alpha \in \mathbb{R}_{>0},\quad  \mathcal{R}_D(f(\cdot, \theta))
= \mathcal{R}_D(\alpha f(\cdot, \theta)).
\end{align}
Consequently, the utility of both functions will be equal:
\begin{align}
\forall \alpha \in \mathbb{R}_{>0},\quad  u(f(\cdot, \theta))
= u(\alpha f(\cdot, \theta)).
\end{align}
For linear models scaling weights has the same effect: $\alpha f(\cdot, \theta) = f(\cdot, \alpha \theta)$.
Thus, the first assumption cannot be true since for any optimal model $f(\cdot, \theta^*)$ there is an infinite set of equally optimal models: $\{f(\cdot, \alpha \theta^*) \mid \alpha \in \mathbb{R}_{>0}\}$.

Then, regarding L-Lipschitz smoothness, using any positive scaling factor:
\begin{align}
\forall \alpha \in \mathbb{R}_{>0},\quad& |u(\theta_a) - u(\theta_b)| = |u(\alpha\theta_a) - u(\alpha\theta_b)|, \\
\forall \alpha \in \mathbb{R}_{>0},\quad& \| \alpha\theta_a - \alpha\theta_b \| = \alpha \| \theta_a - \theta_b \|.
\end{align}
Thus the smoothness assumption can be rewritten as:
\begin{align}
\exists L \in \mathbb{R},\quad \forall \alpha \in \mathbb{R}_{>0},\quad \forall (\theta_a, \theta_b) \in \mathcal{W},\quad |u(\theta_a) - u(\theta_b)| < \alpha L\| \theta_a - \theta_b \|.
\end{align}
However, there is always an infinite number of values for $\alpha$ small enough to break the assumption.
Therefore, we conclude that a concave L-Lipschitz smooth utility function can never exist for a deterministic linear ranking model, thus the proof for the regret bounds is not applicable when using deterministic linear models.

Consequently, the regret bounds of \ac{DBGD} do not apply to the ranking problems in previous work.
One may consider other models (e.g., spherical coordinate based models or stochastic ranking models), however this still means that for the simplest and most common ranking problems there are no proven regret bounds.
As a result, we answer \ref{rq:regret} negatively, the regret bounds of \ac{DBGD} do not provide a benefit over \ac{PDGD} for the ranking problems in \ac{LTR}.

\section{Pairwise Differentiable Gradient Descent}
\label{sec:PDGD}

The \acf{PDGD}~\cite{oosterhuis2018differentiable} algorithm is formally described in Algorithm~\ref{alg:pdgd}.
\ac{PDGD} interprets a ranking function $f(\cdot, \theta)$ as a probability distribution over documents by applying a Plackett-Luce model:
\begin{align}
P(d | D, \theta) = \frac{e^{f(d,\theta)}}{\sum_{d' \in D} e^{f(d',\theta)}}. \label{eq:docprob}
\end{align}
First, the algorithm waits for a user query (Line~\ref{line:pdgd:query}), then a ranking $R$ is created by sampling documents without replacement (Line~\ref{line:pdgd:samplelist}).
Then \ac{PDGD} observes clicks from the user and infers pairwise document preferences from them.
All documents preceding  a clicked document and the first succeeding one are assumed to be observed by the user.
Preferences between clicked and unclicked observed documents are inferred by \ac{PDGD}; this is a long-standing assumption in pairwise \ac{LTR} \cite{Joachims2002}.
We denote an \emph{inferred} preference between documents as $d_i \succ_{\mathbf{c}} d_j$, and the probability of the model placing $d_i$ earlier than $d_j$ is denoted and calculated by:
\begin{align}
P(d_i \succ d_j \mid \theta) = \frac{e^{f(d_i,\theta)}}{e^{f(d_i,\theta)} + e^{f(d_j,\theta)}}.
\end{align}
The gradient is estimated as a sum over inferred preferences with a weight $\rho$ per pair:
\begin{align}
\nabla &f(\cdot, \theta) \nonumber \\
&\approx \sum_{d_i \succ_{\mathbf{c}} d_j} \rho(d_i, d_j, R, D) [\Delta P(d_i \succ d_j \mid \theta)] \label{eq:comparison:novelgradient}  \\
&= \sum_{d_i \succ_{\mathbf{c}} d_j} \rho(d_i, d_j, R, D) P(d_i \succ d_j \mid \theta)P(d_j \succ d_i \mid \theta)(f'({d}_i, \theta) - f'({d}_j, \theta)).
\nonumber
\end{align}
After computing the gradient (Line~\ref{line:pdgd:modelgrad}), the model is updated accordingly (Line \ref{line:pdgd:update}).
This will change the distribution (Equation~\ref{eq:docprob}) towards the inferred preferences.
This distribution models the confidence over which documents should be placed first; the exploration of \ac{PDGD} is naturally guided by this confidence and can vary per query.

\begin{algorithm}[t]
\caption{\acf{PDGD}.} 
\label{alg:pdgd}
\begin{algorithmic}[1]
\STATE \textbf{Input}: initial weights: $\mathbf{\theta}_1$; scoring function: $f$; learning rate $\eta$.  \label{line:pdgd:initmodel}
\FOR{$t \leftarrow  1 \ldots \infty$ }
	\STATE $q_t \leftarrow \text{receive\_query}(t)$\hfill \textit{\small // obtain a query from a user} \label{line:pdgd:query}
	\STATE $\mathbf{R}_t \leftarrow \text{sample\_list}(f_{\theta_t}, D_{q_t})$ \hfill \textit{\small // sample list according to Eq.~\ref{eq:docprob}} \label{line:pdgd:samplelist}
	\STATE $\mathbf{c}_t \leftarrow \text{receive\_clicks}(\mathbf{R}_t)$ \hfill \textit{\small // show result list to the user} \label{line:pdgd:clicks}
	\STATE $\nabla  f(\cdot,\theta_t) \leftarrow \mathbf{0}$ \hfill \textit{\small // initialize gradient} \label{line:pdgd:initgrad}
	\FOR{$d_i \succ_{\mathbf{c}} d_j \in \mathbf{c}_t$} \label{line:pdgd:prefinfer}
	\STATE $w \leftarrow \rho(d_i, d_j, R, D)$  \hfill \textit{\small // initialize pair weight (Eq.~\ref{eq:rho})} \label{line:pdgd:initpair}
	\STATE $w \leftarrow w \times  P(d_i \succ d_j \mid \theta_t)P(d_j \succ d_i \mid \theta_t)$
             \hfill \textit{\small // pair gradient (Eq.~\ref{eq:novelgradient})} \label{line:pdgd:pairgrad}
	\STATE  $\nabla  f(\cdot,\theta_t) \leftarrow \nabla f_{\theta_t} + w \times (f'({d}_i, \theta_t) - f'({d}_j, \theta_t))$
	  \hfill \textit{\small // model gradient (Eq.~\ref{eq:novelgradient})} \label{line:pdgd:modelgrad}
	\ENDFOR
	\STATE $\theta_{t+1} \leftarrow \theta_{t} + \eta \nabla  f(\cdot,\theta_t)$
	\hfill \textit{\small // update the ranking model} \label{line:pdgd:update}
\ENDFOR
\end{algorithmic}
\end{algorithm}

The weighting function $\rho$ is used to make the gradient of \ac{PDGD} unbiased w.r.t. document pair preferences.
It uses the reverse pair ranking: $R^*(d_i, d_j, R)$, which is the same ranking as $R$ but with the document positions of $d_i$ and $d_j$ swapped.
Then $\rho$ is the ratio between the probability of $R$ and $R^*$:
\begin{align}
\rho(d_i, d_j, R, D) &= \frac{P(R^*(d_i, d_j, R) \mid D)}{P(R \mid D) + P(R^*(d_i, d_j, R) \mid D)}. \label{eq:rho}
\end{align}
In Chapter~\ref{chapter:02-pdgd}, the weighted gradient is proven to be unbiased w.r.t. document pair preferences under certain assumptions about the user.
Here, this unbiasedness is defined by being able to rewrite the gradient as:
\begin{align}
E[\Delta f(\cdot, \theta)] = \sum_{(d_i, d_j) \in D} \alpha_{ij}(f'(\mathbf{d}_i, \theta) - f'(\mathbf{d}_j, \theta)), \label{eq:unbias}
\end{align}
and the sign of $\alpha_{ij}$ agreeing with the preference of the user:
\begin{align}
\mathit{sign}(\alpha_{ij}) = \textit{sign}(\textit{relevance}(d_i) - \textit{relevance}(d_j)). \label{eq:signunbias}
\end{align}
The proof in Chapter~\ref{chapter:02-pdgd} only relies on the difference in the probabilities of inferring a preference: $d_i \succ_{\mathbf{c}} d_j$ in $R$ and the opposite preference $d_j \succ_{\mathbf{c}} d_i$ in $R^*(d_i, d_j, R)$. 
The proof relies on the sign of this difference to match the user's preference:
\begin{align}
\begin{split}
& \textit{sign}(P(d_i \succ_{\mathbf{c}} d_j \mid R) - P(d_j \succ_{\mathbf{c}} d_i \mid R^*))
\\&\hspace{4cm} = \textit{sign}(\textit{relevance}(d_i) - \textit{relevance}(d_j)). \label{eq:signclick}
 \end{split}
\end{align}
As long as Equation~\ref{eq:signclick} is true, Equation~\ref{eq:unbias}~and~\ref{eq:signunbias} hold as well.
Interestingly, this means that other assumptions about the user can be made than in Chapter~\ref{chapter:02-pdgd}, and other variations of \ac{PDGD} are possible, e.g., the algorithm could assume that all documents are observed and the proof still holds.

Chapter~\ref{chapter:02-pdgd} reports large improvements over \ac{DBGD}, however these improvements were observed under simulated cascading user models.
This means that the assumption that \ac{PDGD} makes about which documents are observed are always true.
As a result, it is currently unclear whether the method is really better in cases where the assumption does not hold.

\begin{table}[tb]
\caption{Click probabilities for simulated \emph{perfect} or \emph{almost random} behavior.}
\centering
\begin{tabularx}{\columnwidth}{ l X X X X X  }
\toprule
& \multicolumn{5}{c}{ $P(\mathit{click}(d)\mid \mathit{relevance}(d), \mathit{observed}(d))$} \\
\cmidrule(lr){2-6} 
$\mathit{relevance}(d)$ & \emph{$ 0$} & \emph{$ 1$}  &  \emph{$ 2$} & \emph{$ 3$} & \emph{$ 4$} \\
\midrule
 \emph{perfect}                           &  0.00 &  0.20 &  0.40 &  0.80 &  1.00  \\
  \emph{almost random}  \quad\quad\quad\quad &  0.40 &  0.45 &  0.50 &  0.55 &  0.60  \\
\bottomrule
\end{tabularx}
\label{tab:oltrcomp:clickmodels}
\end{table}

\section{Experiments}
\label{sec:experiments}

In this section we detail the experiments that were performed to answer the research questions in Section~\ref{sec:ecir:intro}.\footnote{The resources for reproducing the experiments in this chapter are available at \url{https://github.com/HarrieO/OnlineLearningToRank}}

\subsection{Datasets}
\label{sec:experiments:datasets}

Our experiments are performed over three large labelled datasets from commercial search engines, the largest publicly available \ac{LTR} datasets.
These datasets are the \emph{MLSR-WEB10K}~\cite{qin2013introducing}, \emph{Yahoo!\ Webscope}~\cite{Chapelle2011}, and \emph{Istella}~\cite{dato2016fast} datasets.
Each contains a set of queries with corresponding preselected document sets.
Query-document pairs are represented by feature vectors and five-grade relevance annotations ranging from \emph{not relevant} (0) to \emph{perfectly relevant}~(4).
Together, the datasets contain over \numprint{29900} queries and between 136 and 700~features per representation.

\subsection{Simulating user behavior}
\label{sec:experiments:users}

In order to simulate user behavior we partly follow the standard setup for \ac{OLTR}~\cite{he2009evaluation,hofmann11:balancing,oosterhuis2016probabilistic,schuth2016mgd,zoghi:wsdm14:relative}.
At each step a user issued query is simulated by uniformly sampling from the datasets.
The algorithm then decides what result list to display to the user, the result list is limited to $k = 10$ documents.
Then user interactions are simulated using click models~\cite{chuklin-click-2015}.
Past \ac{OLTR} work has only considered \emph{cascading click models}~\cite{guo09:efficient}; in contrast, we also use \emph{non-cascading click models}.
The probability of a click is conditioned on relevance and observance:
\begin{align}
P(\mathit{click}(d)\mid \mathit{relevance}(d), \mathit{observed}(d)).
\end{align}
We use two levels of noise to simulate \emph{perfect} user behavior and \emph{almost random} behavior~\cite{Hofmann2013a}, Table~\ref{tab:oltrcomp:clickmodels} lists the probabilities of both.
The \emph{perfect} user observes all documents, never clicks on anything non-relevant, and always clicks on the most relevant documents.
Two variants of \emph{almost random} behavior are used.
The first is based on cascading behavior, here the user first observes the top document, then decides to click according to Table~\ref{tab:oltrcomp:clickmodels}.
If a click occurs, then, with probability $P(stop \mid click) = 0.5$ the user stops looking at more documents, otherwise the process continues on the next document.
The second \emph{almost random} behavior is simulated in a non-cascading way; here we follow~\cite{joachims2017unbiased} and model the observing probabilities as:
\begin{align}
P(\mathit{observed}(d) \mid \mathit{rank}(d)) = \frac{1}{rank(d)}.
\end{align}
The important distinction is that it is safe to assume that the cascading user has observed all documents ranked before a click, while this is not necessarily true for the non-cascading user.
Since \ac{PDGD} makes this assumption, testing under both models can show us how much of its performance relies on this assumption.
Furthermore, the \emph{almost random} model has an extreme level of noise and position bias compared to the click models used in previous \ac{OLTR} work~\cite{hofmann11:balancing,oosterhuis2016probabilistic,schuth2016mgd}, and we argue it simulates an (almost) worst-case scenario.

\subsection{Experimental runs}
\label{sec:experiments:runs}

In our experiments we simulate runs consisting of \numprint{1000000} impressions; each run was repeated 125 times under each of the three click models.
\ac{PDGD} was run with $\eta = 0.1$ and zero initialization, \ac{DBGD} was run using Probabilistic Interleaving~\cite{oosterhuis2016probabilistic} with zero initialization, $\eta = 0.001$, and the unit sphere with $\delta=1$.
Other variants like Multileave Gradient Descent~\cite{schuth2016mgd} are not included; previous work has shown that their performance matches that of regular \ac{DBGD} after around \numprint{30000} impressions~\cite{oosterhuis2018differentiable,oosterhuis2016probabilistic,schuth2016mgd}.
The initial boost in performance comes at a large computational cost, though, as the fastest approaches keep track of at least 50 ranking models~\cite{oosterhuis2016probabilistic}, which makes running long experiments extremely impractical.
Instead, we introduce a novel oracle version of \ac{DBGD}, where, instead of interleaving, the NDCG values on the current query are calculated and the highest scoring model is selected.
This simulates a hypothetical perfect interleaving method, and we argue that the performance of this oracle run indicates what the upper bound on \ac{DBGD} performance is.

Performance is measured by NDCG@10 on a held-out test set, a two-sided t-test is performed for significance testing.
We do not consider the user experience during training, because Chapter~\ref{chapter:02-pdgd} has already investigated this aspect thoroughly.

\section{Experimental Results and Analysis}
\label{sec:results}

Recall that in Section~\ref{subsection:regretbounds} we have already provided a negative answer to \ref{rq:regret}: the regret bounds of \ac{DBGD} do not provide a benefit over \ac{PDGD} for the common ranking problem in \ac{LTR}.
In this section we present our experimental results and answer \ref{rq:noisebias} (whether the advantages of \ac{PDGD} over \ac{DBGD} of previous work generalize to extreme levels of noise and bias) and \ref{rq:cascading} (whether the performance of \ac{PDGD} is reproducible under non-cascading user behavior). 

\begin{figure}[t]
\caption{Performance (NDCG@10) on held-out data from Yahoo (top), MSLR (center), Istella (bottom) datasets, under the \emph{perfect}, and \emph{almost random} user models: cascading (casc.) and non-cascading (non-casc.).
The shaded areas display the standard deviation.}

\begin{tabular}{c}
\includegraphics[scale=.4]{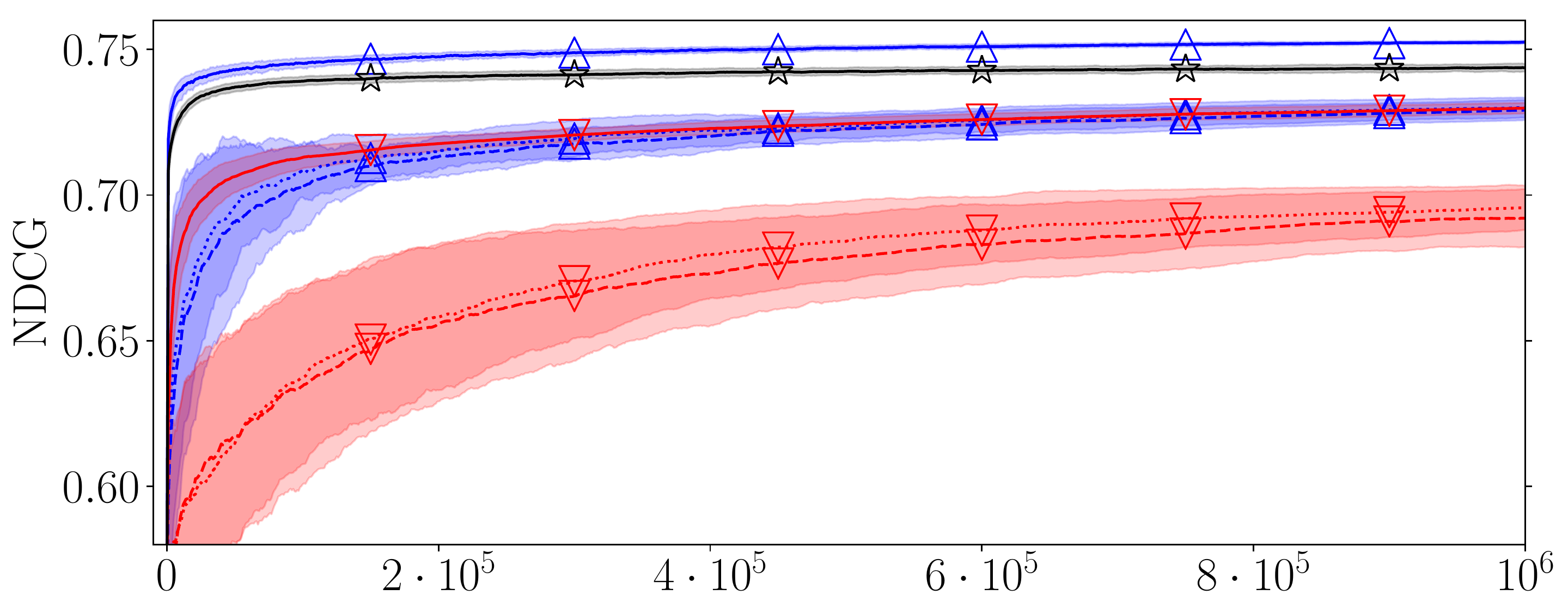}   \\
\includegraphics[scale=.4]{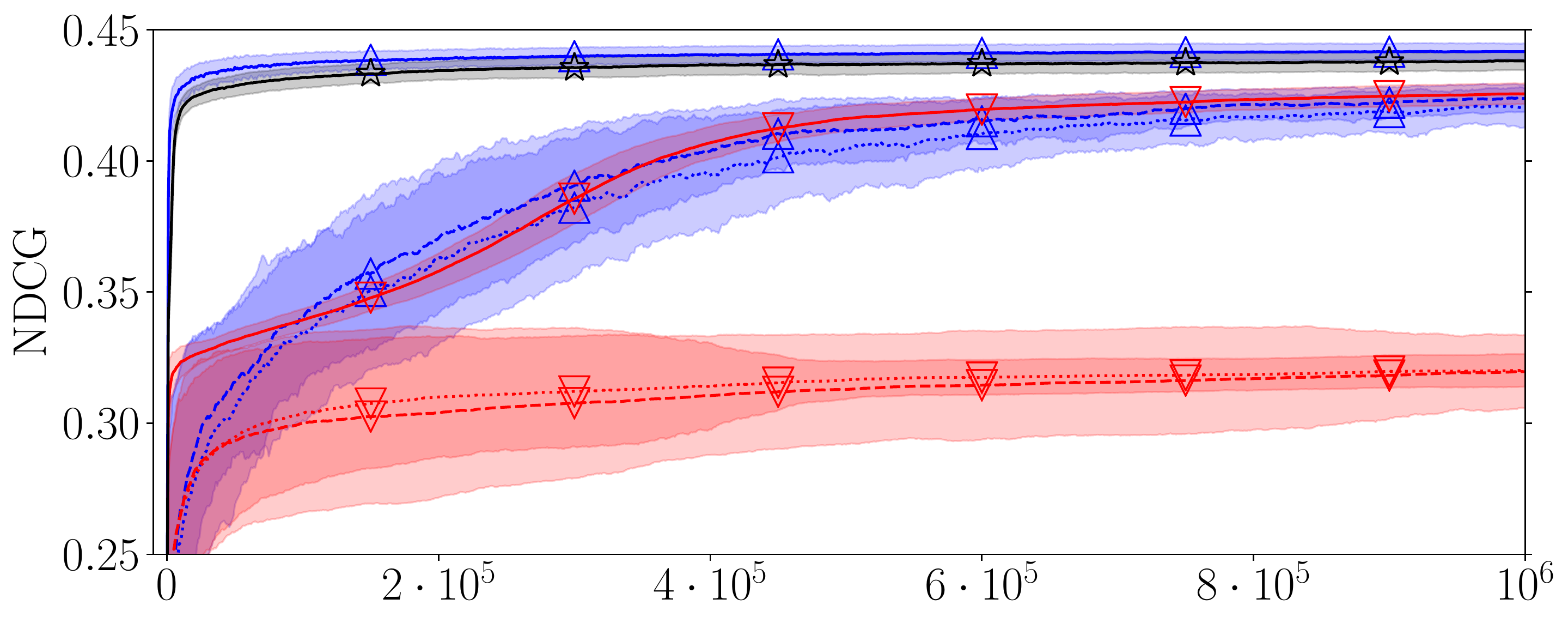}  \\
\hspace*{0.5em}\includegraphics[scale=.4]{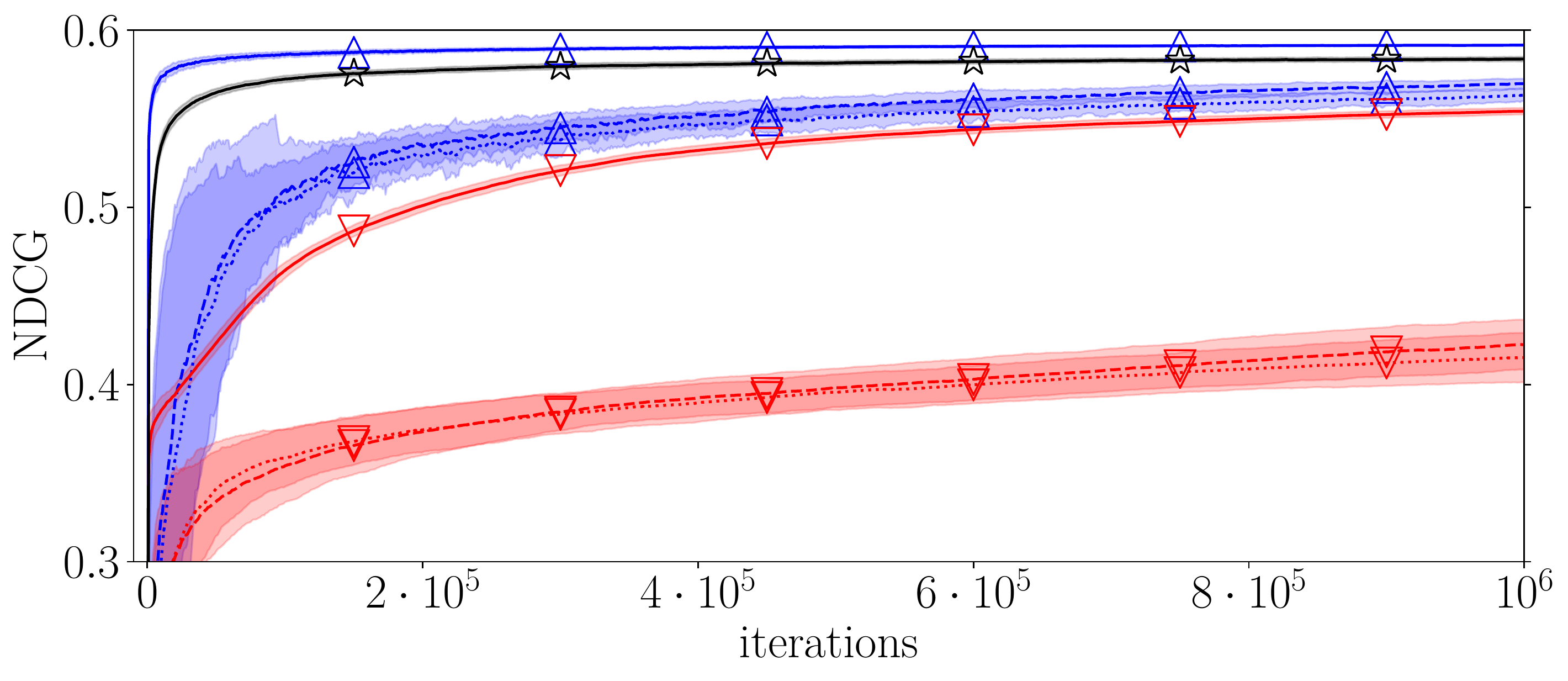} \\
\includegraphics[width=\textwidth]{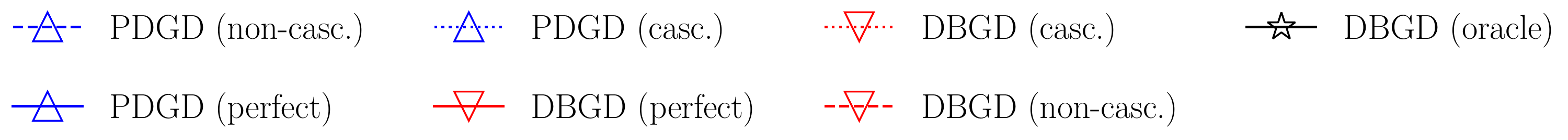}
\end{tabular}

\label{fig:main}
\end{figure}
Our main results are presented in Table~\ref{tab:main}. Additionally, Figure~\ref{fig:main} displays the average performance over \numprint{1000000} impressions.
First, we consider the performance of \ac{DBGD}; there is a substantial difference between its performance under the \emph{perfect} and \emph{almost random} user models on all datasets.
Thus, it seems that \ac{DBGD} is strongly affected by noise and bias in interactions; interestingly, there is little difference between performance under the cascading and non-cascading behavior.
On all datasets the \emph{oracle} version of \ac{DBGD} performs significantly better than \ac{DBGD} under \emph{perfect} user behavior.
This means there is still room for improvement and hypothetical improvements in, e.g., interleaving could lead to significant increases in long-term \ac{DBGD} performance.

Next, we look at the performance of \ac{PDGD}; here, there is also a significant difference between performance under the \emph{perfect} and \emph{almost random} user models on all datasets.
However, the effect of noise and bias is very limited compared to \ac{DBGD}, and this difference at \numprint{1000000} impressions is always less than $0.03$ NDCG on any dataset.

To answer \ref{rq:noisebias}, we compare the performance of \ac{DBGD} and \ac{PDGD}.
Across all datasets, when comparing \ac{DBGD} and \ac{PDGD} under the same levels of interaction noise and bias, the performance of \ac{PDGD} is significantly better in every case.
Furthermore, \ac{PDGD} under the \emph{perfect} user model significantly outperforms the \emph{oracle} run of \ac{DBGD}, despite the latter being able to directly observe the NDCG of rankers on the current query.
Moreover, when comparing \acs{PDGD}'s performance under the \emph{almost random} user model with \ac{DBGD} under the \emph{perfect} user model, we see the differences are limited and in both directions.
Thus, even under ideal circumstances \ac{DBGD} does not consistently outperform \ac{PDGD} under extremely difficult circumstances.
As a result, we answer \ref{rq:noisebias} positively: our results strongly indicate that the performance of \ac{PDGD} is considerably better than \ac{DBGD} and that these findings generalize from ideal circumstances to settings with extreme levels of noise and bias.

Finally, to answer \ref{rq:cascading}, we look at the performance under the two \emph{almost random} user models.
Surprisingly, there is no clear difference between the performance of \ac{PDGD} under \emph{cascading} and \emph{non-cascading} user behavior.
The differences are small and per dataset it differs which circumstances are slightly preferred.
Therefore, we answer \ref{rq:cascading} positively: the performance of \ac{PDGD} is reproducible under \emph{non-cascading} user behavior.

\begin{table*}[t]
\centering
\caption{Performance (NDCG@10) after \numprint{1000000} impressions for \ac{DBGD} and \ac{PDGD} under a \emph{perfect} click model and two almost-random click models: \emph{cascading} and \emph{non-cascading}, and \ac{DBGD} with an \emph{oracle} comparator.
Significant improvements and losses (p $<$ 0.01) between \ac{DBGD} and \ac{PDGD} are indicated by \dubbelop, \dubbelneer, and $\circ$ (no significant difference). Indications are in order of: \emph{oracle}, \emph{perfect}, \emph{cascading}, and \emph{non-cascading}.
}
\begin{tabular*}{\textwidth}{@{\extracolsep{\fill} } l  l l l  }
\toprule
 & { \small \textbf{Yahoo}}  & { \small \textbf{MSLR}}  & { \small \textbf{Istella}} \\
\midrule
& \multicolumn{3}{c}{\textit{\acl{DBGD}}} \\
\midrule
\textit{oracle} & 0.744 {\tiny (0.001)} {\tiny \dubbelneer} {\tiny \dubbelop} {\tiny \dubbelop} & 0.438 {\tiny (0.004)} {\tiny \dubbelneer} {\tiny \dubbelop} {\tiny \dubbelop} & 0.584 {\tiny (0.001)} {\tiny \dubbelneer} {\tiny \dubbelop} {\tiny \dubbelop} \\
\textit{perfect} & 0.730 {\tiny (0.002)} {\tiny \dubbelneer} {\small $\circ$} {\small $\circ$} & 0.426 {\tiny (0.004)} {\tiny \dubbelneer} {\tiny \dubbelop} {\tiny \dubbelop} & 0.554 {\tiny (0.002)} {\tiny \dubbelneer} {\tiny \dubbelneer} {\tiny \dubbelneer} \\
\textit{cascading} & 0.696 {\tiny (0.008)} {\tiny \dubbelneer} {\tiny \dubbelneer} {\tiny \dubbelneer} & 0.320 {\tiny (0.006)} {\tiny \dubbelneer} {\tiny \dubbelneer} {\tiny \dubbelneer} & 0.415 {\tiny (0.014)} {\tiny \dubbelneer} {\tiny \dubbelneer} {\tiny \dubbelneer} \\
\textit{non-cascading} & 0.692 {\tiny (0.010)} {\tiny \dubbelneer} {\tiny \dubbelneer} {\tiny \dubbelneer} & 0.320 {\tiny (0.014)} {\tiny \dubbelneer} {\tiny \dubbelneer} {\tiny \dubbelneer} & 0.422 {\tiny (0.014)} {\tiny \dubbelneer} {\tiny \dubbelneer} {\tiny \dubbelneer} \\
\midrule
& \multicolumn{3}{c}{\textit{\acl{PDGD}}} \\
\midrule
\textit{perfect} & 0.752 {\tiny (0.001)} {\tiny \dubbelop} {\tiny \dubbelop} {\tiny \dubbelop} {\tiny \dubbelop} & 0.442 {\tiny (0.003)} {\tiny \dubbelop} {\tiny \dubbelop} {\tiny \dubbelop} {\tiny \dubbelop} & 0.592 {\tiny (0.000)} {\tiny \dubbelop} {\tiny \dubbelop} {\tiny \dubbelop} {\tiny \dubbelop} \\
\textit{cascading} & 0.730 {\tiny (0.003)} {\tiny \dubbelneer} {\small $\circ$} {\tiny \dubbelop} {\tiny \dubbelop} & 0.420 {\tiny (0.007)} {\tiny \dubbelneer} {\tiny \dubbelneer} {\tiny \dubbelop} {\tiny \dubbelop} & 0.563 {\tiny (0.003)} {\tiny \dubbelneer} {\tiny \dubbelop} {\tiny \dubbelop} {\tiny \dubbelop} \\
\textit{non-cascading} & 0.729 {\tiny (0.003)} {\tiny \dubbelneer} {\small $\circ$} {\tiny \dubbelop} {\tiny \dubbelop} & 0.424 {\tiny (0.005)} {\tiny \dubbelneer} {\tiny \dubbelneer} {\tiny \dubbelop} {\tiny \dubbelop} & 0.570 {\tiny (0.003)} {\tiny \dubbelneer} {\tiny \dubbelop} {\tiny \dubbelop} {\tiny \dubbelop} \\
\bottomrule
\end{tabular*}

\label{tab:main}
\end{table*}

\section{Conclusion}
\label{sec:conclusion}

In this chapter, we have reproduced and generalized findings about the relative performance of \acf{DBGD} and \acf{PDGD}.
Our results show that the performance of \ac{PDGD} is reproducible under non-cascading user behavior.
Furthermore, \ac{PDGD} outperforms \ac{DBGD} in both \emph{ideal} and extremely \emph{difficult} circumstances with high levels of noise and bias.
Moreover, the performance of \ac{PDGD} in extremely \emph{difficult} circumstances is comparable to that of \ac{DBGD} in \emph{ideal} circumstances.
Additionally, we have shown that the regret bounds of \ac{DBGD} are not applicable to the common ranking problem in \ac{LTR}.
In summary, our results strongly confirm the previous finding that \ac{PDGD} consistently outperforms \ac{DBGD}, and generalizes this conclusion to circumstances with extreme levels of noise and bias.

With these findings we can answer \ref{thesisrq:dbgd} mostly negatively: the theory behind \ac{DBGD} is not sound for the common deterministic ranking problem, moreover, \ac{DBGD} has extremely poor performance when compared to the \ac{PDGD} method under varying conditions.
Consequently, there appears to be no advantage to using \ac{DBGD} over \ac{PDGD} in either theoretical or empirical terms.
In addition, a decade of \ac{OLTR} work has attempted to extend \ac{DBGD} in numerous ways without leading to any measurable long-term improvements.
Together, this suggests that the general approach of \ac{DBGD} based methods, i.e., sampling models and comparing with online evaluation, is not an effective way of optimizing ranking models.
Although the \ac{PDGD} method considerably outperforms the \ac{DBGD} approach, we currently do not have a theoretical explanation for this difference.
Thus it seems plausible that a more effective \ac{OLTR} method could be derived, if the theory behind the effectiveness of \ac{OLTR} methods is better understood.
Due to this potential and the current lack of regret bounds applicable to \ac{OLTR}, we argue that a theoretical analysis of \ac{OLTR} would make a very valuable future contribution to the field.

Finally, we consider the limitations of the comparison in this chapter.
As is standard in \ac{OLTR} our results are based on simulated user behavior. 
These simulations provide valuable insights: they enable direct control over biases and noise, and evaluation can be performed at each time step.
In this chapter, the generalizability of this setup was pushed the furthest by varying the conditions to the extremely difficult.
It appears unlikely that more reliable conclusions can be reached from simulated behavior.
Thus we argue that the most valuable future comparisons would be in experimental settings with real users.
Furthermore, with the performance improvements of \ac{PDGD} the time seems right for evaluating the effectiveness of \ac{OLTR} in real-world applications.

The limited theoretical guarantees regarding \ac{OLTR} methods, prompted the second part of this thesis where we consider counterfactual \ac{LTR}.
In contrast with \ac{OLTR}, counterfactual \ac{LTR} methods are founded on assumed models of user behavior and are proven to unbiasedly optimize ranking metrics if the assumed models are correct.
Despite these theoretical strengths, empirical comparisons in previous work show that \ac{PDGD} is more robust than existing counterfactual \ac{LTR} methods.
In Chapter~\ref{chapter:06-onlinecounterltr} we introduce a counterfactual \ac{LTR} method that can reach the same levels of performance as \ac{PDGD} when applied online.

\begin{subappendices}

\section{Notation Reference for Chapter~\ref{chapter:03-oltr-comparison}}
\label{notation:03-oltr-comparison}

\begin{center}
\begin{tabular}{l l}
 \toprule
\bf Notation  & \bf Description \\
\midrule
$t$ & a timestep \\
$q$ & a user-issued query \\
$d$, $d_k$, $d_l$ & document\\
$\mathbf{d}$ & feature representation of a query-document pair \\
$D$ & set of documents\\
$R$ & ranked list \\
$I_t$ & an interleaved result list \\
$R^*$ & the reversed pair ranking $R^*(d_k, d_l, R)$ \\
$\rho$ & preference pair weighting function \\
$\theta$ & parameters of the ranking model\\
$f_\theta(\cdot)$ & ranking model with parameters $\theta$ \\
$f(\mathbf{d}_k)$ & ranking score for a document from model \\
$\mathbf{c}_t$ & a binary vector representing the clicks at timestep $t$ \\
\bottomrule
\end{tabular}
\end{center}
\end{subappendices}

\part{A Single Framework for Online and Counterfactual Learning to Rank}

\chapter{Policy-Aware Counterfactual Learning to Rank for Top-$k$ Rankings}
\label{chapter:04-topk}

\newcommand{\ranking}{R}
\newcommand{\displayranking}{\bar{R}}
\newcommand{\doc}{d}
\renewcommand{\rank}[2]{\text{rank}(#1\mid#2)}

\footnote[]{This chapter was published as~\citep{oosterhuis2020topkrankings}.
Appendix~\ref{notation:04-topk} gives a reference for the notation used in this chapter.
}

Counterfactual \ac{LTR} methods optimize ranking systems using logged user interactions that contain interaction biases.
Existing methods are only unbiased if users are presented with all relevant items in every ranking.
However, in prevalent top-$k$ ranking settings not all items can be displayed at once.
Therefore, there is currently no existing counterfactual unbiased \ac{LTR} method for top-$k$ rankings.
In this chapter we address this limitation by asking the thesis research question:
\begin{itemize}
\item[\ref{thesisrq:topk}] Can counterfactual \ac{LTR} be extended to top-$k$ ranking settings?
\end{itemize}
We introduce a novel policy-aware counterfactual estimator for \ac{LTR} metrics that can account for the effect of a stochastic logging policy.
We prove that the policy-aware estimator is unbiased if every relevant item has a non-zero probability to appear in the top-$k$ ranking.
Our experimental results show that the performance of our estimator is not affected by the size of $k$: for any $k$, the policy-aware estimator reaches the same retrieval performance while learning from top-$k$ feedback as when learning from feedback on the full ranking.

While the policy-aware estimator allows us to learn from top-$k$ feedback, there is no theoretically-grounded way to optimize for top-$k$ ranking metrics.
Furthermore, existing counterfactual \ac{LTR} work has mostly used novel loss functions for optimization, which are quite different from those used in supervised \ac{LTR}.
This lead us to ask the following thesis research question:
\begin{itemize}
\item[\ref{thesisrq:lambdaloss}] Is it possible to apply state-of-the-art supervised \ac{LTR} to the counterfactual \ac{LTR} problem?
\end{itemize}
In this chapter, we also introduce novel extensions of supervised \ac{LTR} methods to perform counterfactual \ac{LTR} and to optimize top-$k$ metrics.
Together, our contributions introduce the first policy-aware unbiased \ac{LTR} approach that learns from top-$k$ feedback and optimizes top-$k$ metrics.
As a result, counterfactual \ac{LTR} is now applicable to the very prevalent top-$k$ ranking setting in search and recommendation.

\section{Introduction}
\label{sec:intro}

\ac{LTR} optimizes ranking systems to provide high quality rankings.
Interest in \ac{LTR} from user interactions has greatly increased in recent years with the introduction of unbiased \ac{LTR} methods~\citep{joachims2017unbiased, wang2016learning}.
The potential for learning from logged user interactions is great:
user interactions provide valuable implicit feedback while also being cheap and relatively easy to acquire at scale~\citep{joachims2017accurately}.
However, interaction logs also contain large amounts of bias, which is the result of both user behavior and the ranker used during logging.
For instance, users are more likely to examine items at the top of rankings, consequently the display position of an item heavily affects the number of interactions it receives~\citep{wang2018position}.
This effect is called \emph{position bias} and it is very dominant when learning from interactions with rankings.
Naively ignoring it during learning can be detrimental to ranking performance, as the learning process is strongly impacted by what rankings were displayed during logging instead of \emph{true} user preferences.
The goal of unbiased \ac{LTR} methods is to optimize a ranker w.r.t.\ the \emph{true} user preferences, consequently, they have to account and correct for such forms of bias.

Previous work on unbiased \ac{LTR} has mainly focussed on accounting for position bias through counterfactual learning~\citep{joachims2017unbiased, wang2016learning, ai2018unbiased}.
The prevalent approach models the probability of a user examining an item in a displayed ranking.
This probability can be inferred from user interactions~\citep{joachims2017unbiased, wang2016learning, ai2018unbiased, wang2018position, agarwal2019estimating} and corrected for using \emph{inverse propensity scoring}.
As a result, these methods optimize a loss that in expectation is unaffected by the examination probabilities during logging, hence it is unbiased w.r.t.\ position bias.

This approach has been applied effectively in various ranking settings, including search for scientific articles~\cite{joachims2017unbiased}, email~\cite{wang2016learning} or other personal documents~\cite{wang2018position}.
However, a limitation of existing approaches is that in every logged ranking they require every relevant item to have a non-zero chance of being examined~\cite{carterette2018offline, joachims2017unbiased}.
In this chapter, we focus on top-$k$ rankings where the number of displayed items is systematically limited.
These rankings can display at most $k$ items, making it practically unavoidable that relevant items are missing.
Consequently, existing counterfactual \ac{LTR} methods are not unbiased in these settings.
We recognize this problem as \emph{item-selection bias} introduced by the selection of (only) $k$ items to display.
This is especially concerning since top-$k$ rankings are quite prevalent, e.g., in recommendation~\citep{cremonesi2010performance, hurley2011novelty}, mobile search~\citep{balke2002real, vlachou2011monitoring}, query autocompletion~\citep{cai-survey-2016,wang2016learning, wang2018position}, and digital assistants~\citep{shalyminov-2018-neural}.

Our main contribution is a novel policy-aware estimator for counterfactual \ac{LTR} that accounts for both a stochastic logging policy and the users' examination behavior.
Our policy-aware approach can be viewed as a generalization of the existing counterfactual \ac{LTR} framework~\citep{joachims2017unbiased, agarwal2019counterfactual}.
We prove that our policy-aware approach performs unbiased \ac{LTR} and evaluation while learning from top-$k$ feedback.
Our experimental results show that while our policy-aware estimator is unaffected by the choice of $k$, the existing policy-oblivious approach is strongly affected even under large values of $k$.
For instance, optimization with the policy-aware estimator on top-5 feedback reaches the same performance as when receiving feedback on all results.
Furthermore, because top-$k$ metrics are the only relevant metrics in top-$k$ rankings, we also propose extensions to traditional \ac{LTR} approaches that are proven to optimize top-$k$ metrics unbiasedly and introduce a pragmatic way to choose optimally between available loss functions.

This chapter is based around two main contributions:
\begin{enumerate}[align=left, leftmargin=*]
	\item A novel estimator for unbiased \ac{LTR} from top-$k$ feedback.
	\item Unbiased losses that optimize bounds on top-$k$ \ac{LTR} metrics.
\end{enumerate}
To the best of our knowledge, our policy-aware estimator is the first estimator that is unbiased in top-$k$ ranking settings.

\section{Background}
\label{sec:background}

In this section we discuss supervised \ac{LTR} and counterfactual \ac{LTR}~\citep{joachims2017unbiased}.

\subsection{Supervised learning to rank}
\label{section:supervisedLTR}
The goal of \ac{LTR} is to optimize ranking systems w.r.t.\ specific ranking metrics.
Ranking metrics generally involve items $\doc$, their relevance $r$ w.r.t.\ a query $q$, and their position in the ranking  $\ranking$ produced by the system.
We will optimize the \emph{Empirical Risk}~\citep{vapnik2013nature} over the set of queries $Q$, with a loss $\Delta(\ranking_i \mid q_i, r)$ for a single query $q_i$:
\begin{align}
\mathcal{L} = \frac{1}{|Q|} \sum_{q_i \in Q} \Delta(\ranking_i \mid q_i, r).
\end{align}
For simplicity we assume that relevance is binary:
$r(q,\doc) \in \{0, 1\}$; for brevity we write: $r(q, \doc) = r(\doc)$.
Then, ranking metrics commonly take the form of a sum over items:
\begin{align}
\Delta(\ranking \mid q, r) =
\sum_{\doc \in \ranking} \lambda\left(\doc \mid \ranking \right) \cdot r(\doc),
\end{align}
where $\lambda$ can be chosen for a specific metric, e.g., for \ac{ARP} or \ac{DCG}:
\begin{align}
\lambda^\textit{ARP}(\doc \mid \ranking) &= \text{rank}(\doc \mid \ranking) \label{eq:ARP},\\
\lambda^\textit{DCG}(\doc \mid \ranking) &= -\log_2\big(1 + \text{rank}(\doc \mid \ranking)\big)^{-1}.  \label{eq:DCG}
\end{align}
In a so-called \emph{full-information} setting, where the relevance values $r$ are known, optimization can be done through traditional \ac{LTR} methods~\citep{wang2018lambdaloss, burges2010ranknet, Joachims2002, liu2009learning}.

\subsection{Counterfactual learning to rank}
\label{section:counterfactualLTR}
Optimizing a ranking loss from the implicit feedback in interaction logs requires a different approach from supervised \ac{LTR}. %
We will assume that clicks are gathered using a logging policy $\pi$ with the probability of displaying ranking $\displayranking$ for query $q$ denoted as $\pi(\displayranking \mid q)$.
Let $o_i(\doc) \in \{0, 1\}$ indicate whether $\doc$ was examined by a user at interaction $i$ and $o_i(\doc) \sim P( o(\doc) \mid q_i, r, \displayranking_i)$.
Furthermore, we assume that users click on all relevant items they observe and nothing else: $c_i(\doc) = \mathds{1}[r(\doc) \land o_i(\doc)]$.
Our goal is to find an estimator $\hat{\Delta}$ that provides an unbiased estimate of the actual loss; for $N$ interactions this estimate is:
\begin{align}
\hat{\mathcal{L}} =
\frac{1}{N} \sum^N_{i=1} \hat{\Delta}(\ranking_i\mid q_i, \displayranking_i, \pi, c_i). \label{eq:highlevelloss}
\end{align}
We write $\ranking_i$ for the ranking produced by the system for which the loss is being computed, while $\displayranking_i$ is the ranking that was displayed when logging interaction $i$.
For brevity we will drop $i$ from our notation when only a single interaction is involved.
A naive estimator could simply consider every click to indicate relevance:
\begin{align}
\hat{\Delta}_\textit{naive}\left(\ranking \mid q, c \right)
= \sum_{\doc : c(\doc) = 1} \lambda\left(\doc \mid \ranking \right)
.
\label{eq:naiveestimator}
\end{align}
Taking the expectation over the displayed ranking and observance variables results in the following expected loss:
\begin{align}
\mathbb{E}_{o,\displayranking}\left[ \hat{\Delta}_\textit{naive}\left(\ranking \mid  q, c\right) \right] \hspace{-3em} &
\nonumber \\
& =
\mathbb{E}_{o,\displayranking}\left[ \sum_{\doc: c(\doc) = 1} \lambda( \doc \mid \ranking) \right]
=
\mathbb{E}_{o,\displayranking}\left[ \sum_{\doc \in \ranking} \lambda( \doc \mid \ranking) \cdot c(\doc) \right]
\nonumber \\
&=
\mathbb{E}_{o,\displayranking}\left[ \sum_{\doc \in \ranking} o(\doc) \cdot \lambda\left(\doc \mid \ranking\right) \cdot r(\doc)\right] 
 \\
&=
\mathbb{E}_{\displayranking}\left[ \sum_{\doc \in \ranking} P\left( o(\doc) = 1 \mid q, r, \displayranking \right) \cdot \lambda\left(\doc \mid \ranking \right) \cdot r(\doc)\right] \nonumber \\
& =
\sum_{\displayranking \in \pi(\cdot \mid q)} \pi(\displayranking \mid q) \cdot 
\sum_{\doc \in \ranking} P\left( o(\doc) = 1 \mid q, r, \displayranking\right) \cdot \lambda\left(\doc\mid \ranking \right) \cdot r(\doc).
\nonumber
\end{align}
Here, the effect of position bias is very clear; in expectation, items are weighted according to their probability of being examined.
Furthermore, it shows that examination probabilities are determined by both the logging policy $\pi$ and user behavior $P( o(\doc) \mid q, r, \displayranking )$.

In order to avoid the effect of position bias, \citeauthor{joachims2017unbiased}~\citep{joachims2017unbiased} introduced an inverse-propensity-scoring estimator in the same vain as previous work by \citeauthor{wang2016learning}~\citep{wang2016learning}.
The main idea behind this estimator is that if the examination probabilities are known, then they can be corrected for per click:
\begin{align}
\hat{\Delta}_{\textit{oblivious}}\left(\ranking \mid q, c, \displayranking \right)
&= \sum_{\doc :c(\doc) = 1} \frac{\lambda\left(\doc \mid \ranking \right)}{P\left( o(\doc) = 1 \mid q, r, \displayranking \right) }.
\label{eq:obliviousestimator}
\end{align}
In contrast to the naive estimator (Eq.~\ref{eq:naiveestimator}), this policy-oblivious estimator (Eq.~\ref{eq:obliviousestimator}) can provide an unbiased estimate of the loss:
\begin{align}
\begin{split}
\mathbb{E}_{o,\displayranking}\Big[\hat{\Delta}_\textit{oblivious}\left(\ranking \mid q, c, \displayranking \right) \Big] 
\hspace{-2cm}&
\\&= \mathbb{E}_{o,\displayranking} \Bigg[  \sum_{\doc:c(\doc) = 1} \frac{\lambda\left(\doc\mid \ranking \right)}{P\left( o(\doc) = 1 \mid q, r, \displayranking \right) } \Bigg] \\
&= \sum_{\doc \in \ranking} \mathbb{E}_{o,\displayranking} \Bigg[ \frac{o(\doc) \cdot \lambda\left(\doc\mid \ranking \right) \cdot r(\doc)}{P\left( o(\doc) = 1 \mid q, r, \displayranking \right)} \Bigg] \\
&= \sum_{\doc \in \ranking} \mathbb{E}_{\displayranking} \Bigg[ \frac{P\left( o(\doc) = 1 \mid q, r, \displayranking \right) \cdot \lambda\left(\doc\mid \ranking \right) \cdot r(\doc)}{P\left( o(\doc) = 1 \mid q, r, \displayranking \right)} \Bigg] \\
&= \sum_{\doc \in \ranking} \lambda\left(\doc\mid \ranking \right) \cdot r(\doc) 
= \Delta(\ranking\mid q, r).
\end{split}
\end{align}
We note that the last step assumes $P\left( o(\doc) = 1 \mid q, r, \displayranking \right) > 0$, and that only relevant items $r(\doc) = 1$ contribute to the estimate~\citep{joachims2017unbiased}.
Therefore, this estimator is unbiased as long as the examination probabilities are positive for every relevant item:
\begin{align}
\forall \doc , \,
\forall \displayranking \in \pi(\cdot \mid q) \,
\left[
r(\doc) = 1 \to P\left( o(\doc) = 1 \mid q, r, \displayranking \right) > 0
\right]
. 
\label{eq:agnosticcond}
\end{align}
Intuitively, this condition exists because propensity weighting is applied to items clicked in the displayed ranking and items that cannot be observed can never receive clicks.
Thus, there are no clicks that can be weighted more heavily to adjust for the zero observance probability of an item. 

An advantageous property of the policy-oblivious estimator $\hat{\Delta}_{\textit{oblivious}}$ is that the logging policy $\pi$ does not have to be known.
That is, as long as Condition~\ref{eq:agnosticcond} is met, it works regardless of how interactions were logged.
Additionally, \citeauthor{joachims2017unbiased}~\citep{joachims2017unbiased} proved that it is still unbiased under click noise. %
Virtually all recent counterfactual \ac{LTR} methods use the policy-oblivious estimator for \ac{LTR} optimization~\citep{wang2016learning, joachims2017unbiased, agarwal2019addressing, ai2018unbiased, wang2018position, agarwal2019estimating}.

\section{Learning from Top-$k$ Feedback}
\label{sec:topkfeedback}

In this section we explain why the existing policy-oblivious counterfactual \ac{LTR} framework is not applicable to top-$k$ rankings.
Subsequently, we propose a novel solution through policy-aware propensity scoring that takes the logging policy into account.

\subsection{The problem with top-$k$ feedback}
An advantage of the existing policy-oblivious estimator for counterfactual \ac{LTR} described in Section~\ref{section:counterfactualLTR} is that the logging policy does not need to be known, making its application easier.
However, the policy-oblivious estimator is only unbiased when Condition~\ref{eq:agnosticcond} is met: every relevant item has a non-zero probability of being observed in every ranking displayed during logging.

We recognize that in top-$k$ rankings, where only $k$ items can be displayed, relevant items may systematically \emph{lack} non-zero examination probabilities.
This happens because items outside the top-$k$ cannot be examined by the user:
\begin{align}
\forall \doc, \forall \displayranking \,
\left[
\text{rank}\big(\doc \mid \displayranking \big) > k \to P\big( o(\doc) = 1 \mid q, r, \displayranking \big) = 0
\right]
.
\end{align}
In most top-$k$ ranking settings it is very unlikely that Condition~\ref{eq:agnosticcond} is satisfied; If $k$ is very small, the number of relevant items is large, or if the logging policy $\pi$ is ineffective at retrieving relevant items, it is unlikely that all relevant items will be displayed in the top-$k$ positions.
Moreover, for a small value of $k$ the performance of the logging policy $\pi$ has to be near ideal for all relevant items to be displayed.
We call this effect \emph{item-selection bias}, because in this setting the logging ranker makes a selection of which $k$ items to display, in addition to the order in which to display them (position bias).
The existing policy-oblivious estimator for counterfactual \ac{LTR} (as described in Section~\ref{section:counterfactualLTR}) cannot correct for item-selection bias when it occurs, and can thus be affected by this bias when applied to top-$k$ rankings.

\subsection{Policy-aware propensity scoring}
Item-selection bias is inevitable in a single top-$k$ ranking, due to the limited number of items that can be displayed.
However, across multiple top-$k$ rankings more than $k$ items could be displayed if the displayed rankings differ enough.
Thus, a stochastic logging-policy could provide every item with a non-zero probability to appear in the top-$k$ ranking.
Then, the probability of examination can be calculated as an expectation over the displayed ranking:
\begin{align}
P\left(o(\doc) = 1 \mid q, r, \pi \right)
&= \mathbb{E}_{\displayranking}\left[P\big(o(\doc) = 1 \mid q, r, \displayranking \big) \right] 
\label{eq:expexam} \\
&= \sum_{\displayranking \in \pi(\cdot \mid q)} \pi\big(\displayranking \mid q\big) \cdot  P\big(o(\doc) = 1 \mid q, r, \displayranking \big).
\nonumber
\end{align}
This policy-dependent examination probability can be non-zero for all items, even if all items cannot be displayed in a single top-$k$ ranking.
Naturally, this leads to a \emph{policy-aware} estimator:
\begin{align}
\hat{\Delta}_{\textit{aware}}\left(\ranking \mid q, c, \pi \right)
&= \sum_{\doc :c(\doc) = 1} \frac{\lambda(\doc \mid \ranking )}{P\big( o(\doc) = 1 \mid q, r, \pi \big) }. \label{eq:policyaware}
\end{align}
By basing the propensity on the policy instead of the individual rankings, the policy-aware estimator can correct for zero observance probabilities in some displayed rankings by more heavily weighting clicks on other displayed rankings with non-zero observance probabilities.
Thus, if a click occurs on an item that the logging policy rarely displays in a top-$k$ ranking, this click may be weighted more heavily than a click on an item that is displayed in the top-$k$ very often.
In contrast, the policy-oblivious approach only corrects for the observation probability for the displayed ranking in which the click occurred, thus it does not correct for the fact that an item may be missing from the top-$k$ in other displayed rankings.

In expectation, the policy-aware estimator provides an unbiased estimate of the ranking loss:
\begin{equation}
\begin{split}
\mathbb{E}_{o,\displayranking}\Big[\hat{\Delta}_\textit{aware}\left(\ranking \mid q, c, \pi \right) \Big]
\hspace{-3cm}&
 \\
&
= \mathbb{E}_{o,\displayranking} \Bigg[  \sum_{\doc:c(\doc) = 1} \frac{\lambda\big(\doc \mid \ranking\big)}{P\left( o(\doc) = 1 \mid q, r, \pi\right) } \Bigg]  \\
&
= \sum_{\doc \in \ranking} \mathbb{E}_{o,\displayranking} \Bigg[ \frac{o(\doc) \cdot \lambda\big(\doc \mid \ranking\big) \cdot r(\doc)}{\sum_{\displayranking' \in \pi(\cdot \mid q)} \pi\big(\displayranking' \mid q\big) \cdot  P\big(o(\doc) = 1 \mid q, r, \displayranking' \big)} \Bigg]\\
&
= \sum_{\doc \in \ranking} \mathbb{E}_{\displayranking} \Bigg[ \frac{P\big( o(\doc) = 1 \mid q, r, \displayranking \big) \cdot \lambda\big(\doc \mid \ranking\big) \cdot r(\doc)}{\sum_{\displayranking' \in \pi(\cdot \mid q)} \pi\big(\displayranking' \mid q\big) \cdot  P\big(o(\doc) = 1 \mid q, r, \displayranking' \big)} \Bigg]   \\
&
= \sum_{\doc \in \ranking}  \frac{\sum_{\displayranking \in \pi(\cdot \mid q)} \pi\big(\displayranking \mid q\big) \cdot  P\big( o(\doc) = 1 \mid q, r, \displayranking \big) \cdot \lambda\big(\doc \mid \ranking\big) \cdot r(\doc)}{\sum_{\displayranking'\in \pi(\cdot \mid q)} \pi\big(\displayranking' \mid q\big) \cdot  P\big(o(\doc) = 1 \mid q, r, \displayranking' \big)}  \\
&
= \sum_{\doc \in \ranking} \lambda\big(\doc \mid \ranking\big) \cdot r(\doc)
 \\ &
= \Delta\big(\ranking\mid q, r\big).
\end{split}
\end{equation}
In contrast to the policy-oblivious approach (Section~\ref{section:counterfactualLTR}), this proof is sound as long as every relevant item has a non-zero probability of being examined under the logging policy $\pi$: 
\begin{align}
\forall \doc \,
\left[
 r(\doc) = 1 \to \sum_{\displayranking \in \pi(\cdot \mid q)} \pi\big(\displayranking \,|\, q\big) \cdot  P\big(o(\doc) = 1 \,|\, q, r, \displayranking \big) > 0
 \right]
 . 
 \label{eq:awarecond}
\end{align}
It is easy to see that Condition~\ref{eq:agnosticcond} implies Condition~\ref{eq:awarecond}, in other words, for all settings where the policy-oblivious estimator (Eq.~\ref{eq:obliviousestimator}) is unbiased, the policy-aware estimator (Eq.~\ref{eq:policyaware}) is also unbiased.
Conversely, Condition~\ref{eq:awarecond} does not imply Condition~\ref{eq:agnosticcond}, thus there are cases where the policy-aware estimator is unbiased but the policy-oblivious estimator is not guaranteed to be.

To better understand for which policies Condition~\ref{eq:awarecond} is satisfied, we introduce a substitute Condition~\ref{eq:awarecond2}:
\begin{align}
\mbox{}
\hspace*{-2mm}
\forall \doc 
\Big[
r(\doc) = 1 \to  \exists \displayranking \left[\pi\big(\displayranking \mid q\big)  > 0  \land  P\big(o(\doc) = 1 \mid q, r, \displayranking \big) > 0 \right]
\!
\Big]
. 
\hspace*{-2mm}
\mbox{}
\label{eq:awarecond2}
\end{align}
Since Condition~\ref{eq:awarecond2} is equivalent to Condition~\ref{eq:awarecond}, we see that 
the policy-aware estimator is unbiased for any logging-policy that provides a non-zero probability for every relevant item to appear in a position with a non-zero examination probability.
Thus to satisfy Condition~\ref{eq:awarecond2} in a top-$k$ ranking setting, every relevant item requires a non-zero probability of being displayed in the top-$k$.

As long as Condition~\ref{eq:awarecond2} is met, a wide variety of policies can be chosen according to different criteria.
Moreover, the policy can be deterministic if $k$ is large enough to display every relevant item.
Similarly, the policy-oblivious estimator can be seen as a special case of the policy-aware estimator where the policy is deterministic (or assumed to be).
The big advantage of our policy-aware estimator is that it is applicable to a much larger number of settings than the existing policy-oblivious estimator, including those were feedback is only received on the top-$k$ ranked items.

\subsection{Illustrative example}

To better understand the difference between the policy-oblivious and policy-aware estimators, we introduce an illustrative example that contrasts the two.
We consider a single query $q$ and a logging policy $\pi$ that chooses between two rankings to display: $\displayranking_1$ and $\displayranking_2$, with: $\pi(\displayranking_1 \mid q) > 0$; $\pi(\displayranking_2 \mid q) > 0$; and $\pi(\displayranking_1 \mid q) + \pi(\displayranking_2 \mid q) = 1$.
Then for a generic estimator we consider how it treats a single relevant item $\doc_n$ with $r(\doc_n) \not= 0$ using the expectation:
\begin{equation}
\begin{split}
&\mathbb{E}_{o, \displayranking}\Big[
\frac{c(\doc_n) \cdot \lambda\big(\doc_n \mid \ranking\big)}{ \rho\big( o(\doc_n) = 1 \mid q, \doc_n, \displayranking, \pi\big)}
\Big]
= \lambda\big(\doc_n \mid \ranking\big) \cdot r(\doc_n) \cdot \phantom{x} \\
& \bigg( \frac{\pi(\displayranking_1 | q) \cdot P\big( o(\doc_n) = 1 | q, r, \displayranking_1 \big)}{ \rho\big( o(\doc_n) = 1 \mid q, \doc_n, \displayranking_1, \pi \big)}
+ \frac{\pi(\displayranking_2 | q) \cdot P\big( o(\doc_n) = 1 | q, r, \displayranking_2 \big)}{ \rho\big( o(\doc_n) = 1 \mid q, \doc_n, \displayranking_2,  \pi \big)}
\bigg),
\end{split}
\end{equation}
where the propensity function $\rho$ can be chosen to match either the policy-oblivious (Eq.~\ref{eq:obliviousestimator}) or policy-aware (Eq.~\ref{eq:policyaware}) estimator.

First, we examine the situation where $\doc_n$ appears in the top-$k$ of both rankings $\displayranking_1$ and $\displayranking_2$, thus it has a positive observance probability in both cases: $P\big( o(\doc_n) = 1 \mid q, r, \displayranking_1 \big) > 0$ and $P\big( o(\doc_n) = 1 \mid q, r, \displayranking_2 \big) > 0$.
Here, the policy-oblivious estimator $\hat{\Delta}_{\textit{oblivious}}$~(Eq.~\ref{eq:obliviousestimator}) removes the effect of observation bias by adjusting for the observance probability per displayed ranking:
\begin{equation}
\begin{split}
\bigg(& \frac{\pi(\displayranking_1 | q) \cdot P\big( o(\doc_n) = 1 | q, r, \displayranking_1 \big)}{ P\big( o(\doc_n) = 1 \mid q, r, \displayranking_1 \big)}
+ \frac{\pi(\displayranking_2 | q) \cdot P\big( o(\doc_n) = 1 | q, r, \displayranking_2 \big)}{ P\big( o(\doc_n) = 1 \mid q, r, \displayranking_2 \big)}
\bigg)  \\
 &\cdot \lambda\big(\doc_n \mid \ranking\big) \cdot r(\doc_n) = \lambda\big(\doc_n \mid \ranking\big) \cdot r(\doc_n).
\end{split}
\end{equation}
The policy-aware estimator $\hat{\Delta}_{\textit{aware}}$~(Eq.~\ref{eq:policyaware}) also corrects for the examination bias, but because its propensity scores are based on the policy instead of the individual rankings~(Eq.~\ref{eq:expexam}), it uses the same score for both rankings:
\begin{equation}
\begin{split}
& \frac{\pi(\displayranking_1 \mid q) \cdot P\big( o(\doc_n) = 1 |\, q, r, \displayranking_1 \big) + \pi(\displayranking_2 \mid q) \cdot P\big( o(\doc_n) = 1 |\, q, r, \displayranking_2 \big)}{
\pi(\displayranking_1 \mid q) \cdot P\big( o(\doc_n) = 1 |\, q, r, \displayranking_1 \big) + \pi(\displayranking_2 \mid q) \cdot P\big( o(\doc_n) = 1 |\, q, r, \displayranking_2 \big)
}
   \\
 &\cdot \lambda\big(\doc_n \mid \ranking\big) \cdot r(\doc_n) = \lambda\big(\doc_n \mid \ranking\big) \cdot r(\doc_n).
\end{split}
\end{equation}
Then, we consider a different relevant item $\doc_m$ with $r(\doc_m) = r(\doc_n)$ that unlike the previous situation only appears in the top-$k$ of $\displayranking_1$.
Thus it only has a positive observance probability in $\displayranking_1$: $P\big( o(\doc_m) = 1 \mid q, r, \displayranking_1 \big) > 0$ and $P\big( o(\doc_m) = 1 \mid q, r, \displayranking_2 \big) = 0$.
Consequently, no clicks will ever be received in $\displayranking_2$ , i.e., $\displayranking = \displayranking_2 \rightarrow c(\doc_m) = 0$, thus the expectation for $\doc_m$ only has to consider $\displayranking_1$:
\begin{equation}
\begin{split}
\mbox{}\hspace*{-1mm}
\mathbb{E}_{o, \displayranking}&\Big[
\frac{c(\doc_m) \cdot \lambda\big(\doc_m \mid \ranking \big)}{ \rho\big( o(\doc_m) = 1 \mid q, \doc_m, \displayranking, \pi \big)}
\Big] \\
&\qquad =\frac{\pi(\displayranking_1 \mid q) \cdot P\big( o(\doc_m) = 1 \mid q, r, \displayranking_1 \big)}{ \rho\big( o(\doc_m) = 1 \mid q, \doc_m, \displayranking_1, \pi \big)} \cdot \lambda\big(\doc_m \,|\, \ranking \big) \cdot r(\doc_m).
\end{split}
\end{equation}
In this situation, Condition~\ref{eq:agnosticcond} is not satisfied, and correspondingly, the policy-oblivious estimator~(Eq.~\ref{eq:obliviousestimator}) does not give an unbiased estimate:
\begin{equation}
\frac{\pi(\displayranking_1 \mid q) \cdot P\big( o(\doc_m) = 1 \mid  q, r, \displayranking_1 \big)}{
P\big( o(\doc_m) = 1 \mid  q, r, \displayranking_1 \big)
} \cdot \lambda\big(\doc_m \mid  \ranking \big) \cdot r(\doc_m) 
 < \lambda\big(\doc_m \mid  \ranking \big) \cdot r(\doc_m).
\end{equation}
Since the policy-oblivious estimator $\hat{\Delta}_{\textit{oblivious}}$ only corrects for the observance probability per displayed ranking, it is unable to correct for the zero probability in $R_2$ as no clicks on $\doc_m$ can occur here.
As a result, the estimate is affected by the logging policy $\pi$: the more item-selection bias $\pi$ introduces (determined by $\pi(\displayranking_1 \mid q)$) the further the estimate will deviate.
Consequently, in expectation $\hat{\Delta}_{\textit{oblivious}}$ will biasedly estimate that $\doc_n$ should be ranked higher than $\doc_m$, which is incorrect since both items are actually equally relevant.

In contrast, the policy-aware estimator $\hat{\Delta}_{\textit{aware}}$~(Eq.~\ref{eq:policyaware}) avoids this issue because its propensities are based on the logging policy $\pi$.
When calculating the probability of observance conditioned on $\pi$, $P\big(o(\doc_m) = 1 \mid q, r, \pi\big)$~(Eq.~\ref{eq:expexam}), it takes into account that there is a $\pi(\displayranking_2 \mid q)$ chance that $\doc_m$ is not displayed to the user:
\begin{equation}
\frac{\pi(\displayranking_1 |\,  q) \cdot P\big( o(\doc_m) = 1 |\,  q, r, \displayranking_1 \big)}{
\pi(\displayranking_1 |\,  q) \cdot P\big( o(\doc_m) = 1 |\,  q, r, \displayranking_1 \big)
} 
\cdot  \lambda\big(\doc_m |\,  \ranking \big) \cdot r(\doc_m) 
= \lambda\big(\doc_m |\,  \ranking \big) \cdot r(\doc_m).
\end{equation}
Since in this situation Condition~\ref{eq:awarecond2} is true (and therefore also Condition~\ref{eq:awarecond}), we know beforehand that in expectation the policy-aware estimator is unaffected by position and item-selection bias.

This concludes our illustrative example.
It was meant to contrast the behavior of the policy-aware and policy-oblivious estimators in two different situations.
When there is no item-selection bias, i.e.,  an item is displayed in the top-$k$ of all rankings the logging policy may display, both estimators provide unbiased estimates albeit using different propensity scores.
However, when there is item-selection bias. i.e., an item is not always present in the top-$k$, the policy-oblivious estimator $\hat{\Delta}_{\textit{oblivious}}$ no longer provides an unbiased estimate, while the policy-aware estimator $\hat{\Delta}_{\textit{aware}}$ is still unbiased w.r.t.\ both position bias and item-selection bias.

\section{Learning for Top-$k$ Metrics}
\label{sec:topkmetrics}
This section details how counterfactual \ac{LTR} can be used to optimize top-$k$ metrics, since these are the relevant metrics in top-$k$ rankings.

\subsection{Top-$k$ metrics}

Since top-$k$ rankings only display the $k$ highest ranked items to the user, the performance of a ranker in this setting is only determined by those items.
Correspondingly, only top-$k$ metrics matter here, where items beyond rank $k$ have no effect:
\begin{equation}
\lambda^\textit{metric@k}\big(\doc \mid \ranking \big) =
\begin{cases}
\lambda^\textit{metric}\big(\doc \mid \ranking \big), & \text{if rank}( \doc \mid \ranking) \leq k, \\
0, & \text{if rank}\big(\doc\mid \ranking \big) > k. \\
\end{cases}
\end{equation}
These metrics are commonly used in \ac{LTR} since, usually, performance gains in the top of a ranking are the most important for the user experience.
For instance, NDCG@$k$, which is the normalized version of DCG@$k$, is often used:
\begin{equation}
\lambda^\textit{DCG@k}\big(\doc\,|\, \ranking \big) =
\begin{cases}
-\log_2\big(1 \,{+}\, \text{rank}(\doc\,|\, \ranking)\big)^{-1},\hspace*{-2mm}\mbox{} & \text{if rank}(\doc\,|\,\ranking) \,{\leq}\, k, \\
0, & \text{if rank}\big(\doc\,|\, \ranking \big) \,{>}\, k.
\end{cases}
\end{equation}
Generally in \ac{LTR}, DCG is optimized in order to maximize NDCG~\citep{wang2018lambdaloss, burges2010ranknet}.
In unbiased \ac{LTR} it is not trivial to estimate the normalization factor for NDCG, further motivating the optimization of DCG instead of NDCG~\citep{agarwal2019counterfactual, carterette2018offline}.

Importantly, top-$k$ metrics bring two main challenges for \ac{LTR}.
First, the \text{rank} function is not differentiable, a problem for almost every \ac{LTR} metric~\citep{wang2018lambdaloss, liu2009learning}.
Second,
changes in a ranking beyond position $k$ do not affect the metric's value thus resulting in zero-gradients.
The first problem has been addressed in existing \ac{LTR} methods, we will now propose adaptations of these methods that address the second issue as well.

\subsection{Monotonic upper bounding}
\label{section:monotonic upper bounding}
A common approach for enabling optimization of ranking methods, is by finding lower or upper bounds that can be minimized or maximized, respectively.
For instance, similar to a hinge loss, the \text{rank} function can be upper bounded by a maximum over score differences~\citep{Joachims2002, joachims2017unbiased}.
Let $s$ be the scoring function used to rank (in descending order), then:
\begin{equation}
\text{rank}\big(\doc\mid \ranking \big) \leq \sum_{\doc' \in \ranking} \max\Big(1 - \big(s(\doc) - s(\doc')\big), 0\Big). \label{eq:linearupperbound}
\end{equation}
Alternatively, the logistic function is also a popular choice~\citep{wang2018lambdaloss}:
\begin{equation}
\text{rank}\big(\doc\mid \ranking \big) \leq \sum_{\doc' \in \ranking} \log_2\Big(1 + e^{s(\doc') - s(\doc)}\Big).\label{eq:logupperbound}
\end{equation}
Minimizing one of these differentiable upper bounds will directly minimize an upper bound on the ARP metric (Eq.~\ref{eq:ARP}).

Furthermore, \citeauthor{agarwal2019counterfactual}~\citep{agarwal2019counterfactual} showed that this approach can be extended to any metric based on a monotonically decreasing function.
For instance, if $\overline{\text{rank}}\big(\doc\mid \ranking \big)$ is an upper bound on the $\text{rank}\big(\doc\mid \ranking \big)$ function, then the following is an upper bound on the DCG loss (Eq.~\ref{eq:DCG}):
\begin{equation}
\lambda^\textit{DCG}\big(\doc\mid \ranking \big) \leq -\log_2\big(1 + \overline{\text{rank}}(\doc\mid \ranking )\big)^{-1} = \hat{\lambda}^\textit{DCG}\big(\doc\mid \ranking \big). 
\end{equation}
More generally, let $\alpha$ be a monotonically decreasing function. 
A loss based on $\alpha$ is always upper bounded by:
\begin{equation}
\lambda^\alpha\big(\doc\mid \ranking \big) 
= -\alpha\big(\text{rank}(\doc\mid \ranking)\big) 
\leq -\alpha\big(\overline{\text{rank}}(\doc\mid \ranking )\big)
= \hat{\lambda}^\alpha\big(\doc\mid \ranking \big).
\end{equation}
Though appropriate for many standard ranking metrics, $\hat{\lambda}^\alpha$ is not an upper bound for top-$k$ metric losses.
To understand this, consider that an item beyond rank $k$ may still receive a negative score from $\hat{\lambda}^\alpha$,
for instance, for the DCG upper bound: $\text{rank}\big(\doc\mid \ranking \big) > k \rightarrow \hat{\lambda}^\textit{DCG}\big(\doc\mid \ranking \big) < 0$.
As a result, this is not an upper bound for a DCG@$k$ based loss.

We propose a modification of the $\hat{\lambda}^\alpha$ function to provide an upper bound for top-$k$ metric losses, by simply giving a positive penalty to items beyond rank $k$:
\begin{equation}
\hat{\lambda}^{\alpha\textit{@k}}\big(\doc\mid \ranking \big) = -\alpha\big(\overline{\text{rank}}(\doc\mid \ranking)\big) + \mathds{1}\big[\text{rank}(\doc\mid \ranking) > k\big] \cdot \alpha(k).
 \label{eq:kmonotonic}
\end{equation}
The resulting function is an upper bound on top-$k$ metric losses based on a monotonic function: $\lambda^{\alpha\textit{@k}}\big(\doc\mid \ranking \big) \leq \hat{\lambda}^{\alpha\textit{@k}}\big(\doc\mid \ranking \big)$.
The main difference with $\hat{\lambda}^\alpha$ is that items beyond rank $k$ acquire a positive score from $\lambda^{\alpha\textit{@k}}$, thus providing an upper bound on the actual metric loss.
Interestingly, the gradient of $\hat{\lambda}^{\alpha\textit{@k}}$ w.r.t.\ the scoring function $s$ is the same as that of $\hat{\lambda}^{\alpha}$.\footnote{We consider the indicator function to never have a non-zero gradient.}
Therefore, the gradient of either function optimizes an upper bound on  $\lambda^{\alpha\textit{@k}}$ top-$k$ metric losses, while only $\hat{\lambda}^{\alpha\textit{@k}}$ provides an actual upper bound.

While this monotonic function-based approach is simple, it is unclear how coarse these upper bounds are.
In particular, some upper bounds on the rank function (e.g., Eq.~\ref{eq:linearupperbound}) can provide gross overestimations. %
As a result, these upper bounds on ranking metric losses may be very far removed from their actual values.

\subsection{Lambda-based losses for counterfactual top-$k$ \acs{LTR}}
\label{section:lambda-based losses}
Many supervised \ac{LTR} approaches, such as  the well-known LambdaRank and subsequent LambdaMART methods~\citep{burges2010ranknet}, are based on \ac{EM} procedures~\citep{dempster1977maximum}.
Recently, \citeauthor{wang2018lambdaloss}~\citep{wang2018lambdaloss} introduced the Lambda\-Loss framework, which provides a theoretical way to prove that a method optimizes a lower bound on a ranking metric.
Subsequently, it was used to prove that Lambda\-MART optimizes such a bound on \ac{DCG}, similarly it was also used to introduce the novel Lambda\-Loss method which provides an even tighter bound on \ac{DCG}.
In this section, we will show that the Lambda\-Loss framework can be used to find proven bounds on counterfactual \ac{LTR} losses and top-$k$ metrics.
Since Lambda\-Loss is considered state-of-the-art in supervised \ac{LTR}, making its framework applicable to counterfactual \ac{LTR} could potentially provide competitive performance.
Additionally, adapting the Lambda\-Loss framework to top-$k$ metrics further expands its applicability.

The Lambda\-Loss framework and its \ac{EM}-optimization approach work for metrics that can be expressed in item-based gains, $G(\doc_n \mid q,r)$, and discounts based on position, $D\big(\text{rank}(\doc_n \mid \ranking)\big)$; for brevity we use the shorter $G_n$ and $D_n$, respectively, resulting in:
\begin{equation}
\Delta\big(\ranking \,|\, q, r \big)
=
\sum_{\doc_n \in \ranking} G(\doc_n | \, q, r) \cdot D\big(\text{rank}(\doc_n | \, \ranking)\big)
=
\sum_{n=1}^{|\ranking|} G_n \cdot D_n.
\end{equation}
For simplicity of notation, we choose indexes so that: $n = \text{rank}(\doc_n \mid \ranking)$, thus $D_n$ is always the discount for the rank $n$.
Then, we differ from the existing Lambda\-Loss framework by allowing the discounts to be zero ($\forall n \, D_n \geq 0$), thus also accounting for top-$k$ metrics.
Furthermore, items at the first rank are not discounted or the metric can be scaled so that $D_1 = 1$.
Additionally, higher ranked items should be discounted less or equally: $n > m \rightarrow D_n \leq D_m$.
Most ranking metrics meet these criteria; for instance, $G_n$ and $D_n$ can be chosen to match \ac{ARP} or \ac{DCG}.
Importantly, our adaption also allows $\Delta$ to match top-$k$ metrics such as \ac{DCG}$@k$ or Precision$@k$.

In order to apply the Lambda\-Loss framework to counterfactual \ac{LTR}, we consider a general inverse-propensity-scored estimator:
\begin{equation}
\hat{\Delta}_{\textit{IPS}}(\ranking \mid q, c, \cdot)
= \sum_{\doc_n:c(\doc_n) = 1} \frac{\lambda(\doc_n\mid \ranking)}{\rho\big( o(\doc_n) = 1 \mid q, r, \displayranking, \pi \big) },
\label{eq:generalestimator}
\end{equation}
where the propensity function $\rho$ can match either the policy-obli\-vious (Eq.~\ref{eq:obliviousestimator}) or the policy-aware (Eq.~\ref{eq:policyaware}) estimator.
By choosing 
\begin{equation}
G_n = \frac{1}{\rho\big( o(\doc_n) = 1 \mid q, r, \displayranking, \pi  \big) }\text{ and }D_n = \lambda(\doc_n\mid \ranking),
\label{eq:gaindiscountchoice}
\end{equation}
the estimator can be described in terms of gains and discounts.
In contrast, in the existing Lambda\-Loss framework~\citep{wang2018lambdaloss} gains are based on item relevance. %
For counterfactual top-$k$ \ac{LTR}, we have designed Eq.~\ref{eq:gaindiscountchoice} so that gains are based on the propensity scores of observed clicks, and the discounts can have zero values.

The \ac{EM}-optimization procedure alternates between an expectation step and a maximization step.
In our case, the expectation step sets the discount values $D_n$ according to the current ranking $\ranking$ of the scoring function $s$.
Then the maximization step updates $s$ to optimize the ranking model.
Following the Lambda\-Loss framework~\citep{wang2018lambdaloss}, we derive a slightly different loss.
With the delta function:
\begin{equation}
\delta_{nm} = D_{|n - m|} - D_{|n - m| + 1},
\end{equation}
our differentiable \emph{counterfactual loss} becomes:
\begin{equation}
\sum_{G_n > G_m} -\log_2 \left( \left (\frac{1}{1 + e^{s(\doc_m) - s(\doc_n)}} \right )^{\delta_{nm} \cdot |G_n - G_m|} \right). \label{eq:embound}
\end{equation}
The changes we made do not change the validity of the proof provided in the original Lambda\-Loss paper~\citep{wang2018lambdaloss}.
Therefore, the counterfactual loss (Eq.~\ref{eq:embound}) can be proven to optimize a lower bound on counterfactual estimates of top-$k$ metrics.

Finally, in the same way the LambdaLoss framework can also be used to derive counterfactual variants of other supervised \ac{LTR} losses/methods such as LambdaRank or LamdbaMART.
Unlike previous work that also attempted to find a counterfactual lambda-based method by introducing a pairwise-based estimator~\cite{hu2019unbiased}, our approach is compatible with the prevalent counterfactual approach since it uses the same estimator based on single-document propensities~\citep{wang2016learning, joachims2017unbiased, agarwal2019addressing, ai2018unbiased, wang2018position, agarwal2019estimating}.
Our approach suggests that the divide between supervised and counterfactual \ac{LTR} methods may disappear in the future, as a state-of-the-art supervised \ac{LTR} method can now be applied to the state-of-the-art counterfactual \ac{LTR} estimators.

\subsection{Unbiased loss selection}
So far we have introduced two counterfactual \ac{LTR} approaches that are proven to optimize lower bounds on top-$k$ metrics: with monotonic functions~(Section~\ref{section:monotonic upper bounding}) and through the LambdaLoss framework~(Section~\ref{section:lambda-based losses}).
To the best of our knowledge, we are the first to introduce theoretically proven lower bounds for top-$k$ \ac{LTR} metrics.
Nevertheless, previous work has also attempted to optimize top-$k$ metrics, albeit through heuristic methods.
Notably, \citet{wang2018lambdaloss} used a truncated version of the LambdaLoss loss to optimize \ac{DCG}$@k$.
Their loss uses the discounts $D_n$ based on full-ranking \ac{DCG} but ignores item pairs outside of the top-$k$: 
\begin{equation}
\sum_{G_n > G_m} -\mathds{1}\big[ n \leq k \lor m \leq k\big] \cdot 
 \log_2  \left( \left ( \frac{1}{1 + e^{s(\doc_m) - s(\doc_n)}} \right )^{\delta_{nm} \cdot |G_n - G_m|}  
 \right ).
 \label{eq:truncembound}
\end{equation}
While empirical results motivate its usage, there is no known theoretical justification for this loss, and thus it is considered a heuristic.

This leaves us with a choice between two theoretically-motivated counterfactual \ac{LTR} approaches for optimizing top-$k$ metrics (Eq.~\ref{eq:kmonotonic}~and~\ref{eq:embound}) and an empirically-motivated heuristic (Eq.~\ref{eq:truncembound}).
We propose a pragmatic solution by recognizing that counterfactual estimators can unbiasedly evaluate top-$k$ metrics.
Therefore, in practice one can optimize several ranking models using various approaches, and subsequently, estimate which resulting model provides the best performance.
Thus, using counterfactual evaluation to select from resulting models is an unbiased method to choose between the available counterfactual \ac{LTR} approaches.

\section{Experimental Setup}
We follow the standard setup in unbiased \ac{LTR}~\citep{joachims2017unbiased, ai2018unbiased, carterette2018offline, jagerman2019comparison} and perform semi-synthetic experiments: queries and items are based on datasets of commercial search engines and interactions are simulated using probabilistic click models.

\subsection{Datasets}
We use the queries and documents from two of the largest publicly available \ac{LTR} datasets: MLSR-WEB30K~\citep{qin2013introducing} and Yahoo!\ Webscope~\citep{Chapelle2011}.
 Each was created by a commercial search engine and contains a set of queries with corresponding preselected document sets.
 Query-document pairs are represented by feature vectors and five-grade relevance annotations ranging from not relevant (0) to perfectly relevant (4).
 In order to binarize the relevancy, we only consider the two highest relevance grades as relevant. %
 The MSLR dataset contains \numprint{30000} queries with on average 125 preselected documents per query, and encodes query-document pairs in 136 features.
 The Yahoo dataset has \numprint{29921} queries and on average 24 documents per query encoded in 700 features. 
Presumably, learning from top-$k$ feedback is harder as $k$ becomes a smaller percentage of the number of items.
Thus, we expect the MSLR dataset with more documents per query to pose a more difficult problem.
 
\subsection{Simulating top-$k$ settings}

The setting we simulate is one where interactions are gathered using a non-optimal but decent production ranker.
We follow existing work~\citep{joachims2017unbiased, ai2018unbiased, jagerman2019comparison} and use supervised optimization for the \ac{ARP} metric on 1\% of the training data.
The resulting model simulates a real-world production ranker since it is much better than a random initialization but leaves enough room for improvement~\citep{joachims2017unbiased}.

We then simulate user-issued queries by uniformly sampling from the training partition of the dataset.
Subsequently, for each query the production ranker ranks the documents preselected by the dataset.
Depending on the experimental run that we consider, randomization is performed on the resulting rankings.
In order for the policy-aware estimator to be unbiased, every relevant document needs a chance of appearing in the top-$k$ (Condition~\ref{eq:awarecond2}).
Since in a realistic setting relevancy is unknown, we choose to give every document a non-zero probability of appearing in the top-$k$.
Our randomization policy takes the ranking of the production ranker and leaves the first $k-1$ documents unchanged
but the document at position $k$ is selected by sampling uniformly from the remaining documents.
The result is a minimally invasive randomized top-$k$ ranking since most of the ranking is unchanged and the placement of the sampled documents is limited to the least important position.

We note that many other logging policies could be applied (see Condition~\ref{eq:awarecond2}), e.g., an alternative policy could insert sampled documents at random ranks for less obvious randomization.
Unfortunately, a full exploration of the effect of using different logging policies is beyond the scope of this chapter.

Clicks are simulated on the resulting ranking $\displayranking$ according to position bias and document relevance.
Top-$k$ position bias is modelled through the probability of observance, as follows:
\begin{equation}
P\big(o(\doc) = 1 \mid q, r, \displayranking \big) =
\begin{cases}
\text{rank}(\doc\mid \displayranking)^{-1},
& \text{if rank}(\doc\mid \displayranking) \leq k,
\\
0,
& \text{if rank}(\doc\mid \displayranking) > k.
\end{cases}
\end{equation}
The randomization policy results in the following examination probabilities w.r.t.\ the logging policy (cf.\ Eq.~\ref{eq:expexam}):
\begin{equation}
\begin{split}
P\big(o(\doc) = 1 \mid q, r, \pi \big) 
\hspace{-1cm}& \\ & =
\begin{cases}
\text{rank}(\doc\mid \displayranking)^{-1}, & \text{if rank}(\doc\mid \displayranking) < k, \\
\big(\text{rank}(\doc\mid \displayranking) \cdot (|\displayranking| - k + 1)\big)^{-1},  & \text{if rank}(\doc\mid\displayranking) \geq k. 
\end{cases}
\end{split}
\end{equation}
The probability of a click is conditioned on the relevance of the document according to the dataset:
\begin{equation}
P\big(c(\doc) = 1 \mid q, r, \displayranking, o\big) =
\begin{cases}
1,
& \text{if } r(\doc) = 1 \land o(\doc) = 1,
\\
0.1,
& \text{if } r(\doc) = 0 \land o(\doc) = 1,
\\
0,
& \text{if } o(\doc) = 0.
\end{cases}
\end{equation}
Note that our previous assumption that clicks only take place on relevant items (Section~\ref{section:counterfactualLTR}) is not true in our experiments.

Optimization is performed on training clicks simulated on the training partition of the dataset.
Hyperparameter tuning is done by estimating performance on (unclipped) validation clicks simulated on the validation partition; the number of validation clicks is always 15\% of the number of training clicks.
Lastly, evaluation metrics are calculated on the test partition using the dataset labels.

\subsection{Experimental runs}

In order to evaluate the performance of the policy-aware estimator (Eq.~\ref{eq:policyaware}) and the effect of item-selection bias, we compare with the following baselines: 
\begin{enumerate*}[label=(\roman*)]
\item The policy-oblivious estimator (Eq.~\ref{eq:obliviousestimator}).
In our setting, where the examination probabilities are known beforehand, the policy-oblivious estimator also represents methods that jointly estimate these probabilities while performing \ac{LTR}, i.e., the following methods reduce to this estimator if the examination probabilities are given:~\citep{wang2016learning, joachims2017unbiased, agarwal2019addressing, ai2018unbiased}.
\item A rerank estimator, an adaption of the policy-oblivious estimator.
During optimization the rerank estimator applies the policy-oblivious estimator but limits the document set of an interaction $i$ to the $k$ displayed items $\ranking_i = \{\doc \mid \text{rank}(\doc|\displayranking_i) \leq k\}$ (cf.\ Eq.~\ref{eq:obliviousestimator}).
Thus, it is optimized to rerank the top-$k$ of the production ranker only, but during inference it is applied to the entire document set.
\item Additionally, we evaluate performance without any cutoff $k$ or randomization; in these circumstances all three estimators (Policy-Aware, Policy-Oblivious, Rerank) are equivalent.
\item Lastly, we use supervised \ac{LTR} on the dataset labels to get a \emph{full-information skyline}, which shows the hypothetical optimal performance.
\end{enumerate*}

To evaluate the effectiveness of our proposed loss functions for optimizing top-$k$ metrics, we apply the monotonic lower bound (Eq.~\ref{eq:kmonotonic}) with a linear (Eq.~\ref{eq:linearupperbound}) and a logistic upper bound (Eq.~\ref{eq:logupperbound}).
Additionally, we apply several versions of the LamdbaLoss loss function (Eq.~\ref{eq:embound}): one that optimizes full \ac{DCG}, another that optimizes \ac{DCG}$@5$, and the heuristic truncated loss also optimizing \ac{DCG}$@5$ (Eq.~\ref{eq:truncembound}).
Lastly, we apply unbiased loss selection where we select the best-performing model based on the estimated performance on the (unclipped) validation clicks.

Optimization is done with stochastic gradient descent; to maximize computational efficiency we rewrite the loss (Eq.~\ref{eq:highlevelloss}) for a propensity scoring function $\rho$ in the following manner:
\begin{equation}
\begin{split}
\hat{\mathcal{L}} &=
\frac{1}{N} \sum^N_{i=1} \hat{\Delta}\big(\ranking_i | q_i, \displayranking_i, \pi, c_i\big)
\\
&=
\frac{1}{N} \sum^N_{i=1} \sum_{\doc : c_i(\doc) = 1} \frac{\lambda\big(\doc\mid \ranking_i \big)}{\rho\big( o_i(\doc) = 1 | q_i, r, \cdot \big)}
\\
&= \frac{1}{N}  \sum_{q \in \mathcal{Q}} \sum_{\doc \in \ranking_q} \left ( \sum^N_{i = 1} \frac{ \mathds{1}[q_i = q] \cdot c_i(\doc)}{\rho\big( o_i(\doc) = 1 \mid q, r, \cdot\big)} \right ) \cdot  \lambda\big(\doc \mid \ranking_q\big) 
\\
&= \frac{1}{N}  \sum_{q \in \mathcal{Q}} \sum_{\doc \in \ranking_q}  \omega_{\doc} \cdot \lambda\big(\doc \mid \ranking_q \big).
\end{split}
\end{equation}
After precomputing the document weights $\omega_{\doc}$,
the complexity of computing the loss is only determined by the dataset size.
This allows us to optimize over very large numbers of clicks with very limited increases in computational costs.

We optimize linear models, but our approach can be applied to any differentiable model~\citep{agarwal2019counterfactual}.
Propensity clipping~\citep{joachims2017unbiased} is applied to training clicks and never applied to the validation clicks; we also use self-normalization~\citep{swaminathan2015self}.

\begin{figure*}[t]
\centering
\begin{tabular}{r }
\multicolumn{1}{c}{ \hspace{2em} {Yahoo!\ Webscope}} \\
\includegraphics[scale=0.5]{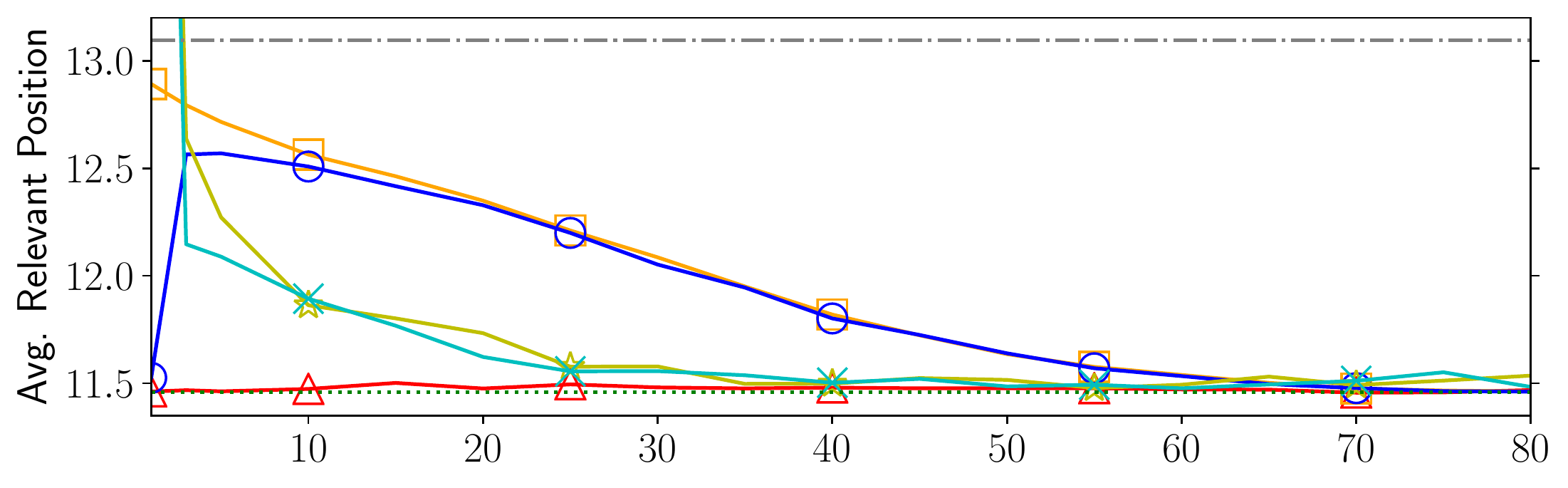} \\
\includegraphics[scale=0.5]{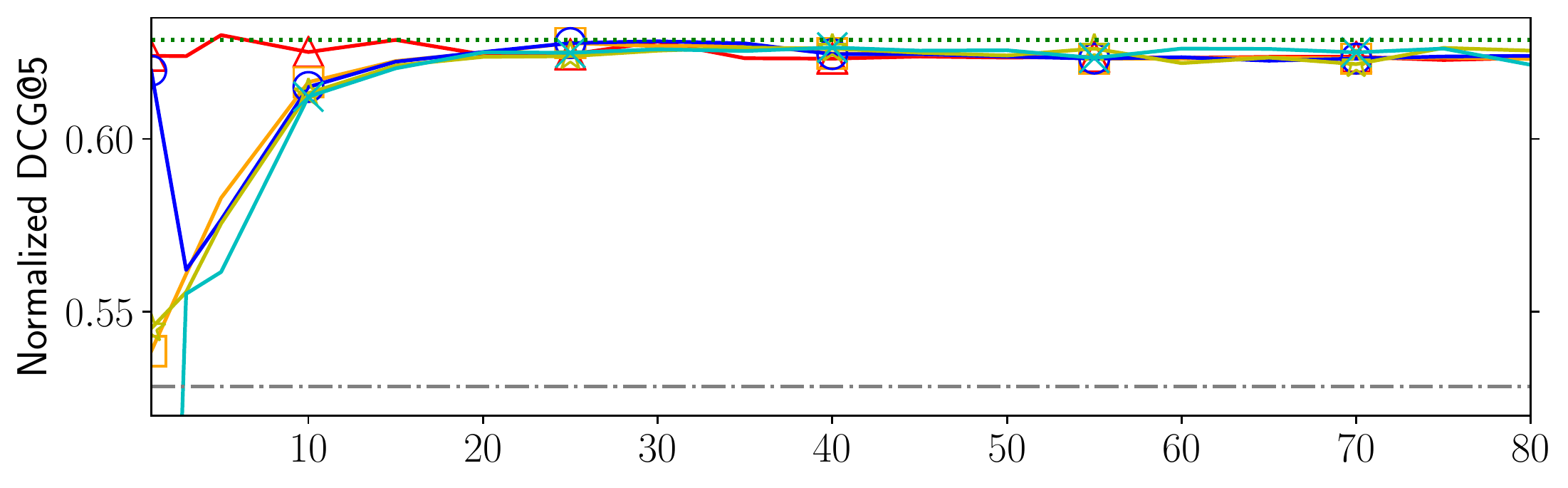} \\
\multicolumn{1}{c}{ \hspace{2em} {MSLR-WEB30k}} \\
\includegraphics[scale=0.5]{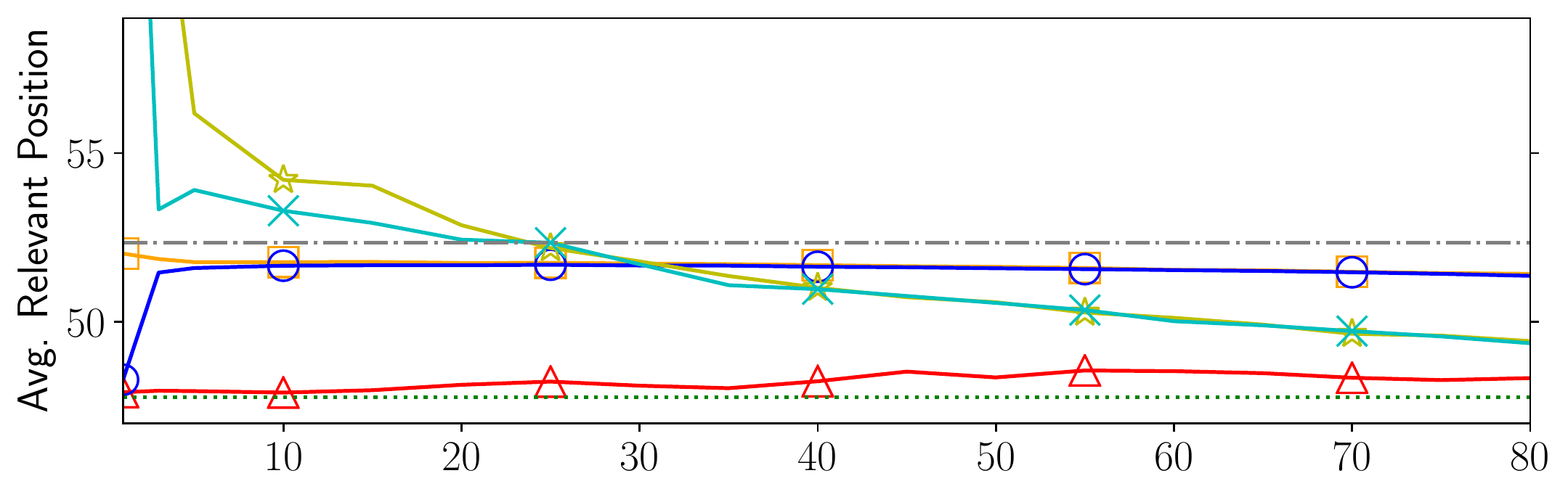} \\
\includegraphics[scale=0.5]{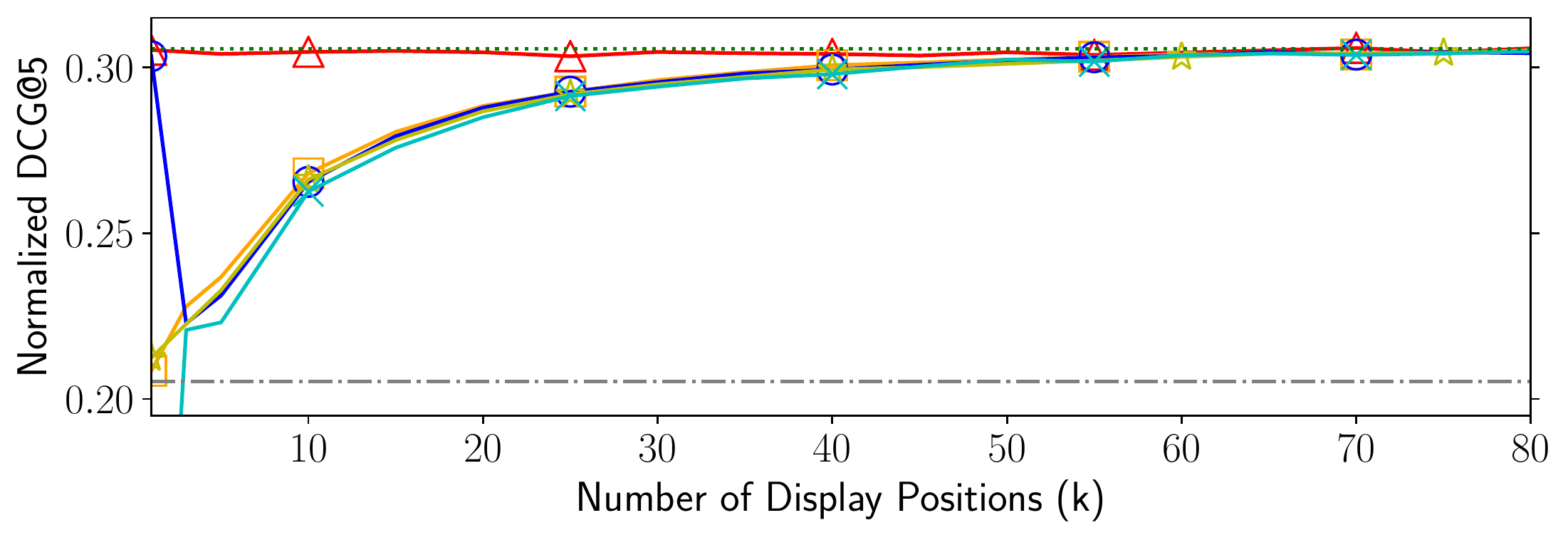} \\
 \includegraphics[scale=.29]{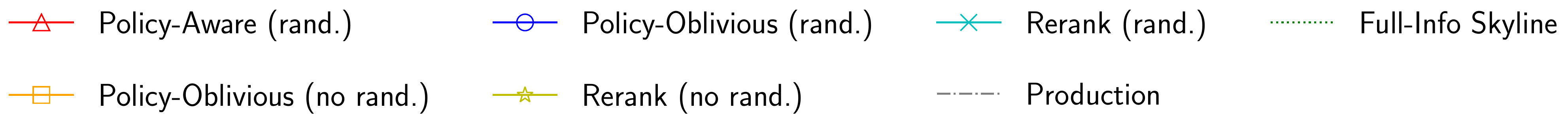} 
\end{tabular}
\caption{
The effect of item-selection bias on different estimators.
Optimization on $10^8$ clicks simulated on top-$k$ rankings with varying number of display positions ($k$), with and without randomization (for each datapoint $10^8$ clicks were simulated independently).
Results on the \emph{Yahoo} dataset and \emph{MSLR} dataset. 
The top graph per dataset optimizes the average relevance position through the linear upper bound (Eq.~\ref{eq:linearupperbound});
the bottom graph per dataset optimizes DCG$@5$ using the truncated LambdaLoss (Eq.~\ref{eq:truncembound}).
}
\label{fig:effectk}
\end{figure*}

\begin{figure*}[t]
\centering
\begin{tabular}{r}
\multicolumn{1}{c}{ \hspace{2em} {Yahoo!\ Webscope}}\\
\includegraphics[scale=0.5]{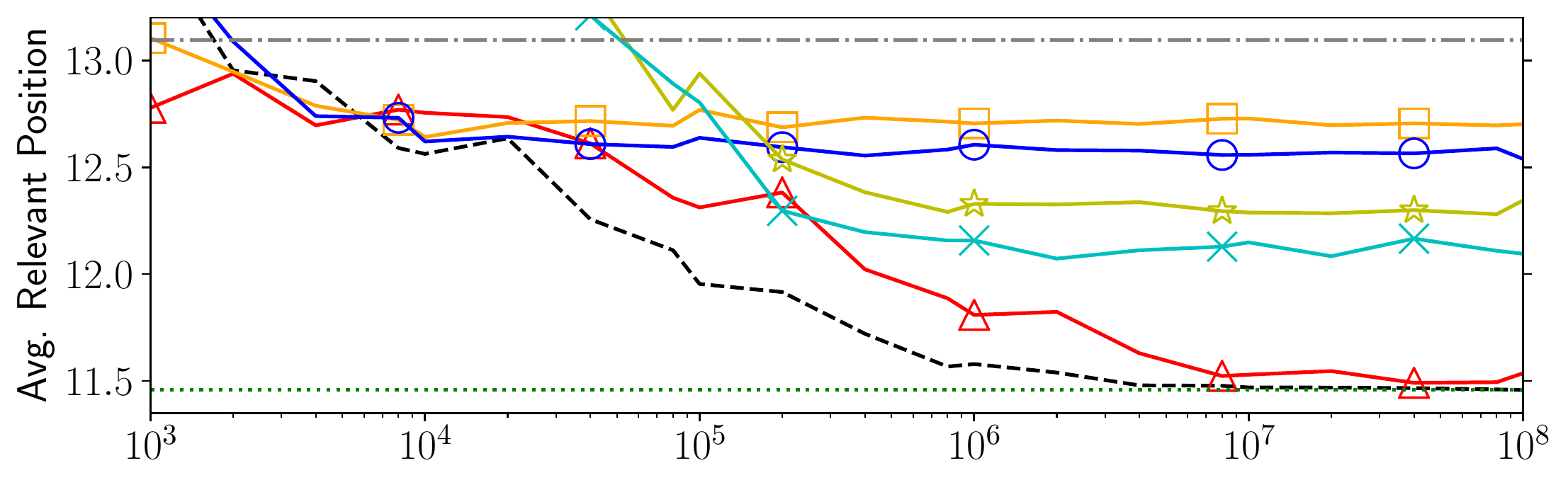} \\
\includegraphics[scale=0.5]{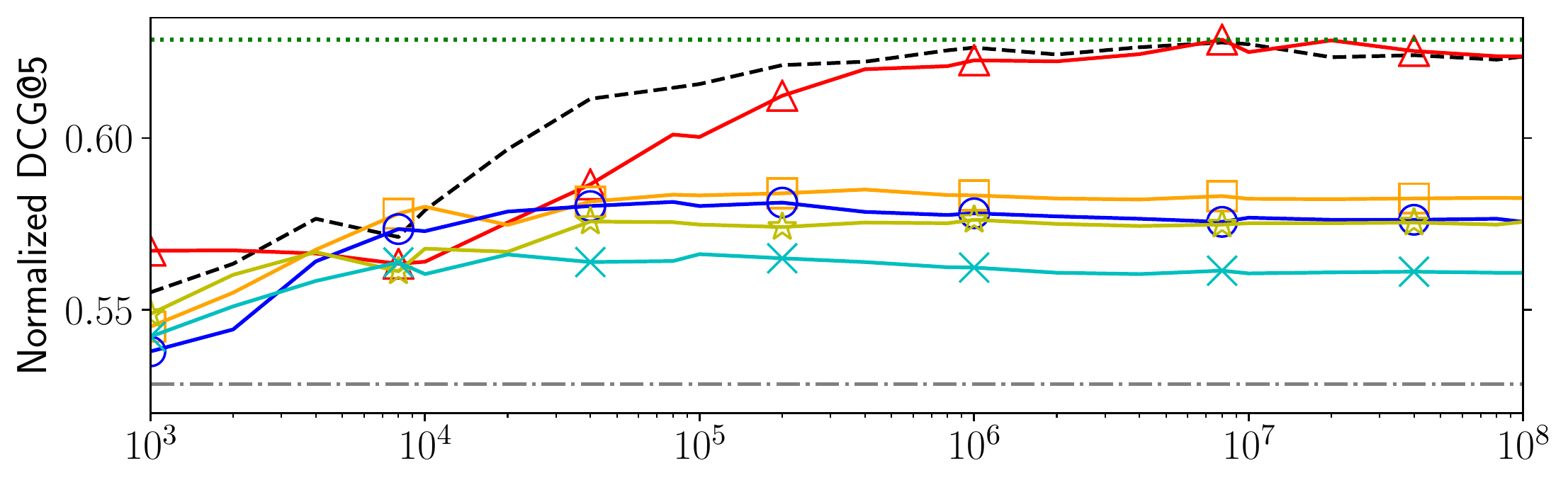} \\
\multicolumn{1}{c}{ \hspace{2em} {MSLR-WEB30k}} \\
\includegraphics[scale=0.5]{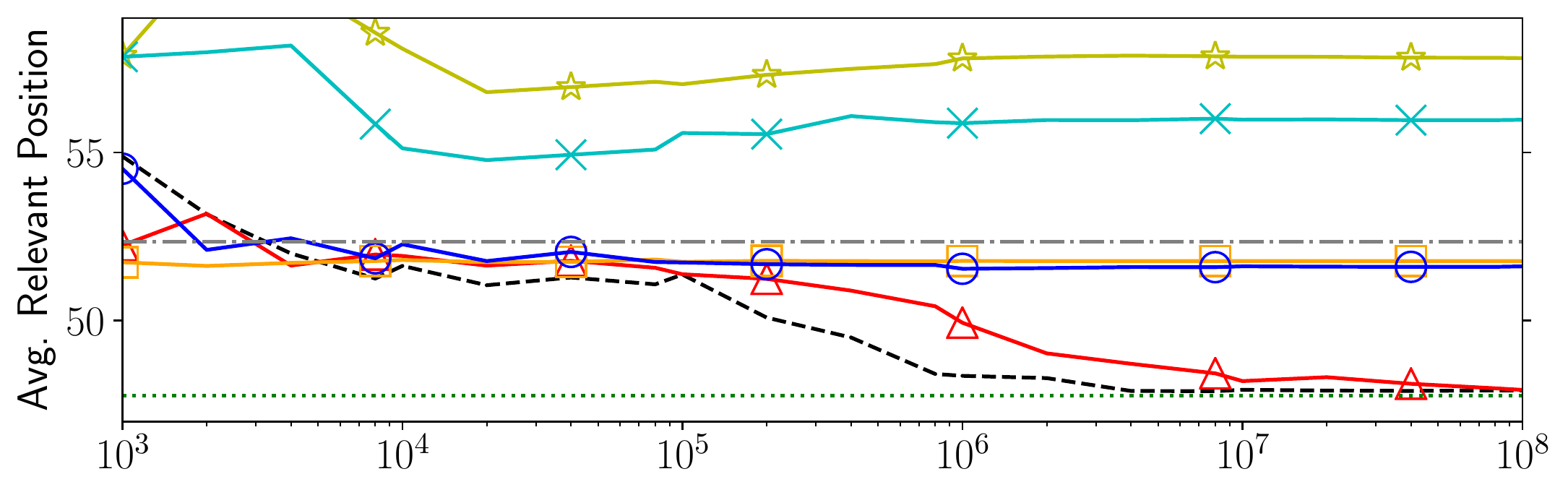} \\
\includegraphics[scale=0.5]{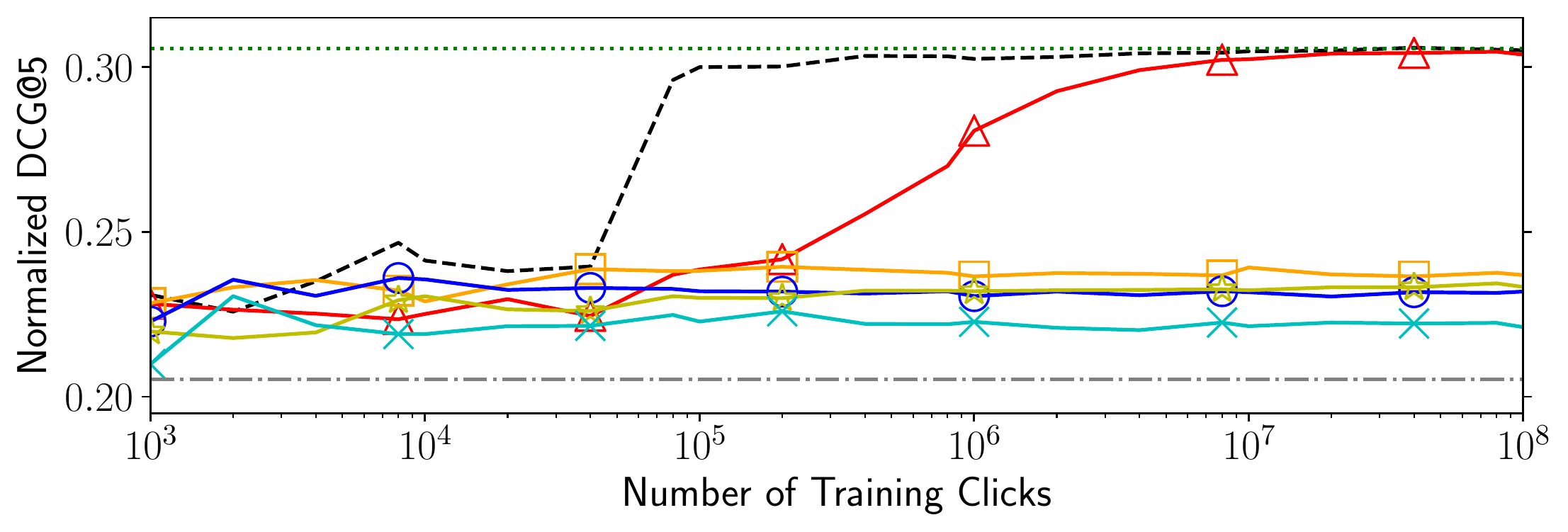} \\
 \includegraphics[scale=.29]{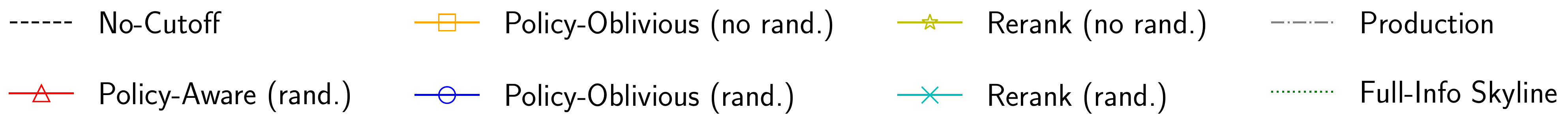}
\end{tabular}
\caption{
Performance of different estimators learning from different numbers of clicks simulated on top-5 rankings, with and without randomization.
Results on the \emph{Yahoo} dataset and \emph{MSLR} dataset. 
The top graph per dataset optimizes the average relevance position through the linear upper bound (Eq.~\ref{eq:linearupperbound});
the bottom graph per dataset optimizes DCG$@5$ using the truncated LambdaLoss (Eq.~\ref{eq:truncembound}).
}
\label{fig:cutoffbaseline}
\end{figure*}

\begin{figure*}[t]
\centering
\begin{tabular}{r}
\multicolumn{1}{c}{ \hspace{2em}  {Yahoo!\ Webscope}} \\
\includegraphics[scale=0.5]{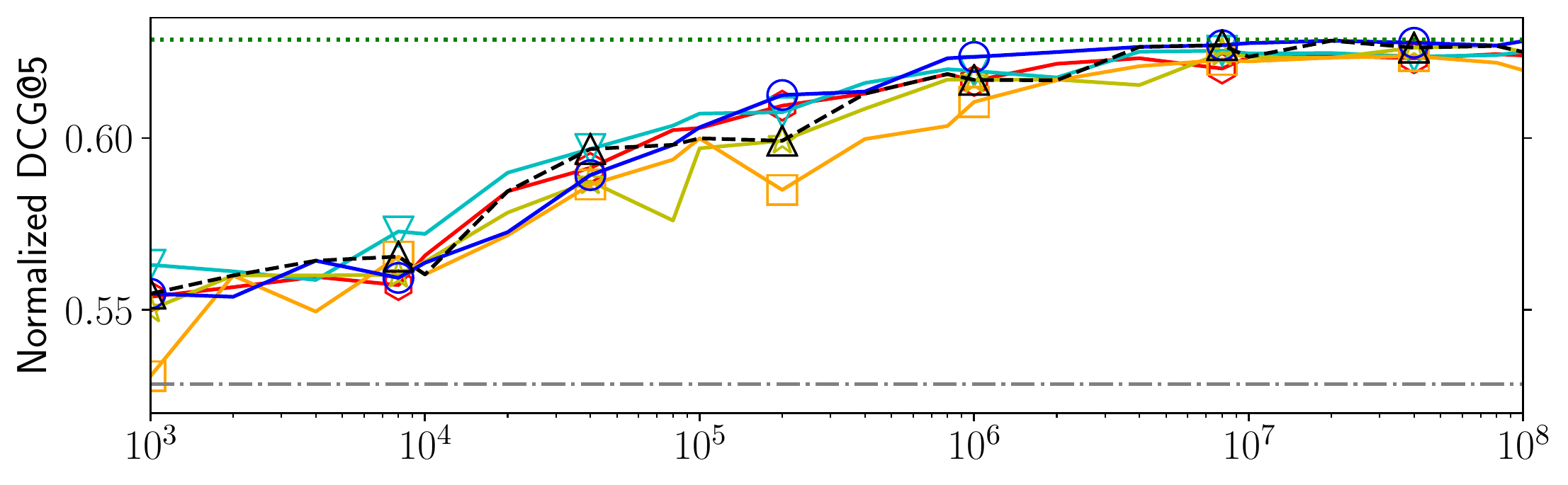} \\
\includegraphics[scale=0.5]{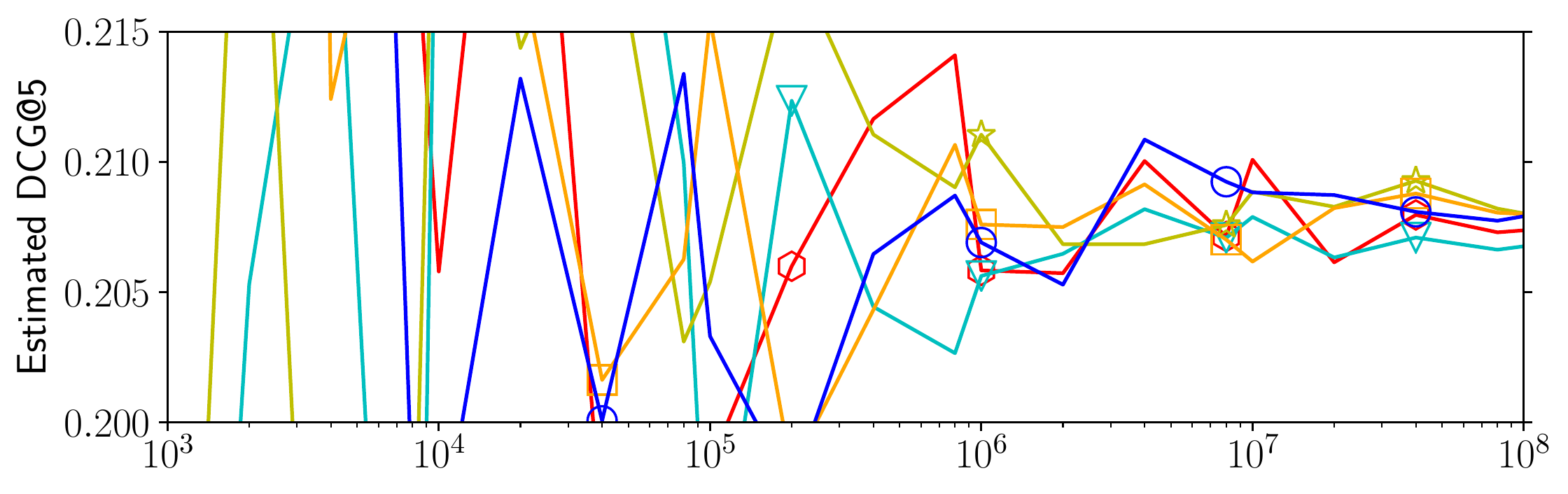} \\
\multicolumn{1}{c}{ \hspace{2em}  {MSLR-WEB30k}} \\
\includegraphics[scale=0.5]{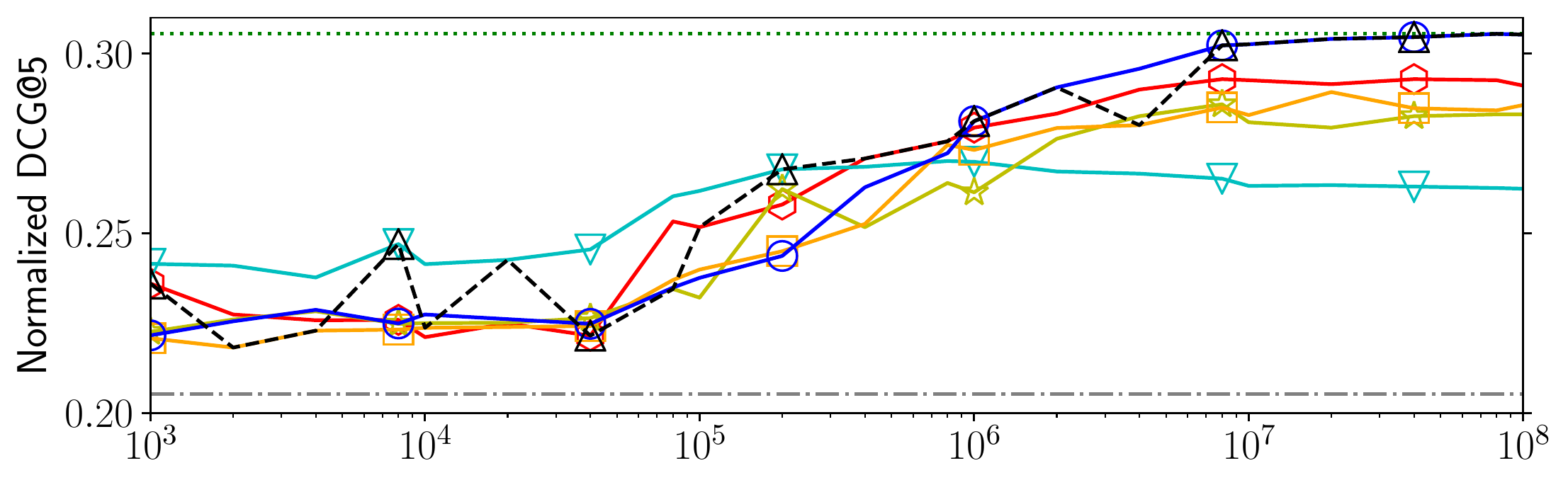} \\
\includegraphics[scale=0.5]{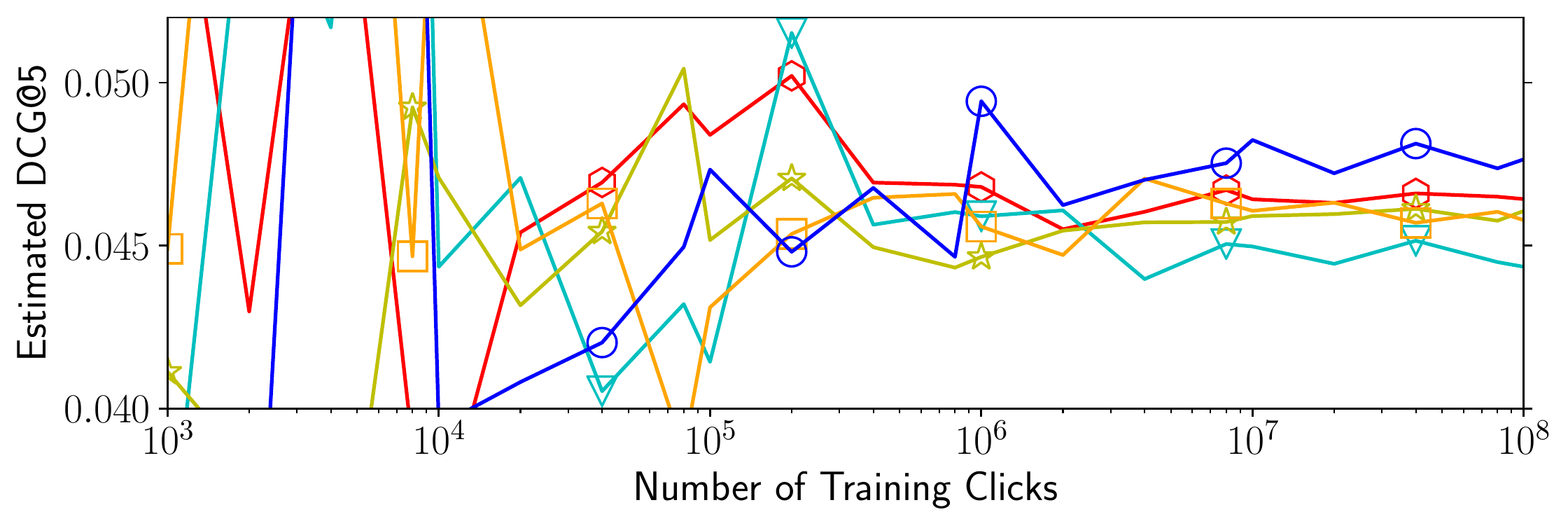} \\
 \includegraphics[scale=.29]{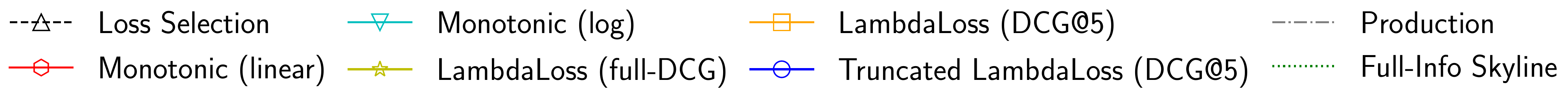}
\end{tabular}
\caption{
Performance of the policy-aware estimator (Eq.~\ref{eq:policyaware}) optimizing \ac{DCG}$@5$ using different loss functions.
The loss selection method selects the estimated optimal model based on clicks gathered on separate validation queries.
Varying numbers of clicks on top-5 rankings with randomization, the number of validation clicks is 15\% of the number of training clicks.
}
\label{fig:dcglosses}
\end{figure*}

\section{Results and Discussion}
In this section we discuss the results of our experiments and evaluate our policy-aware estimator and the methods for top-$k$ \ac{LTR} metric optimization empirically.

\subsection{Learning under item-selection bias}

First we consider the question:
\begin{enumerate}[align=left, label={\bf RQ5.\arabic*}, leftmargin=*]
\item
Is the policy-aware estimator effective for unbiased counterfactual \ac{LTR} from top-$k$ feedback?
\label{rq:topk:feedback}
\end{enumerate}
Figure~\ref{fig:effectk} displays the performance of different approaches after optimization on $10^8$ clicks under varying values for $k$.
Both the policy-oblivious and rerank estimators are greatly affected by the item-selection bias introduced by the cutoff at $k$.
On the MSLR dataset neither approach is able to get close to optimal \ac{ARP} performance, optimal \ac{DCG}$@5$ is only reached when $k >50$.
On the Yahoo dataset, the policy-oblivous approach can only approximate optimal \ac{ARP} when $k > 60$; for \ac{DCG}$@5$ it requires $k > 25$.
The rerank approach reaches optimal \ac{ARP} when $k>50$ and optimal \ac{DCG}$@5$ when $k > 20$.
Considering that on average a query in the Yahoo dataset only has 24 preselected documents, it appears that even a little item-selection bias has a substantial effect on both estimators.
Furthermore, randomization appears to have a very limited positive effect on the policy-oblivious and rerank approaches.
The one exception is the policy-oblivious approach when $k=1$ where it reaches optimal performance under randomization.
Here, the randomization policy gives every item an equal probability of being presented, thus trivially removing item-selection bias; additionally, there is no position bias as there is only a single position.
However, besides this trivial exception, the baseline estimators are strongly affected by item-selection bias and simply logging with randomization is unable to remove the effect of item-selection bias.

In contrast, the policy-aware approach is hardly affected by the choice of $k$.
It consistently approximates optimal performance in terms of \ac{ARP} and \ac{DCG}$@5$ on both datasets.
On the MSLR dataset, the policy-aware approach provides near optimal \ac{ARP} performance; however, for $k > 15$ there is a small but noticeable gap.
We suspect that this is the result of variance from click-noise and can be closed by gathering more clicks.
Across all settings, the policy-aware approach appears unaffected by the choice of $k$ and thus the effect of item-selection bias.
Moreover, it consistently provides performance at least as good as the baselines; and on the Yahoo dataset it outperforms them for $k < 20$ and on the MSLR dataset outperforms them for all tested values of $k$.
We note that the randomization policy is the same for all methods; in other words, under randomization the clicks for the policy-oblivious, policy-aware and rerank approaches are acquired in the exact same way.
Thus, our results show that in order to benefit from randomization, a counterfactual \ac{LTR} method has to take its effect into account, hence only the policy-aware approach has improved performance.

Figure~\ref{fig:cutoffbaseline} displays the performance when learning from top-5 feedback while varying the number of clicks.
Here we see that the policy-oblivious approach performance is stable after $10^5$ clicks have been gathered.
The rerank approach has stable performance after $10^6$ clicks when optimized for \ac{ARP} and $10^5$ for \ac{DCG}$@5$.
Both baseline approaches show biased behavior where adding additional data does not lead to improved performance.
This confirms that their estimators are unable to deal with item-selection bias.
In contrast, the policy-aware approach reaches optimal performance in all settings.
However, it appears that the policy-aware approach requires more clicks than the no-cutoff baseline; we suspect that this difference is due to variance added by the randomization and smaller propensity scores.

In conclusion, we answer our \ref{rq:topk:feedback} positively: our results show that the policy-aware approach is unbiased w.r.t.\ item-selection bias and position bias.
Where all baseline approaches are affected by item-selection bias even in small amounts, the policy-aware approach approximates optimal performance regardless of the cutoff value $k$.

\subsection{Optimizing top-$k$ metrics}

Next, we consider the question:
\begin{enumerate}[align=left, label={\bf RQ5.\arabic*}, leftmargin=*,resume]
\item
Are our novel counterfactual \ac{LTR} loss functions effective for top-$k$ \ac{LTR} metric optimization?
\label{rq:topk:loss}
\end{enumerate}
Figure~\ref{fig:dcglosses} shows the performance of the policy-aware approach after optimizing different loss functions under top-$5$ feedback.
While on the Yahoo dataset small differences are observed, on the MSLR dataset substantial differences are found.
Interestingly, there seems to be no advantage in optimizing for \ac{DCG}$@5$ instead of full \ac{DCG} with the LambdaLoss.
Furthermore, the monotonic loss function works very well with a linear upper bound, yet poorly when using the log upper bound.
On both datasets the heuristic truncated LambdaLoss loss function provides the best performance, despite being the only method without a theoretical basis.
When few clicks are available, the differences change; e.g., the monotonic loss function with a log upper bound outperforms the other losses on the MSLR dataset when fewer than $10^5$ clicks are available.

Finally, we consider unbiased loss selection; Figure~\ref{fig:dcglosses} displays both the performance of the selected models and the estimated performance on which the selections are based.
For the most part the optimal models are selected, but variance does cause mistakes in selection when few clicks are available.
Thus, unbiased optimal loss selection seems effective as long as enough clicks are available.

In conclusion, we answer \ref{rq:topk:loss} positively: our results indicate that the truncated counterfactual LambdaLoss loss function is most effective at optimizing \ac{DCG}$@5$.
Using this loss, our counterfactual \ac{LTR} method reaches state-of-the-art performance comparable to supervised \ac{LTR} on both datasets.
Alternatively, our proposed unbiased loss selection method can choose optimally between models that are optimized by different loss functions.

\section{Related Work}

Section~\ref{section:supervisedLTR} has discussed supervised \ac{LTR} and Section~\ref{section:counterfactualLTR} has described the existing counterfactual \ac{LTR} framework; this section contrasts additional related work with our policy-aware approach.

Interestingly, some existing work in unbiased \ac{LTR} was performed in top-$k$ rankings settings \citep{agarwal2019addressing, agarwal2019estimating, wang2018position, wang2016learning}.
Our findings suggest that the results of that work are affected by item-selection bias and that there is the potential for considerable improvements by applying the policy-aware method.

\citeauthor{carterette2018offline}~\citep{carterette2018offline} recognized that counterfactual evaluation cannot evaluate rankers that retrieve items that are unseen in the interaction logs, essentially due to a form of item-selection bias.
 Their proposed solution is to gather new interactions on rankings where previously unseen items are randomly injected.
Accordingly, they adapt propensity scoring to account for the random injection strategy.
In retrospect, this approach can be seen as a specific instance of our policy-aware approach.
In contrast, we have focused on settings where item-selection bias takes place systematically and propose that logs should be gathered by any policy that meets Condition~\ref{eq:awarecond2}.
Instead of expanding the logs to correct for missing items, our approach avoids systematic item-selection bias altogether.

Other previous work has also used propensity scores based on a logging policy and examination probabilities.
\citet{Komiyama2015} and subsequently \citet{lagree2016multiple} use such propensities to find the optimal ranking for a single query by casting the ranking problem as a multiple-play bandit.
\citet{li2018offline} use similar propensities to counterfactually evaluate ranking policies where they estimate the number of clicks a ranking policy will receive.
Our policy-aware approach contrasts with these existing methods by providing an unbiased estimate of  \ac{LTR}-metric-based losses, and thus it can be used to optimize \ac{LTR} models similar to supervised \ac{LTR}.

Lastly, online \ac{LTR} methods where interactive processes learn from the user~\citep{yue2009interactively} also make use of stochastic ranking policies.
They correct for biases through randomization in rankings but do not use an explicit model of examination probabilities.
To contrast with counterfactual \ac{LTR}, while online \ac{LTR} methods appear to provide robust performance~\citep{jagerman2019comparison}, they are not proven to unbiasedly optimize \ac{LTR} metrics~\citep{oosterhuis2019optimizing, oosterhuis2018differentiable}.
Unlike counterfactual \ac{LTR}, they are not effective when applied to historical interaction logs~\citep{hofmann2013reusing}.

\section{Conclusion}

In this chapter, we have proposed a policy-aware estimator for \ac{LTR}, the first counterfactual method that is unbiased w.r.t. both position bias and item-selection bias.
Our experimental results show that existing policy-oblivious approaches are greatly affected by item-selection bias, even when only small amounts are present.
In contrast, the proposed policy-aware \ac{LTR} method can learn from top-$k$ feedback without being affected by the choice of $k$.
Furthermore, we proposed three counterfactual \ac{LTR} approaches for optimizing top-$k$ metrics: two theoretically proven lower bounds on \ac{DCG}$@k$ based on monotonic functions and the LambdaLoss framework, respectively, and another heuristic truncated loss.
Additionally, we introduced unbiased loss selection that can choose optimally between models optimized with different loss functions.
Together, our contributions provide a method for learning from top-$k$ feedback and for top-$k$ metrics.

With these contributions, we can answer the thesis research questions \ref{thesisrq:topk} and \ref{thesisrq:lambdaloss} positively:
with the policy-aware estimator counterfactual \ac{LTR} is applicable to top-$k$ ranking settings; furthermore, we have shown that the state-of-the-art supervised \ac{LTR} LambdaLoss method can be used for counterfactual \ac{LTR}.
To the best of our knowledge, this is the first counterfactual \ac{LTR} method that is unbiased in top-$k$ ranking settings.
Additionally, this chapter also serves to further bridge the gap between supervised and counterfactual \ac{LTR} methods, as we have shown that state-of-the-art lambda-based supervised \ac{LTR} methods can be applied to the state-of-the-art counterfactual \ac{LTR} estimators.
Therefore, the contributions of this chapter have greatly extended the capability of the counterfactual \ac{LTR} approach and further connected it with the supervised \ac{LTR} field.

Future work in supervised \ac{LTR} could verify whether potential novel supervised methods can be applied to counterfactual losses.
A limitation of the policy-aware \ac{LTR} approach is that the logging policy needs to be known; future work could investigate whether a policy estimated from logs also suffices~\citep{liu2018breaking, li2018offline}.
Finally, existing work on bias in recommendation~\citep{Schnabel2016} has not considered position bias, thus we anticipate further opportunities for counterfactual \ac{LTR} methods for top-$k$ recommendations.

The remaining chapters of this thesis will continue to build on the policy-aware estimator.
Chapter~\ref{chapter:05-genspec} introduces a counterfactual \ac{LTR} algorithm that uses the policy-aware estimator to combine properties of tabular models and feature-based models.
Furthermore, Chapter~\ref{chapter:06-onlinecountereval} looks at how the policy-aware estimator can be used for ranker evaluation.
It introduces an algorithm that optimizes the logging policy to reduce variance when using the policy-aware estimator for evaluation.
Lastly, Chapter~\ref{chapter:06-onlinecounterltr} introduces a novel intervention-aware estimator inspired by the policy-aware estimator.
This novel estimator takes the policy-aware approach even further by considering the effect of all logging policies used during data gathering. 
The intervention-aware approach thus also considers the case where the logging policy is updated during the gathering of data.
Besides the policy-aware estimator, Chapter~\ref{chapter:05-genspec}, Chapter~\ref{chapter:06-onlinecountereval}, and Chapter~\ref{chapter:06-onlinecounterltr} all use the adaption of LambdaLoss for counterfactual \ac{LTR} derived in this chapter.

\begin{subappendices}

\section{Notation Reference for Chapter~\ref{chapter:04-topk}}
\label{notation:04-topk}

\begin{center}
\begin{tabular}{l l}
 \toprule
\bf Notation  & \bf Description \\
\midrule
$k$ & the number of items that can be displayed in a single ranking \\
$i$ & an iteration number \\
$Q$ & set of queries\\
$q$ & a user-issued query \\
$d$ & an item to be ranked\\
$r(d,q)$, $r(d)$ & the relevance of item $d$ w.r.t.\ query $q$\\
$R$ & a ranked list \\
$\bar{R}$ & a ranked list that was displayed to the user \\
$\lambda\left(\doc \mid \ranking \right)$ & a metric that weights items depending on their display rank \\
$c_i(d)$ & a function indicating item $d$ was clicked at iteration $i$ \\
$o_i(d)$ & a function indicating item $d$ was observed at iteration $i$ \\
$\pi$ & a logging policy\\
$\pi(\displayranking \mid q)$ & the probability that policy $\pi$ displays ranking $\displayranking$ for query $q$ \\
$\text{rank}\big(\doc \mid \displayranking \big)$ & the rank of item $\doc$ in displayed ranking $\displayranking$\\
$\rho$ & a propensity function used to represent any \acs{IPS} estimator \\
$s(d)$ & the score given to item $d$ by ranking model $s$, used to sort items by\\
\bottomrule
\end{tabular}
\end{center}
\end{subappendices}

\chapter[Combining Generalized and Specialized Models in Counterfactual \acs{LTR}]{Combining Generalized and Specialized Models in Counterfactual Learning to Rank}
\label{chapter:05-genspec}

\footnote[]{This chapter was submitted as~\citep{oosterhuis2021genspec}.
Appendix~\ref{notation:05-genspec} gives a reference for the notation used in this chapter.
}

So far, this thesis has only addressed feature-based \ac{LTR} -- the optimization of models that rank items based on their features -- as opposed to tabular online \ac{LTR} -- which optimizes a ranking directly, thus not using any scoring models.
A big advantage of feature-based \ac{LTR} is that its model can be applied to previously unseen queries and items.
As a result, it provides very robust performance in previously unseen circumstances.
However, their behavior is often limited by the available features: in practice they do not provide enough information to determine the optimal ranking.
In stark contrast, tabular \ac{LTR} memorizes rankings, instead of using a features to predict them.
Consequently, tabular \ac{LTR} is not limited by which features are available and can potentially always find the optimal ranking.
Despite this potential, tabular \ac{LTR} does not generalize: it cannot transfer learned behavior to previously unseen queries or items.
In other words, tabular \ac{LTR} has the potential to specialize -- perform very well in circumstances encountered often -- whereas feature-based \ac{LTR} is good at generalization -- performing well overall, including previously unseen circumstances.
In this chapter we investigate whether the advantageous properties of these two areas can be combined in the  counterfactual \ac{LTR} framework, and thus we address the thesis research question:
\begin{itemize}
\item[\ref{thesisrq:genspec}] Can the specialization ability of tabular online \ac{LTR} be combined with the robust feature-based approach of counterfactual \ac{LTR}?
\end{itemize}

In this chapter we introduce a framework for \acf{GENSPEC} for counterfactual learning from logged bandit feedback.
\ac{GENSPEC} is designed for problems that can be divided in many non-overlapping contexts.
It simultaneously learns a generalized policy -- optimized for high performance across all contexts -- and many specialized policies -- each optimized for high performance in a single context.
Using high-confidence bounds on the relative performance of policies, per context \ac{GENSPEC} decides whether to deploy a specialized policy, the general policy, or the current logging policy.
By doing so \ac{GENSPEC} combines the high performance of successfully specialized policies with the safety and robustness of a generalized policy. 

While \ac{GENSPEC} is applicable to many different bandit problems, we focus on query-specialization for counterfactual learning to rank, where a context consists of a query submitted by a user.
Here we learn both a single general feature-based model for robust performance across queries, and many memory-based models, each of which is highly specialized for a single query, \ac{GENSPEC} then chooses which model to deploy on a per query basis.
Our results show that \ac{GENSPEC} leads to massive performance gains on queries with sufficient click data, while still having safe and robust behavior on queries with little or noisy data.

\section{Introduction}
\label{sec:intro}

\emph{Generalization} is an important goal for most machine learning algorithms: models should perform well across a large range of contexts, especially previously unseen contexts~\citep{bishop2006pattern}.
\emph{Specialization}, the ability to perform well in a single context, is often disfavored over generalization because the latter is more robust~\citep{hawkins2004problem}.
Generally, the same trade-off pertains to contextual bandit problems~\citep[][Chapter 18]{lattimore2019bandit}. There, the goal is to find a policy that maximizes performance over the full distribution of contextual information. 
While a specialized policy, i.e., a policy optimized on a subset of possible contexts, could outperform a generalized policy on that subset, most likely it compromises performance on other contexts to do so since specialization comes with a risk of overfitting: applying a policy that is specialized in a specific set of contexts to different contexts~\citep{hawkins2004problem, claeskens2008model}.
As a consequence, generalization is often preferred as it avoids this issue.

In this chapter, we argue that, depending on the circumstances, specialization may be preferable over generalization, specifically, if with high-confidence it can be guaranteed that a specialized policy is only deployed in contexts where it outperforms policies optimized for generalization.
We focus on counterfactual learning for contextual bandit problems where contexts can be split into non-overlapping sets.
We simultaneously train 
\begin{enumerate*}[label=(\roman*)]
\item a generalized policy that performs well across all contexts, and 
\item many specialized policies, one for each of a specific set of contexts.
\end{enumerate*}
Thus, per context there is a choice between three policies:
\begin{enumerate*}[label=(\roman*)]
\item the logging policy used to gather data, 
\item the generalized policy, and 
\item the specialized policy.
\end{enumerate*}
Depending on the circumstances, e.g., the amount of data available, noise in the data, or the difficulty of the task, a different policy will perform best in a specific context~\citep{claeskens2008model}.
To reliably choose between policies, we estimate high-confidence bounds~\citep{thomas2015high} on the relative performance differences between policies and then choose conservatively: we only apply a specialized policy instead of the generalized policy or logging policy if the lower bounds on their differences in performance are positive in a specific context.
Otherwise, the generalized policy is only applied if with high-confidence it outperforms the logging policy across all contexts.
We call this approach the \acfi{GENSPEC}\acused{GENSPEC} framework: it trains both generalized and specialized policies and results in a meta-policy that chooses between them using high-confidence bounds.
The \ac{GENSPEC} meta-policy is particularly powerful because it can combine the properties of different models: for instance, a generalized policy using a feature-based model can be overruled by a specialized policy using a tabular model that has memorized the best actions.
\ac{GENSPEC} promises the best of two worlds: the safe robustness of a generalized policy with the potentially high performance of a specialized policy.

To evaluate the \ac{GENSPEC} approach, we apply it to query-spe\-cialization in the setting of \emph{Counterfactual} \acfi{LTR}.
Existing approaches in this field either generalize -- by learning a ranking model that ranks items based on their features and generalizes well across all queries~\citep{joachims2017unbiased} -- or they specialize -- by learning tabular ranking models that are specific to a single query and cannot be applied to any other query~\citep{zoghi2016click, lattimore2019bandit}. %
By viewing each query as a different context, \ac{GENSPEC} learns both a generalized ranker and many specialized tabular rankers, and subsequently chooses which ranker to apply per query.
Our empirical results show that \ac{GENSPEC} combines the advantages of both approaches: very high performance on queries where sufficiently many interactions were observed for successful specialization, and safe robust performance on queries where interaction data is limited or noisy.

Our main contributions are:
\begin{enumerate}%
\item an adaptation of existing counterfactual high-confidence bounds for relative performance between ranking policies; 
\item the \ac{GENSPEC} framework that simultaneously learns generalized and specialized ranking policies plus a meta-policy that decides which to deploy per context.
\end{enumerate}

\noindent
To the best of our knowledge, \ac{GENSPEC} is the first counterfactual \ac{LTR} method to simultaneously train generalized and specialized models, and reliably choose between them using high-confidence bounds.

\section{Background: Learning to Rank}
\label{sec:background}

This section covers the basics of counterfactual \ac{LTR}.

\subsection{Supervised learning to rank}
\label{sec:supervisedltr}

The \ac{LTR} task has been approached as a contextual bandit problem before~\citep{ yue2009interactively, kveton2015cascading, swaminathan2017off, lattimore2019bandit}.
The differentiating characteristic of the \ac{LTR} task is that actions are \emph{rankings}, thus they consist of an ordered set of $K$ items:
$
a = (d_{1}, d_{2}, \ldots, d_{K}).
$
The contextual information often contains a user-issued search query, features based on the items available for ranking and item-query combinations, information about the user, among other miscellaneous information.
Since our focus is query specialization, we record the query separately; thus, at each time step $i$ contextual information $x_i$ and a single query $q_i \in \{1, 2, 3, \ldots \}$ are active:
$
x_i, q_i \sim P(x, q).
$
Let $\Delta$ indicate the reward for a ranking $a$.
A policy $\pi$ should maximize the expected reward~\cite{liu2009learning, joachims2017unbiased}:
\begin{equation}
\mathcal{R}(\pi) =  \iint \Big( \sum_{a \in \pi} \Delta(a |\, x, q, r) \cdot \pi(a |\, x, q) \Big) P(x, q) \, dx \,dq.
\label{eq:genspec:truereward}
\end{equation}
Commonly, in \ac{LTR} the reward for a ranking $a$ is a linear combination of the relevance scores of the items in $a$, weighted according to their rank.
We use $r(d \mid x, q)$ to denote the relevance score of item $d$ and $\lambda\big(\textit{rank}(d \mid a)\big)$ for the weight per rank, resulting in:
\begin{equation}
\Delta(a \mid x, q, r) = \sum_{d \in a }\lambda\big(\textit{rank}(d \mid a)\big) \cdot r(d \mid x, q).
\label{eq:ltrtrueest}
\end{equation}
A common choice is to optimize the \acfi{DCG} metric; $\lambda$ can be chosen accordingly:
\begin{equation}
\lambda^{\textit{DCG}}\big(\textit{rank}(d \mid a)\big) = \log_2\big(1+ \textit{rank}(d \mid a)\big)^{-1}. \label{eq:dcg}
\end{equation}
When the relevance function $r$ is given, maximizing $\mathcal{R}$ can be done through traditional \ac{LTR} in a supervised manner~\cite{liu2009learning, burges2010ranknet, wang2018lambdaloss}.

\subsection{Counterfactual learning to rank}
\label{sec:counterfactualltr}
In practice, the relevance score $r$ is often unknown or requires expensive annotation~\citep{sanderson2010,Chapelle2011, qin2013introducing, dato2016fast}.
An attractive alternative comes from \ac{LTR} based on historical interaction logs, which takes a counterfactual approach~\citep{wang2016learning,joachims2017unbiased}.
Let $\pi_0$ be the logging policy that was used when interactions were logged:
\begin{equation}
a_i \sim \pi_0(a \mid x_i, q_i).
\end{equation}
\ac{LTR} focusses mainly on clicks in interactions; clicks are strongly affected by \emph{position bias}~\citep{craswell2008experimental}. %
This bias arises because users often do not examine all items presented to them, and only click on examined items.
As a result, items that are displayed in positions that are more often examined are also more likely to be clicked, without necessarily being more relevant.
Let $o_i(d) \in \{0,1\}$ indicate whether item $d$ was examined by the user or not:
\begin{equation}
o_i(d) \sim P\big(o(d) \mid a_i\big).
\end{equation}
We use $c_i(d) \in \{0,1\}$ to indicate whether $d$ was clicked at time step $i$:
\begin{equation}
c_i(d) \sim P\big(c(d) \mid o_i(d), r( d \mid x, q)\big).
\end{equation}
We assume that click probabilities are only dependent on whether an item was examined,  $o_i(d)$, and its relevance, $r( d \mid x, q)$.
Furthermore, we make the common assumption that clicks only occur on examined items~\citep{wang2016learning,joachims2017unbiased}, thus:
\begin{equation}
P\big(c(d) = 1 \mid o(d) = 0, r( d \mid x, q)\big) = 0.
\label{eq:unobserved}
\end{equation}
Moreover, we assume that, given examination, more relevant documents are more likely to be clicked.
Specifically, click probability is proportional to relevance with an offset $\mu \in \mathbb{R}_{>0}$:
\begin{equation}
P\big(c(d) = 1 \mid o(d) = 1, r(d \mid x, q)\big) \propto r(d \mid x,q) + \mu.
\label{eq:clickprop}
\end{equation}
The data used for counterfactual \ac{LTR} consists of observed clicks $c_i$, propensity scores $\rho_i$, contextual information $x_i$ and query $q_i$ for $N$ interactions:
\begin{equation}
\mathcal{D} = \big\{(c_i, a_i, \rho_i, x_i, q_i)\big\}_{i=1}^N.
\end{equation}
We apply the policy-aware approach~\citep{oosterhuis2020topkrankings} (Chapter~\ref{chapter:04-topk}) and base $\rho$ both on the examination probability of the user and the behavior of the policy:
\begin{equation}
\rho_i(d) = \sum_{a \in \pi_0} P\big(o_i(d)=1 \mid a\big) \cdot \pi_0(a \mid x_i, q_i).
\end{equation}
The estimated reward based on $\mathcal{D}$ is now:
\begin{equation}
\hat{\mathcal{R}}(\pi \mid \mathcal{D}) = \frac{1}{|\mathcal{D}|} \sum_{i \in \mathcal{D}} \sum_{a \in \pi}  \hat{\Delta}(a \mid c_i, \rho_i) \cdot \pi(a \mid x_i, q_i),
\label{eq:rewardestimate}
\end{equation}
where $\hat{\Delta}$ is an \ac{IPS} estimator:
\begin{equation}
\hat{\Delta}(a \mid c_i, \rho_i) = \sum_{d \in a}\frac{\lambda\big(\textit{rank}(d \mid a)\big) \cdot c_i(d)}{\rho_i(d)}. \label{eq:rankips}
\end{equation}
Since the reward $r$ is not observed directly, clicks are used as implicit feedback, which is a biased and noisy indicator of relevance.
The unbiased estimate $\hat{\mathcal{R}}$ can be used for unbiased evaluation and optimization since:\footnote{See Appendix~\ref{sec:counterfactualproof} for a proof.}
\begin{equation}
 \argmax_{\pi} \mathbb{E}_{o_i,a_i}\big[\hat{\mathcal{R}}(\pi \mid \mathcal{D})\big] = \argmax_{\pi} \mathcal{R}(\pi).
\end{equation}
Previous work has introduced several methods for maximizing $\hat{\mathcal{R}}$ so as to optimize different \ac{LTR} metrics~\citep{joachims2017unbiased, agarwal2019counterfactual}. %

This concludes our description of the counterfactual \ac{LTR} basics; importantly, ranking policies can be optimized from clicks without being affected by the logging policy or the users' position bias.

\section{GENSPEC for Query Specialization}
\label{sec:genspecltr}

This section introduces the \ac{GENSPEC} framework and applies it to query specialization for \ac{LTR}.
Section~\ref{sec:genspeccontextualbandit} details how it can be applied to the general contextual bandit problem.

\subsection{Generalization and query specialization}
\label{sec:queryspec}

We will now propose the first part of the \ac{GENSPEC} framework, which produces a general policy $\pi_g$ and, for each query $q$, a specialized policy $\pi_q$.

\ac{GENSPEC} uses the logged data $\data$ both to train policies and to evaluate relative performance;
to avoid overfitting we split $\data$ in a training partition $\traindata$ and a policy-selection partition $\bounddata$ so that 
$
\data = \traindata \cup \bounddata
$
and
$
\traindata \cap \bounddata = \emptyset
$.

A policy has optimal generalization performance if it maximizes performance across \emph{all queries}. 
Thus, given the generalization policy space $\Pi_g$, the optimal general policy is:
\begin{equation}
\pi_g = \argmax_{\pi \in \Pi_g} \, \hat{\mathcal{R}}(\pi \mid \traindata).
\end{equation}
Alternatively, we can also choose to optimize performance for a single query $q$.
First, we only select the datapoints in $\mathcal{D}$ where query $q$ was issued:
\begin{equation}
\mathcal{D}_q = \big\{(c_i, a_i, \rho_i, x_i, q_i)\in \mathcal{D} \mid q_i = q \big\}.
\end{equation}
Then the policy $\pi_q$ that is specialized for query $q$ is the policy in the specialization policy space $\Pi_q$ that maximizes the performance when query $q$ is issued:
\begin{equation}
\pi_q = \argmax_{\pi \in \Pi_q} \, \hat{\mathcal{R}}(\pi \mid \traindata_q).
\end{equation}
The motivation for $\pi_q$ is that it has the potential to provide better performance than $\pi_g$ when $q$ is issued.
We may expect $\pi_q$ to outperform $\pi_g$ because $\pi_g$ may compromise performance on the query $q$ for better performance across all queries, whereas $\pi_q$ never makes such compromises.
Furthermore, $\Pi_q$ could contain more optimal policies than $\Pi_g$, because the policies in $\Pi_g$ have to be applicable to all queries whereas $\Pi_q$ can make use of specific properties of $q$.
However, it is also possible that $\pi_g$ and $\pi_q$ provide the same performance. %
Moreover, since $\mathcal{D}_q$ is a subset of $\mathcal{D}$, the optimization of $\pi_q$ is more vulnerable to noise in the data.
As a result, the true performance of $\pi_q$ for query $q$ could be worse than that of $\pi_g$, especially when $\mathcal{D}_q$ is substantially smaller than $\mathcal{D}$.

In other words, a priori it is unclear whether $\pi_g$ or $\pi_q$ are preferred.
We thus need a method to estimate the optimal choice with a reasonable amount of confidence.

\subsection{Safely choosing between policies}

We will now propose the other part of our \ac{GENSPEC} framework: a meta-policy that safely chooses between deploying $\pi_g$ and $\pi_q$ per query $q$.
We wish to avoid deploying $\pi_q$ when it performs worse than $\pi_g$, and similarly, avoid deploying $\pi_g$ when it is outperformed by the logging policy $\pi_0$.
Recently, a method for safe policy deployment was introduced by \citet{jagerman2020safety} based on high-confidence bounds~\citep{thomas2015high}.
The intuition behind their method is that a learned policy $\pi$ should not be deployed before we can be highly confident that it outperforms the logging policy $\pi_0$, otherwise it is safer to keep the logging policy in deployment.

While previous work has bounded the performance of individual policies~\citep{ thomas2015high, jagerman2020safety}, we instead bound the \emph{difference} in performance between two policies directly.
Let $\delta(\pi_1, \pi_2)$ indicate the true difference in performance between a policy $\pi_1$ and policy $\pi_2$:
\begin{equation}
\delta(\pi_1, \pi_2) = \mathcal{R}(\pi_1) - \mathcal{R}(\pi_2).
\end{equation}
Knowing $\delta(\pi_1, \pi_2)$ allows us to optimally choose which of the two policies to deploy.
However, we can only estimate its value from historical data $\mathcal{D}$:
\begin{equation}
\hat{\delta}(\pi_1, \pi_2 \mid \mathcal{D}) = \hat{\mathcal{R}}(\pi_1 \mid \mathcal{D}) - \hat{\mathcal{R}}(\pi_2 \mid \mathcal{D}).
\end{equation}
For brevity, let $R_{i,d}$ indicate the inverse-propensity-scored difference for a single document $d$ at interaction $i$:
\begin{equation}
R_{i,d} =  
\frac{c_i(d)}{\rho_i(d)} \sum_{a \in \pi_1 \cup \pi_2} \big(\pi_1(a | x_i, q_i) - \pi_2(a | x_i, q_i)\big) 
 \cdot  \lambda\big(\textit{rank}(d \,|\, a)\big).
\end{equation}
Then, for computational efficiency we rewrite:
\begin{equation}
\hat{\delta}(\pi_1, \pi_2 \mid \mathcal{D}) = 
 \frac{1}{|\data|} \sum_{i \in \mathcal{D}} \sum_{d \in a_i} R_{i,d} = \frac{1}{|\data|  K} \sum_{(i, d) \in \mathcal{D}} K \cdot R_{i,d}.
\end{equation}
For notational purposes, we let $\sum_{(i,d) \in \mathcal{D}}$ iterate over all actions $a_i$ and $K$ documents $d$ per action $a_i$.
With the confidence parameter $\epsilon \in [0, 1]$,
setting $b$ to be the maximum possible absolute value for $R_{i,d}$, i.e., $b = \frac{\max \lambda(\cdot)}{\min \rho}$, and
\begin{align}
\nu =  \frac{ 2  |\mathcal{D}|  K  \ln\big(\frac{2}{1-\epsilon}\big)}{|\mathcal{D}| K-1}  \sum_{(i,d) \in \mathcal{D}} \big(K \cdot R_{i,d} - \hat{\delta}(\pi_1, \pi_2 \mid \mathcal{D}) \big)^2,
\nonumber
\end{align}
we follow~\citet{thomas2015high} to get the high-confidence bound:
\begin{align}
\textit{CB}
(\pi_1, \pi_2 \mid \mathcal{D})
= \frac{7 K b\ln\big(\frac{2}{1-\epsilon}\big)}{3(|\mathcal{D}|  K-1)} + \frac{1}{|\mathcal{D}|  K}  
 \cdot \sqrt{\nu}. 
\label{eq:CB}
\end{align}
In turn, this provides us with the following upper and lower confidence bounds on $\delta$:
\begin{equation}
\begin{split}
\textit{LCB}(\pi_1, \pi_2 \mid \mathcal{D}) &= \hat{\delta}(\pi_1, \pi_2 \mid \mathcal{D}) - \textit{CB}(\pi_1, \pi_2 \mid \mathcal{D})  \\
\textit{UCB}(\pi_1, \pi_2 \mid \mathcal{D}) &= \hat{\delta}(\pi_1, \pi_2 \mid \mathcal{D}) + \textit{CB}(\pi_1, \pi_2 \mid \mathcal{D}). 
\end{split}
\label{eq:ucb}
\end{equation}
As proven by~\citet{thomas2015high}, with at least a probability of $\epsilon$ they bound the true value of $\delta(\pi_1, \pi_2)$:
\begin{equation}
P\Big(\delta(\pi_1, \pi_2) \in \big[\textit{LCB}(\pi_1, \pi_2 \mid \mathcal{D}), \textit{UCB}(\pi_1, \pi_2 \mid \mathcal{D})\big]
 \Big) > \epsilon. \label{eq:safetygaurantee}
\end{equation}
These guarantees allow us to safely choose between policies per query $q$.
We apply a doubly conservative strategy:
$\pi_g$ is not deployed before we are confident that it outperforms $\pi_0$ across all queries;
and $\pi_q$ is not deployed before we are confident that it outperforms both $\pi_g$ and $\pi_0$ on query $q$.
This strategy results in the \ac{GENSPEC} meta-policy $\pi_{GS}$:
\begin{equation}
\pi_{GS}(a \mid x, q)
= 
\begin{cases}
\pi_q(a \mid x, q),&\text{if } \big( \textit{LCB}(\pi_q, \pi_g \mid \bounddata_q) > 0\\
&\hspace{5mm} {} \land \, \textit{LCB}(\pi_q, \pi_0 \mid \bounddata_q) > 0 \big), \\
\pi_g(a \mid x, q),& \text{if } \big( \textit{LCB}(\pi_q, \pi_g \mid  \bounddata_q) \leq 0 \\
&\hspace{5mm} {} \land \textit{LCB}(\pi_g, \pi_0 \mid \bounddata) > 0 \big), \\
\pi_0(a \mid x, q),& \text{otherwise}. \\
\end{cases}
\hspace*{-2mm}\mbox{}
\label{eq:genspec}
\end{equation}
In theory, this approach can make use of the potential gains of specialization while avoiding its risks.
For instance, if the policy-selection partition $\bounddata_q$ is very small, it may be heavily affected by noise, so that the confidence bound $\textit{CB}$ will be wide and $\pi_q$ will not be deployed.
Simultaneously, $\bounddata$ may be large enough so that with high-confidence $\pi_g$ is deployed.

We expect that in practice the relative bounding of \ac{GENSPEC} is much more data-efficient than the \ac{SEA} approach by \citet{jagerman2020safety}.
 \ac{SEA} computes an upper bound on the trusted policy and a lower bound on a learned policy, and only deploys the learned policy if its lower bound is greater than the other's upper bound.
When the learned policy has higher performance than the other, we expect the relative bounds of \ac{GENSPEC} to require less data to be certain about this difference than the \ac{SEA} bounds.
In Appendix~\ref{sec:theoryrelativebounds} we theoretically analyze the difference between these approaches and conclude that the relative bounding of \ac{GENSPEC} is more efficient if there is a positive covariance between $ \hat{\mathcal{R}}(\pi_1 \mid \mathcal{D})$ and $ \hat{\mathcal{R}}(\pi_2 \mid \mathcal{D})$.
Because both estimates are based on the same interaction data $\mathcal{D}$, a high covariance is extremely likely.

Previous work has described safety constraints for policy deployment~\citep{wu2016conservative, kazerouni2017conservative, jagerman2020safety}.
The authors assume that a baseline policy exists whose behavior is considered safe; other policies are considered unsafe if their performance is worse than the baseline policy by a certain margin.
If the logging policy is taken to be the baseline policy, then \ac{GENSPEC} can meet such constraints~\citep{jagerman2020safety}.
We note that while the safety guarantee is strong for a single bound (Eq.~\ref{eq:safetygaurantee}), when applied to a large number of queries the probability of at least one incorrect bound greatly increases.
{This \emph{problem of multiple comparisons} may cause some non-optimal policies to be deployed for some queries.}
Since we mainly care about overall performance this is not expected to be an issue; however, in cases where safety constraints are very important, $\epsilon$ can be chosen to account for the number of comparisons.

\subsection{Summary of the \acs{GENSPEC} method}
This completes our introduction of the \ac{GENSPEC} framework for query specialization.
Figure~\ref{fig:genspec}  visualizes our approach to query specialization. 
We learn from historical interactions gathered using a logging policy $\pi_0$; the interactions are divided into a training and policy-selection partition per query.
Subsequently, a policy is optimized for generalization --- to perform well across all queries --- and for each query a policy is optimized for specialization ---  to perform well for a single query.
While specialization can potentially maximize performance on a specific query, it brings more risks than generalization, since a general policy is optimized on more data and may provide better performance on previously unseen queries.
As a solution to this dilemma, we propose a strategy that uses high-confidence bounds on the differences in performance between policies.
These bounds are then used to choose safely between the deployment of the logging, general and specialized policies.
In theory, \ac{GENSPEC} combines the best of both worlds: the high potential of specialization and the broad safety of generalization.

\begin{figure}[t]
\centering
\includegraphics[width=0.7\columnwidth]{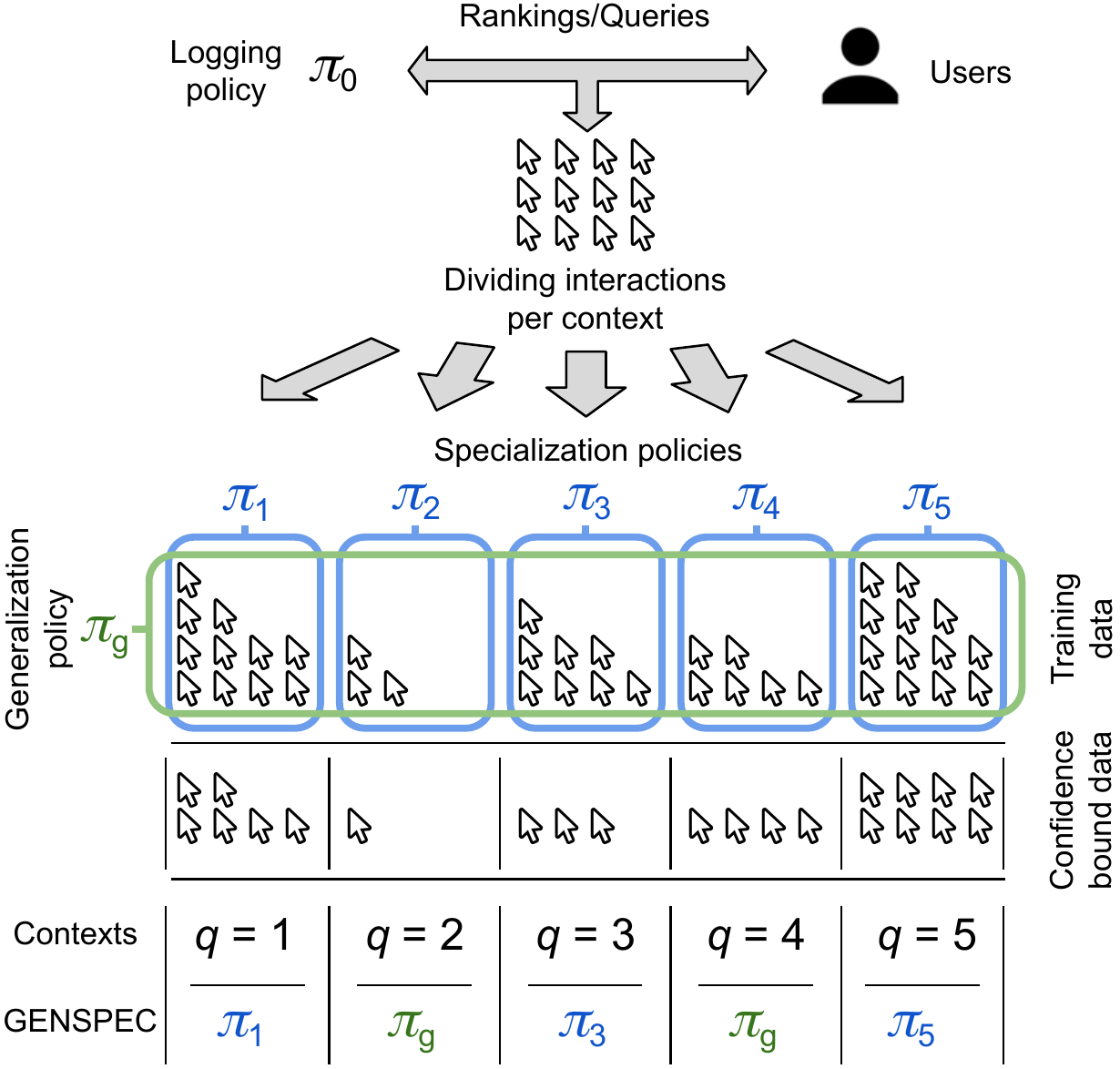} \\
\vspace{0.3\baselineskip}
\caption{
Visualization of the \ac{GENSPEC} framework applied to query specialization for counterfactual \ac{LTR}.
The data $\mathcal{D}$ is divided per query $q$,
many specialized policies $\pi_1, \pi_2, \ldots$ are each optimized for a single query $q\in\{1,2,\ldots\}$,
and single a general policy $\pi_g$ is learned on the data across all queries.
Finally, \ac{GENSPEC} decides which policy to deploy per context, based on high-confidence bounds.
}
\vspace{-0.9\baselineskip}
\label{fig:genspec}
\end{figure}

\section{Experimental Setup}
\label{sec:experimentalsetup}

This section discusses our experimental setup and the policies used to evaluate the \ac{GENSPEC} framework.

\subsection{Setup and evaluation}

To evaluate the \ac{GENSPEC} framework, we make use of a semi-synthetic experimental setup: queries, relevance judgements, and documents come from industry datasets, while biased and noisy user interactions are simulated using probabilistic user models.
This setup is very common in the counterfactual \ac{LTR} and online \ac{LTR} literature~\citep{joachims2017unbiased, agarwal2019counterfactual, oosterhuis2019optimizing}.
We make use of the three largest \ac{LTR} industry datasets: \emph{Yahoo! Webscope}~\citep{Chapelle2011}, \emph{MSLR-WEB30k}~\citep{qin2013introducing}, and \emph{Istella}~\citep{dato2016fast}.
Each consists of a set of queries, with for each query a preselected set of documents, document-query combinations are only represented by feature vectors and a label indicating relevance according to expert annotators.
Labels range from $0$ (not relevant) to $4$ (perfectly relevant): $r(d \mid x, q) \in \{0,1,2,3,4\}$.
User issued queries are simulated by uniformly sampling from the training and validation partitions of the datasets.
Displayed rankings are generated by a logging ranker using a linear model optimized on $1\%$ of the training partition using supervised \ac{LTR}~\citep{joachims2017unbiased}.
Then, user examination is simulated with probabilities inverse to the displayed rank of a document:
$
P\big(o(d) = 1 \mid a\big) = \frac{1}{\textit{rank}(d\mid a)}.
$
Finally, user clicks are generated according to the following formula using a single parameter $\alpha \in \mathbb{R}$:
\begin{equation}
P\big(c(d) = 1 \mid o(d) = 1, r(d \mid x, q) \big) = 0.2 + \alpha \cdot r(d \mid x, q).
\end{equation}
In our experiments, we use $\alpha = 0.2$ and $\alpha = 0.025$; the former represents a near-ideal setting where relevant documents receive a very large number of clicks, the latter represents a more noisy and harder setting where the large majority of clicks are on non-relevant documents.
Clicks are only generated on the training and validation partitions, $30\%$ of training clicks are separated for policy selection~($\bounddata$), hyperparameter optimization is done using counterfactual evaluation with clicks on the validation partition~\citep{joachims2017unbiased}.

Some of our baselines are online bandit algorithms, for these baselines no clicks are separated for $\bounddata$, and the algorithms are run online: this means clicks are not gathered using the logging policy but by applying the algorithms in an online interactive setting.

The evaluation metric we use is normalized \ac{DCG}~(Eq.~\ref{eq:dcg})~\citep{jarvelin2017ir} using the ground-truth labels from the datasets.
Unlike most \ac{LTR} work we do not apply a rank-cutoff when computing the metric, thus, an NDCG of $1.0$ indicates that \emph{all} documents are ranked perfectly (not just the top-$k$).
We separately calculate performance on the test set (Test-NDCG), to evaluate performance on previously unseen queries, and the training set (Train-NDCG).
The total number of clicks is varied up to $10^9$ in total, uniformly spread over all queries, the differences in Train-NDCG when more clicks are added allows us to evaluate performance on queries with different levels of popularity. 

\subsection{Choice of policies}

For the generalization policy space $\Pi_g$ we use feature-based ranking models.
This is a natural choice as they can be applied to any query, including previously unseen ones. 
However, the available features could limit the possible behavior of the policies.
We use linear models for $\Pi_g$; optimization is done on $\traindata$ following previous counterfactual \ac{LTR} work ~\citep{agarwal2019counterfactual}.
This results in a learned scoring function $f(d, x, q) \in \mathbb{R}$ according to which items are ranked; due to score-ties there can be multiple valid rankings:
\begin{align}
\mathcal{A}_g(x, q)  =  \mleft \{ a  \mid  
  \forall (d_n, d_m) \in x, 
 \mleft(f(d_n, x, q) > f(d_m, x, q)
 \rightarrow d_n \succ_a d_m \mright)  \mright \}.
\end{align}
The general policy $\pi_g$ samples uniformly random from the set of valid rankings:
\begin{align}
\pi_g(a \mid x, q) =
\begin{cases}
\frac{1}{|\mathcal{A}_g(x, q)|} & \text{if } a \in \mathcal{A}_g(x, q),
\\
0 & \text{otherwise.}
\end{cases}
\end{align}
For the specialization policy space $\Pi_q$, we follow bandit-style online \ac{LTR} work and take the tabular approach~\citep{lagree2016multiple}.
Documents are scored according to an unbiased estimate of \ac{CTR} on query~$q$:
\begin{equation}
\hat{\textit{CTR}}(d, q) = \frac{1}{|\traindata_q|} \sum_{i \in \traindata_q} \frac{c_i(d)}{\rho_i},
\end{equation}
which maximizes the estimated performance (Eq.~\ref{eq:rankips}).
Due to ties there can be multiple valid rankings:
\begin{equation}
\mathcal{A}_q(x, q) = \mleft\{a \,|\,  \forall (d_n, d_m) \in x, 
 \mleft(\hat{\textit{CTR}}(d_n)  > \hat{\textit{CTR}}(d_m)
\rightarrow d_n \succ_a d_m \mright)  \mright\}, 
\end{equation}
The specialized policy $\pi_q$ also chooses uniformly random from the set of valid rankings:
\begin{equation}
\pi_q(a \mid x, q) =
\begin{cases}
\frac{1}{|\mathcal{A}_q(x,q)|} & \text{if } a \in \mathcal{A}_q(x,q),
\\
0 & \text{otherwise.}
\end{cases}
\end{equation}
The tabular approach is not restrained by the available features and can produce any possible ranking~\citep{zoghi2016click}.
Consequently, given enough interactions the tabular approach can perfectly rank items according to relevance.
However, \ac{CTR} cannot be estimated for previously unseen queries and there $\pi_q$ chooses uniformly randomly between all possible rankings.
On a query with a single click, $\pi_q$ will place the \emph{once-clicked} item at the front of the ranking.
Since clicks are very noisy, this behavior is very risky and hence \ac{GENSPEC} uses confidence bounds to avoid the deployment of such unsafe behavior.

\section{Experimental Results and Discussion}

\begin{figure}[t]
\centering
\begin{tabular}{c r r}
&
\multicolumn{1}{c}{\hspace{1.3em} Train-NDCG}
&
\multicolumn{1}{c}{\hspace{1.7em} Test-NDCG}
\\
\rotatebox[origin=lt]{90}{\hspace{0.0em} \footnotesize Yahoo!\ Webscope} &
\includegraphics[scale=0.4]{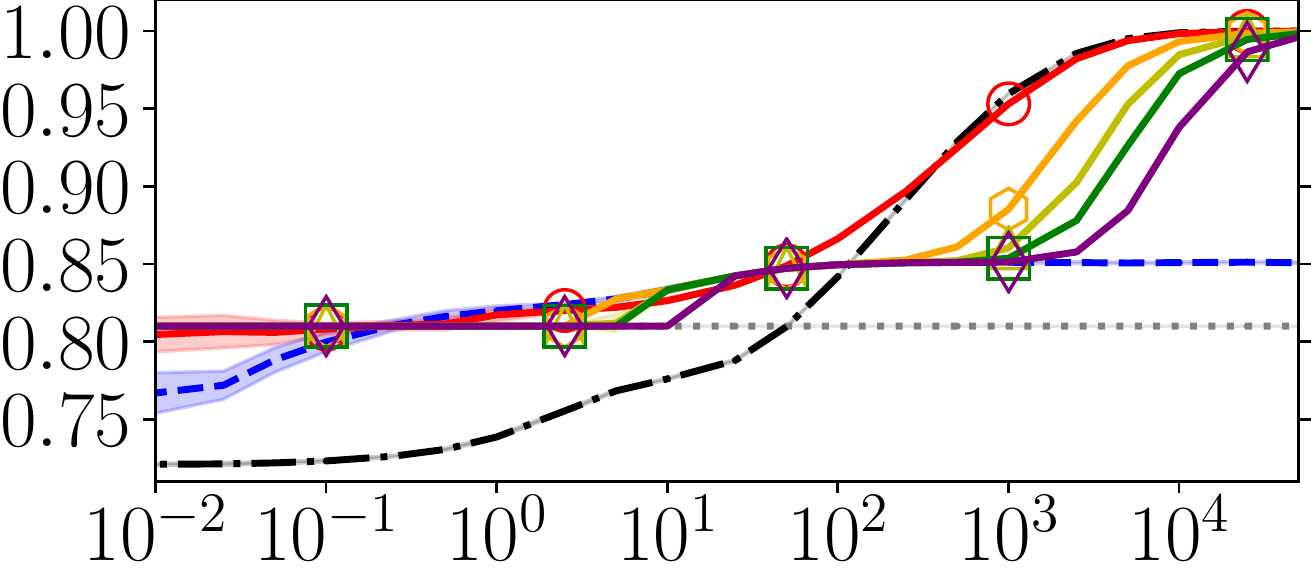} &
\includegraphics[scale=0.4]{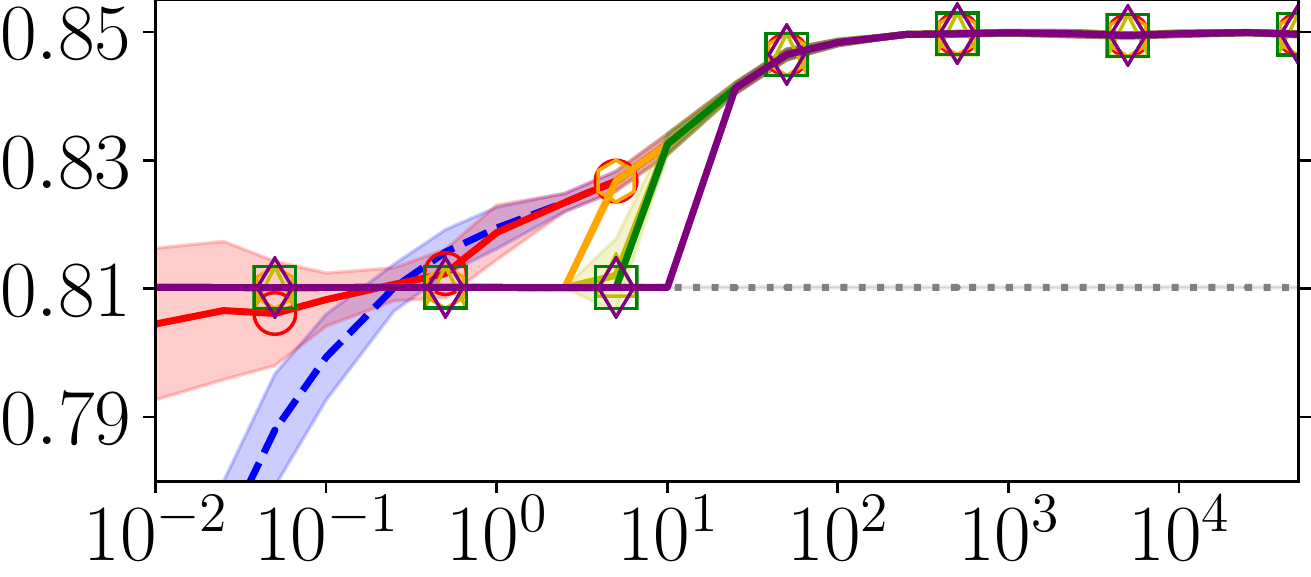}
\\
\rotatebox[origin=lt]{90}{\hspace{0.4em} \footnotesize MSLR-WEB30k} &
\includegraphics[scale=0.4]{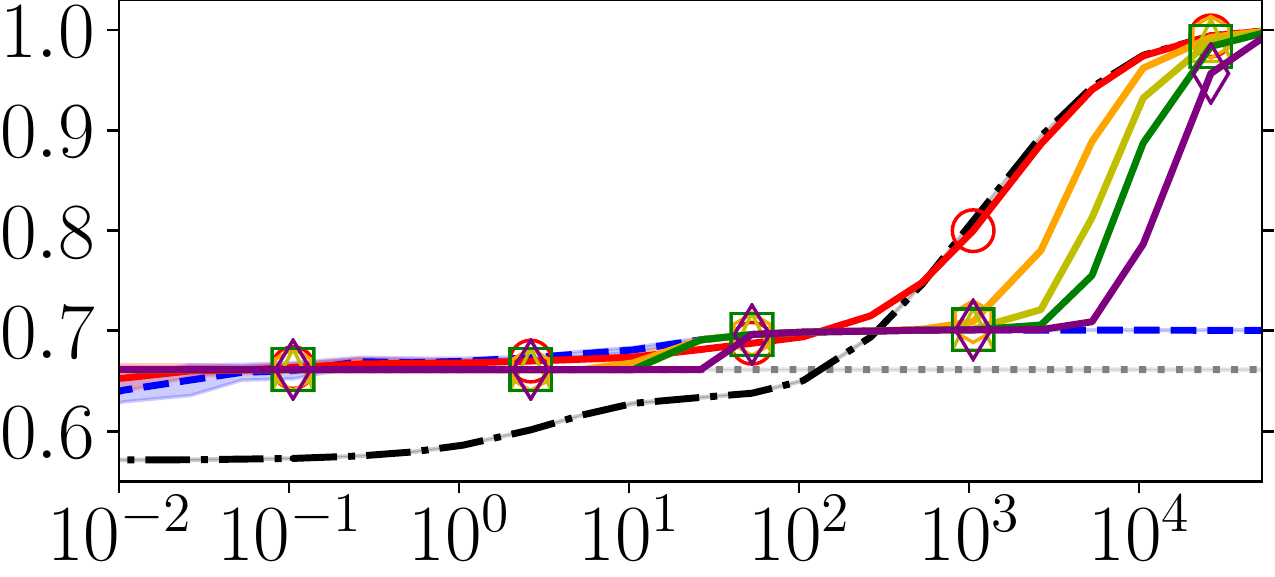} &
\includegraphics[scale=0.4]{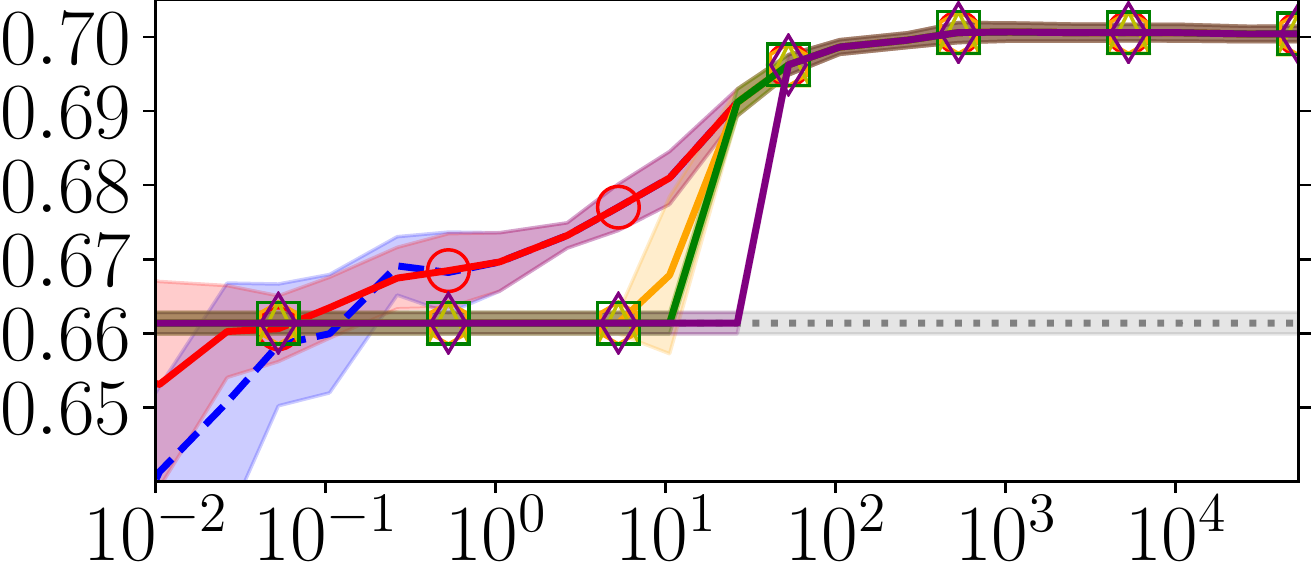}
\\
\rotatebox[origin=lt]{90}{\hspace{2.3em} \footnotesize Istella} &
\includegraphics[scale=0.4]{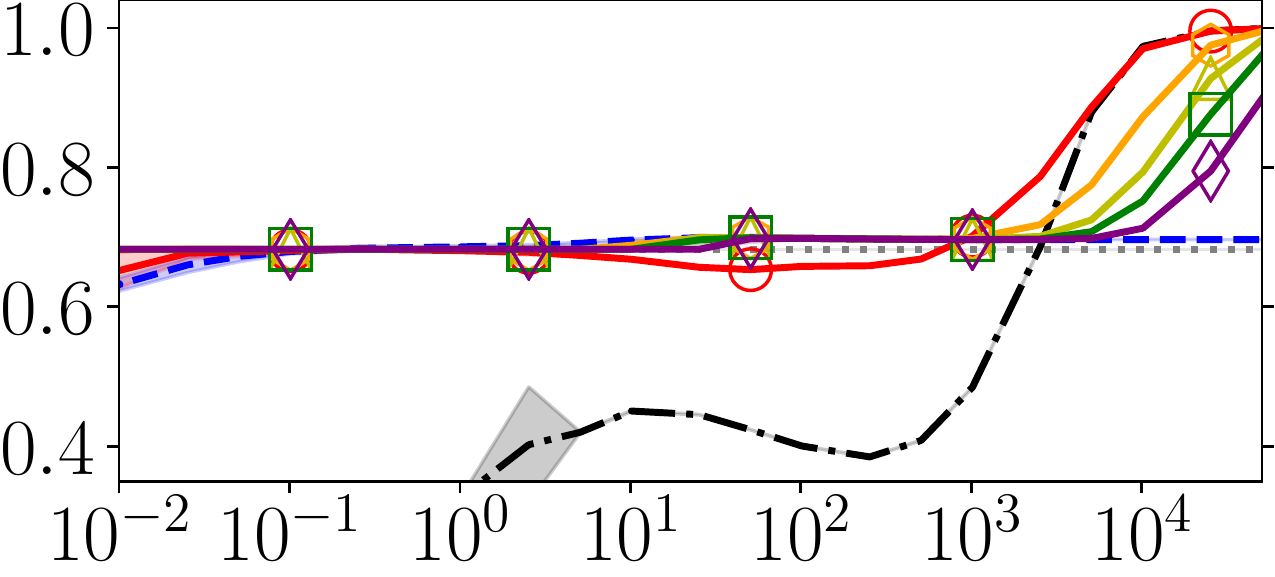} &
\includegraphics[scale=0.4]{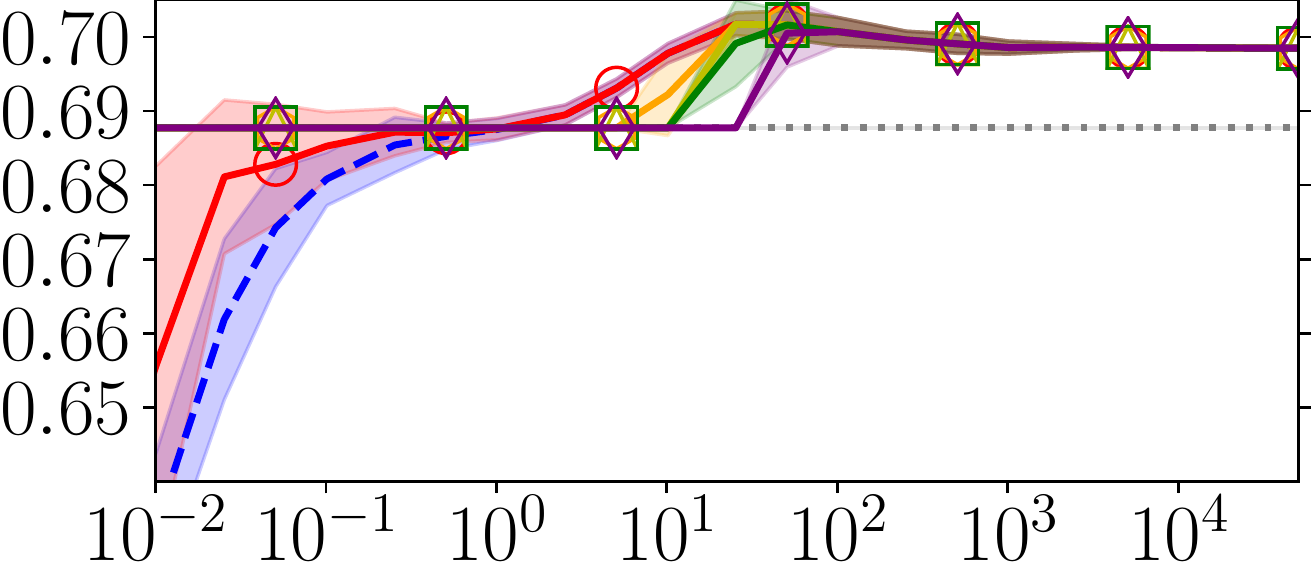}
\\
& \multicolumn{1}{c}{\small \hspace{0.5em} Mean Number of Clicks per Query}
& \multicolumn{1}{c}{\small \hspace{0.5em} Mean Number of Clicks per Query}
\\
\\
 \multicolumn{3}{c}{
 \includegraphics[scale=0.4]{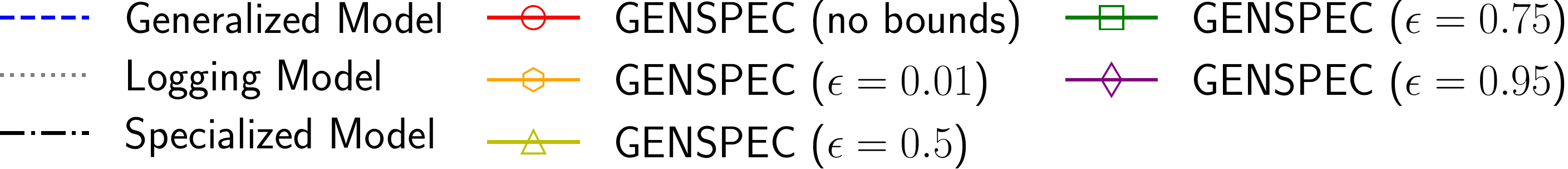}
 }
\end{tabular}
\caption{
Performance of \ac{GENSPEC} with varying levels of confidence, compared to pure generalization and pure specialization, on clicks generated with $\alpha=0.2$.
We separate queries on the training set (Train-NDCG) that have received clicks, and queries on the test set (Test-NDCG) that do not receive any clicks.
Clicks are spread uniformly over the training set, the x-axis indicates the total number of clicks divided by the number of training queries.
Results are an average of 10 runs; shaded area indicates the standard deviation.
}
\label{fig:genspec:main:linear10}
\end{figure}

\begin{figure}[t]
\centering
\begin{tabular}{c r r}
&
\multicolumn{1}{c}{\hspace{1.3em} Train-NDCG}
&
\multicolumn{1}{c}{\hspace{1.7em} Test-NDCG}
\\
\rotatebox[origin=lt]{90}{\hspace{0.0em} \footnotesize Yahoo!\ Webscope} &
\includegraphics[scale=0.4]{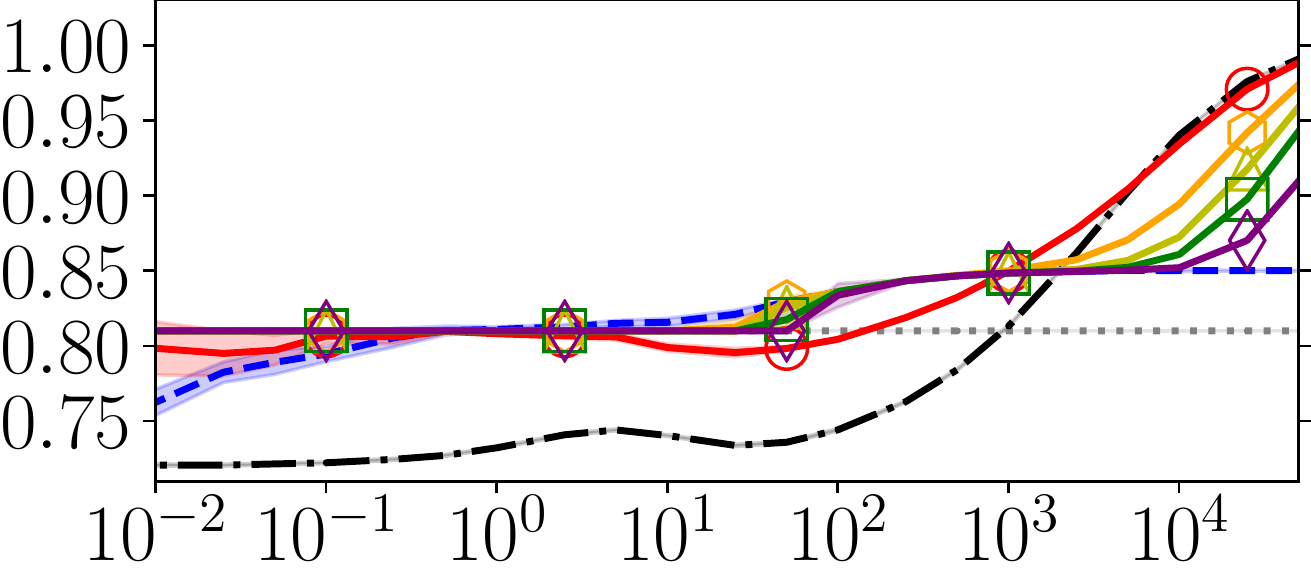} &
\includegraphics[scale=0.4]{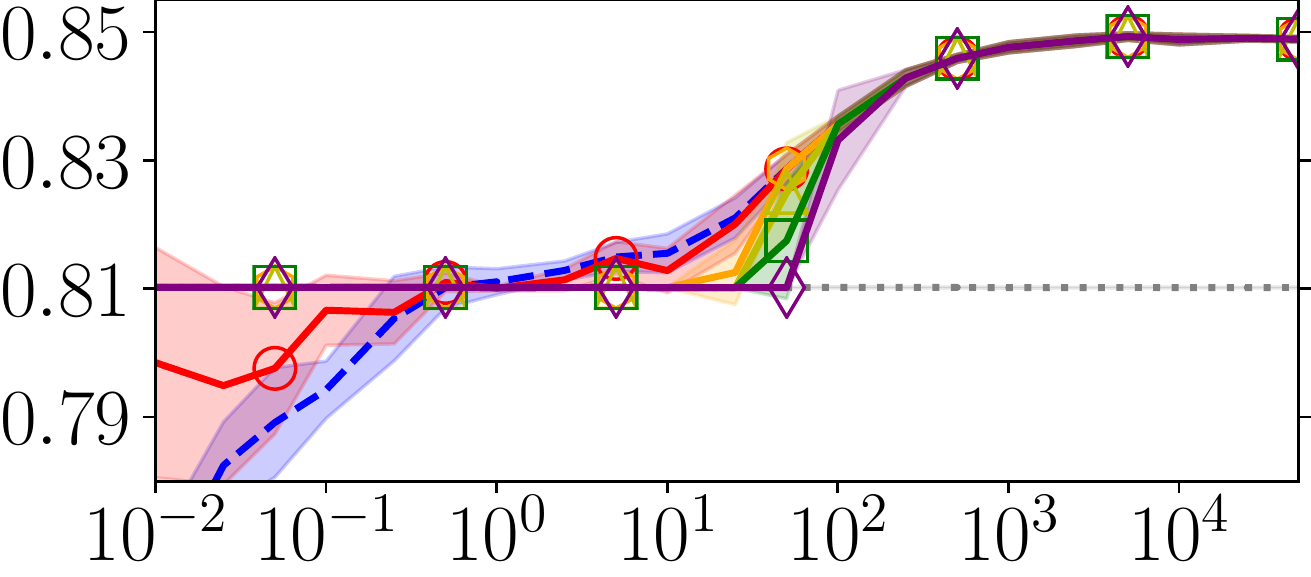}
\\
\rotatebox[origin=lt]{90}{\hspace{0.4em} \footnotesize MSLR-WEB30k} &
\includegraphics[scale=0.4]{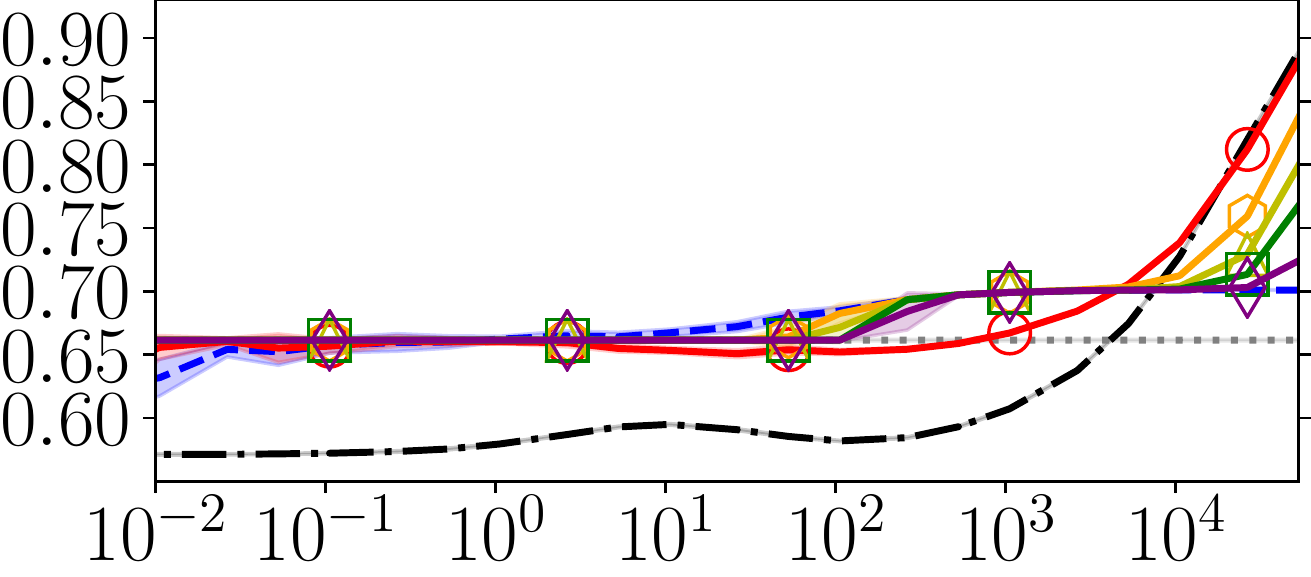} &
\includegraphics[scale=0.4]{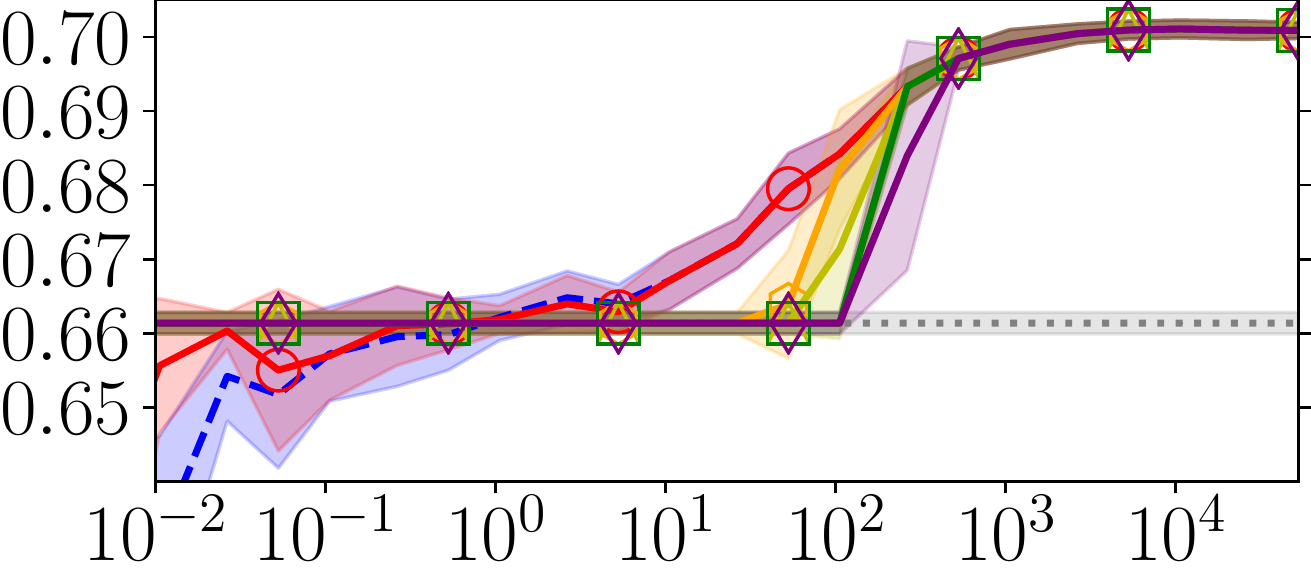}
\\
\rotatebox[origin=lt]{90}{\hspace{2.3em} \footnotesize Istella} &
\includegraphics[scale=0.4]{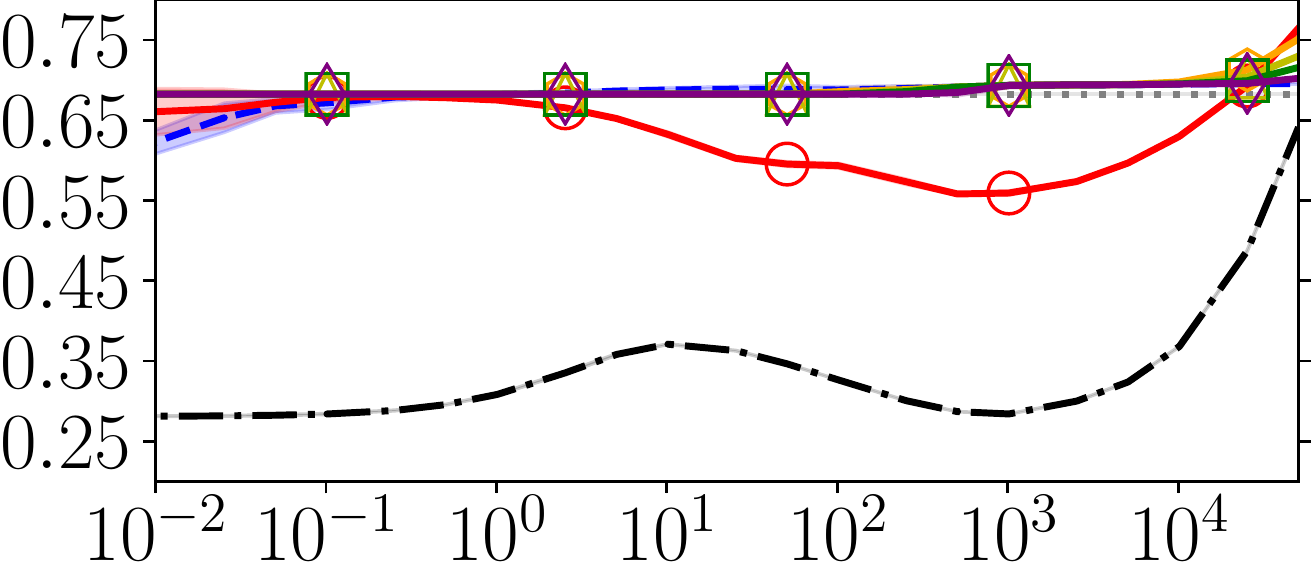} &
\includegraphics[scale=0.4]{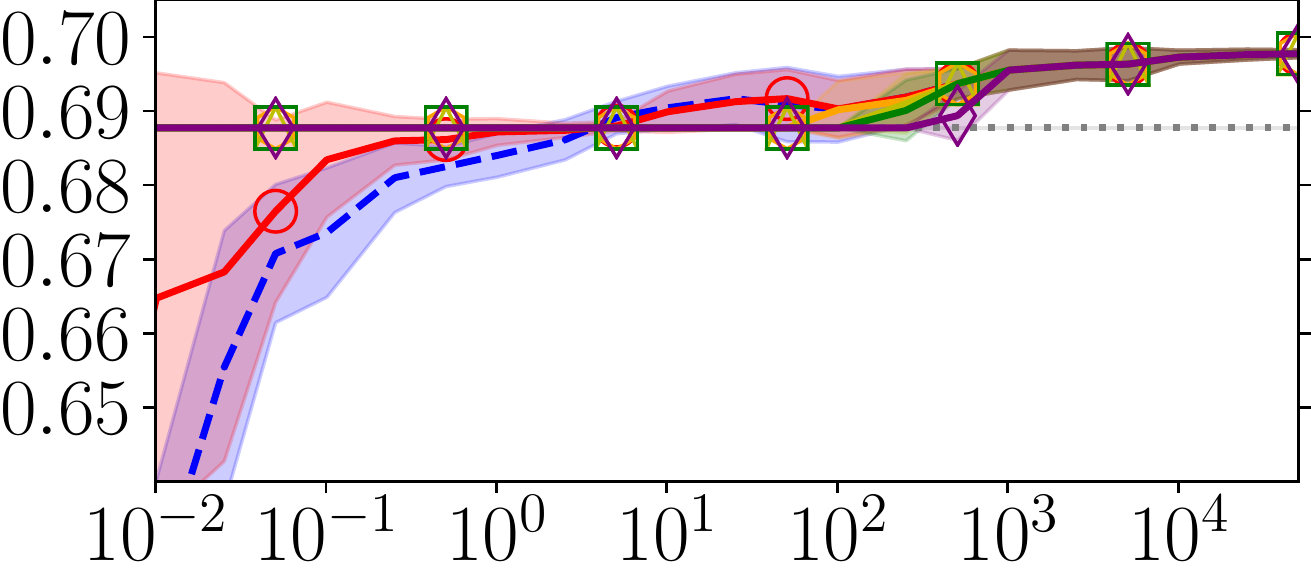}
\\
& \multicolumn{1}{c}{\small \hspace{0.5em} Mean Number of Clicks per Query}
& \multicolumn{1}{c}{\small \hspace{0.5em} Mean Number of Clicks per Query}
\\
\\
 \multicolumn{3}{c}{
 \includegraphics[scale=0.4]{08-genspec/figures/legend_train_ndcg}
 }
\end{tabular}
\caption{
Performance of \ac{GENSPEC} with varying levels of confidence, compared to pure generalization and pure specialization, on clicks generated with $\alpha=0.025$.
Notation is the same as in Figure~\ref{fig:genspec:main:linear10}.
}
\label{fig:genspec:main:linear03}
\end{figure}

\begin{figure}[t]
\centering
\begin{tabular}{c r r}
&
\multicolumn{1}{c}{\hspace{1.3em} Train-NDCG}
&
\multicolumn{1}{c}{\hspace{1.7em} Test-NDCG}
\\
\rotatebox[origin=lt]{90}{\hspace{0.0em} \footnotesize Yahoo!\ Webscope} &
\includegraphics[scale=0.4]{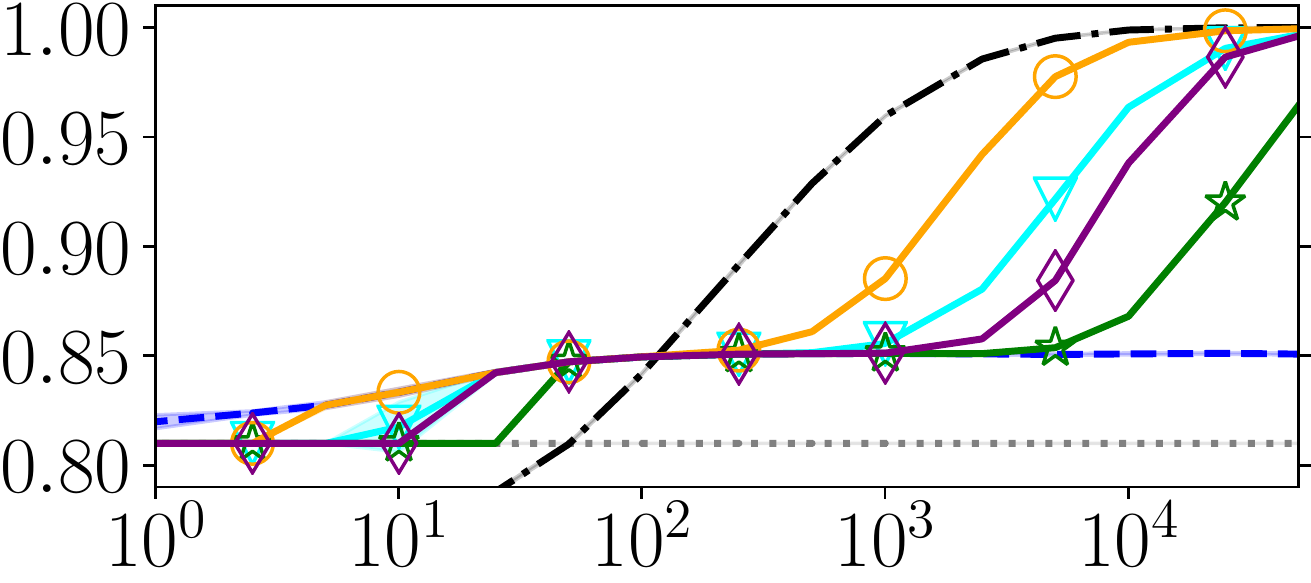} &
\includegraphics[scale=0.4]{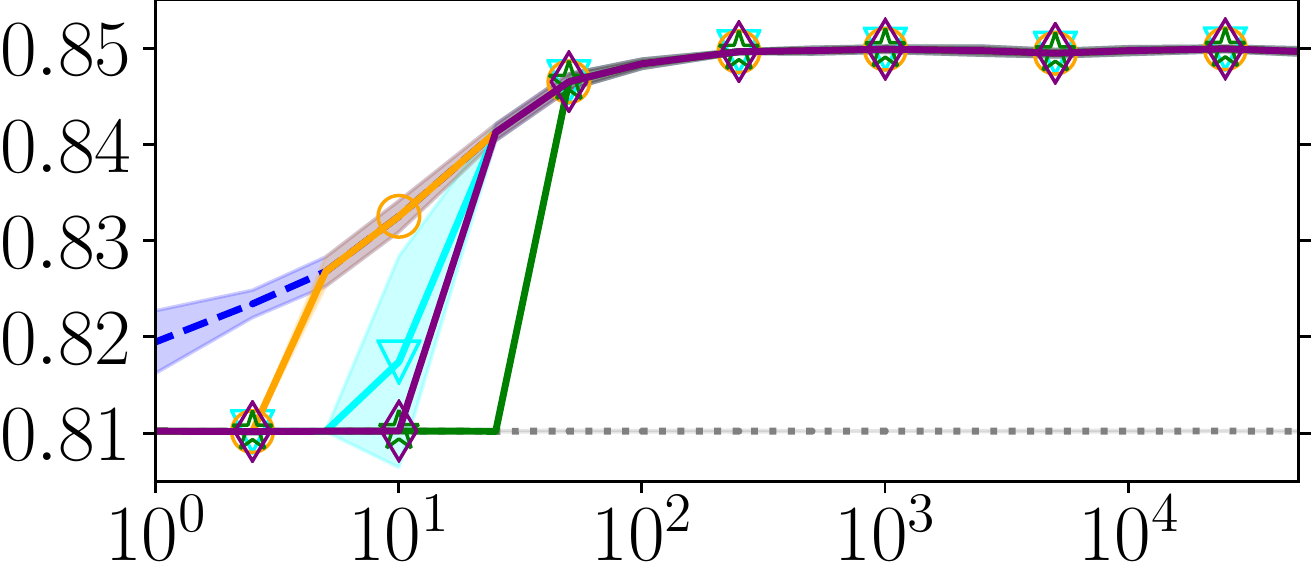}
\\
\rotatebox[origin=lt]{90}{\hspace{0.4em} \footnotesize MSLR-WEB30k} &
\includegraphics[scale=0.4]{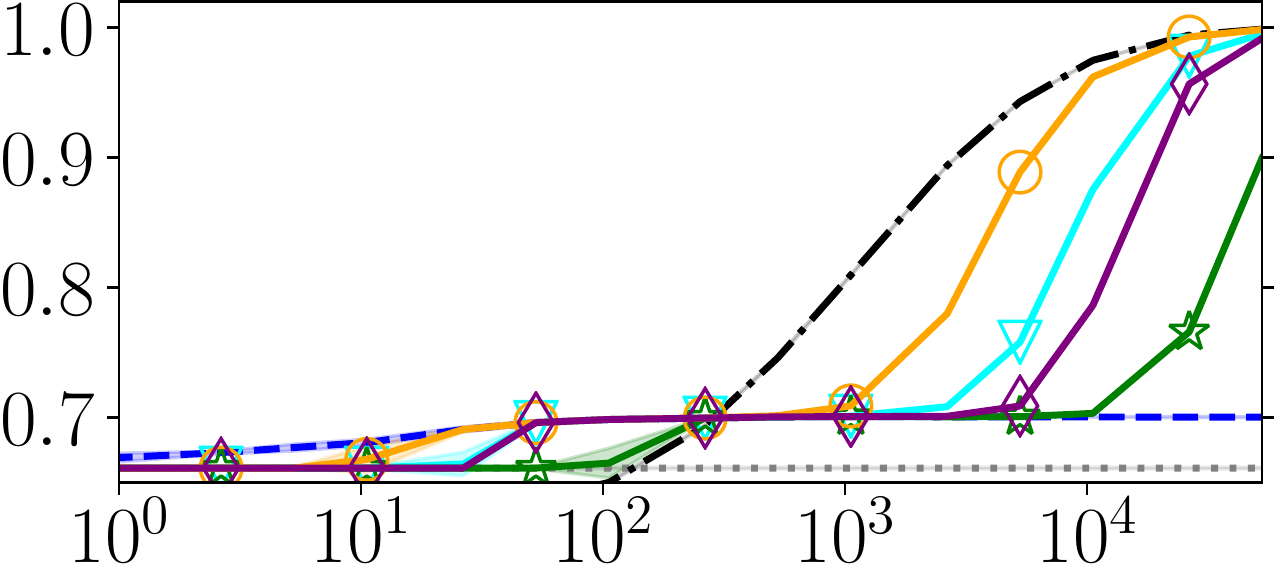} &
\includegraphics[scale=0.4]{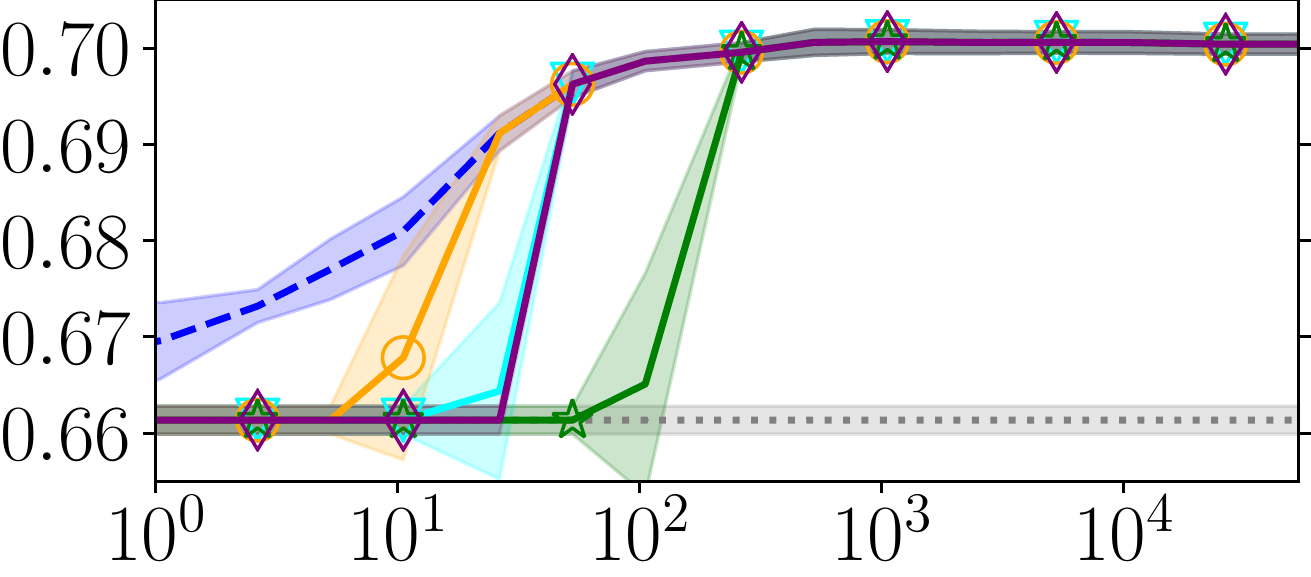}
\\
\rotatebox[origin=lt]{90}{\hspace{2.3em} \footnotesize Istella} &
\includegraphics[scale=0.4]{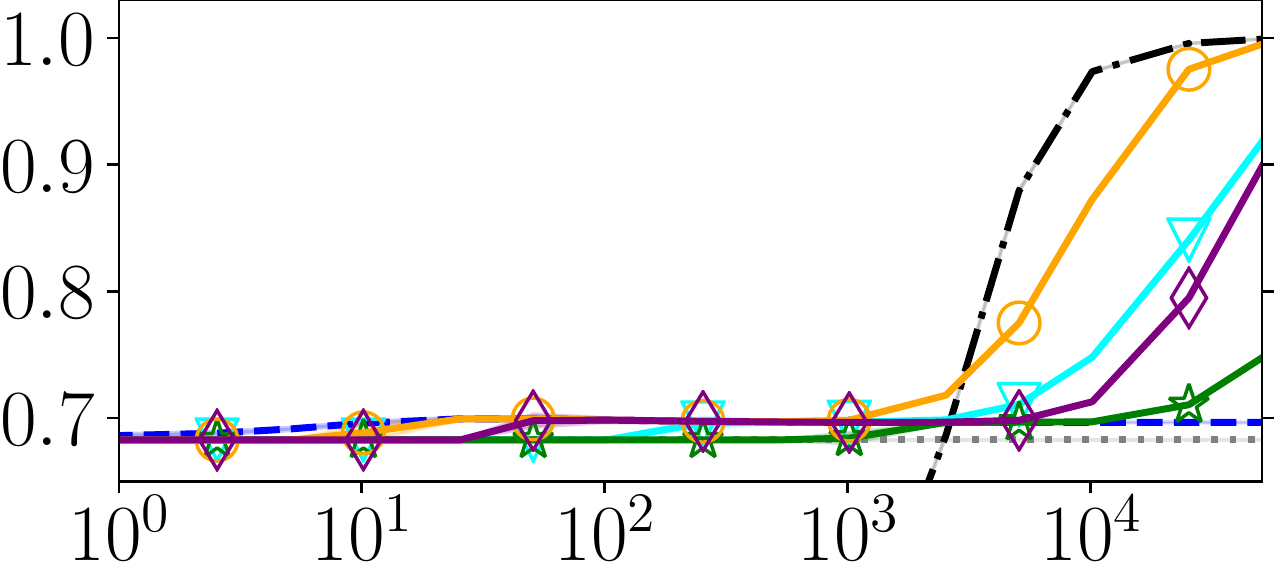} &
\includegraphics[scale=0.4]{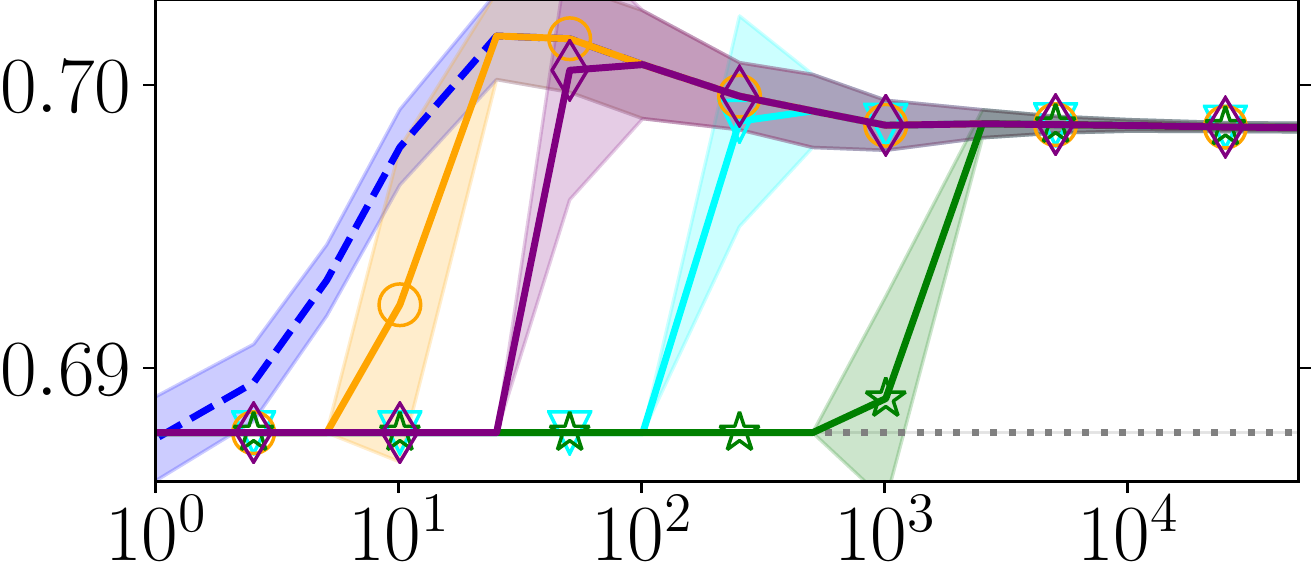}
\\
& \multicolumn{1}{c}{\small \hspace{0.5em} Mean Number of Clicks per Query}
& \multicolumn{1}{c}{\small \hspace{0.5em} Mean Number of Clicks per Query}
\\
\\
 \multicolumn{3}{c}{
 \includegraphics[scale=0.4]{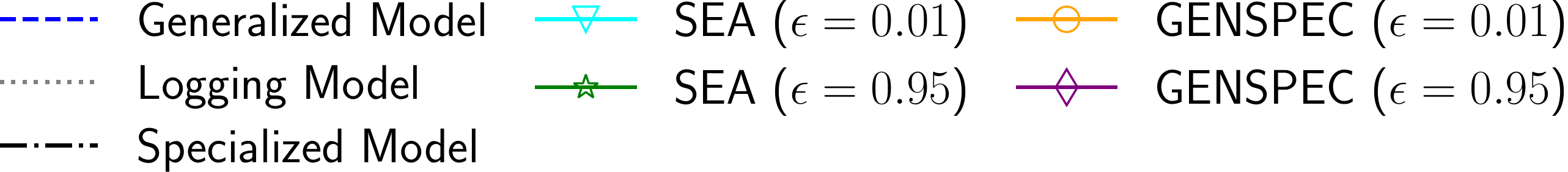}
 }
\end{tabular}
\caption{
\ac{GENSPEC} compared to a meta-policy using the \acs{SEA} bounds~(see Section~\ref{sec:SEAresults}), on clicks generated with $\alpha=0.2$.
Notation is the same as in Figure~\ref{fig:genspec:main:linear10}.
}
\label{fig:sea:linear10}
\end{figure}

\begin{figure}[t]
\centering
\begin{tabular}{c r r}
&
\multicolumn{1}{c}{\hspace{1.3em} Train-NDCG}
&
\multicolumn{1}{c}{\hspace{1.7em} Test-NDCG}
\\
\rotatebox[origin=lt]{90}{\hspace{0.0em} \footnotesize Yahoo!\ Webscope} &
\includegraphics[scale=0.4]{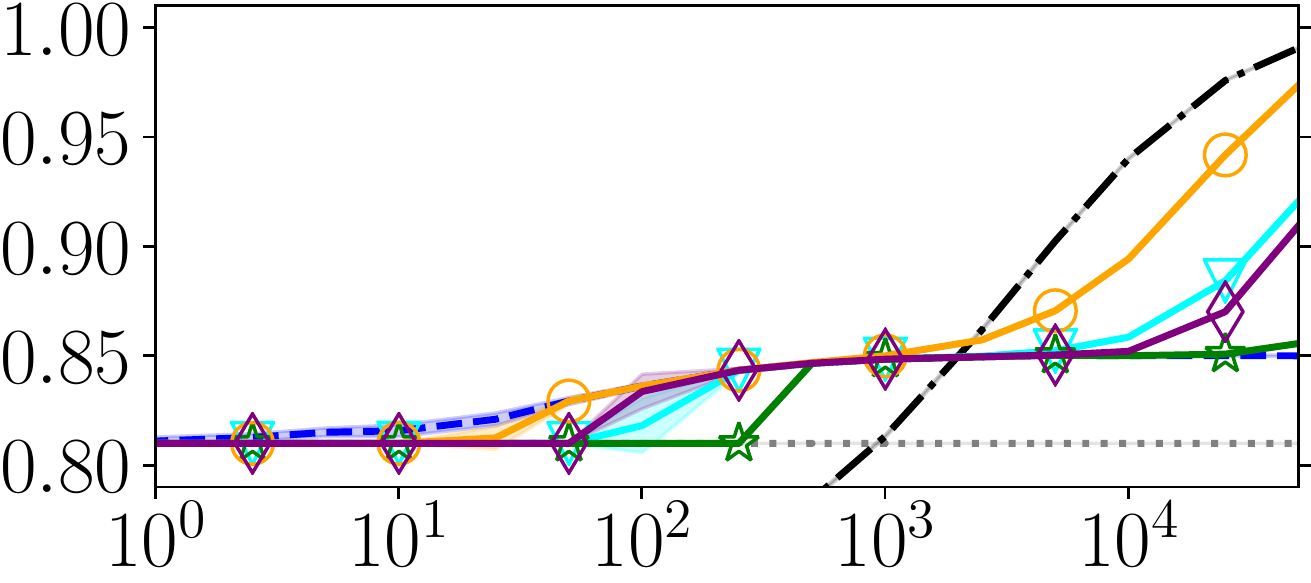} &
\includegraphics[scale=0.4]{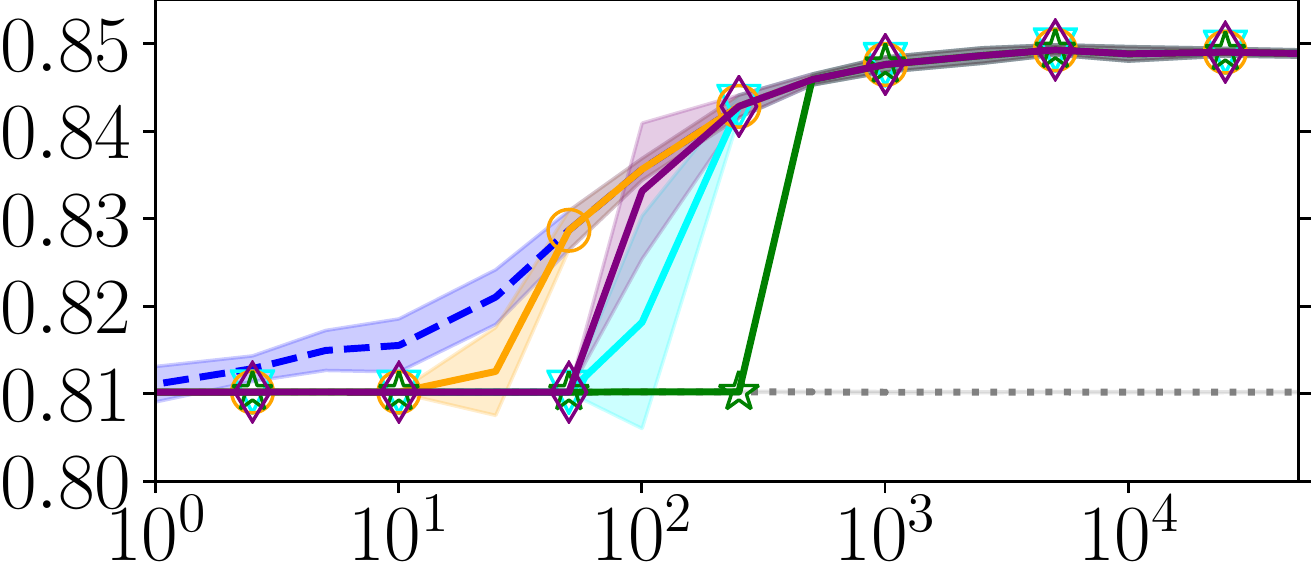}
\\
\rotatebox[origin=lt]{90}{\hspace{0.4em} \footnotesize MSLR-WEB30k} &
\includegraphics[scale=0.4]{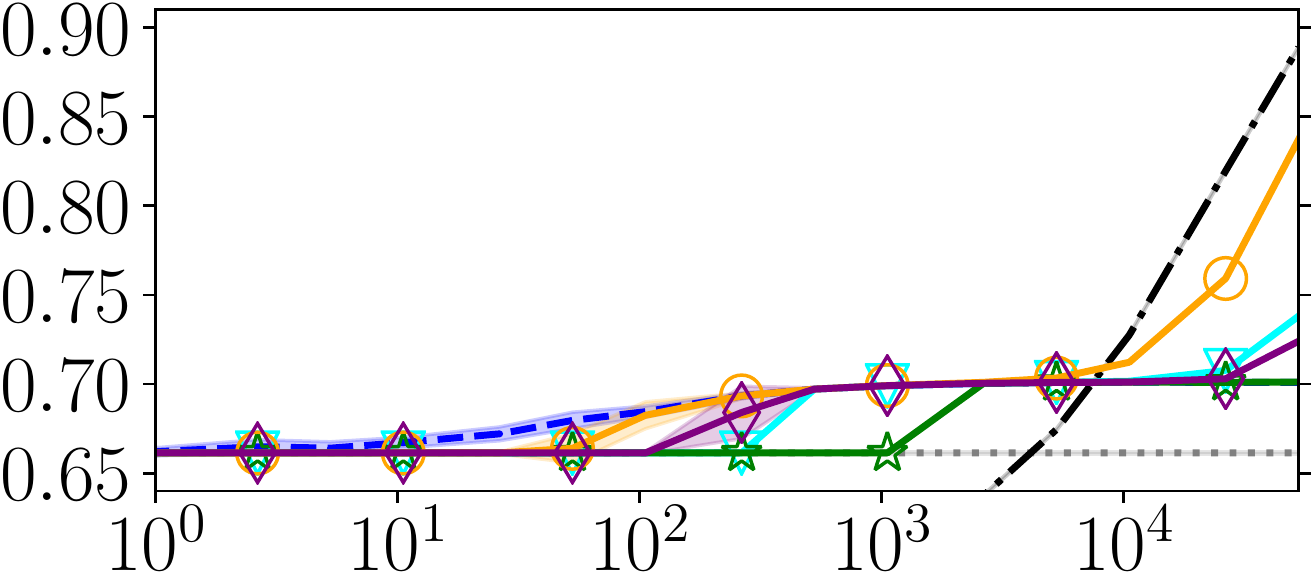} &
\includegraphics[scale=0.4]{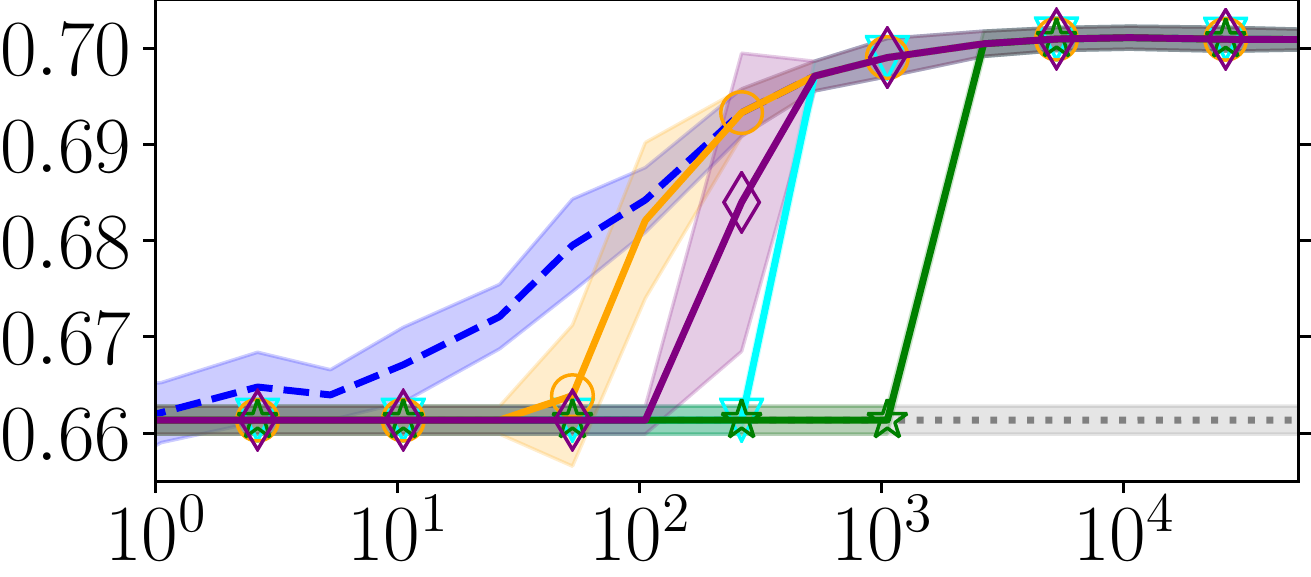}
\\
\rotatebox[origin=lt]{90}{\hspace{2.3em} \footnotesize Istella} &
\includegraphics[scale=0.4]{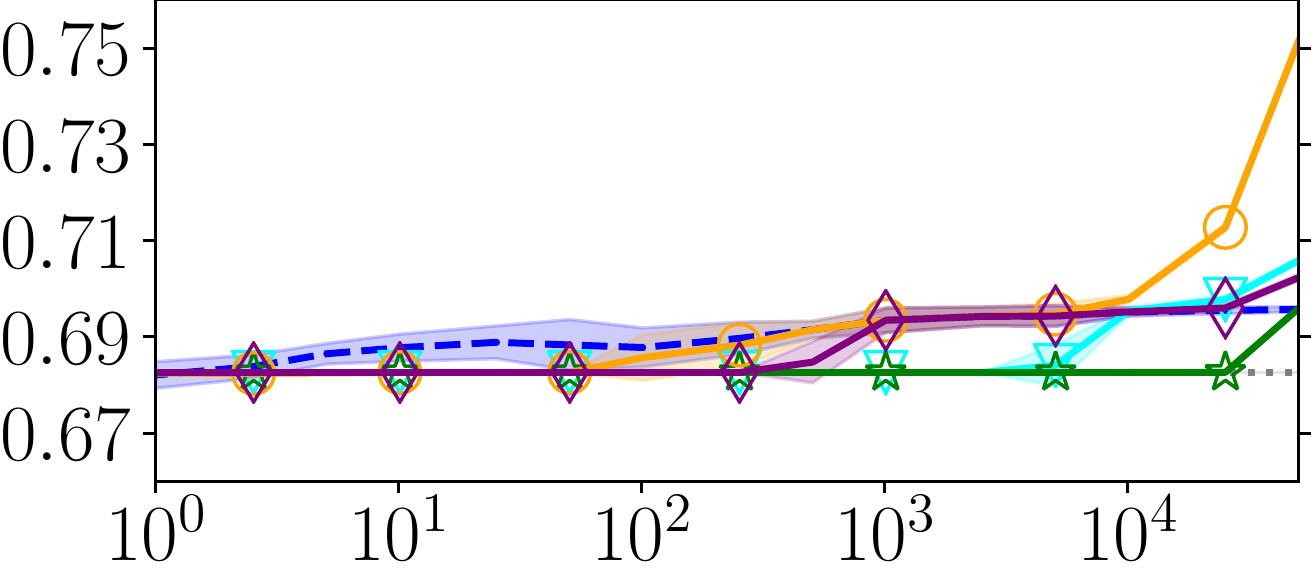} &
\includegraphics[scale=0.4]{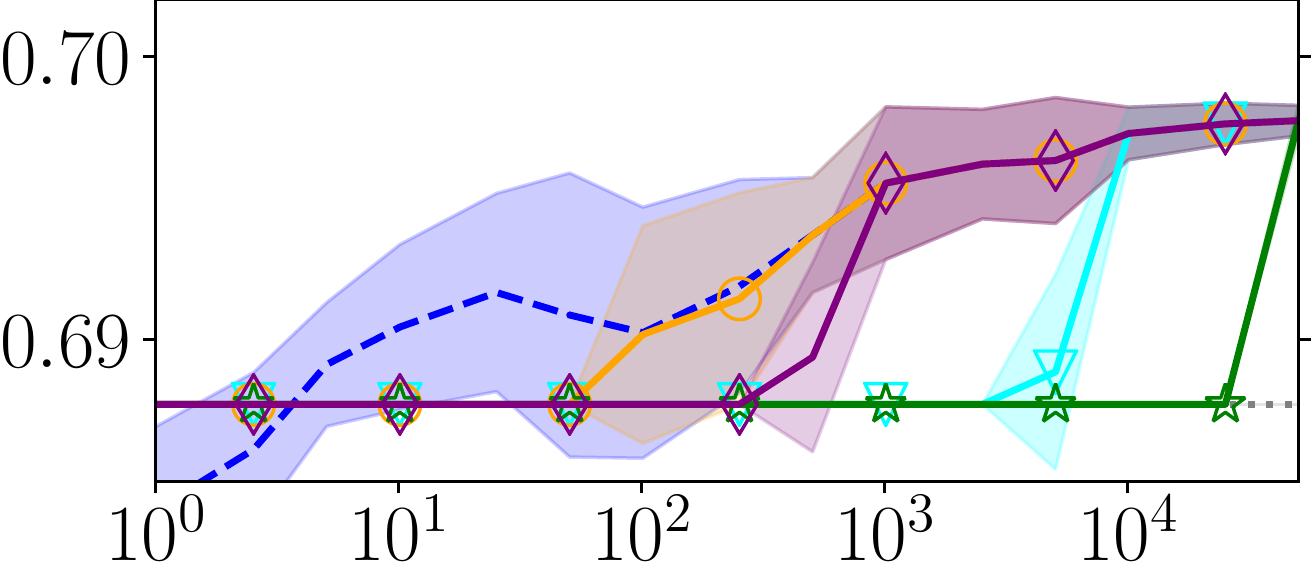}
\\
& \multicolumn{1}{c}{\small \hspace{0.5em} Mean Number of Clicks per Query}
& \multicolumn{1}{c}{\small \hspace{0.5em} Mean Number of Clicks per Query}
\\
\\
 \multicolumn{3}{c}{
 \includegraphics[scale=0.4]{08-genspec/figures/legend_sea_train_ndcg}
 }
\end{tabular}
\caption{
\ac{GENSPEC} compared to a meta-policy using the \acs{SEA} bounds~(see Section~\ref{sec:SEAresults}), on clicks generated with $\alpha=0.025$.
Notation is the same as in Figure~\ref{fig:genspec:main:linear10}.
}
\label{fig:sea:linear03}
\end{figure}

\begin{figure}[t]
\centering
\begin{tabular}{c r r}
&
\multicolumn{1}{c}{\hspace{1.0em}  Clicks generated with $\alpha=0.2$.}
&
\multicolumn{1}{c}{\hspace{1.0em}  Clicks generated with $\alpha=0.025$.}
\\
\rotatebox[origin=lt]{90}{\hspace{0.0em} \footnotesize Yahoo!\ Webscope} &
\includegraphics[scale=0.4]{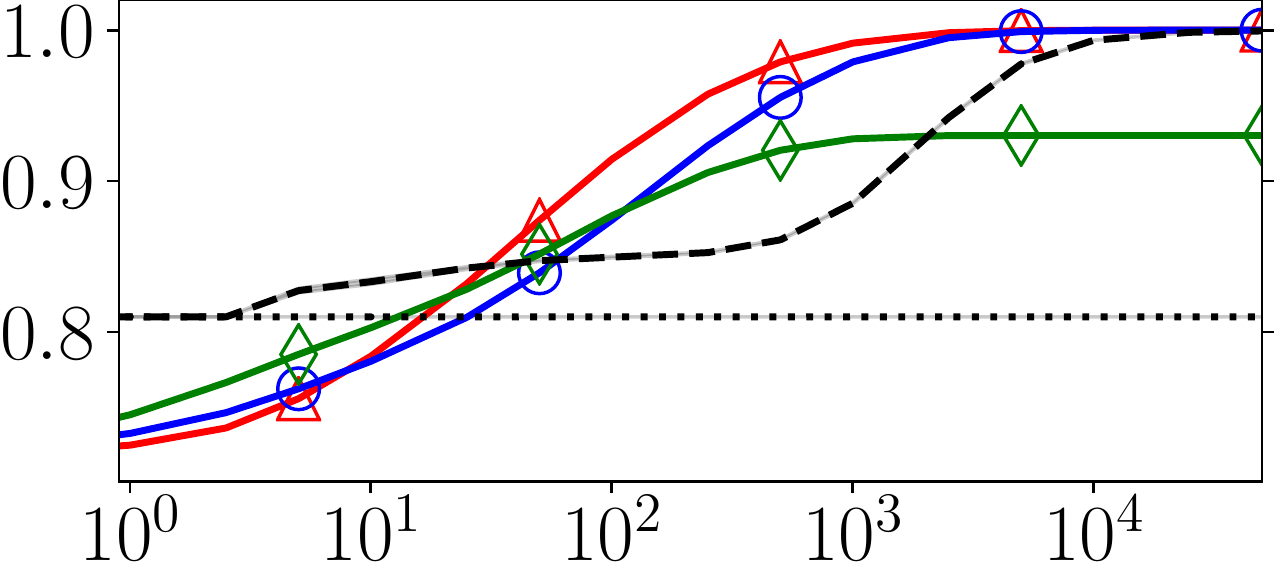} &
\includegraphics[scale=0.4]{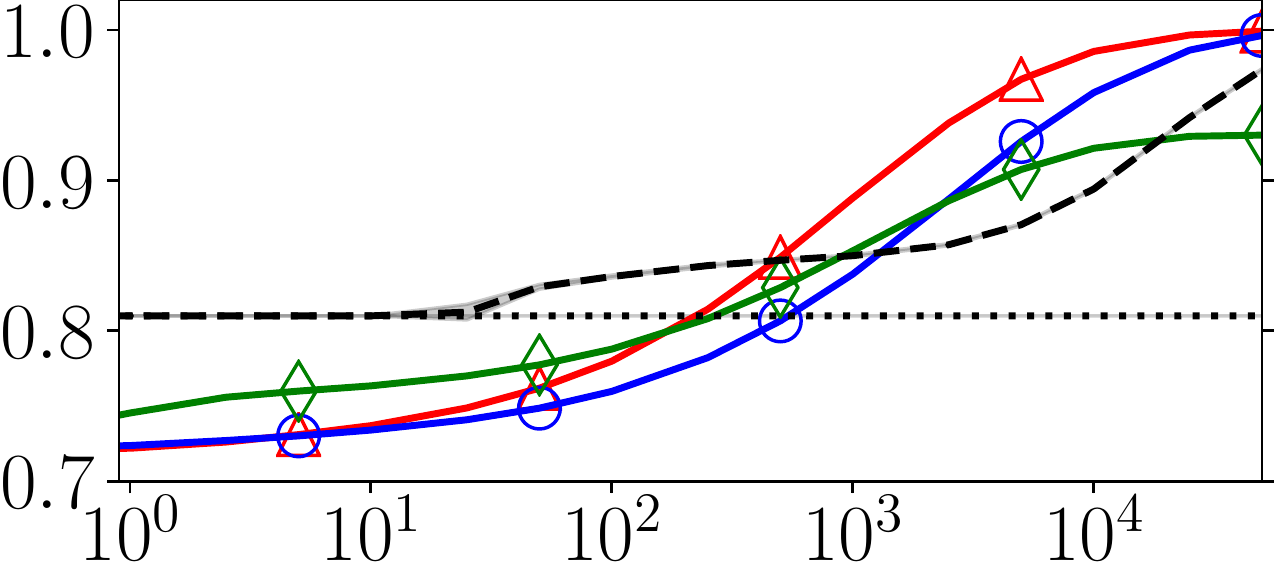}
\\
\rotatebox[origin=lt]{90}{\hspace{0.4em} \footnotesize MSLR-WEB30k} &
\includegraphics[scale=0.4]{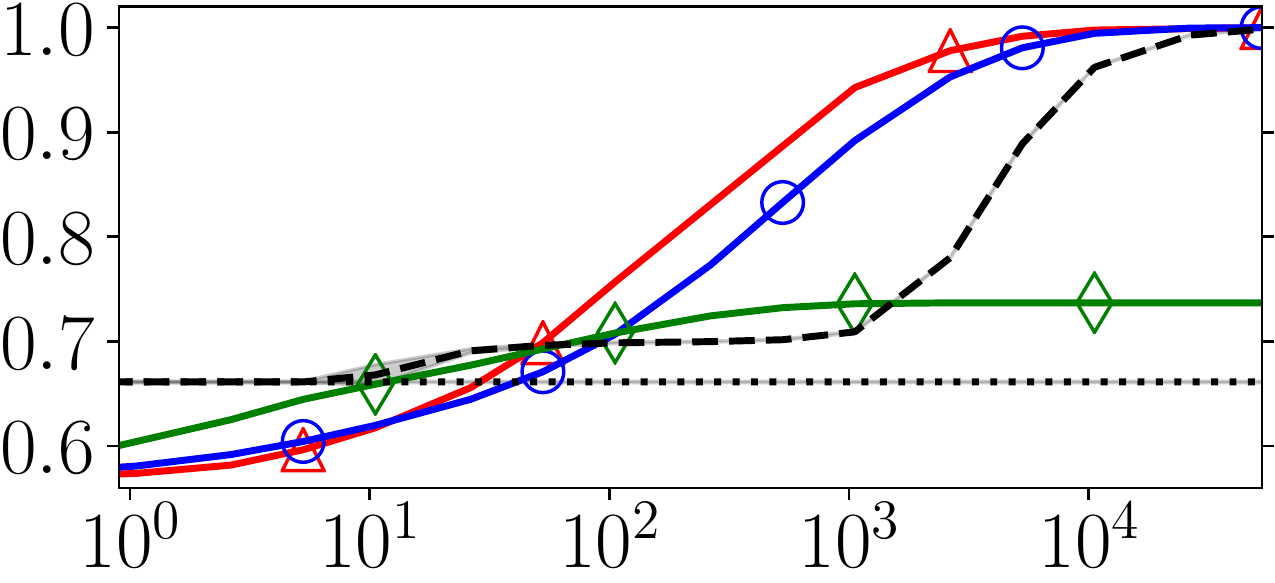} &
\includegraphics[scale=0.4]{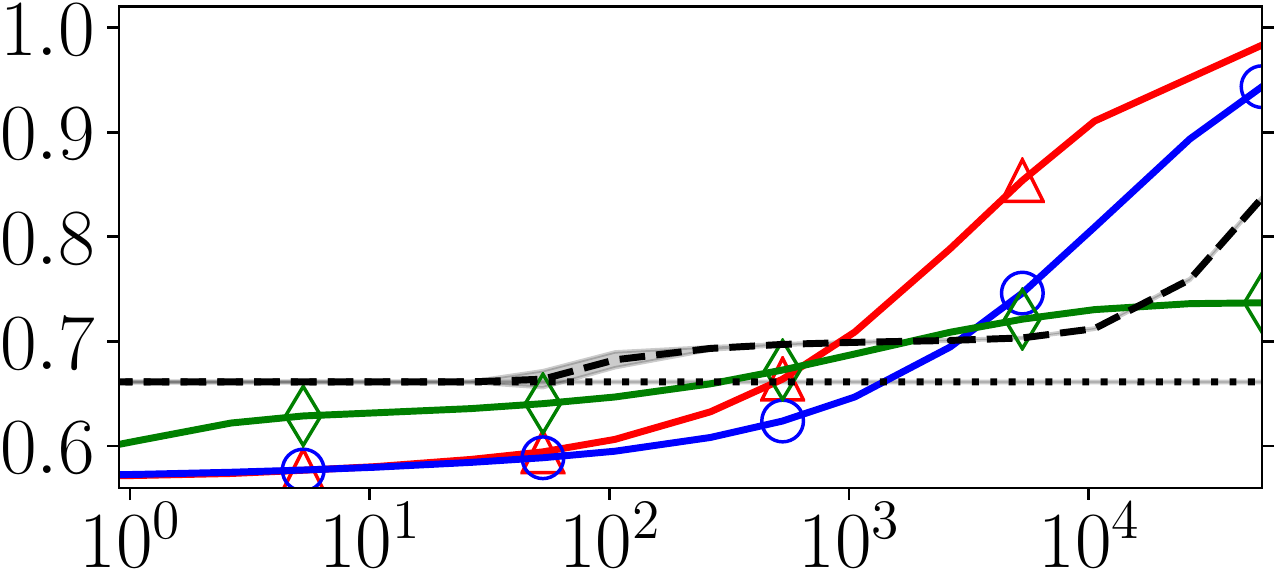}
\\
\rotatebox[origin=lt]{90}{\hspace{2.3em} \footnotesize Istella} &
\includegraphics[scale=0.4]{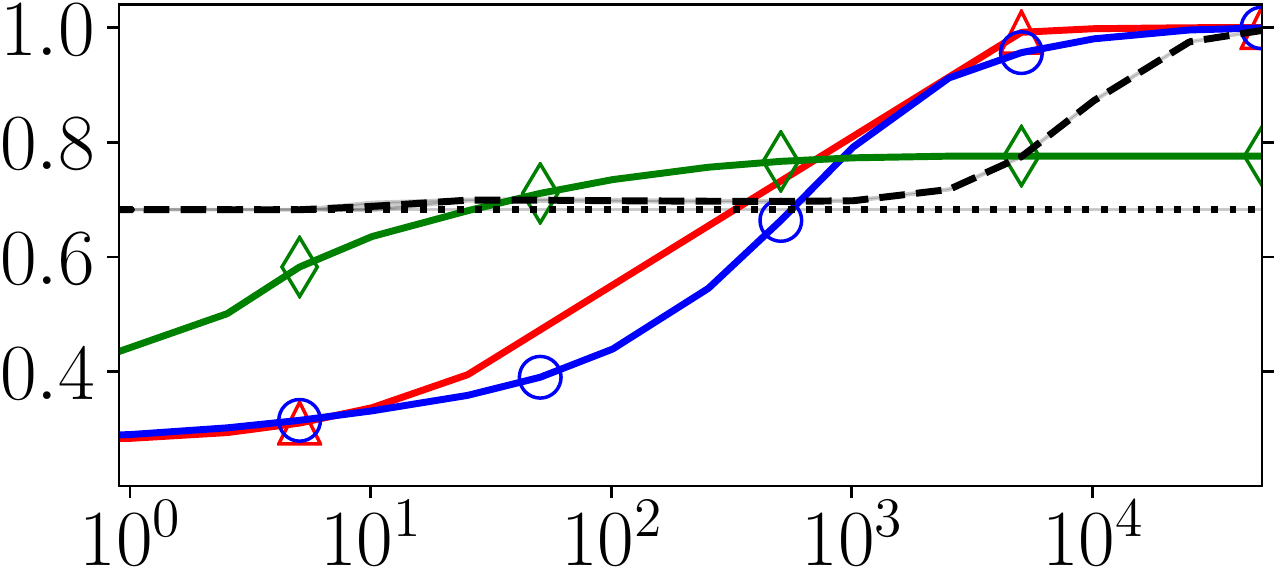} &
\includegraphics[scale=0.4]{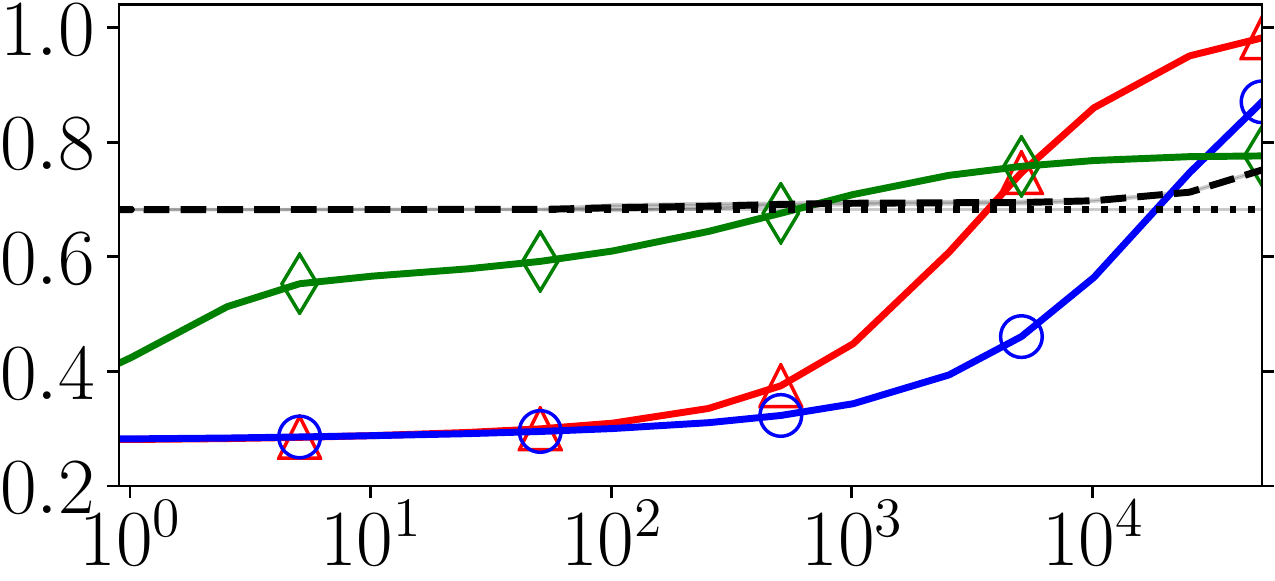}
\\
& \multicolumn{1}{c}{\small \hspace{0.5em} Mean Number of Clicks per Query}
& \multicolumn{1}{c}{\small \hspace{0.5em} Mean Number of Clicks per Query}
\\
\\
 \multicolumn{3}{c}{
 \includegraphics[scale=0.4]{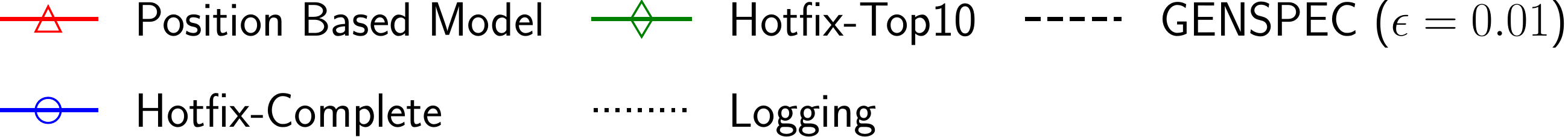}
 }
\end{tabular}
\caption{
\ac{GENSPEC} compared to various online \ac{LTR} bandits (see Section~\ref{sec:onlineltr}).
Notation is the same as in Figure~\ref{fig:genspec:main:linear10}.
}
\label{fig:bandits}
\end{figure}

\subsection{Behavior of the \acs{GENSPEC} meta-policy}
First, we consider the behavior of \ac{GENSPEC} compared with pure generalization or pure specialization policies.
Figures~\ref{fig:genspec:main:linear10} and~\ref{fig:genspec:main:linear03} show the performance of 
\begin{enumerate*}[label=(\roman*)]
\item \ac{GENSPEC} with different levels of confidence for its bounds ($\epsilon$),
along with that of 
\item the logging policy, 
\item the pure generalization policy, and 
\item the pure specialization policies between which the \ac{GENSPEC} meta-policy chooses.
\end{enumerate*}
We see that pure generalization requires few clicks to improve over the logging policy but is not able to reach optimal levels of performance.
The performance of pure specialization, on the other hand, is initially far below the logging policy.
However, after enough clicks have been gathered, performance increases until the optimal ranking is found; when click noise is limited ($\alpha = 0.2$) it reaches perfect performance on all three datasets (Train-NDCG).
On the unseen queries where there are no clicks (Test-NDCG), the specialization policy is unable to learn anything and provides random performance (not displayed in Figures~\ref{fig:genspec:main:linear10} and~\ref{fig:genspec:main:linear03}).
The initial period of poor performance can be very detrimental to queries that do not receive a large number of clicks.
Prior work has found that web-search queries follow a long-tail distribution~\citep{silverstein1999analysis, spink2002us}; \citet{white2007studying} found that 97\% of queries received fewer than $10$ clicks over six months.
For such queries, users may only experience the initial poor performance of pure specialization, and never see the improvements it brings at convergence.
This possibility can be a large deterrent from applying pure specialization in practice~\citep{wu2016conservative}.

Finally, the \ac{GENSPEC} policy combines properties of both: after a few clicks it deploys the generalization policy and thus outperforms the logging policy; as more clicks are gathered,  specialization policies are activated, further improving performance.
With $\alpha = 0.2$ the \ac{GENSPEC} policy with $\epsilon \leq 0.5$ reaches perfect Train-NDCG performance on all three datasets, similar to the pure specialization policy.
However, unlike pure specialization the performance of \ac{GENSPEC} (with $\epsilon > 0$) never drops below the logging policy.
Moreover, we never observe the situation where an increase in the number of clicks results in a decrease in mean performance.
There is a delay between when the pure specialization policy is the optimal choice and when \ac{GENSPEC} activates specialization policies.
Thus, while the usage of confidence bounds prevents the performance from dropping below the level of the logging policy, it does so at the cost of this delay.
When \ac{GENSPEC} does not use any bounds it deploys specialized policies earlier, however, in some cases these deployments result in worse performance than the logging policy, albeit less than pure specialization.
In all our observed results, a confidence of $\epsilon = 0.01$ was enough to prevent any decreases in performance.

To conclude, our experimental results show that the \ac{GENSPEC} meta-policy combines the high-performance at convergence of specialization and the safe robustness of generalization.
In contrast to pure specialization, which results in very poor performance when not enough clicks have been gathered, \ac{GENSPEC} effectively avoids incorrect deployment and under our tested conditions it never performs worse than the logging policy.
Meanwhile, \ac{GENSPEC} achieves considerable gains in performance at convergence, in contrast with pure generalization.
Therefore, we conclude that \ac{GENSPEC} is the best choice in situations where periods of poor performance have to be avoided~\citep{wu2016conservative} or when not all queries receive large numbers of clicks~\citep{white2007studying}.

\subsection{Effectiveness of relative bounding}
\label{sec:SEAresults}

\ac{GENSPEC} is not the first method that deploys policies based on confidence bounds. 
As discussed in Section~\ref{sec:genspecltr}, \citet{jagerman2020safety} previously introduced the \ac{SEA} algorithm.
\ac{SEA} chooses between deploying a generalizing policy or keeping the logging policy in deployment, by bounding both the performance of the logging and generalization policy.
When the upper bound of the logging policy is less than the lower bound of the generalizing policy, \ac{SEA} deploys the latter.
The big differences with \ac{GENSPEC} are that \ac{SEA} 
\begin{enumerate*}[label=(\roman*)]
\item uses two bounds to confidently estimate if one policy outperforms another, and 
\item does not consider specialization policies.
\end{enumerate*}
Because \ac{GENSPEC} directly bounds relative performance, its comparisons only use a single bound and thus we expect it to be more efficient w.r.t.\ the number of clicks required than SEA (see Appendix~\ref{sec:theoryrelativebounds} for a formal analysis).

For a fair comparison, we adapt \ac{SEA} to choose between the same policies as \ac{GENSPEC} and provide it with the same click data.
Figure~\ref{fig:sea:linear10} and~\ref{fig:sea:linear03} display the results of this comparison.
Across all settings, \ac{GENSPEC} deploys policies much earlier than \ac{SEA} with the same level of confidence.
While they converge at the same levels of performance, \ac{GENSPEC} requires considerably less data, e.g., on the Istella dataset with $\alpha=0.025$ \ac{GENSPEC} deploys with $10$ times less data.
Thus, we conclude that the relative bounds of \ac{GENSPEC} are much more efficient than the existing bounding approach of \ac{SEA}.

\subsection{Comparison with counterfactual \acs{LTR}}
\label{sec:comparisoncltr}
Obvious baselines for our experiments are methods from the counterfactual \ac{LTR} field~\citep{wang2016learning, joachims2017unbiased, ai2018unbiased}.
In our setting where the observance probabilities are given, all these methods reduce to~\citet{oosterhuis2020topkrankings}'s method (see Chapter~\ref{chapter:04-topk}), i.e., the method used to optimize the pure generalization policy in Figures~\ref{fig:genspec:main:linear10} and~\ref{fig:genspec:main:linear03}.
Thus, the comparison between \ac{GENSPEC} and pure generalization is effectively a comparison between \ac{GENSPEC} and state-of-the-art counterfactual \ac{LTR}.
As expected, we see that \ac{GENSPEC} reaches the same performance on previously unseen queries (Test-NDCG); but on queries with clicks (Train-NDCG) \ac{GENSPEC} outperforms standard counterfactual \ac{LTR} by enormous amounts once many clicks have been gathered.
Again, there is a small delay between the moment the generalization policy outperforms the logging policy and when \ac{GENSPEC} deploys it.
Since this observed delay is very short, this downside seems to be heavily outweighed by the large increases in performance in Train-NDCG.
Thus, we conclude that \ac{GENSPEC} is preferable over the existing counterfactual \ac{LTR} approaches, due to its ability to incorporate highly specialized models in its policy.

\subsection{Comparison with online \acs{LTR} bandits}
\label{sec:onlineltr}
Other related methods are online \ac{LTR} bandit algorithms~\citep{kveton2015cascading, katariya2016dcm}.
Unlike counterfactual \ac{LTR}, these bandit methods learn using online interventions: at each timestep they choose which ranking to display to users.
Thus, they have some control over the interactions they receive, and attempt to display rankings that will benefit the learning process the most.
As baselines we use the hotfix algorithm~\citep{zoghi2016click} and the \ac{PBM}~\citep{lagree2016multiple}.
The hotfix algorithm is a very general approach, it completely randomly shuffles the top-$n$ items and ranks them based on pairwise preferences inferred from clicks.
The main downside of the hotfix approach is that it can be very detrimental to the user experience due to the randomization.\footnote{We only report the performance of the ranking produced by the hotfix baseline, not of the randomized rankings used to gather clicks.}
We apply two versions of the hotfix algorithm, one for top-10 reranking to minimize randomization and another for reranking the complete ranking.
\ac{PBM} is perfectly suited for our task as it makes the same assumptions about user behavior as our experimental setting.
We apply PMB-PIE~\citep{lagree2016multiple}, which results in PBM always displaying the ranking it expects to perform best, thus attempting to maximize the user experience during learning.
These methods are very similar to our specialization policies: the bandit baselines memorize the best rankings and do not depend on features at all.
Consequently, their learned policies cannot be applied to previously unseen queries.
 
Figure~\ref{fig:bandits} displays the results for this comparison.
We see that when $\alpha = 0.2$ Hotfix-Complete, \ac{PBM} and \ac{GENSPEC} all reach perfect Train-NCDG; however, Hotfix-Complete and \ac{PBM} reach convergence much earlier than \ac{GENSPEC}.
We attribute this difference to three causes:
\begin{enumerate*}[label=(\roman*)]
\item the online interventions of the bandit baselines,
\item \ac{GENSPEC} only uses 70\% of the available data for training ($\traindata$) whereas the bandit baselines use everything, and
\item the delay in deployment added by \ac{GENSPEC}'s usage of confidence bounds.
\end{enumerate*}
Similar to the pure specialization policies, the earlier moment of convergence of the bandit baselines comes at the cost of an initial period of very poor performance.
We conclude that if only the moment of reaching optimal performance matters, \ac{PBM} is the best choice of method.
However, if periods of poor performance should be avoided~\citep{wu2016conservative}, or if some queries may not receive large numbers of clicks~\citep{white2007studying}, \ac{GENSPEC} is the better choice.
An additional advantage is that \ac{GENSPEC} is a counterfactual method and does not have to be applied online like the bandit baselines.

\subsection{Feature-based online \acs{LTR}}
Besides the bandit baselines discussed in Section~\ref{sec:onlineltr}, feature-based methods for online \ac{LTR} also exist~\citep{oosterhuis2018differentiable, schuth2016mgd, yue2009interactively, wang2019variance}.
A direct experimental comparison with these methods is beyond the scope of this chapter.
However, previous work has already compared these methods with each other~\citep{oosterhuis2019optimizing} and the state-of-the-art method with counterfactual \ac{LTR}~\citep{jagerman2019comparison}.
Based on the latter work by \citet{jagerman2019comparison} we do not expect considerable differences between these online \ac{LTR} methods and counterfactual \ac{LTR} in our settings.
Therefore, we expect that a comparison would lead to similar results as discussed in Section~\ref{sec:comparisoncltr}.

\section{GENSPEC for Contextual Bandits}
\label{sec:genspeccontextualbandit}

So far we have discussed \ac{GENSPEC} for counterfactual \ac{LTR}. 
We will now show that it is also applicable to the broader contextual bandit problem.
Instead of a query $q$, we now keep track of an arbitrary context $z \in \{1,2,\ldots\}$ where
\begin{equation}
x_i, z_i \sim P(x, z).
\end{equation}
Data is gathered using a logging policy $\pi_0$:
\begin{equation}
a_i \sim \pi_0(a \mid x_i, z_i).
\end{equation}
However, unlike the \ac{LTR} case, the rewards $r_i$ are observed directly:
\begin{equation}
r_i \sim P(r \mid a_i,  x_i, z_i).
\end{equation}
With the propensities
\begin{equation}
\rho_i = \pi_0(a_i \mid x_i, z_i)
\end{equation}
the data is:
\begin{equation}
\mathcal{D} = \big \{(r_i, a_i, \rho_i, x_i, z_i)\big \}^N_{i=1};
\end{equation}
for specialization the data is filtered per context $z$:
\begin{equation}
\mathcal{D}_z = \big \{ (r_i, a_i, \rho_i, x_i, z_i) \in \mathcal{D}\mid z_i = z \big\}.
\end{equation}
Again, data for training $\traindata$ and for policy selection $\bounddata$ are separated.
The reward is estimated with an \ac{IPS} estimator:
\begin{equation}
\hat{\mathcal{R}}(\pi \mid \mathcal{D}) = \frac{1}{|\mathcal{D}|} \sum_{i \in \mathcal{D}} \frac{r_i}{\rho_i} \, \pi(a_i \mid x_i, z_i).
\end{equation}
With the policy spaces $\Pi_g$ and $\Pi_z$, the policy for generalization is:
\begin{equation}
\pi_g = \argmax_{\pi \in \Pi_g} \, \hat{\mathcal{R}}(\pi \mid \traindata);
\end{equation}
per context $z$, the specialization policy is:
\begin{equation}
\pi_z = \argmax_{\pi \in \Pi_z} \, \hat{\mathcal{R}}(\pi \mid \traindata_z).
\end{equation}
The difference between two policies is estimated by:
\begin{equation}
\hat{\delta}(\pi_1, \pi_2 \mid \mathcal{D}) = \hat{\mathcal{R}}(\pi_1 \mid \mathcal{D}) - \hat{\mathcal{R}}(\pi_2 \mid \mathcal{D}).
\end{equation}
We differ from the \ac{LTR} approach by estimating the bounds using:
\begin{equation}
R_i =  \frac{r_i}{\rho_i} \, \big(\pi_1(a_i \mid x_i, z_i) - \pi_2(a_i \mid x_i, z_i) \big).
\end{equation}
Following~\citet{thomas2015high}, the confidence bounds are:
\begin{equation}
\begin{split}
\textit{CB}(\pi_1,\, \pi_2 \mid \mathcal{D})
= \,&\, \frac{7b\ln\big(\frac{2}{1-\epsilon}\big)}{3(|\mathcal{D}|-1)}  \\
& + \frac{1}{|\mathcal{D}|} \sqrt{\frac{ 2 |\mathcal{D}| \ln\big(\frac{2}{1-\epsilon}\big)}{|\mathcal{D}|-1} \sum_{i  \in \mathcal{D}} \big(R_i - \hat{\delta}(\pi_1, \pi_2 \mid \mathcal{D}) \big)^2},
\end{split}
\end{equation}
where $b$ is the maximum possible value for $R_i$.
This results in the lower bound:
\begin{equation}
\textit{LCB}(\pi_1, \pi_2 \mid \mathcal{D}) = \hat{\delta}(\pi_1, \pi_2 \mid \mathcal{D}) - \textit{CB}(\pi_1, \pi_2 \mid \mathcal{D}),
\end{equation}
which is used by the \ac{GENSPEC} meta-policy:
\begin{equation}
\begin{split}
\pi_{GS}(a |\, x, z)
= \hspace{-1.1cm} &
\\
&  
\begin{cases}
\pi_z(a |\, x, z),&\text{if } \big( \textit{LCB}(\pi_z, \pi_g | \bounddata_z) > 0
 \land \, \textit{LCB}(\pi_z, \pi_0 | \bounddata_z) > 0 \big), \\
\pi_g(a |\, x, z),& \text{if } \big( \textit{LCB}(\pi_z, \pi_g |  \bounddata_z) \leq 0 \land \, \textit{LCB}(\pi_g, \pi_0 | \bounddata) > 0 \big), \\
\pi_0(a |\, x, z),& \text{otherwise}. 
\end{cases}
\end{split}
\end{equation}
As such, \ac{GENSPEC} can be applied to the contextual bandit problem for any arbitrary choice of context $z$.

\section{Conclusion}
\label{sec:Conclusion}

In this chapter we have introduced the \acf{GENSPEC} framework for contextual bandit problems.
For an arbitrary choice of contexts it simultaneously learns a general policy to perform well across all contexts, and many specialized policies each optimized for a single context.
Then, per context the \ac{GENSPEC} meta-policy uses high-confidence bounds to choose between deploying the logging policy, the general policy, or a specialized policy.
As a result, \ac{GENSPEC} combines the robust safety of a general policy with the high-performance of a successfully specialized policy.

We have shown how \ac{GENSPEC} can be applied to query-special\-ization for counterfactual \ac{LTR}.
Our results show that \ac{GENSPEC} combines the high performance of specialized policies on queries with sufficiently many interactions, with the robust performance on queries that were previously unseen or where little data is available.
Thus, it avoids the low performance at convergence of feature-based models underlying the general policy, and the initial poor performance of the tabular models underlying the specialized policies.
We expect that \ac{GENSPEC} can be used for other types of specialization by choosing different context divisions, i.e., personalization for \ac{LTR} is a promising choice.

With these findings we can answer thesis research question \ref{thesisrq:genspec} positively:
Using \ac{GENSPEC} we can combine the specialization ability of bandit-style online \ac{LTR} with the robust generalization of feature-based \ac{LTR}.
As a result, the choice between specialization and generalization can now be made in a principled, theoretically-grounded manner.
For the \ac{LTR} field this means that bandit-style \ac{LTR} and feature-based \ac{LTR} can now be seen as complementary, instead of a mutually exclusive choice.

Future work could explore other contextual bandit problems and choices for context.
Additionally, we hope that the robust safety of \ac{GENSPEC} further incites the application of bandit algorithms in practice.

While this chapter considered \ac{GENSPEC} for counterfactual \ac{LTR},
Chapter~\ref{chapter:06-onlinecounterltr} introduces a novel method that is both effective at counterfactual \ac{LTR} and online \ac{LTR}.
With only small adaptations the contributions of both chapters could be combined, thus potentially resulting in \ac{GENSPEC} for both online and counterfactual \ac{LTR}.
Future work could investigate the effectiveness of this possible combined approach.
\begin{subappendices}

\section{Proof of Unbiasedness for Counterfactual Learning to Rank}
\label{sec:counterfactualproof}

This section will prove that the \ac{IPS} estimate $\hat{\mathcal{R}}$~(Eq.~\ref{eq:rewardestimate}) can be used to unbiasedly optimize the true reward $\mathcal{R}$~(Eq.~\ref{eq:genspec:truereward}), as claimed in Section~\ref{sec:counterfactualltr}.
For this proof we rely on the following assumptions:
\begin{enumerate*}[label=(\roman*)]
\item \ac{LTR} metrics are linear combinations of item relevances (Eq.~\ref{eq:ltrtrueest}),
\item the assumption that clicks never occur on unobserved items (Eq.~\ref{eq:unobserved}), and
\item click probabilities (conditioned on observance) are proportional to relevance (Eq.~\ref{eq:clickprop}).
\end{enumerate*}

First, we consider the expected value for an observed click $c_i(d)$ using Eq.~\ref{eq:unobserved}; for brevity we write $r(d) = r(d \mid x_i, q_i)$:
\begin{equation}
\begin{split}
\mathbb{E}_{o_i,a_i}\big[&c_i(d)\big] 
\\
&= \mathbb{E}_{a_i}\Big[P\big(c_i(d) = 1 | o_i(d) = 1, r(d)\big) \cdot P\big(o_i(d)=1 | a_i\big)\Big]  \\
&= P\big(c_i(d) = 1 \mid o_i(d) = 1, r(d)\big) \cdot \Bigg(\sum_{a \in \pi_0}  P\big(o_i(d)=1 \mid a\big) \cdot \pi_0(a \mid x_i, q_i) \Bigg) \\
&= \rho_i(d) \cdot P\big(c_i(d) = 1 \mid o_i(d) = 1, r(d)\big).
\end{split}
\end{equation}
Then, consider the expected value for the \ac{IPS} estimator, and note that $a_i$ is a historically observed action and that $a$ is the action being evaluated:
\begin{align}
\mathbb{E}_{o_i,a_i}\big[\hat{\Delta}(a \mid c_i, \rho_i)\big]
&= \mathbb{E}_{o_i,a_i}\Bigg[\sum_{d \in a }\frac{\lambda\big(\textit{rank}(d \mid a)\big) \cdot c_i(d)}{\rho_i(d)} \Bigg]
\nonumber\\ 
& = \sum_{d \in a }\frac{\rho_i(d)}{\rho_i(d)}\cdot \lambda\big(\textit{rank}(d \mid a)\big)
 \cdot P\big(c_i(d) = 1 \mid o_i(d) = 1, r(d)\big)
 \nonumber\\ 
& = \sum_{d \in a} \lambda\big(\textit{rank}(d \,|\, a)\big) \cdot P\big(c_i(d) = 1 \,|\, o_i(d) = 1, r(d )\big). 
\end{align}
This step assumes that $\rho_i(d) > 0$, i.e., that every item has a non-zero probability of being examined~\citep{joachims2017unbiased}. %
While $\mathbb{E}_{o_i,a_i}[\hat{\Delta}(a \mid c_i, \rho_i)]$ and $\Delta(a \mid x_i, q_i, r)$ are not necessarily equal, using Eq.~\ref{eq:clickprop} we see that they are proportional with some offset $C$:
\begin{align}
\mathbb{E}_{o_i,a_i}\big[\hat{\Delta}(a \mid c_i, \rho_i)\big] &\propto
\Big(\sum_{d \in a}\lambda\big(\textit{rank}(d \mid a)\big) \cdot r(d) \Big) + C \nonumber \\
&= \Delta(a \mid x_i, q_i, r) + C,
\end{align}
where $C$ is a constant: $C = \big(\sum^K_{i=1}\lambda(i)\big) \cdot \mu$.
Therefore, in expectation $\hat{\mathcal{R}}$ and $\mathcal{R}$ are also proportional with the same constant offset:
\begin{equation}
\mathbb{E}_{o_i,a_i}\big[\hat{\mathcal{R}}(\pi \mid \mathcal{D})\big] \propto \mathcal{R}(\pi) + C.
\end{equation}
Consequently, the estimator can be used to unbiasedly estimate the preference between two policies:
\begin{equation}
\mbox{}\hspace*{-2mm}
\mathbb{E}_{o_i,a_i}\big[\hat{\mathcal{R}}(\pi_1 \mid \mathcal{D})\big] < \mathbb{E}_{o_i,a_i}\big[\hat{\mathcal{R}}(\pi_2 \mid \mathcal{D})\big]
\Leftrightarrow \mathcal{R}(\pi_1) < \mathcal{R}(\pi_2).
\hspace*{-2mm}\mbox{}
 \label{eq:relativeproof}
\end{equation}
Moreover, this implies that maximizing the estimated performance unbiasedly optimizes the actual reward:
\begin{equation}
 \argmax_{\pi} \mathbb{E}_{o_i,a_i}\big[\hat{\mathcal{R}}(\pi \mid \mathcal{D})\big] = \argmax_{\pi} \mathcal{R}(\pi).
 \label{eq:maxproof}
\end{equation}
This concludes our proof. We have shown that $\hat{\mathcal{R}}$ is suitable for counterfactual evaluation since it can unbiasedly identify if a policy outperforms another (Eq.~\ref{eq:relativeproof}) and, furthermore, that $\hat{\mathcal{R}}$ can be used for unbiased \ac{LTR}, i.e., it can be used to find the optimal policy~(Eq.~\ref{eq:maxproof}).

\section{Efficiency of Relative Bounding by \acs{GENSPEC}}
\label{sec:theoryrelativebounds}

Our experimental results showed that \ac{GENSPEC} chooses between policies more efficiently than when using \ac{SEA} bounds~\citep{jagerman2020safety}.
In other words, when one policy has higher performance than another, the relative bounds of \ac{GENSPEC} require less data to be certain about this difference than the \ac{SEA} bounds.
In this section, we will prove that the relative bounds of \ac{GENSPEC} are more efficient than the \ac{SEA} bounds when the covariance between the reward estimates of two policies is positive:
\begin{equation}
\text{cov}\big(\hat{\mathcal{R}}(\pi_1 | \mathcal{D}), \hat{\mathcal{R}}(\pi_2 | \mathcal{D})\big) > 0.
\end{equation}
This means that \ac{GENSPEC} will deploy a policy earlier than \ac{SEA} if there is high covariance, since both estimates are based on the same interaction data $\mathcal{D}$ a high covariance is very likely.

Let us first consider when \ac{GENSPEC} deploys a policy:
Deployment by \ac{GENSPEC} depends on whether a relative confidence bound is greater than the estimated difference in performance~(cf.\ Eq.~\ref{eq:genspec}).
For two policies $\pi_1$ and $\pi_2$ deployment happens when:
\begin{equation}
\hat{\mathcal{R}}(\pi_1 \mid \mathcal{D}) - \hat{\mathcal{R}}(\pi_2 \mid \mathcal{D}) - \textit{CB}(\pi_1, \pi_2 \mid \mathcal{D})  > 0.
\end{equation}
Thus the bound has to be smaller than the estimated performance difference:
\begin{equation}
\textit{CB}(\pi_1, \pi_2 \mid \mathcal{D}) <
\hat{\mathcal{R}}(\pi_1 \mid \mathcal{D}) - \hat{\mathcal{R}}(\pi_2 \mid \mathcal{D}).
\label{eq:relativeboundreq}
\end{equation}
In contrast, \ac{SEA} does not use a single bound, but bounds the performance of both policies.
For clarity, we reformulate the \ac{SEA} bound in our notation. First we have $R^{\pi_j}_{i,d}$ the observed reward for an item $d$ at interaction $i$ for policy $\pi_j$:
\begin{equation}
R^{\pi_j}_{i,d} =  \frac{c_i(d)}{\rho_i(d)} \sum_{a \in \pi_j} \pi_j(a | x_i, q_i)
 \cdot  \lambda\big(\textit{rank}(d | a)\big).
\end{equation}
Then we have a $\nu^{\pi_j}$ for each policy:
\begin{align}
\nu^{\pi_j} =  \frac{ 2  |\mathcal{D}|  K  \ln\big(\frac{2}{1-\epsilon}\big)}{|\mathcal{D}| K-1}  \sum_{(i,d) \in \mathcal{D}} \big(K \cdot R^{\pi_j}_{i,d} -  \hat{\mathcal{R}}(\pi_j \mid \mathcal{D})  \big)^2,
\nonumber
\end{align}
which we use to note the confidence bound for a single policy $\pi_j$:
\begin{align}
\textit{CB}
(\pi_j \mid \mathcal{D})
= \frac{7 K b\ln\big(\frac{2}{1-\epsilon}\big)}{3(|\mathcal{D}|  K-1)} + \frac{1}{|\mathcal{D}|  K}  
 \cdot \sqrt{\nu^{\pi_j}}.
\end{align}
We note that the $b$ parameter has the same value for both the relative and single confidence bounds.
\ac{SEA} chooses between policies by comparing their upper and lower confidence bounds:
\begin{equation}
\hat{\mathcal{R}}(\pi_1 \mid \mathcal{D}) - \textit{CB}(\pi_1 \mid \mathcal{D}) > \hat{\mathcal{R}}(\pi_2 \mid \mathcal{D}) + \textit{CB}(\pi_2 \mid \mathcal{D}).
\end{equation}
In this case, the summation of the bounds has to be smaller than the estimated performance difference:
\begin{equation}
\textit{CB}(\pi_1 \mid \mathcal{D})  + \textit{CB}(\pi_2 \mid \mathcal{D})  < \hat{\mathcal{R}}(\pi_1 \mid \mathcal{D}) - \hat{\mathcal{R}}(\pi_2 \mid \mathcal{D}).
\label{eq:singleboundreq}
\end{equation}
We can now formally describe under which condition \ac{GENSPEC} is more efficient than \ac{SEA}:
by combining Eq.~\ref{eq:relativeboundreq} and Eq.~\ref{eq:singleboundreq}, we see that relative bounding is more efficient when:
\begin{equation}
\textit{CB}(\pi_1, \pi_2 \mid \mathcal{D}) <
\textit{CB}(\pi_1 \mid \mathcal{D})  + \textit{CB}(\pi_2 \mid \mathcal{D}).
\end{equation}
We notice that $\mathcal{D}$, $K$, $b$ and $\epsilon$ have the same value for both confidence bounds, thus we only require:
\begin{align}
\sqrt{\nu} < \sqrt{\nu^{\pi_1}} + \sqrt{\nu^{\pi_2}}.
\end{align}
If we assume that $\mathcal{D}$ is sufficiently large, we see that $\sqrt{\nu}$ approximates the standard deviation scaled by some constant:
\begin{align}
\sqrt{\nu} \approx C \cdot \sqrt{\text{var}\big(\hat{\delta}(\pi_1, \pi_2 | \mathcal{D})\big)},
\end{align}
where the constant is:
$C = \sqrt{\frac{ 2  |\mathcal{D}|^2  K^2  \ln\big(\frac{2}{1-\epsilon}\big)}{|\mathcal{D}| K-1}} $.
Since the purpose of the bounds is to prevent deployment until enough certainty has been gained, we think it is safe to assume that $\mathcal{D}$ is large enough  for this approximation before any deployment takes place. 

To keep our notation concise, we use the following:
$\hat{\delta} = \hat{\delta}(\pi_1, \pi_2 | \mathcal{D})$,
$\hat{\mathcal{R}}_1 = \hat{\mathcal{R}}(\pi_1 | \mathcal{D})$,
and
$\hat{\mathcal{R}}_2 = \hat{\mathcal{R}}(\pi_2 | \mathcal{D})$.
Using the same approximations for $\sqrt{\nu^{\pi_1}}$ and $\sqrt{\nu^{\pi_2}}$ we get:
\begin{align}
\sqrt{\text{var}(\hat{\delta})} < \sqrt{\text{var}(\hat{\mathcal{R}}_1)} + \sqrt{\text{var}(\hat{\mathcal{R}}_2)}.
\end{align}
By making use of the Cauchy-Schwarz inequality, we can derive the following lower bound:
\begin{align}
\sqrt{\text{var}(\hat{\mathcal{R}}_1) + \text{var}(\hat{\mathcal{R}}_2)}
\leq \sqrt{\text{var}(\hat{\mathcal{R}}_1)} + \sqrt{\text{var}(\hat{\mathcal{R}}_2)}.
\end{align}
Therefore, the relative bounding of \ac{GENSPEC} must be more efficient when the following is true:
\begin{align}
\text{var}(\hat{\delta})
< \text{var}(\hat{\mathcal{R}}_1) + \text{var}(\hat{\mathcal{R}}_2),
\end{align}
i.e. the variance of the relative estimator must be less than the sum of the variances of the estimators for the individual policies.
Finally, by rewriting $\text{var}(\hat{\delta})$ to:
\begin{equation}
\mbox{}\hspace*{-2mm}
\text{var}(\hat{\delta})
=
\text{var}(\hat{\mathcal{R}}_1 - \hat{\mathcal{R}}_2)
= \text{var}(\hat{\mathcal{R}}_1) + \text{var}(\hat{\mathcal{R}}_2) - 2\text{cov}(\hat{\mathcal{R}}_1, \hat{\mathcal{R}}_2),
\hspace*{-2mm}\mbox{}
\end{equation}
we see that the relative bounds of \ac{GENSPEC} are more efficient than the multiple bounds of \ac{SEA}
if the covariance between $\hat{\mathcal{R}}_1$ and $\hat{\mathcal{R}}_2$ is positive:
\begin{equation}
\text{cov}(\hat{\mathcal{R}}_1, \hat{\mathcal{R}}_2) > 0.
\end{equation}
Remember that both estimates are based on the same interaction data: $\hat{\mathcal{R}}_1 = \hat{\mathcal{R}}(\pi_1 | \mathcal{D})$,
and
$\hat{\mathcal{R}}_2 = \hat{\mathcal{R}}(\pi_2 | \mathcal{D})$.
Therefore, they are based on the same clicks and propensities scores, thus it is extremely likely that the covariance between the estimates is positive.
Correspondingly, it is also extremely likely that the relative bounds of \ac{GENSPEC} are more efficient than the bounds used by \ac{SEA}.

\section{Notation Reference for Chapter~\ref{chapter:05-genspec}}
\label{notation:05-genspec}

\begin{center}
\begin{tabular}{l l}
 \toprule
\bf Notation  & \bf Description \\
\midrule
$K$ & the number of items that can be displayed in a single ranking \\
$i$ & an iteration number \\
$q$ & a user-issued query \\
$x$ & contextual information, i.e., additional features \\
$d$ & an item to be ranked\\
$a$ & a ranked list \\
$\pi$ & a ranking policy\\
$\pi(a \mid q)$ & the probability that policy $\pi$ displays ranking $a$ for query $q$ \\
$r(d \mid x, q)$ & the relevance of item $d$ w.r.t.\ query $q$ given context $x$\\
$\lambda\big(\textit{rank}(d \mid a)\big)$ & a metric function that weights items depending on their rank \\
$\mathcal{D}$ & the available interaction data\\
$c_i(d)$ & a function indicating item $d$ was clicked at iteration $i$ \\
$o_i(d)$ & a function indicating item $d$ was observed at iteration $i$ \\
\bottomrule
\end{tabular}
\end{center}
\end{subappendices}

\chapter[Efficient and Unbiased Online Evaluation for Ranking]{Taking the Counterfactual Online: Efficient and Unbiased Online Evaluation for Ranking}
\label{chapter:06-onlinecountereval}

\footnote[]{This chapter was published as~\citep{oosterhuis2020taking}.
Appendix~\ref{notation:06-onlinecountereval} gives a reference for the notation used in this chapter.
}

Counterfactual evaluation can estimate \ac{CTR} differences between ranking systems based on historical interaction data, while mitigating the effect of position bias and item-selection bias.
In contrast, online evaluation methods, designed for ranking, estimate performance differences between ranking systems by showing interleaved rankings to users and observing their clicks.
We are curious to find out whether the online interventions of online evaluation methods truly result in more efficient evaluation, and additionally, whether the popular interleaving methods are truly unbiased w.r.t.\ biases such as position bias.
Accordingly this chapter will consider the following two thesis research questions:
\begin{itemize}
\item[\ref{thesisrq:onlineeval}] Can counterfactual evaluation methods for ranking be extended to perform efficient and effective online evaluation?
\item[\ref{thesisrq:interleaving}] Are existing interleaving methods truly capable of unbiased evaluation w.r.t.\ position bias?
\end{itemize}
We introduce the novel \acf{LogOpt}, which optimizes the policy for logging data so that the counterfactual estimate has minimal variance.
As minimizing variance leads to faster convergence, \ac{LogOpt} increases the data-efficiency of counterfactual estimation.
\ac{LogOpt} turns the counterfactual approach -- which is indifferent to the logging policy -- into an online approach, where the algorithm decides what rankings to display.
We prove that, as an online evaluation method, \ac{LogOpt} is unbiased w.r.t.\ position and item-selection bias, unlike existing interleaving methods.
Furthermore, we perform large-scale experiments by simulating comparisons between \emph{thousands} of rankers.
Our results show that while interleaving methods make systematic errors, \ac{LogOpt} is as efficient as interleaving without being biased.
Lastly, we provide a formal proof that shows interleaving methods are not unbiased w.r.t.\ position bias.

\section{Introduction}
\label{sec:intro}

Evaluation is essential for the development of search and recommendation systems~\citep{hofmann-online-2016, kohavi2017online}.
Before any ranking model is widely deployed it is important to first verify whether it is a true improvement over the currently-deployed model.
A traditional way of evaluating relative differences between systems is through A/B testing, where part of the user population is exposed to the current system (``control") and the rest to the altered system (``treatment") during the same time period.
Differences in behavior between these groups can then indicate if the alterations brought improvements, e.g., if the treatment group showed a higher \ac{CTR} or more revenue was made with this system~\citep{chapelle2012large}.

Interleaving has been introduced in \ac{IR} as a more efficient alternative to A/B testing~\citep{joachims2003evaluating}.
Interleaving algorithms take the rankings produced by two ranking systems, and for each query create an interleaved ranking by combining the rankings from both systems.
Clicks on the interleaved rankings directly indicate relative differences.
Repeating this process over a large number of queries and averaging the results, leads to an estimate of which ranker would receive the highest \ac{CTR}~\citep{hofmann2013fidelity}.
Previous studies have found that interleaving requires fewer interactions than A/B testing, which enables them to make consistent comparisons in a much shorter timespan~\citep{schuth2015predicting, chapelle2012large}.

More recently, counterfactual evaluation for rankings has been proposed by \citet{joachims2017unbiased} to evaluate a ranking model based on clicks gathered using a different model.
By correcting for the position bias introduced during logging, the counterfactual approach can unbiasedly estimate the \ac{CTR} of a new model on historical data.
To achieve this, counterfactual evaluation makes use of \ac{IPS}, where clicks are weighted inversely to the probability that a user examined them during logging~\citep{wang2016learning}.
A big advantage compared to interleaving and A/B testing, is that counterfactual evaluation does not require online interventions.

In this chapter, we show that no existing interleaving method is truly unbiased: they are not guaranteed to correctly predict which ranker has the highest \ac{CTR}.
On two different industry datasets, we simulate a total of 1,000 comparisons between 2,000 different rankers.
In our setup, interleaving methods converge on the wrong answer for at least 2.2\% of the comparisons on both datasets.
A further analysis shows that existing interleaving methods are unable to reliably estimate \ac{CTR} differences of around 1\% or lower.
Therefore, in practice these systematic errors are expected to impact situations where rankers with a very similar \ac{CTR} are compared.

We propose a novel online evaluation algorithm: \acf{LogOpt}.
\ac{LogOpt} extends the existing unbiased counterfactual approach, and turns it into an online approach.
\ac{LogOpt} estimates which rankings should be shown to the user, so that the variance of its \ac{CTR} estimate is minimized.
In other words, it attempts to learn the logging-policy that leads to the fastest possible convergence of the counterfactual estimation.
Our experimental results indicate that our novel approach is as efficient as any interleaving method or A/B testing, without having a systematic error.
As predicted by the theory, we see that the estimates of our approach converge on the true \ac{CTR} difference between rankers.
Therefore, we have introduced the first online evaluation method that combines high efficiency with unbiased estimation.

The main contributions of this chapter are:
\begin{enumerate}[leftmargin=*]
\item The first logging-policy optimization method for minimizing the variance in counterfactual \ac{CTR} estimation.
\item The first unbiased online evaluation method that is as efficient  as state-of-the-art interleaving methods.
\item A large-scale analysis of existing online evaluation methods that reveals a previously unreported bias in interleaving methods.
\end{enumerate}

\section{Preliminaries: Ranker Comparisons}

The overarching goal of ranker evaluation is to find the ranking model that provides the best rankings.
For the purposes of this chapter, we will define the quality of a ranker in terms of the number of clicks it is expected to receive.
Let $R$ indicate a ranking and let $\mathbbm{E}[\text{CTR}(R)] \in \mathbb{R}_{\geq0}$ be the expected number of clicks a ranking receives after being displayed to a user.
We consider ranking $R_1$ to be better than $R_2$ if in expectation it receives more clicks: $\mathbbm{E}[\text{CTR}(R_1)] > \mathbbm{E}[\text{CTR}(R_2)]$.
We will represent a ranking model by a policy $\pi$, with $\pi(R \mid q)$ as the probability that $\pi$ displays $R$ for a query $q$.
With $P(q)$ as the probability of a query $q$ being issued, the expected number of clicks received under a ranking model $\pi$ is:
\begin{equation}
\mathbbm{E}[\text{CTR}(\pi)]
=
\sum_{q} P(q) \sum_{R} \mathbbm{E}[\text{CTR}(R)] \pi(R \mid q).
\end{equation}
Our goal is to discover the $\mathbbm{E}[\text{CTR}]$ difference between two policies:
\begin{equation}
\Delta(\pi_1, \pi_2) 
= 
\mathbbm{E}[\text{CTR}(\pi_1)] - \mathbbm{E}[\text{CTR}(\pi_2)].
\label{eq:ctrdifference}
\end{equation}
We recognize that to correctly identify if one policy is better than another, we merely need a corresponding binary indicator:
\begin{equation}
\Delta_{bin}(\pi_1, \pi_2) 
= 
\text{sign}\big(
\Delta(\pi_1, \pi_2) 
\big).
\label{eq:binctrdifference}
\end{equation}
However, in practice the magnitude of the differences can be very important, for instance, if one policy is computationally much more expensive while only having a slightly higher $\mathbbm{E}[\text{CTR}]$, it may be preferable to use the other in production.
Therefore, estimating the absolute $\mathbbm{E}[\text{CTR}]$ difference is more desirable in practice.

\subsection{User behavior assumptions}
\label{sec:preliminary:assumptions}

Any proof regarding estimators using user interactions must rely on assumptions about user behavior.
In this chapter, we assume that only two forms of interaction bias are at play: position bias and item-selection bias.

Users generally do not examine all items that are displayed in a ranking but only click on examined items~\citep{chuklin-click-2015}.
As a result, a lower probability of examination for an item also makes it less likely to be clicked.
Position bias assumes that only the rank determines the probability of examination~\citep{craswell2008experimental}.
Furthermore, we will assume that given an examination only the relevance of an item determines the click probability.
Let $c(d) \in \{0,1\}$ indicate a click on item $d$ and $o(d) \in \{0,1\}$ examination by the user.
Then these assumptions result in the following assumed click probability:
\begin{equation}
\begin{split}
P(c(d) = 1 \mid R, q)
&= 
P(o(d) = 1 \mid R)
P(c(d) = 1 \mid o(d) = 1, q)
\\
&= 
\theta_{\text{rank}(d \mid R)}
\zeta_{d, q}.
\end{split}
\end{equation}
Here $\text{rank}(d \mid R)$ indicates the rank of $d$ in $R$; for brevity we use $\theta_{\text{rank}(d \mid R)}$ to denote the examination probability:
\begin{equation}
\theta_{\text{rank}(d \mid R)} = P(o(d) = 1 \mid R),
\end{equation}
and $\zeta_{d, q}$ for the conditional click probability:
\begin{equation}
\zeta_{d, q} = P(c(d) = 1 \mid o(d) = 1, q).
\end{equation}

We also assume that item-selection bias is present; this type of bias is an extreme form of position bias that results in zero examination probabilities for some items~\citep{ovaisi2020correcting, oosterhuis2020topkrankings}.
This bias is unavoidable in top-$k$ ranking settings, where only the $k \in \mathbb{N}_{> 0}$ highest ranked items are displayed.
Consequently, any item beyond rank $k$ cannot be observed or examined by the user:
$
\forall r \in \mathbb{N}_{> 0}\, (r > k \rightarrow \theta_{r} = 0)
$.
The distinction between item-selection bias and position bias is important because the original counterfactual evaluation method~\citep{joachims2017unbiased} is only able to correct for position bias when no item-selection bias is present~\citep{ovaisi2020correcting, oosterhuis2020topkrankings}. 

Based on these assumptions, we can now formulate the expected \ac{CTR} of a ranking:
\begin{equation}
\mathbbm{E}[\text{CTR}(R)] = \sum_{d \in R} P(c(d) = 1 \mid R, q) = \sum_{d \in R} \theta_{\text{rank}(d \mid R)}
\zeta_{d, q}.
\end{equation}
While we assume this model of user behavior, its parameters are still assumed unknown.
Therefore, the methods in this chapter will have to estimate $\mathbbm{E}[\text{CTR}]$ without prior knowledge of $\theta$ or $\zeta$.

\subsection{Goal: \acs{CTR}-estimator properties}
\label{sec:preliminary:properties}

Recall that our goal is to estimate the \ac{CTR} difference between rankers~(Eq.~\ref{eq:ctrdifference}); online evaluation methods do this based on user interactions.
Let $\mathcal{I}$ be the set of available user interactions, it contains $N$ tuples of a single (issued) query $q_i$, the corresponding displayed ranking $R_i$, and the observed user clicks $c_i$:
\begin{equation}
\mathcal{I} = \{(q_i, R_i, c_i)\}^N_{i=1}.
\end{equation}
Each evaluation method has a different effect on what rankings will be displayed to users.
Furthermore, each evaluation method converts each interaction into a single estimate using some function $f$:
\begin{equation}
x_i = f(q_i, R_i, c_i).
\end{equation}
The final estimate is simply the mean over these estimates:
\begin{equation}
\hat{\Delta}(\mathcal{I}) = \frac{1}{N} \sum_{i=1}^N x_i = \frac{1}{N} \sum_{i=1}^N f(q_i, R_i, c_i).
\end{equation}
This description fits all existing online and counterfactual evaluation methods for rankings.
Every evaluation method uses a different function $f$ to convert interactions into estimates; moreover, online evaluation methods also decide which rankings $R$ to display when collecting $\mathcal{I}$. 
These two choices result in different estimators.
Before we discuss the individual methods, we briefly introduce the three properties we desire of each estimator: consistency, unbiasedness and variance.

\begin{itemize}[leftmargin=*]
\item \textbf{Consistency} -- an estimator is \emph{consistent} if it converges as the number of issued queries $N$ increases.
All existing evaluation methods are consistent as their final estimates are means of bounded values. %
\item \textbf{Unbiasedness} -- an estimator is \emph{unbiased} if its estimate is equal to the true \ac{CTR} difference in expectation:
\begin{equation}
\text{Unbiased}(\hat{\Delta}) \Leftrightarrow  \mathbbm{E}\big[\hat{\Delta}(\mathcal{I})\big] = \Delta(\pi_1, \pi_2).
\end{equation}
If an estimator is both consistent and unbiased it is guaranteed to converge on the true $\mathbbm{E}[\text{CTR}]$ difference.
\item \textbf{Variance} -- the \emph{variance} of an estimator is the expected squared deviation between a single estimate $x$ and the mean $\hat{\Delta}(\mathcal{I})$:
\begin{equation}
\text{Var}(\hat{\Delta}) =  \mathbbm{E}\Big[ \big(x -  \mathbbm{E}[\hat{\Delta}(\mathcal{I})]\big)^2  \Big].
\end{equation}
Variance affects the rate of convergence of an estimator; for fast convergence it should be as low as possible.
\end{itemize}
In summary, our goal is to find an estimator, for the \ac{CTR} difference between two ranking models, that is consistent, unbiased and has minimal variance.

\section{Existing Online and Counterfactual Evaluation Methods}
We describe three families of online and counterfactual evaluation methods for ranking.

\subsection{A/B testing}
A/B testing is a well established form of online evaluation to compare a system A with a system B~\citep{kohavi2017online}.
Users are randomly split into  two groups and during the same time period each group is exposed to only one of the systems.
In expectation, the only factor that differs between the groups is the exposure to the different systems.
Therefore, by comparing the behavior of each user group, the relative effect each system has can be evaluated.

We will briefly show that A/B testing is unbiased for $\mathbbm{E}[\text{CTR}]$ difference estimation.
For each interaction either $\pi_1$ or $\pi_2$ determines the ranking, let $A_i \in \{1, 2\}$ indicate the assignment and $A_i \sim P(A)$.
Thus, if $A_i = 1$, then $R_i \sim \pi_1(R \mid q)$ and if $A_i = 2$, then $R_i \sim \pi_2(R \mid q)$.
Each interaction $i$ is converted into a single estimate $x_i$ by $f_{\text{A/B}}$:
\begin{equation}
x_i =  f_{\text{A/B}}(q_i, R_i, c_i) = \left(\frac{\mathbbm{1}[A_i = 1]}{P(A = 1)} - \frac{\mathbbm{1}[A_i = 2]}{P(A = 2)}\right) \sum_{d \in R_i} c_i(d).
\end{equation}
We can prove that A/B testing is unbiased, since in expectation each individual estimate is equal to the \ac{CTR} difference:
\begin{align}
\mathbbm{E}[f_{\text{A/B}}(q_i, R_i, c_i)] &=
\sum_q P(q) \bigg(\frac{ P(A = 1) \sum_R \pi_1(R \mid q) \mathbbm{E}[\text{CTR}(R)] }{P(A = 1)} 
\nonumber \\
& \qquad \qquad \qquad - \frac{ P(A = 2) \sum_R \pi_2(R \mid q) \mathbbm{E}[\text{CTR}(R)] }{P(A = 2)} 
\bigg)
\nonumber \\
&= \sum_q P(q) \sum_R \mathbbm{E}[\text{CTR}(R)]\big(\pi_1(R \mid q) - \pi_2(R \mid q)\big)
\nonumber \\
&= \mathbbm{E}[\text{CTR}(\pi_1)] - \mathbbm{E}[\text{CTR}(\pi_2)] = \Delta(\pi_1, \pi_2).
\end{align}
Variance is harder to evaluate without knowledge of $\pi_1$ and $\pi_2$.
Unless $\Delta(\pi_1, \pi_2) = 0$, some variance is unavoidable since A/B testing alternates between estimating $\ac{CTR}(\pi_1)$ and $\ac{CTR}(\pi_2)$.

\subsection{Interleaving}
\label{sec:related:interleaving}
Interleaving methods were introduced specifically for evaluation in ranking, as a more efficient alternative to A/B testing~\citep{joachims2003evaluating}.
After a query is issued, interleaving methods take the rankings of two competing ranking systems and combine them into a single interleaved ranking.
Any clicks on the interleaved ranking can be interpreted as a preference signal between either ranking system.
Thus, unlike A/B testing, interleaving does not estimate the \ac{CTR} of individual systems but a relative preference; the idea is that this allows it to be more efficient than A/B testing.

Each interleaving method attempts to use randomization to counter position bias, without deviating too much from the original rankings so as to maintain the user experience~\citep{joachims2003evaluating}.
\emph{Team-draft interleaving} (TDI) randomly selects one ranker to place their top document first, then the other ranker places their top (unplaced) document next~\citep{radlinski2008does}.
Then it randomly decides the next two documents, and this process is repeated until all documents are placed in the interleaved ranking.
Clicks on the documents are attributed to the ranker that placed them.
The ranker with the most attributed clicks is inferred to be preferred by the user.
\emph{Probabilistic interleaving} (PI) treats each ranking as a probability distribution over documents; at each rank a distribution is randomly selected and a document is drawn from it~\citep{hofmann2011probabilistic}.
After clicks have been received, probabilistic interleaving computes the expected number of clicked documents per ranking system to infer preferences.
\emph{Optimized interleaving} (OI) casts the randomization as an optimization problem, and displays rankings so that if all documents are equally relevant no preferences are found~\citep{radlinski2013optimized}.

While every interleaving method attempts to deal with position bias, none is unbiased according to our definition (Section~\ref{sec:preliminary:properties}).
This may be confusing because previous work on interleaving makes claims of unbiasedness~\citep{hofmann2013fidelity, radlinski2013optimized, hofmann2011probabilistic}. 
However, they use different definitions of the term.
More precisely, TDI, PI, and OI provably converge on the correct outcome if all documents are equally relevant~\citep{hofmann2013fidelity, radlinski2013optimized, hofmann2011probabilistic, radlinski2008does}.
Moreover, if one assumes binary relevance and $\pi_1$ ranks all relevant documents equal to or higher than $\pi_2$, the binary outcome of PI and OI is proven to be correct in expectation~\citep{radlinski2013optimized, hofmann2013fidelity}.
However, beyond the confines of these unambiguous cases, we can prove that these methods do not meet our definition of unbiasedness: for every method one can construct an example where it converges on the incorrect outcome.
The rankers $\pi_1$, $\pi_2$ and position bias parameters $\theta$ can be chosen so that in expectation the wrong (binary) outcome is estimated; see Appendix~\ref{sec:appendix:bias} for a proof for each of the three interleaving methods. %
Thus, while more efficient than A/B testing, interleaving methods make systematic errors in certain circumstances and thus should not be considered to be unbiased w.r.t.\ \ac{CTR} differences.

We note that the magnitude of the bias should also be considered.
If the systematic error of an interleaving method is minuscule while the efficiency gains are very high, it may still be very useful in practice.
Our experimental results (Section~\ref{sec:biasinterleaving}) reveal that the systematic error of all three interleaving methods considered becomes very high when comparing systems with a \ac{CTR} difference of 1\% or smaller.

\subsection{Counterfactual evaluation}
\label{sec:related:counterfactual}

Counterfactual evaluation is based on the idea that if certain biases can be estimated well, they can also be adjusted
~\citep{joachims2017accurately, wang2016learning}.
While estimating relevance is considered the core difficulty of ranking evaluation, estimating the position bias terms $\theta$ is very doable.
By randomizing rankings, e.g., by swapping pairs of documents~\citep{joachims2017accurately} or exploiting data logged during A/B testing~\citep{agarwal2019estimating}, differences in \ac{CTR} for the same item on different positions can be observed directly.
Alternatively, using \ac{EM} optimization~\citep{wang2018position} or a dual learning objective~\citep{ai2018unbiased}, position bias can be estimated from logged data as well.
Once the bias terms $\theta$ have been estimated, logged clicks can be weighted so as to correct for the position bias during logging.
Hence, counterfactual evaluation can work with historically logged data.
Existing counterfactual evaluation algorithms do not dictate which rankings should be displayed during logging: they do not perform interventions and thus we do not consider them to be online methods.

Counterfactual evaluation assumes that the position bias $\theta$ and the logging policy $\pi_0$ are known, in order to correct for both position bias and item-selection bias.
Clicks are gathered with $\pi_0$ which decides which rankings are displayed to the user.
We follow \citet{oosterhuis2020topkrankings} (see Chapter~\ref{chapter:04-topk}) and use as propensity scores the probability of observance in expectation over the displayed rankings:
\begin{equation}
\rho(d \mid q) = \mathbbm{E}_R\big[P(o(d) = 1 \mid R) \mid \pi_0\big] = \sum_R \pi_0(R \mid q) P(o(d) = 1 \mid  R).
\label{eq:prop}
\end{equation}
Then we use $\lambda(d \mid \pi_1, \pi_2)$ to indicate the difference in observance probability under $\pi_1$ or $\pi_2$:
\begin{equation}
\begin{split}
\lambda(d \,|\, \pi_1, \pi_2)
&= \mathbbm{E}_R\big[P(o(d) = 1 \,|\, R) \,|\, \pi_1\big] - \mathbbm{E}_R\big[P(o(d) = 1 \,|\, R) \,|\, \pi_2\big]
\\
&= \sum_R \theta_{\text{rank}(d \mid R)}\big(\pi_1(R \mid q_i)- \pi_2(R \mid q_i)\big).
\label{eq:lambda}
\end{split}
\end{equation}
Then, the \ac{IPS} estimate function is formulated as:
\begin{equation}
x_i =  f_{\text{IPS}}(q_i, R_i, c_i) = \sum_{d :  \rho(d | q_i) > 0} \frac{c_i(d) }{\rho(d \,|\, q_i)}\lambda(d \mid \pi_1, \pi_2).
\label{eq:ipsestimator}
\end{equation}
Each click is weighted inversely to its examination probability, but items with a zero probability: $\rho(d \,|\, q_i) = 0$ are excluded.
We note that these items can never be clicked:
\begin{equation}
\forall q, d \, (\rho(d \,|\, q) = 0 \rightarrow c(d) = 0.
\end{equation}
Before we prove unbiasedness, we note that given $\rho(d \,|\, q_i) > 0$:
\begin{equation}
\begin{split}
\mathbbm{E}\bigg[\frac{c(d)}{\rho(d \,|\, q)}\bigg]
&=
\frac{\sum_R \pi_0(R \mid q) \theta_{\text{rank}(d \mid R)}
\zeta_{d, q}}{\rho(d \,|\, q_i)}
\\&=
\frac{\sum_R \pi_0(R \mid q) \theta_{\text{rank}(d \mid R)}
}{\sum_{R'} \pi_0(R' \,|\,  q) \theta_{\text{rank}(d \mid R')}} \zeta_{d, q}
= \zeta_{d, q}.
\end{split}
\end{equation}
This, in turn, can be used to prove unbiasedness:
\begin{equation}
\begin{split}
\mathbbm{E}[f_{\text{IPS}}(q_i, R_i, c_i)] &= \sum_q P(q) \sum_{d:\rho(d \,|\, q_i) > 0}  \zeta_{d, q}\lambda(d \mid \pi_1, \pi_2)
\\
&= \mathbbm{E}[\text{CTR}(\pi_1)] - \mathbbm{E}[\text{CTR}(\pi_2)] = \Delta(\pi_1, \pi_2).
\end{split}
\end{equation}
This proof is only valid under the following requirement:
\begin{equation}
\forall d, q\,  (\zeta_{d, q}\lambda(d \mid \pi_1, \pi_2) > 0 \rightarrow \rho(d \,|\, q) > 0).
\label{eq:requirement}
\end{equation}
In practice, this means that the items in the top-$k$ of either $\pi_1$ or $\pi_2$ need to have a non-zero examination probability under $\pi_0$, i.e., they must have a chance to appear in the top-$k$ under $\pi_0$. 

Besides Requirement~\ref{eq:requirement} the \ac{IPS} counterfactual evaluation method~\citep{joachims2017accurately, wang2016learning} is completely indifferent to $\pi_0$ and hence we do not consider it to be an online method.
In the next section, we will introduce an algorithm for choosing and updating $\pi_0$ during logging to minimize the variance of the estimator.
By doing so we turn counterfactual evaluation into an online method.

\section{Logging Policy Optimization for Variance Minimization}
Next, we introduce a method aimed at finding a logging policy minimizes the variance of the estimates of the counterfactual estimator.

\subsection{Minimizing variance}

In Section~\ref{sec:related:counterfactual}, we have discussed counterfactual evaluation and established that it is unbiased as long as $\theta$ is known and the logging policy meets Requirement~\ref{eq:requirement}.
The variance of $\Delta_\mathit{IPS}$ depends on the position bias $\theta$, the conditional click probabilities $\zeta$, and the logging policy $\pi_0$.
In contrast to the user-dependent $\theta$ and $\zeta$, the way data is logged by $\pi_0$ is something one can have control over.
The goal of our method is to find the optimal policy that minimizes variance while still meeting Requirement~\ref{eq:requirement}:
\begin{equation}
\pi_0^* = \underset{\pi_0 : \, \pi_0 \text{ meets Req.~\ref{eq:requirement}}}{\argmin} \, \text{Var}\big(\hat{\Delta}_\mathit{IPS}^{\pi_0}\big),
\end{equation}
where $\hat{\Delta}_\mathit{IPS}^{\pi_0}$ is the counterfactual estimator based on data logged using $\pi_0$.

To formulate the variance, we first note that it is an expectation over queries:
\begin{equation}
\text{Var}(\hat{\Delta}) = \sum_q P(q) \text{Var}(\hat{\Delta} \mid q).
\end{equation}
To keep notation short, for the remainder of this section we will write: $\Delta = \Delta(\pi_1, \pi_2)$; $\theta_{d,R} = \theta_{\text{rank}(d \mid R)}$; $\zeta_{d} = \zeta_{d,q}$; $\lambda_d = \lambda(d \mid \pi_1, \pi_2)$; and $\rho_d = \rho(d \mid q, \pi_0)$. 
Next, we consider the probability of a click pattern $c$, this is simply a vector indicating a possible combination of clicked documents $c(d)=1$ and not-clicked documents $c(d)=0$:
\begin{equation}
\begin{split}
P(c \mid q) &= \sum_{R} \pi_0(R \mid q) \prod_{d : c(d) = 1} \theta_{d,R} \zeta_d \prod_{d : c(d) = 0} (1 - \theta_{d,R} \zeta_d)
\\
&= \sum_{R} \pi_0(R \mid q) P(c \mid R).
\end{split}
\label{eq:clickprob}
\end{equation}
Here, $\pi_0$ has some control over this probability: by deciding the distribution of displayed rankings it can make certain click patterns more or less frequent.
The variance added per query is the squared error of every possible click pattern weighted by the probability of each pattern. 
Let $\sum_c$ sum over every possible click pattern:
\begin{equation}
\text{Var}(\hat{\Delta}_\mathit{IPS}^{\pi_0} \mid q) = \sum_c P(c\mid q) \bigg(\Delta - \sum_{d : c(d) = 1}\frac{\lambda_d}{\rho_d}\bigg)^2.
\label{eq:ipsvariance}
\end{equation}
It is unknown whether there is a closed-form solution for $\pi_0^*$. 
However, the variance function is differentiable.
Taking the derivative reveals a trade-off between two potentially conflicting goals:
\begin{equation}
\begin{split}
\frac{\delta}{\delta \pi_0} \text{Var}(\hat{\Delta}_{IPS}^{\pi_0} \mid q) = \sum_c &
\overbrace{
\left[\frac{\delta}{\delta \pi_0}P(c \mid q)\right] \bigg(\Delta - \sum_{d : c(d) = 1}\frac{\lambda_d}{\rho_d}\bigg)^2
}^{\text{\footnotesize minimize frequency of high-error click patterns}}
\\  + &
\underbrace{
P(c \mid q) \left[\frac{\delta}{\delta \pi_0} \bigg(\Delta - \sum_{d : c(d) = 1}\frac{\lambda_d}{\rho_d}\bigg)^2\right]
}_{\text{\footnotesize minimize error of frequent click patterns}}.
\end{split}
\label{eq:gradient}
\end{equation}
On the one hand, the derivative reduces the frequency of click patterns that result in high error samples, i.e., by updating $\pi_0$ so that these are less likely to occur.
On the other hand, changing $\pi_0$ also affects the propensities $\rho_d$, i.e., if $\pi_0$ makes an item $d$ less likely to be examined, its corresponding value $\lambda_d/\rho_d$ becomes larger, which can lead to a higher error for related click patterns.
The optimal policy has to balance:
\begin{enumerate*}[label=(\roman*)]
\item avoiding showing rankings that lead to high-error click patterns; and
\item avoiding minimizing propensity scores, which increases the errors of corresponding click patterns.
\end{enumerate*}

Our method applies stochastic gradient descent to optimize the logging policy w.r.t.\ the variance.
There are two main difficulties with this approach:
\begin{enumerate*}[label=(\roman*)]
\item the parameters $\theta$ and $\zeta$ are unknown a priori; and
\item the gradients include summations over all possible rankings and all possible click patterns, both of which are computationally infeasible.
\end{enumerate*}
In the following sections, we will detail how \ac{LogOpt} solves both of these problems.

\subsection{Bias \& relevance estimation}

In order to compute the gradient in Eq.~\ref{eq:gradient}, the parameters $\theta$ and $\zeta$ have to be known.
\ac{LogOpt} is based on the assumption that accurate estimates of $\theta$ and $\zeta$ suffice to find a near-optimal logging policy.
We note that the counterfactual estimator only requires $\theta$ to be known for unbiasedness~(see Section~\ref{sec:related:counterfactual}).
Our approach is as follows. At given intervals during evaluation we use the available clicks to estimate $\theta$ and $\zeta$.
Then we use the estimated $\hat{\theta}$ to get the current estimate $\hat{\Delta}_\mathit{IPS}(\mathcal{I}, \hat{\theta})$ (Eq.~\ref{eq:ipsestimator}) and optimize w.r.t.\ the estimated variance (Eq.~\ref{eq:ipsvariance}) based on $\hat{\theta}$, $\hat{\zeta}$, and $\hat{\Delta}_\mathit{IPS}(\mathcal{I}, \hat{\theta})$.

For estimating $\theta$ and $\zeta$ we use the existing \ac{EM} approach by \citet{wang2018position}, because it works well in situations where few interactions are available and does not require randomization.
We note that previous work has found randomization-based approaches to be more accurate for estimating $\theta$~\citep{wang2018position, agarwal2019estimating, fang2019intervention}.
However, they require multiple interactions per query and specific types of randomization in their results; by choosing the \ac{EM} approach we avoid having these requirements.

\subsection{Monte-Carlo-based derivatives}
\label{sec:method:derivates}
Both the variance (Eq.~\ref{eq:ipsvariance}) and its gradient (Eq.~\ref{eq:gradient}), include a sum over all possible click patterns.
Moreover, they also include the probability of a specific pattern $P(c \mid q)$ that is based on a sum over all possible rankings (Eq.~\ref{eq:clickprob}).
Clearly, these equations are infeasible to compute under any realistic time constraints.
To solve this issue, we introduce gradient estimation based on Monte-Carlo sampling.
Our approach is similar to that of \citet{ma2020off}, however, we are estimating gradients of variance instead of general performance.

First, we assume that policies place the documents in order of rank and the probability of placing an individual document at rank $x$ only depends on the previously placed documents.
Let $R_{1:x-1}$ indicate the (incomplete) ranking from rank $1$ up to rank $x$, then $\pi_0(d \mid R_{1:x-1}, q)$ indicates the probability that document $d$ is placed at rank $x$ given that the ranking up to $x$ is $R_{1:x-1}$.
The probability of a ranking $R$ up to rank $k$ is thus:
\begin{equation}
\pi_0(R_{1:k} \mid q) = \prod_{x=1}^{k} \pi_0(R_x \mid R_{1:x-1}, q).
\end{equation}
Let $K$ be the length of a complete ranking $R$, the gradient of the probability of a ranking w.r.t.\ a policy is:
\begin{equation}
\frac{\delta \pi_0(R \mid q) }{\delta \pi_0}
 =
\sum_{x=1}^K
  \frac{\pi_0(R \mid q)}{\pi_0(R_x \mid R_{1:x},  q)}
  \left[\frac{\delta \pi_0(R_x \mid R_{1:x-1},  q) }{\delta \pi_0} \right].
\end{equation}
The gradient of the propensity w.r.t.\ the policy (cf.\ Eq.~\ref{eq:prop}) is:
\begin{equation}
\begin{split}
\frac{\delta \rho(d |\, q)}{\delta \pi_0} &=  \sum_{k=1}^K \theta_k \sum_R \pi_0(R_{1:k-1} |\,  q)   
\Bigg(
\left[ \frac{\delta \pi_0(d |\, R_{1:k-1}, q)}{\delta \pi_0} \right]
\\
& \qquad\quad   + 
\sum_{x=1}^{k-1} \frac{\pi_0(d |\, R_{1:k-1}, q)}{\pi_0(R_x |\, R_{1:x-1} , q)}\left[ \frac{\delta\pi_0(R_x |\,  R_{1:x-1} , q)}{\delta \pi_0} \right]
\Bigg).
\end{split}
\end{equation}
To avoid iterating over all rankings in the $\sum_R$ sum, we sample $M$ rankings:
$R^m \sim \pi_0(R \mid q)$, and a click pattern on each ranking: $c^m \sim P(c \mid R^m)$.
This enables us to make the following approximation:
\begin{equation}
\begin{split}
\widehat{\rho\text{-grad}}(d)
&=
 \frac{1}{M} \sum_{m=1}^{M} \sum_{k=1}^K \theta_k 
\Bigg(
\left[ \frac{\delta \pi_0(d |\, R^m_{1:k-1}, q)}{\delta \pi_0} \right] 
\\
& \quad \quad \, + 
\sum_{x=1}^{k-1} \frac{\pi_0(d |\, R^m_{1:k-1}, q)}{\pi_0(R^m_x  |\, R^m_{1:x-1} , q)}\left[ \frac{\delta\pi_0(R^m_x  |\, R^m_{1:x-1} , q)}{\delta \pi_0} \right]
\Bigg),
\end{split}
\end{equation}
since $\frac{\delta \rho(d |\, q)}{\delta \pi_0} \approx \widehat{\rho\text{-grad}}(d, q)$.
In turn, we can use this to approximate the second part of Eq.~\ref{eq:gradient}:
\begin{equation}
\widehat{\text{error-grad}}(c) =
2\bigg(\Delta - \sum_{d : c(d) = 1}\frac{\lambda_d}{\rho_d}\bigg) \sum_{d : c(d) = 1}\frac{\lambda_d}{\rho_d^2} \widehat{\rho\text{-grad}}(d).
\end{equation}
We approximate the first part of Eq.~\ref{eq:gradient} with:
\begin{equation}
\begin{split}
\widehat{\text{freq-grad}}&(R, c) = 
\\
& \bigg(\Delta - \sum_{d : c(d) = 1}\frac{\lambda_d}{\rho_d}\bigg)^2
 \sum_{x=1}^K
\frac{1}{\pi_0(R_x \mid R_{1:x-1},  q)}
\left[\frac{\delta \pi_0(R_x \mid R_{1:x-1},  q) }{\delta \pi_0} \right].
\end{split}
\end{equation}
Together, they approximate the complete gradient (cf.\ Eq.~\ref{eq:gradient}):
\begin{equation}
\frac{\delta \text{Var}(\hat{\Delta}_{IPS}^{\pi_0} \mid q)}{\delta \pi_0} 
\approx 
\frac{1}{M}\sum_{m=1}^M \widehat{\text{freq-grad}}(R^m, c^m) + \widehat{\text{error-grad}}(c^m).
\label{eq:approxgrad}
\end{equation}
Therefore, we can approximate the gradient of the variance w.r.t.\ a logging policy $\pi_0$, based on rankings sampled from $\pi_0$ and our current estimated click model $\hat{\theta}$, $\hat{\zeta}$, while staying computationally feasible.\footnote{For a more detailed description see Appendix~\ref{sec:appendix:approx}.}

\subsection{Summary}

We have summarized the \ac{LogOpt} method in Algorithm~\ref{alg:logopt}.
The algorithm requires a set of historical interactions $\mathcal{I}$ and two rankers $\pi_1$ and $\pi_2$ to compare.
Then by fitting a click model on $\mathcal{I}$ using an \ac{EM}-procedure (Line~\ref{line:clickmodel}) an estimate of observation bias $\hat{\theta}$ and document relevance $\hat{\zeta}$ is obtained.
Using $\hat{\theta}$, an estimate of the difference in observation probabilities $\hat{\lambda}$ is computed (Line~\ref{line:lambda} and cf.\ Eq~\ref{eq:lambda}), and an estimate of the CTR difference $\hat{\Delta}(\pi_1, \pi_2)$ (Line~\ref{line:delta} and cf.\ Eq~\ref{eq:ipsestimator}).
Then the optimization of a new logging policy $\pi_0$ begins:
A query is sampled from $\mathcal{I}$ (Line~\ref{line:query}), and for that query $M$ rankings are sampled from the current $\pi_0$ (Line~\ref{line:rankings}), then for each ranking a click pattern is sampled using $\hat{\theta}$ and $\hat{\zeta}$ (Line~\ref{line:clicks}).
Finally, using the sampled rankings and clicks, $\hat{\theta}$, $\hat{\lambda}$, and $\hat{\Delta}(\pi_1, \pi_2)$, the gradient is now approximated using Eq.~\ref{eq:approxgrad} (Line~\ref{line:grad}) and the policy $\pi_0$ is updated accordingly (Line~\ref{line:update}).
This process can be repeated for a fixed number of steps, or until the policy has converged.

This concludes our introduction of \ac{LogOpt}: the first method that optimizes the logging policy for faster convergence in counterfactual evaluation.
We argue that \ac{LogOpt} turns counterfactual evaluation into online evaluation, because it instructs which rankings should be displayed for the most efficient evaluation.
The ability to make interventions like this is the defining characteristic of an online evaluation method.

\begin{algorithm}[t]
\caption{\acf{LogOpt}} 
\label{alg:logopt}
\begin{algorithmic}[1]
\STATE \textbf{Input}: Historical interactions: $\mathcal{I}$; rankers to compare $\pi_1, \pi_2$.
\STATE $\hat{\theta}, \hat{\zeta} \leftarrow \text{infer\_click\_model}(\mathcal{I})$ \hfill \textit{\small // estimate bias using EM} \label{line:clickmodel}
\STATE $\hat{\lambda} \leftarrow \text{estimated\_observance}(\hat{\theta}, \pi_1, \pi_2)$ \hfill \textit{\small // estimate $\lambda$ cf.\ Eq~\ref{eq:lambda}} \label{line:lambda}
\STATE $\hat{\Delta}(\pi_1, \pi_2) \leftarrow \text{estimated\_CTR}(\mathcal{I}, \hat{\lambda}, \hat{\theta})$ \hfill \textit{\small // CTR diff. cf.\ Eq~\ref{eq:ipsestimator}} \label{line:delta}
\STATE $\pi_0 \leftarrow \text{init\_policy}()$ \hfill \textit{\small // initialize logging policy} 
\FOR{$j \in \{1,2,\ldots\}$}
\STATE $q \sim P(q\mid \mathcal{I})$ \hfill \textit{\small // sample a query from interactions} \label{line:query}
\STATE $\mathcal{R} \leftarrow \{R^1, R^2, \ldots, R^M\} \sim \pi_0(R \mid q)$ \hfill \textit{\small // sample $M$ rankings} \label{line:rankings}
\STATE $\mathcal{C} \leftarrow \{c^1, c^2, \ldots, c^M\} \sim P(c \mid \mathcal{R})$ \hfill \textit{\small // sample $M$ click patterns}  \label{line:clicks}
\STATE $\hat{\delta} \leftarrow \text{approx\_grad}(\mathcal{R}, \mathcal{C}, \hat{\lambda}, \hat{\theta}, \hat{\Delta}(\pi_1, \pi_2))$  \label{line:grad}
\hfill \textit{\small // using Eq.~\ref{eq:approxgrad}}
\STATE $\pi_0 \leftarrow \text{update}(\pi_0, \hat{\delta})$  \label{line:update}
\hfill \textit{\small // update using approx. gradient}
\ENDFOR
\RETURN $\pi_0$ 
\end{algorithmic}
\end{algorithm}

\section{Experimental Setup}

We ran semi-synthetic experiments that are prevalent in online and counterfactual evaluation~\citep{oosterhuis2020topkrankings, hofmann2011probabilistic, joachims2017unbiased}.
User-issued queries are simulated by sampling from learning to rank datasets; each dataset contains a preselected set of documents per query.
We use the Yahoo!\ Webscope~\citep{Chapelle2011} and MSLR-WEB30k~\citep{qin2013introducing} datasets; they both contain 5-grade relevance judgements for all preselected query-document pairs.
For each sampled query, we let the evaluation method decide which ranking to display and then simulate clicks on them using probabilistic click models.

To simulate position bias, we use the rank-based probabilities of \citet{joachims2017unbiased}:
\begin{equation}
P(o(d) = 1 \mid R, q) = \frac{1}{\text{rank}(d \mid R)}.
\end{equation}
If observed, the click probability is determined by the relevance label of the dataset (ranging from 0 to 4).
More relevant items are more likely to be clicked, yet non-relevant documents still have a non-zero click probability:
\begin{equation}
P(c(d) = 1 \mid o(d) = 1, q) =  0.225 \cdot \text{relevance\_label}(q,d) + 0.1.
\end{equation}
Spread over both datasets, we generated {2,000} rankers and created {1,000} ranker-pairs.
We aimed to generate rankers that are likely to be compared in real-world scenarios; unfortunately, no simple distribution of such rankers is available.
Therefore, we tried to generate rankers that have (at least) a decent \ac{CTR} and that span a variety of ranking behaviors.
Each ranker was optimized using LambdaLoss~\citep{wang2018lambdaloss} based on the labelled data of 100 sampled queries; each ranker is based on a linear model that only uses a random sample of 50\% of the dataset features.
Figure~\ref{fig:ctrdistribution} displays the resulting \ac{CTR} distribution; it appears to follow a normal distribution, on both datasets.

For each ranker-pair and method, we sample $3 \cdot 10^6$ queries and calculate their \ac{CTR} estimates for different numbers of queries.
We considered three metrics:
\begin{enumerate*}[label=(\roman*)]
\item The binary error: whether the estimate correctly predicts which ranker should be preferred.
\item The absolute error: the absolute difference between the estimate and the true $\mathbbm{E}[\text{CTR}]$ difference:
\end{enumerate*}
\begin{equation}
\text{absolute-error} = |\Delta(\pi_1, \pi_2) - \hat{\Delta}(\mathcal{I})|.
\end{equation}
And
\begin{enumerate*}[label=(\roman*),resume]
\item the mean squared error: the squared error \emph{per sample} (not the final estimate); if the estimator is unbiased this is equivalent to the empirical variance:
\end{enumerate*}
\begin{equation}
\text{mean-squared-error} = \frac{1}{N}\sum_{i=1}^N (\Delta(\pi_1, \pi_2) - x_i)^2.
\end{equation}
We compare \ac{LogOpt} with the following baselines:
\begin{enumerate*}[label=(\roman*)]
\item A/B testing (with equal probabilities for each ranker),
\item Team-Draft Interleaving,
\item Probabilistic Interleaving (with $\tau = 4$), and
\item Optimized Interleaving (with the inverse rank scoring function).
\end{enumerate*}
Furthermore, we compare \ac{LogOpt} with other choices of logging policies:
\begin{enumerate*}[label=(\roman*)]
\item uniform sampling,
\item A/B testing: showing either the ranking of A or B with equal probability, and
\item an Oracle logging policy: applying \ac{LogOpt} to the true relevances $\zeta$ and position bias $\theta$.
\end{enumerate*}
We also consider \ac{LogOpt} both in the case where $\theta$ is known \emph{a priori}, or where it has to be estimated still.
Because estimating $\theta$ and optimizing the logging policy $\pi_0$ is time-consuming, we only update $\hat{\theta}$ and $\pi_0$ after $10^3$, $10^4$, $10^5$ and $10^6$ queries.
The policy \ac{LogOpt} optimizes uses a neural network with 2 hidden layers consisting of 32 units each.
The network computes a score for every document, then a softmax is applied to the scores to create a distribution over documents.

\begin{figure}[t]
\centering
\begin{tabular}{l l}
 \multicolumn{1}{c}{ \footnotesize \hspace{1.8em} Yahoo Webscope}
&
 \multicolumn{1}{c}{ \footnotesize \hspace{1.8em} MSLR Web30k}
\\
\includegraphics[scale=0.42]{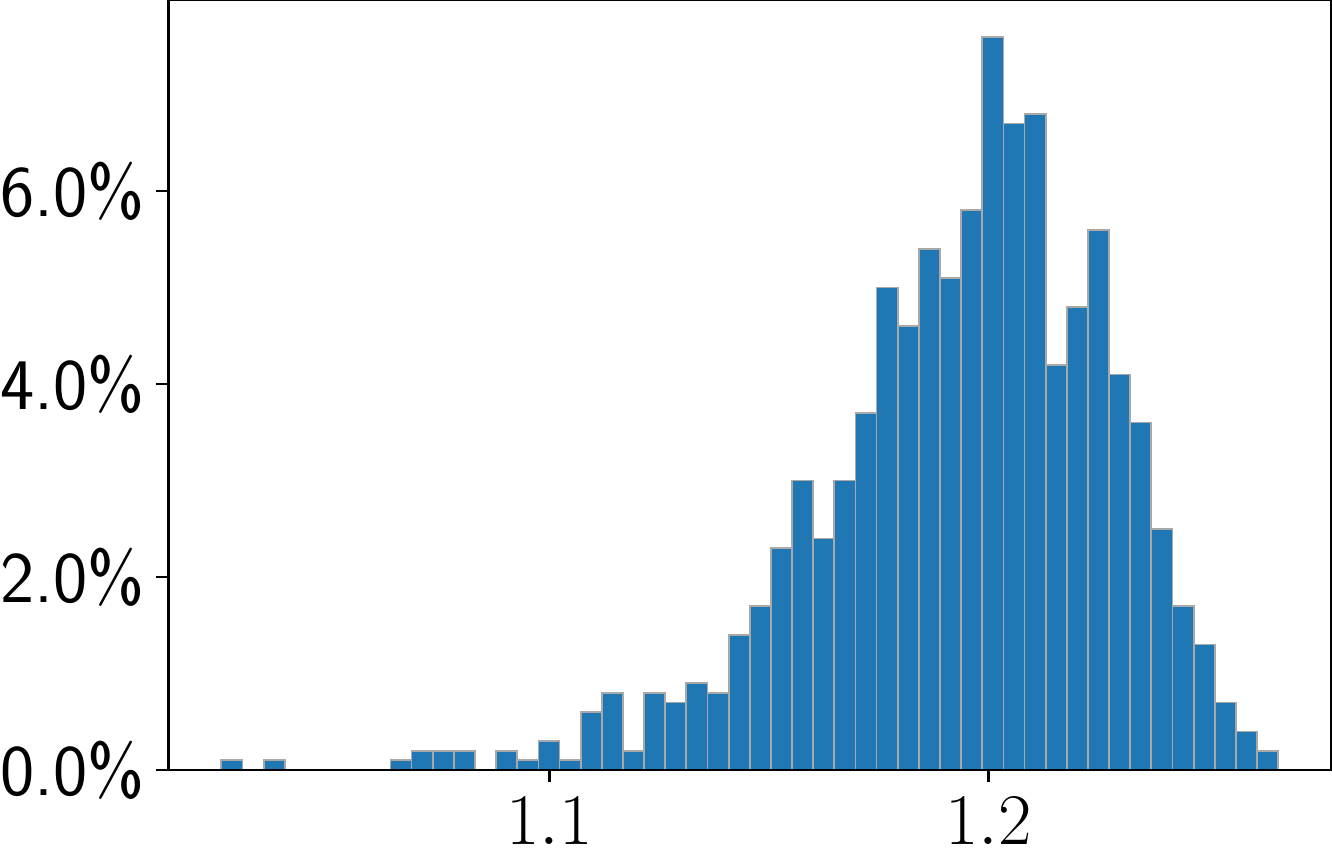} &
\includegraphics[scale=0.42]{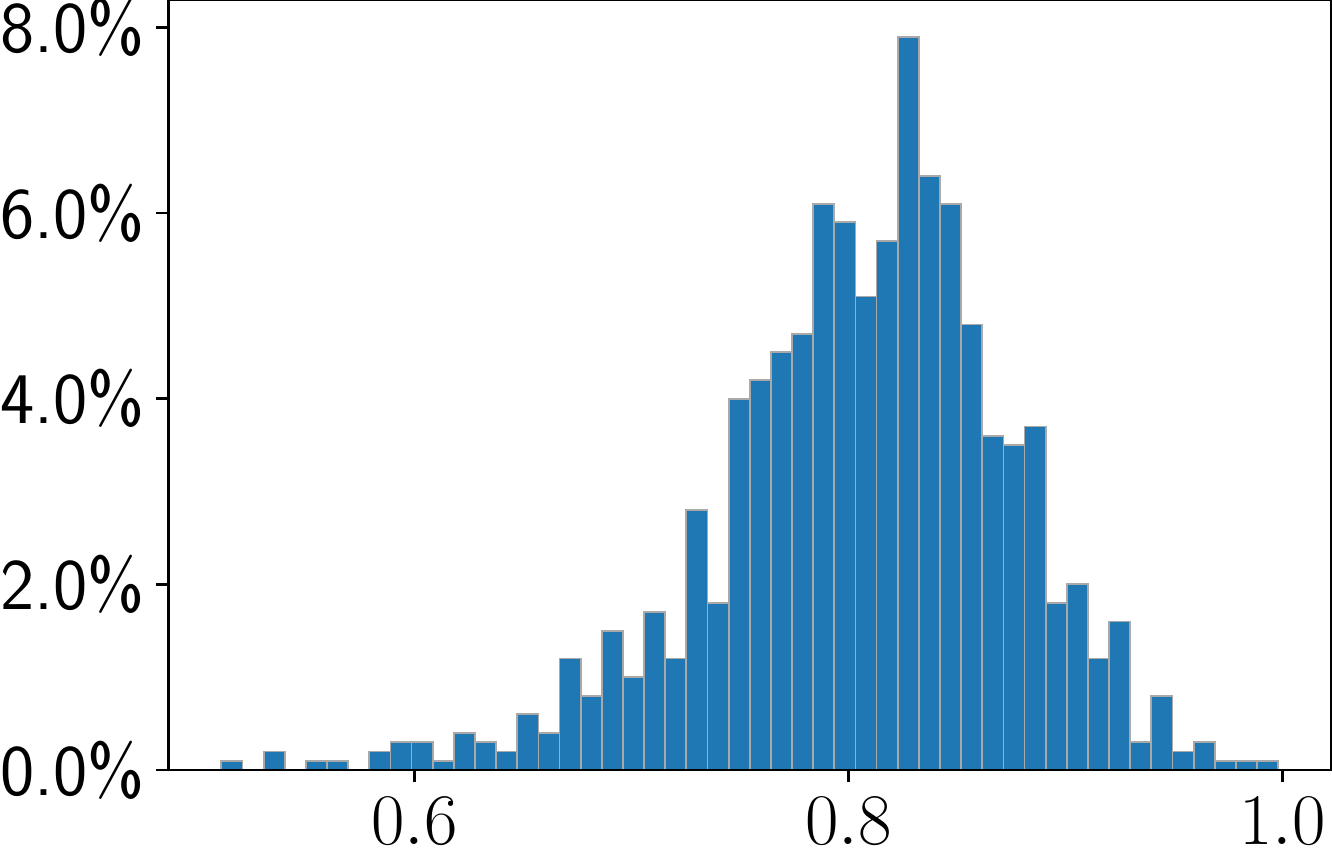} 
\end{tabular}
\caption{
The \ac{CTR} distribution of the 2000 generated rankers, 1000 were generated per dataset.
}
\label{fig:ctrdistribution}
\end{figure}


\begin{figure*}[t]
\centering
\begin{tabular}{c r r}
&
 \multicolumn{1}{c}{\hspace{1.1em}\footnotesize Yahoo!\ Webscope}
&
 \multicolumn{1}{c}{\hspace{1.9em}\footnotesize MSLR-Web30k}
 \\
\rotatebox[origin=lt]{90}{\hspace{1.6em} \footnotesize Binary Error} &
\includegraphics[scale=0.4]{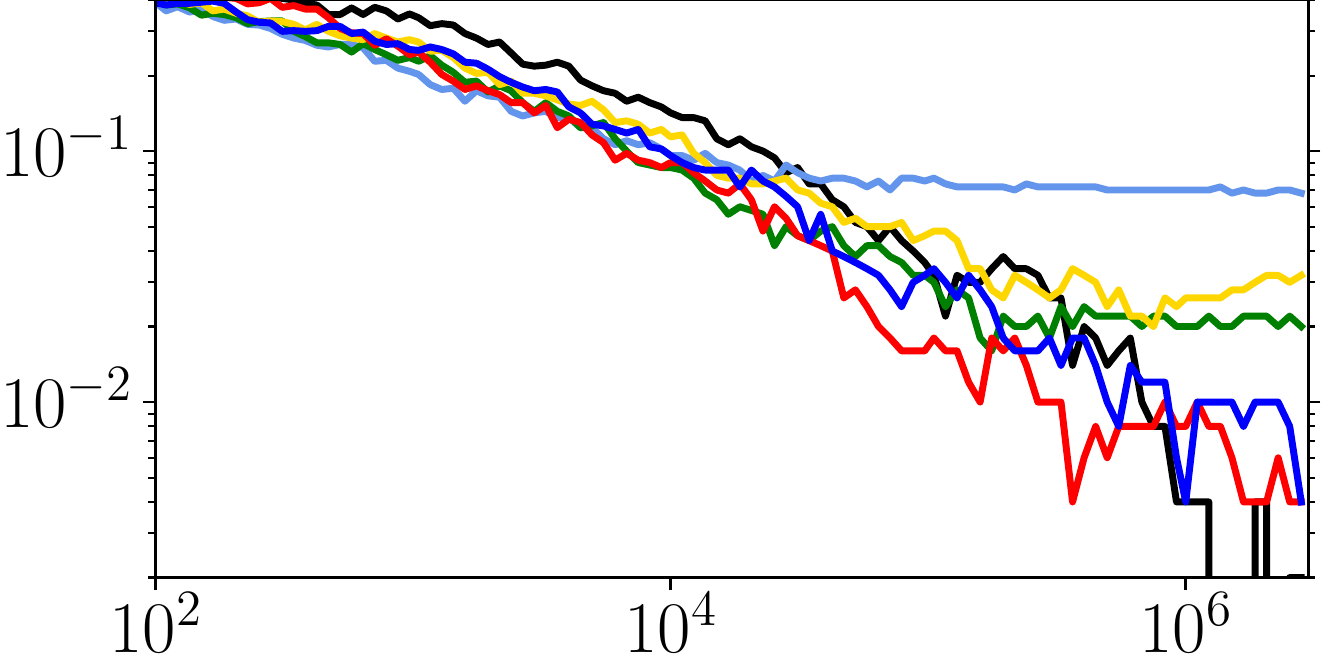} &
\includegraphics[scale=0.4]{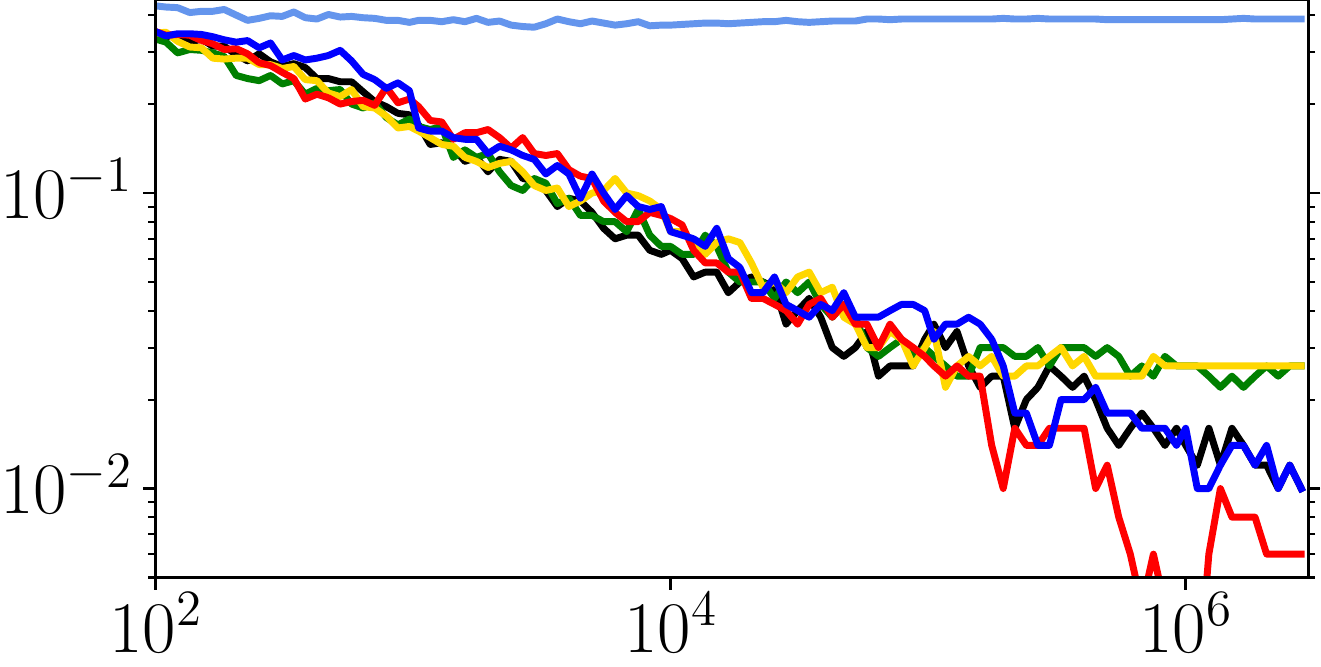}
\\
\rotatebox[origin=lt]{90}{\hspace{1.5em} \footnotesize Absolute Error} &
\includegraphics[scale=0.4]{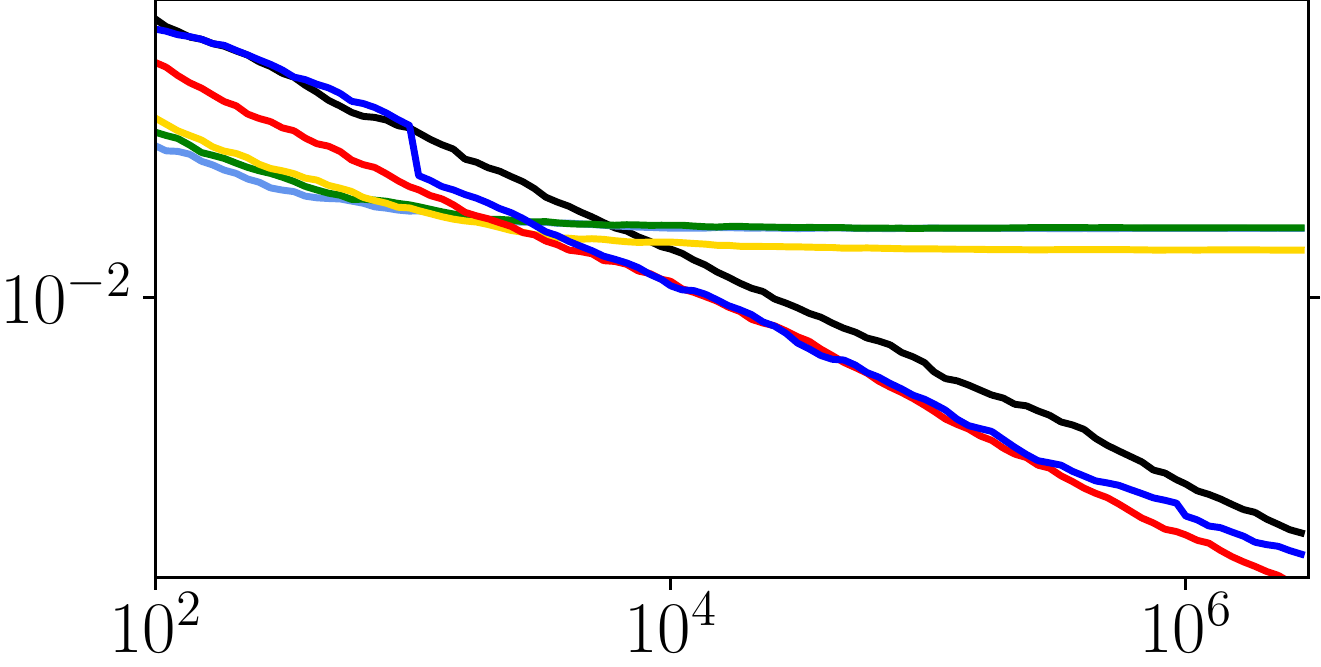} &
\includegraphics[scale=0.4]{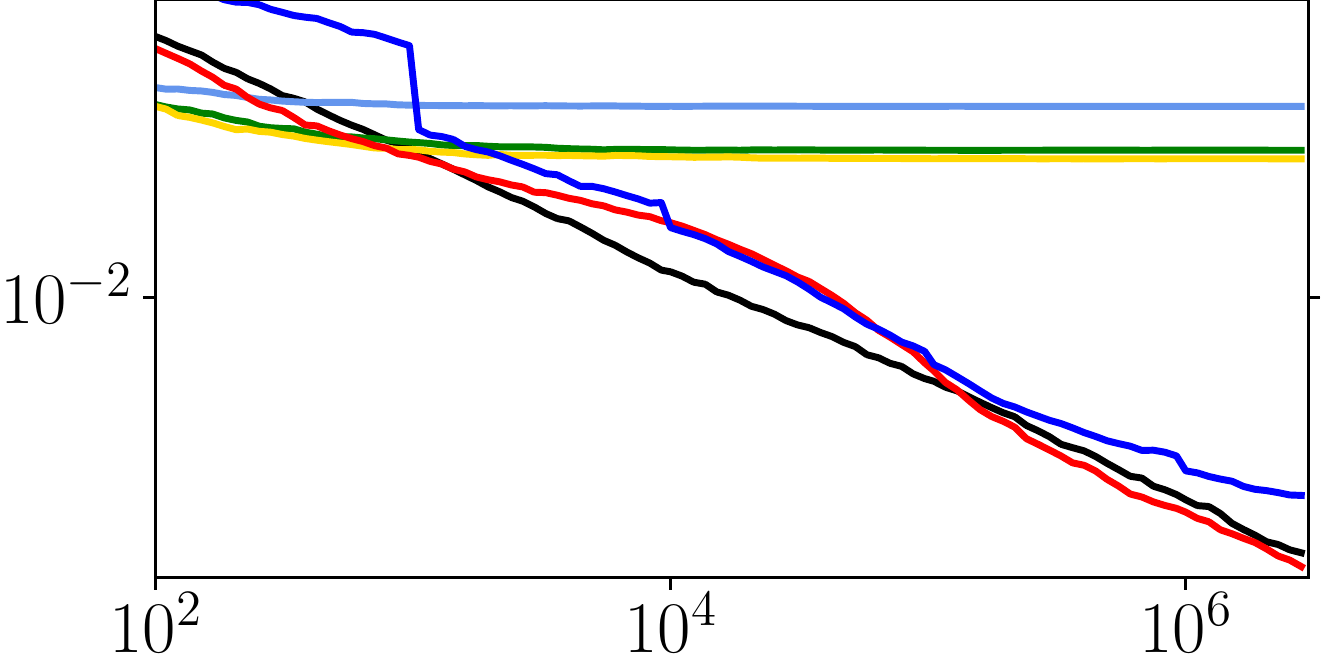} 
\\
\rotatebox[origin=lt]{90}{\hspace{0.1em} \footnotesize Mean Squared Error} &
\includegraphics[scale=0.4]{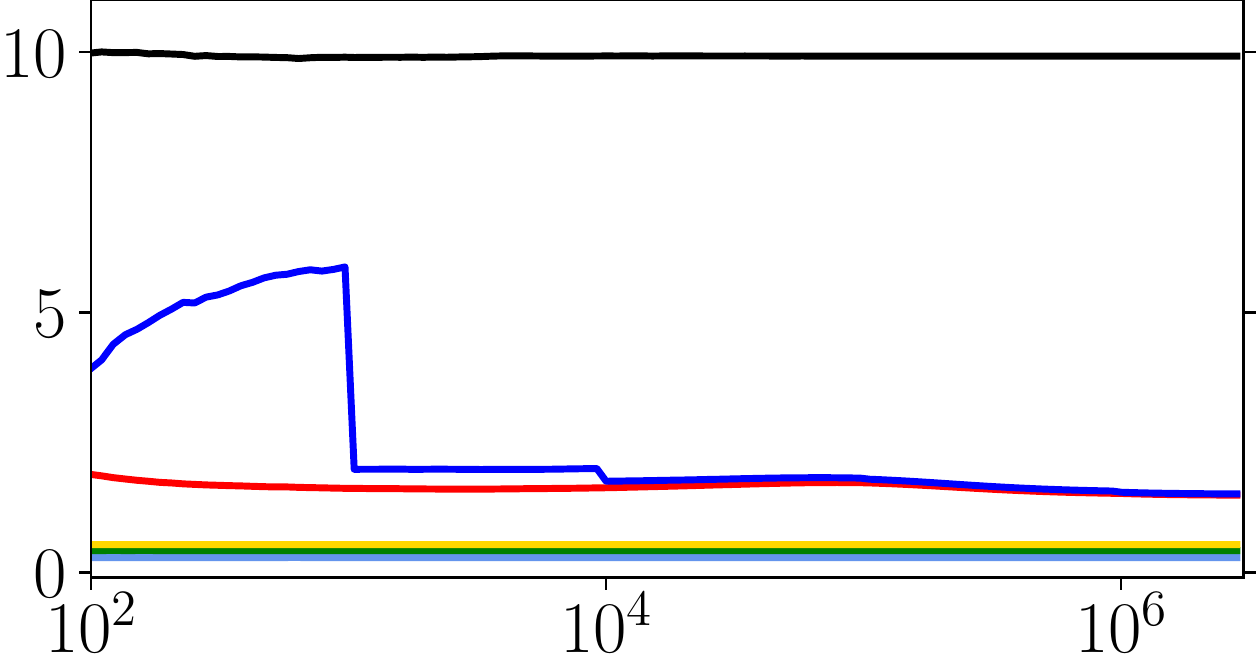} &
\includegraphics[scale=0.4]{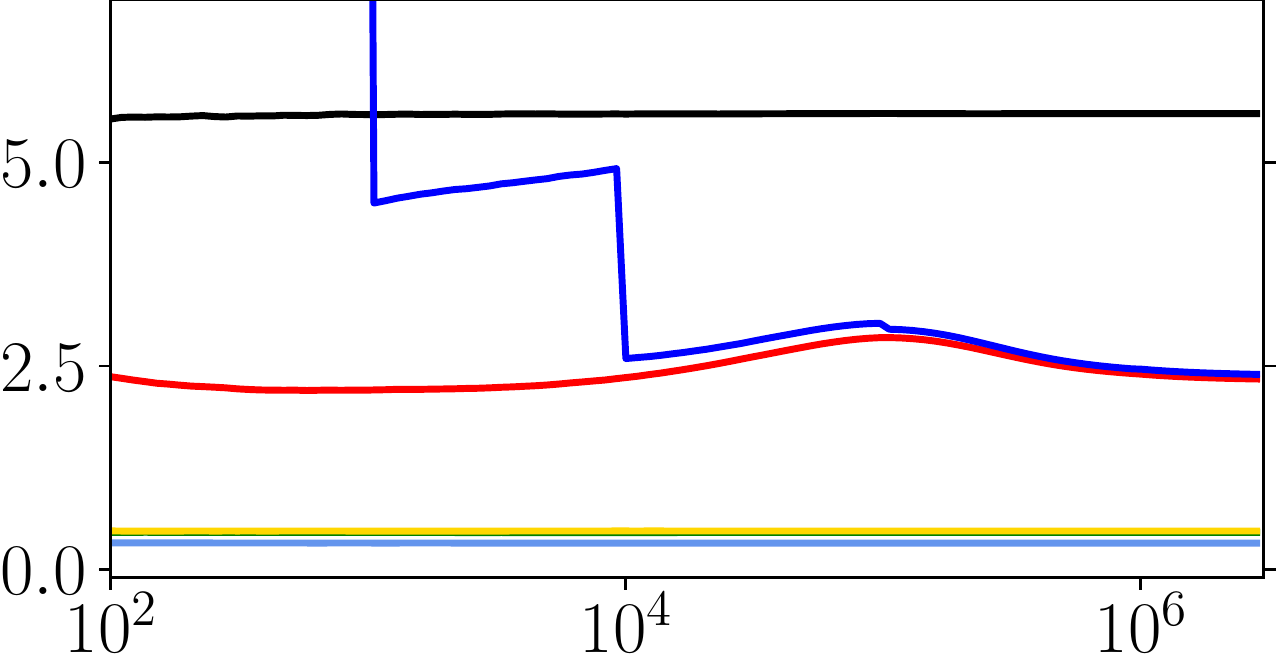}
\\
& \multicolumn{1}{c}{\hspace{1.4em} \footnotesize Number of Queries Issued}
& \multicolumn{1}{c}{\hspace{1.4em} \footnotesize Number of Queries Issued}
\\
 \multicolumn{3}{c}{
 \includegraphics[scale=.42]{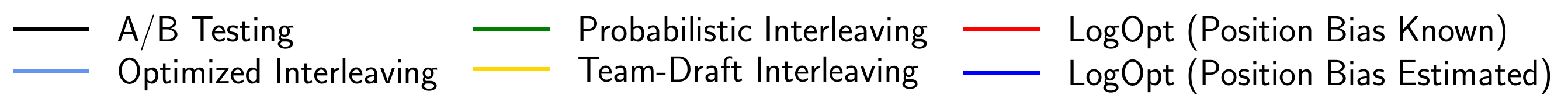}
} 
\end{tabular}
\caption{
Comparison of \ac{LogOpt} with other online methods; displayed results are an average over 500 comparisons.
}
\label{fig:baselines}
\end{figure*}

\begin{figure*}[t]
\centering
\begin{tabular}{c r r }
&
 \multicolumn{1}{c}{\hspace{1.1em}\footnotesize Yahoo!\ Webscope}
&
 \multicolumn{1}{c}{\hspace{1.9em}\footnotesize MSLR-Web30k}
 \\
\rotatebox[origin=lt]{90}{\hspace{1.6em} \footnotesize Binary Error} &
\includegraphics[scale=0.4]{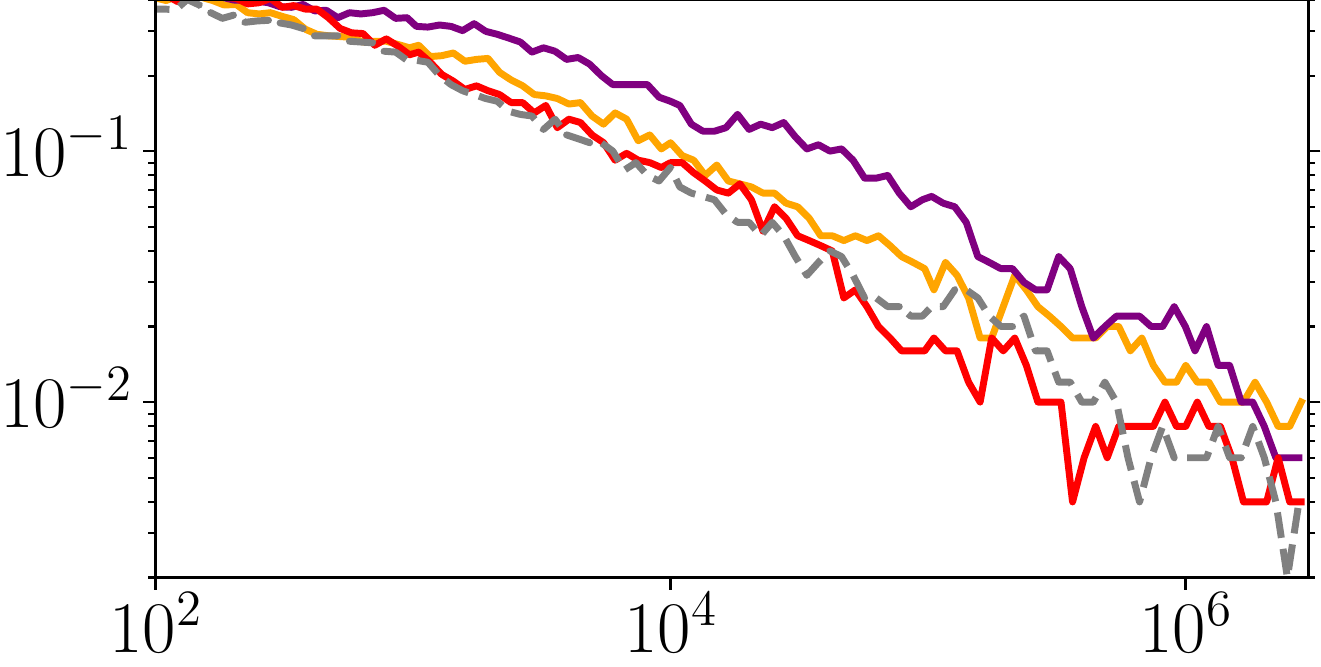} &
\includegraphics[scale=0.4]{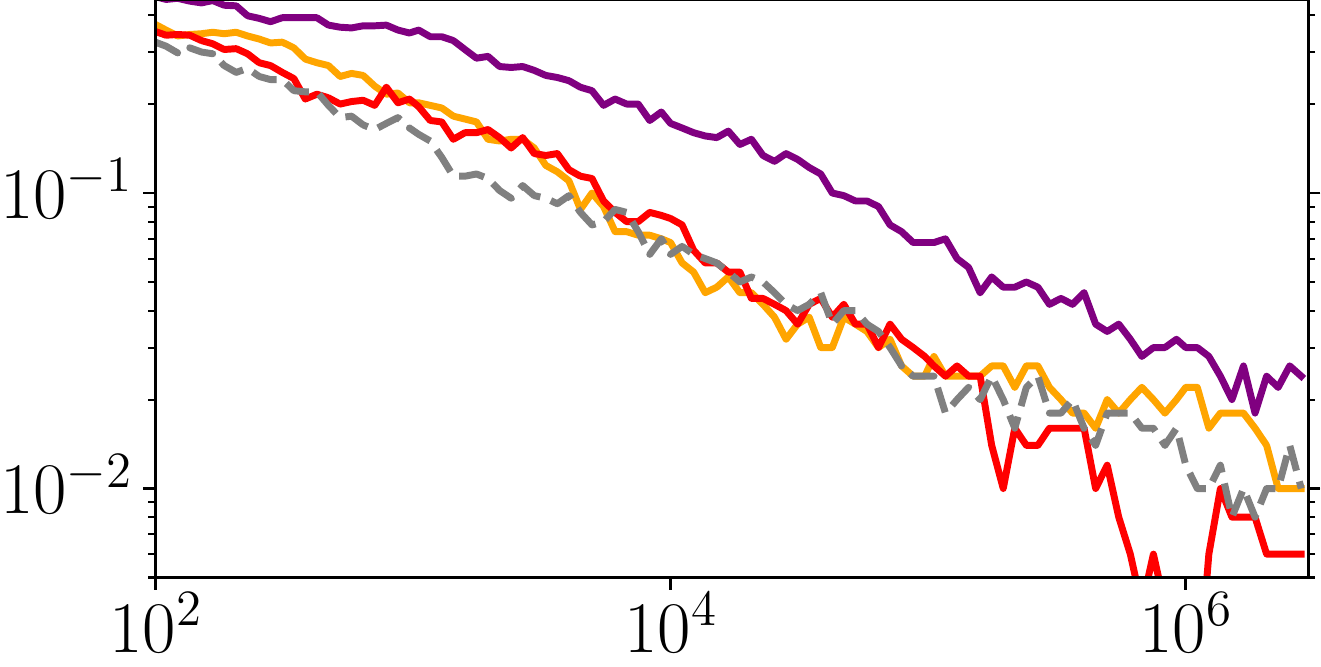}
\\
\rotatebox[origin=lt]{90}{\hspace{1.5em} \footnotesize Absolute Error} &
\includegraphics[scale=0.4]{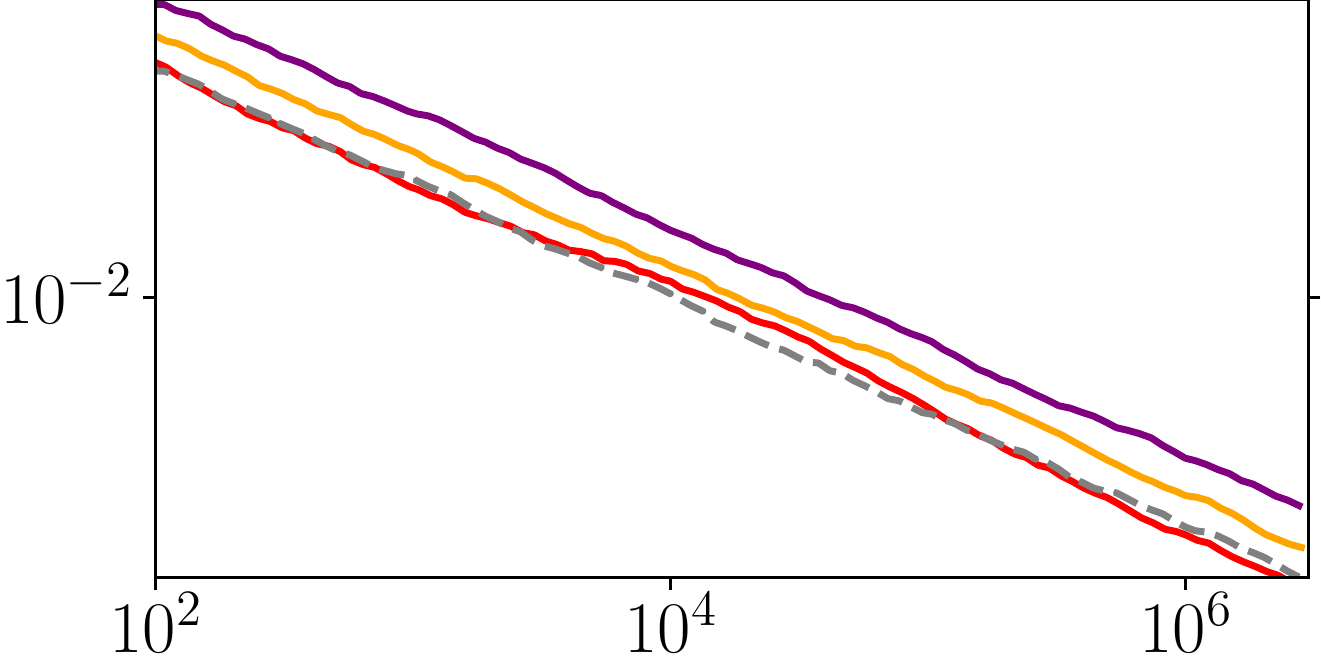} &
\includegraphics[scale=0.4]{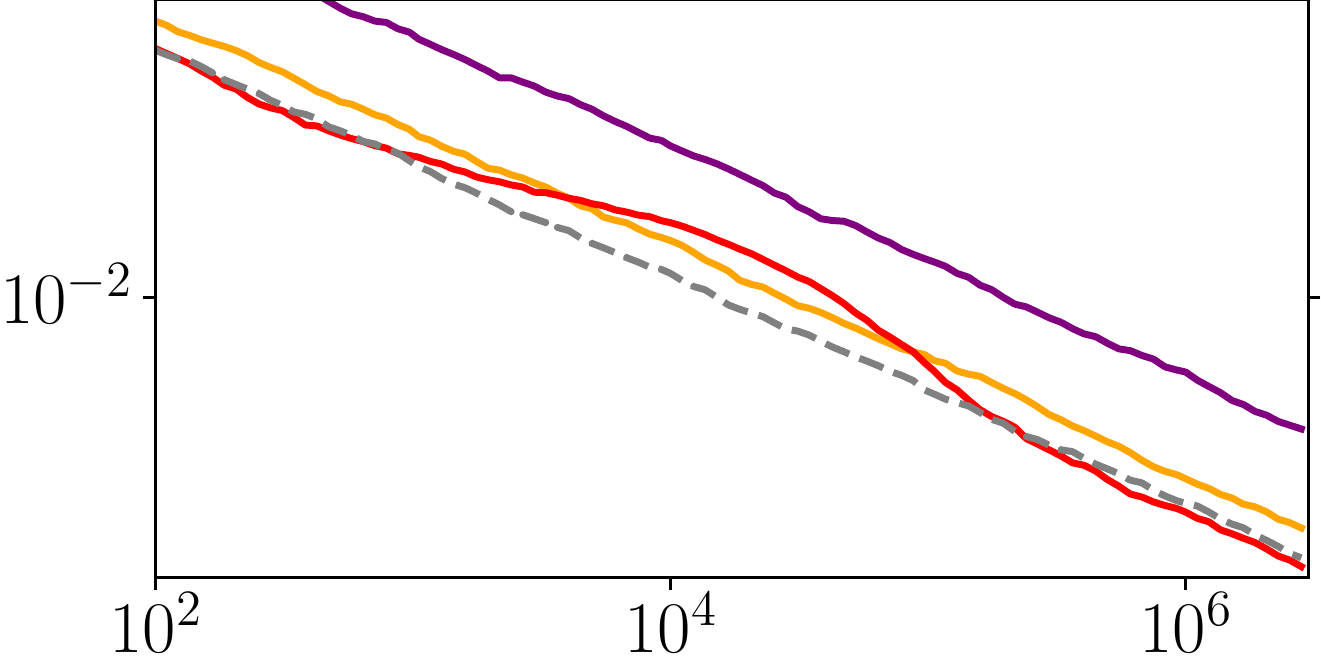}
\\
\rotatebox[origin=lt]{90}{\hspace{0.1em} \footnotesize Mean Squared Error} &
\includegraphics[scale=0.4]{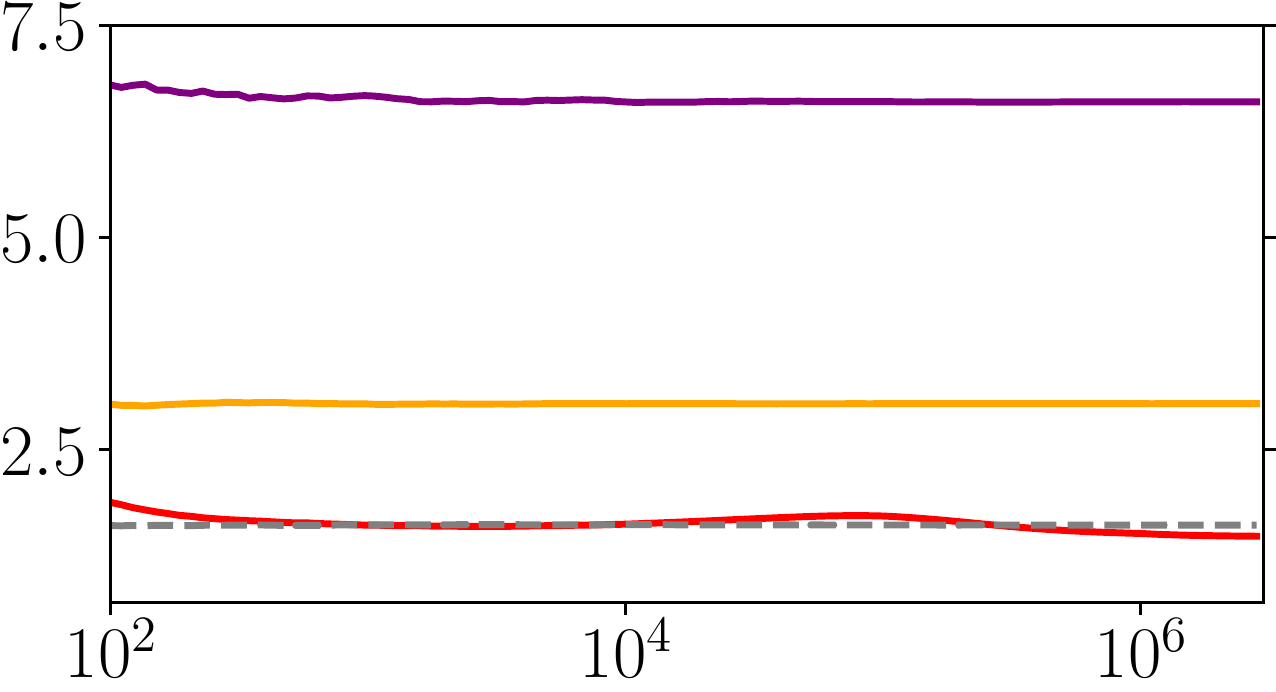} &
\includegraphics[scale=0.4]{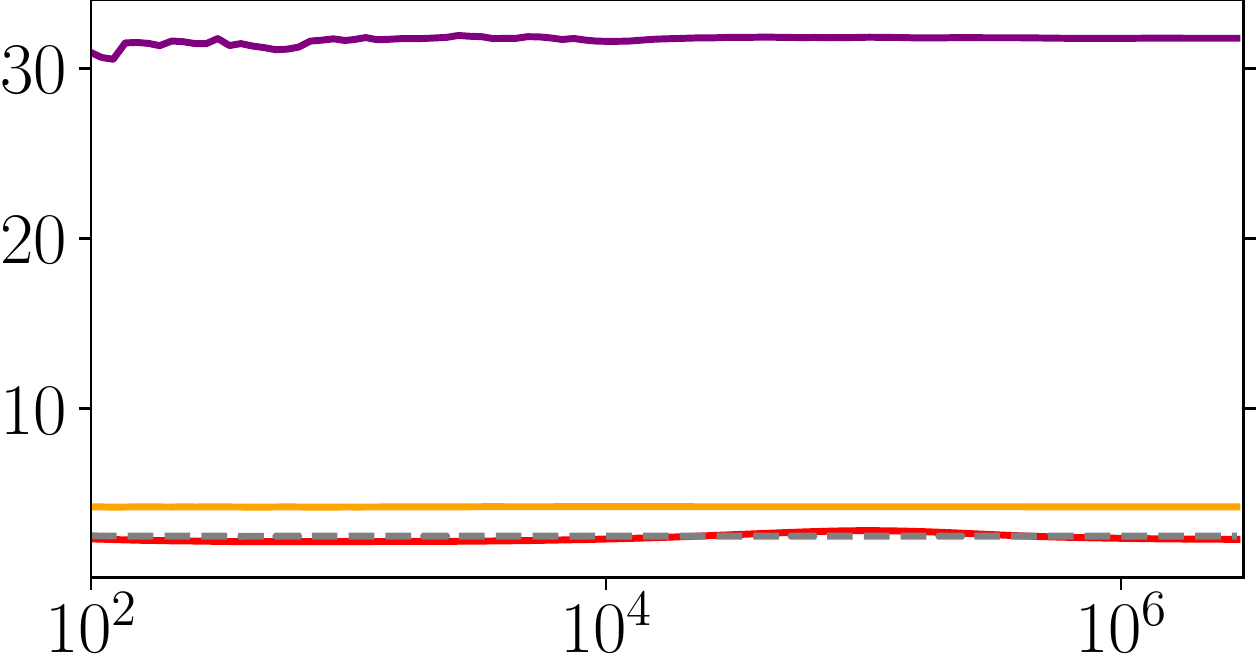} 
\\
& \multicolumn{1}{c}{\hspace{1.4em} \footnotesize Number of Queries Issued}
& \multicolumn{1}{c}{\hspace{1.4em} \footnotesize Number of Queries Issued}
 \vspace{0.5em}
\\
 \multicolumn{3}{c}{
 \includegraphics[scale=.4]{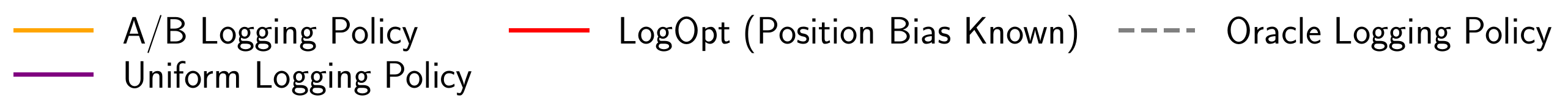}
} 
\end{tabular}
\caption{
Comparison of logging policies for counterfactual evaluation; displayed results are an average over 500 comparisons.
}
\label{fig:loggingpolicy}
\end{figure*}

\section{Results}
Our results are displayed in Figures~\ref{fig:baselines},~\ref{fig:loggingpolicy},~and~\ref{fig:errordistribution}.
Figure~\ref{fig:baselines} shows the results comparing \ac{LogOpt} with other online evaluation methods;
Figure~\ref{fig:loggingpolicy} compares \ac{LogOpt} with counterfactual evaluation using other logging policies;
and finally, Figure~\ref{fig:errordistribution} shows the distribution of binary errors for each method after $3 \cdot 10^6$ sampled queries.

\subsection{Performance of \acs{LogOpt}}

In Figure~\ref{fig:baselines} we see that, unlike interleaving methods, counterfactual evaluation with \ac{LogOpt} continues to decrease both its binary error and its absolute error as the number of queries increases.
While interleaving methods converge at a binary error of at least 2.2\% and an absolute error greater than $0.01$, \ac{LogOpt} appears to converge towards zero errors for both.
This is expected as \ac{LogOpt} is proven to be unbiased when the position bias is known. Interestingly, we see similar behavior from \ac{LogOpt} with estimated position bias.
Both when bias is known or estimated, \ac{LogOpt} has a lower error than the interleaving methods after $2 \cdot 10^3$ queries.
Thus we conclude that interleaving methods converge faster and have an initial period where their error is lower, but are biased.
In contrast, by being unbiased, \ac{LogOpt} converges on a lower error eventually. 

If we use Figure~\ref{fig:baselines} to compare \ac{LogOpt} with A/B testing, we see that on both datasets \ac{LogOpt} has a considerably smaller mean squared error.
Since both methods are unbiased, this means that \ac{LogOpt} has a much lower variance and thus is expected to converge faster.
On the Yahoo dataset we observe this behavior, both in terms of binary error and absolute error and regardless of whether the bias is estimated, \ac{LogOpt} requires half as much data as A/B testing to reach the same level or error.
Thus, on Yahoo \ac{LogOpt} is roughly twice as data-efficient as A/B testing.
On the MSLR dataset it is less clear whether \ac{LogOpt} is noticeably more efficient: after $10^4$ queries the absolute error of \ac{LogOpt} is twice as high, but after $10^5$ queries it has a lower error than A/B testing.
We suspect that the relative drop in performance around $10^4$ queries is due to \ac{LogOpt} overfitting on incorrect $\hat{\zeta}$ values, however, we were unable to confirm this.
Hence, \ac{LogOpt} is just as efficient as, or even more efficient than, A/B testing, depending on the circumstances.

Finally, when we use Figure~\ref{fig:loggingpolicy} to compare \ac{LogOpt} with other logging policy choices, we see that \ac{LogOpt} mostly approximates the optimal Oracle logging policy.
In contrast, the uniform logging policy is very data-inefficient; on both datasets it requires around ten times the number of queries to reach the same level or error as \ac{LogOpt}.
The A/B logging policy is a better choice than the uniform logging policy, but apart from the dip in performance on the MSLR dataset, it appears to require twice as many queries as \ac{LogOpt}.
Interestingly, the performance of \ac{LogOpt} is already near the Oracle when only $10^2$ queries have been issued.
With such a small number of interactions, accurately estimating the relevances $\zeta$ should not be possible, thus it appears that in order for \ac{LogOpt} to find an efficient logging policy the relevances $\zeta$ are not important.
This must mean that only the differences in behavior between the rankers (i.e., $\lambda$) have to be known for \ac{LogOpt} to be efficient.
Overall, these results show that \ac{LogOpt} can greatly increase the efficiency of counterfactual estimation.

\subsection{Bias of interleaving}
\label{sec:biasinterleaving}
Our results in Figure~\ref{fig:baselines} clearly illustrate the bias of interleaving methods: each of them systematically infers incorrect preferences in (at least) 2.2\% of the ranker-pairs.
These errors are systematic since increasing the number of queries from $10^5$ to $3\cdot10^6$ does not remove any of them.
Additionally, the combination of the lowest mean-squared-error with a worse absolute error than A/B testing after $10^4$ queries, indicates that interleaving results in a low variance at the cost of bias.
To better understand when these systematic errors occur, we show the distribution of binary errors w.r.t.\ the \ac{CTR} differences of the associated ranker-pairs in Figure~\ref{fig:errordistribution}.
Here we see that most errors occur on ranker-pairs where the \ac{CTR} difference is smaller than 1\%, and that of all comparisons the percentage of errors greatly increases as the \ac{CTR} difference decreases below 1\%.
This suggests that interleaving methods are unreliable to detect preferences when differences are 1\% \ac{CTR} or less.

It is hard to judge the impact this bias may have in practice.
On the one hand, a 1\% \ac{CTR} difference is far from negligible: generally a 1\% increase in \ac{CTR} is considered an impactful improvement in the industry~\citep{richardson2007predicting}.
On the other hand, our results are based on a single click model with specific values for position bias and conditional click probabilities.
While our results strongly prove that interleaving is biased, we should be careful not to generalize the size of the observed systematic error to all other ranking settings.

Previous work has performed empirical studies to evaluate various interleaving methods with real users.
\citet{chapelle2012large} applied interleaving methods to compare ranking systems for three different search engines, and found team-draft interleaving to highly correlate with absolute measures such as \ac{CTR}.
However, we note that in the study by \citet{chapelle2012large} no more than six rankers were compared, thus such a study would likely miss a systematic error of 2.2\%.
In fact, \citet{chapelle2012large} note themselves that they cannot confidently claim team-draft interleaving is completely unbiased.
\citet{schuth2015predicting} performed a larger comparison involving 38 ranking systems, but again, too small to reliably detect a small systematic error.

It appears that the field is missing a large scale comparison that involves a large enough number of rankers to observe small systematic errors.
If such an error is found, the next step is to identify if certain types of ranking behavior are erroneously and systematically disfavored.
While these questions remain unanswered, we are concerned that the claims of unbiasedness in previous work on interleaving (see Section~\ref{sec:related:interleaving}) give practitioners an unwarranted sense of reliability in interleaving.



\begin{figure}[t]
\centering
\begin{tabular}{c l l}
&
 \multicolumn{1}{c}{\hspace{0.5em}\footnotesize Yahoo!\ Webscope}
&
 \multicolumn{1}{c}{\hspace{1.5em}\footnotesize MSLR-Web30k}
\\
\rotatebox[origin=lt]{90}{\hspace{0.1em} \footnotesize Team-Draft Interleaving}
&
\includegraphics[scale=0.4]{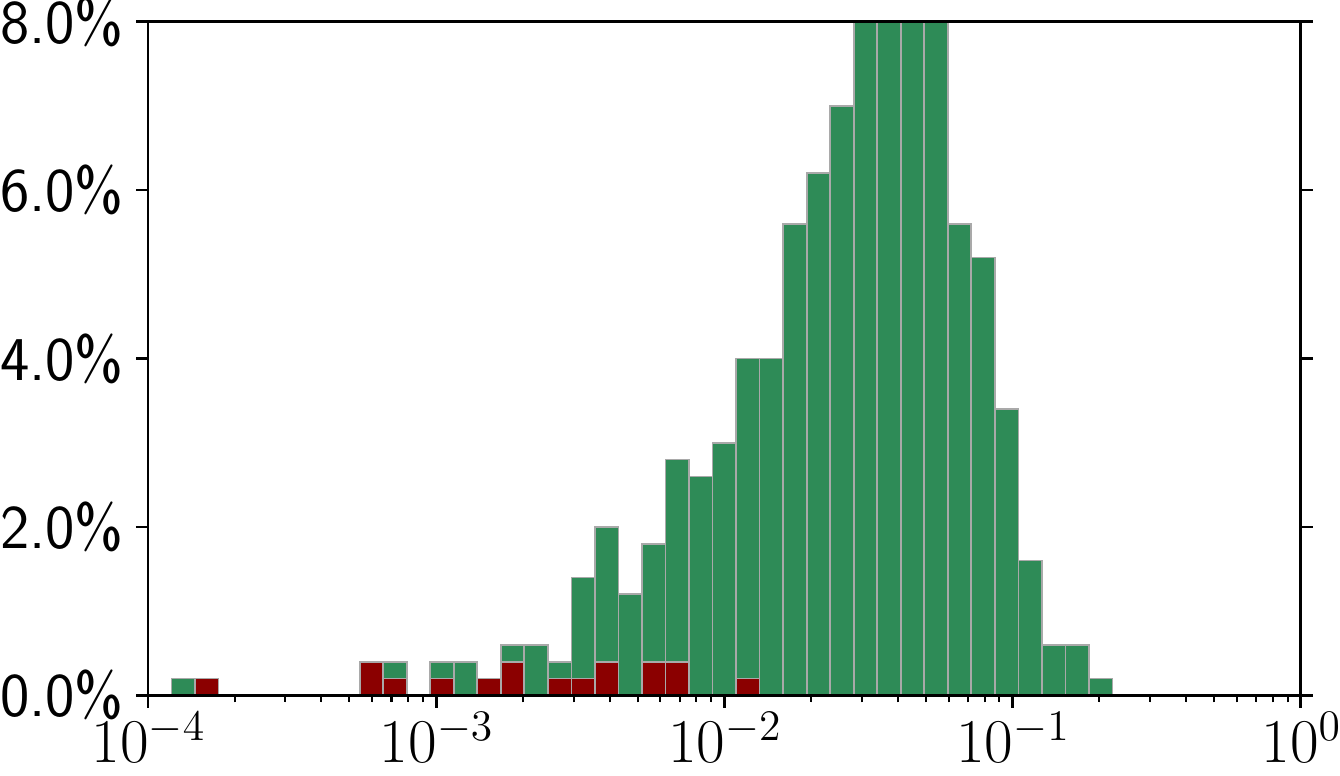} &
\includegraphics[scale=0.4]{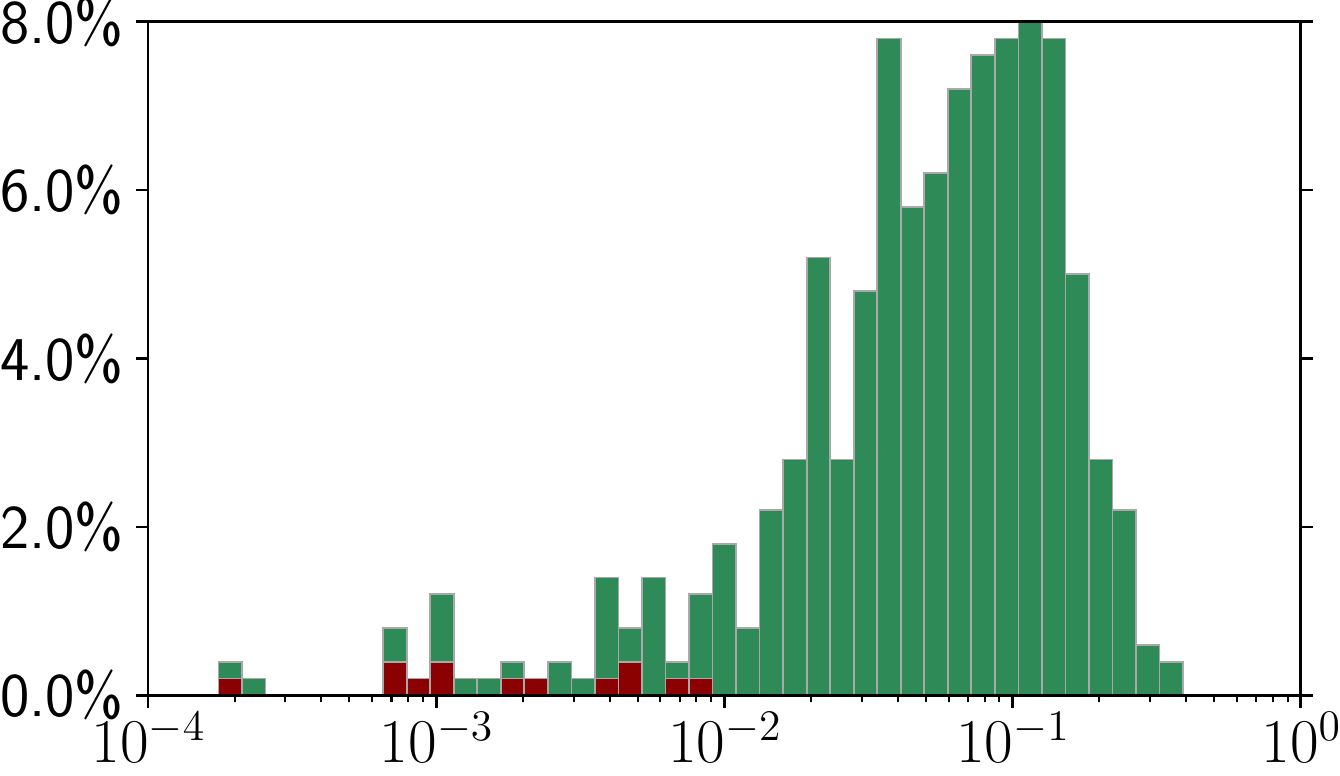} 
\\
\rotatebox[origin=lt]{90}{\hspace{0.0em} \footnotesize  Probabilistic Interleaving}
&
\includegraphics[scale=0.4]{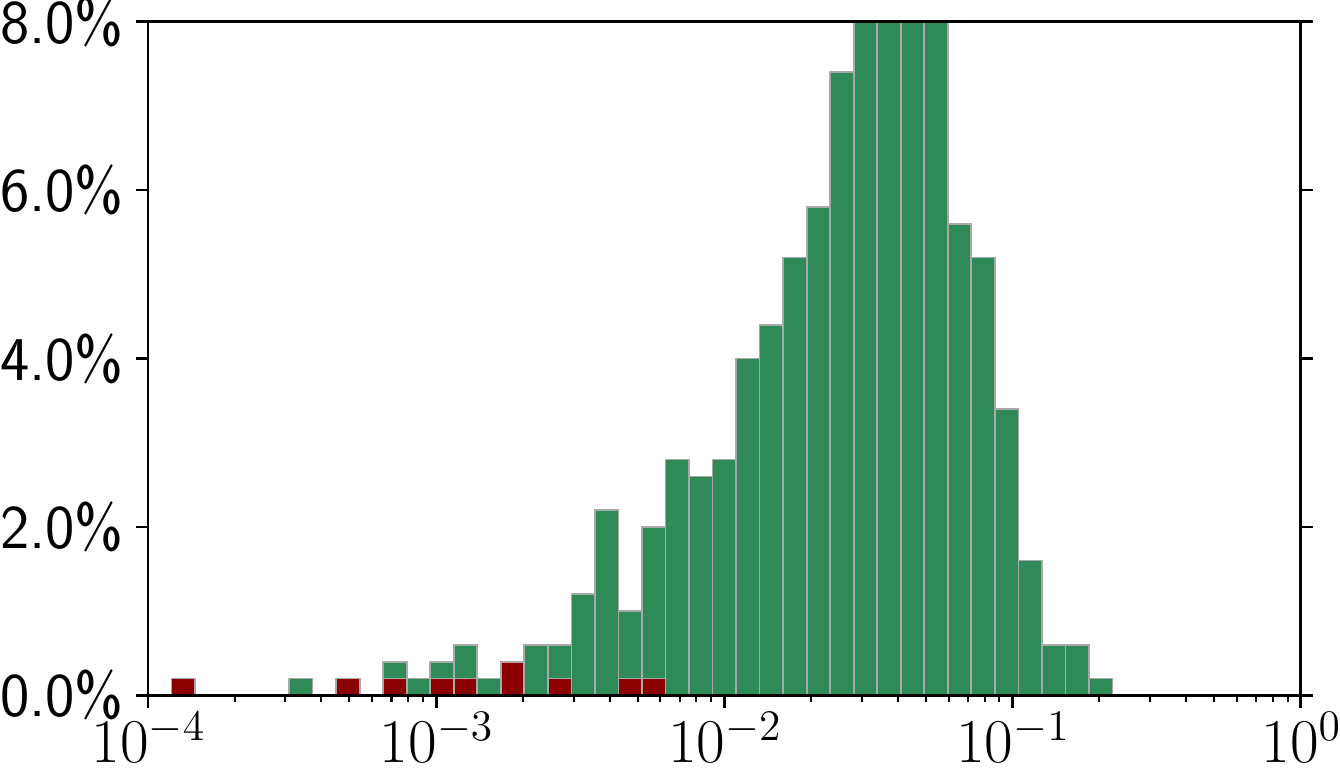} &
\includegraphics[scale=0.4]{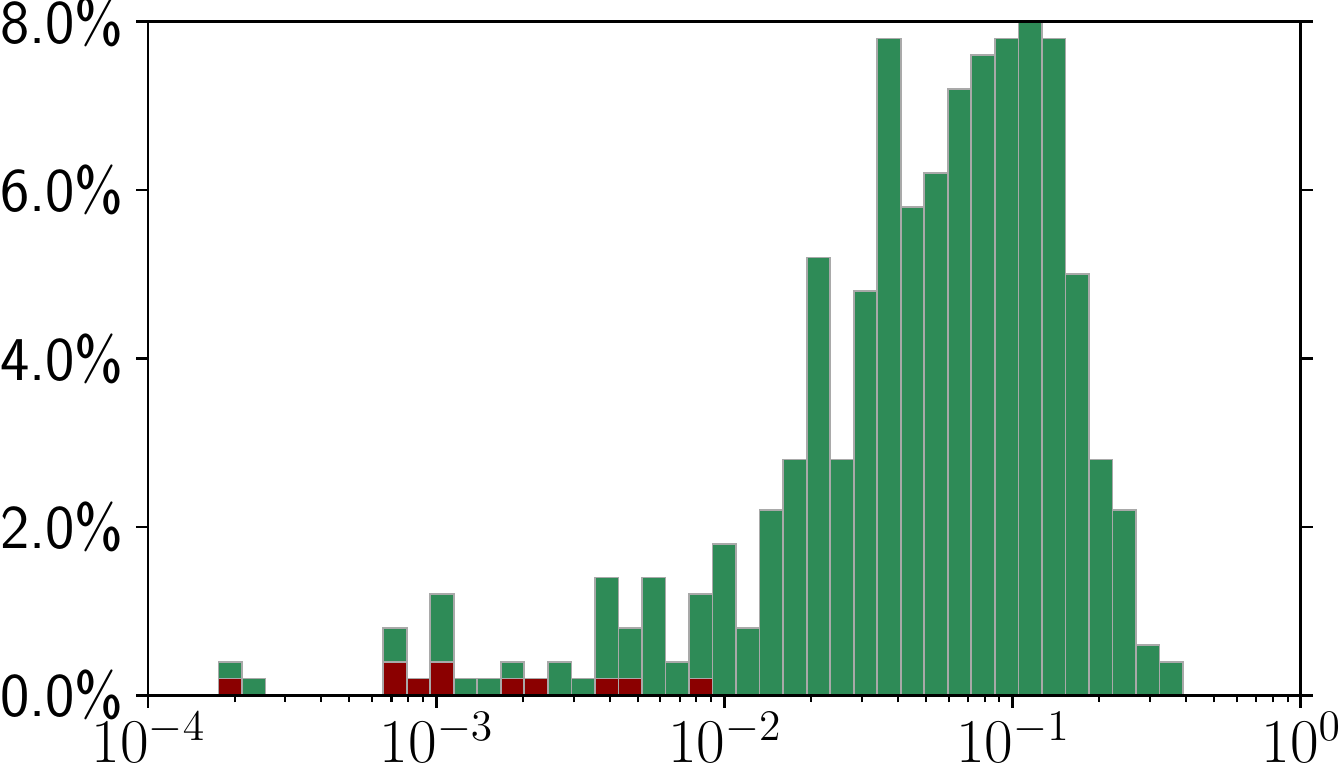} 
\\
\rotatebox[origin=lt]{90}{\hspace{0.4em} \footnotesize  Optimized Interleaving}
&
\includegraphics[scale=0.4]{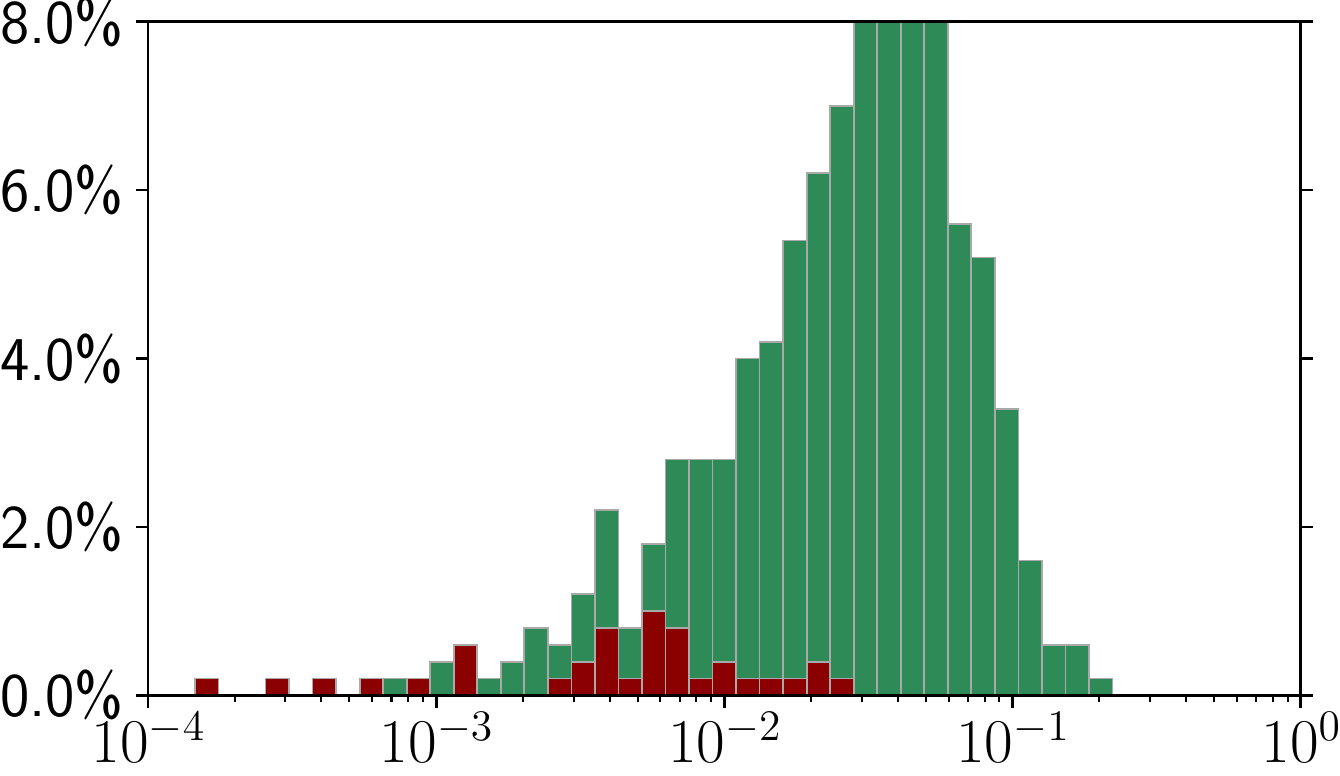} &
\includegraphics[scale=0.4]{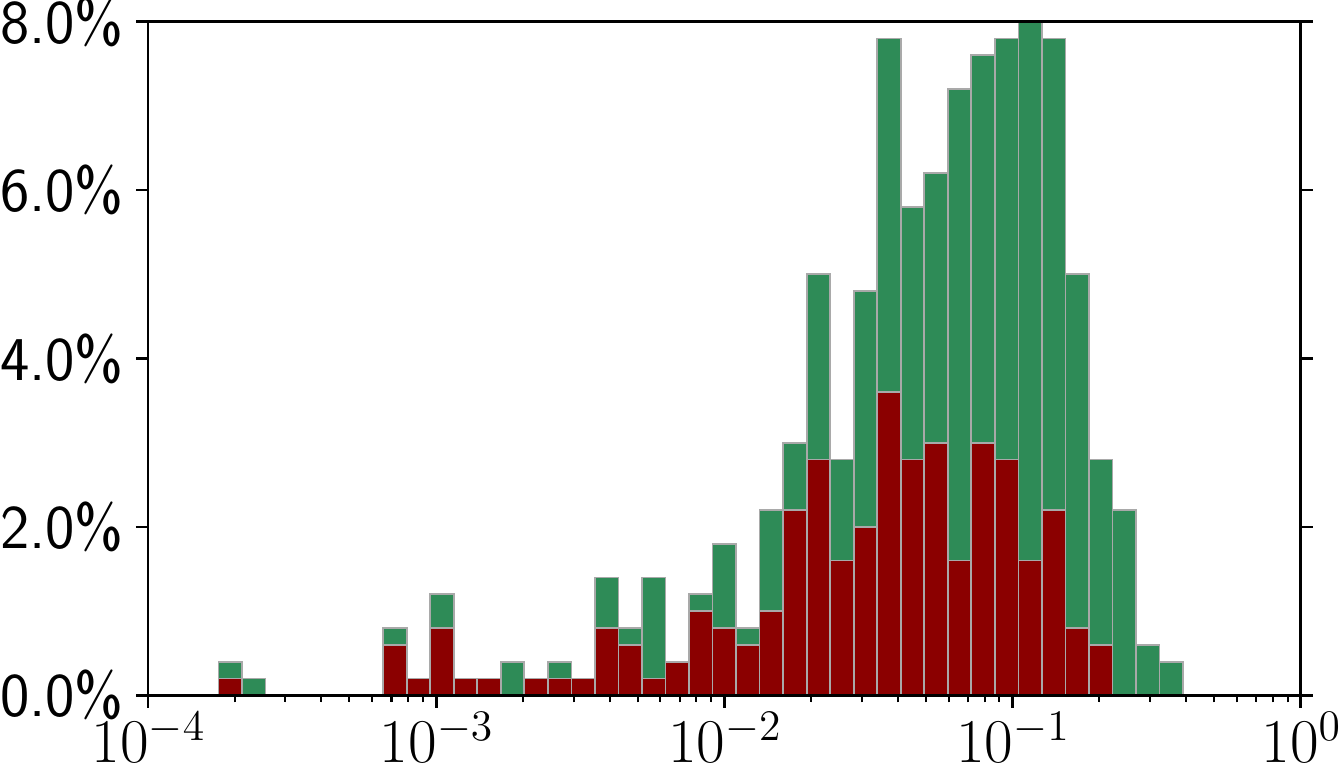} 
\\
\rotatebox[origin=lt]{90}{\hspace{2.2em} \footnotesize   A/B Testing}
&
\includegraphics[scale=0.4]{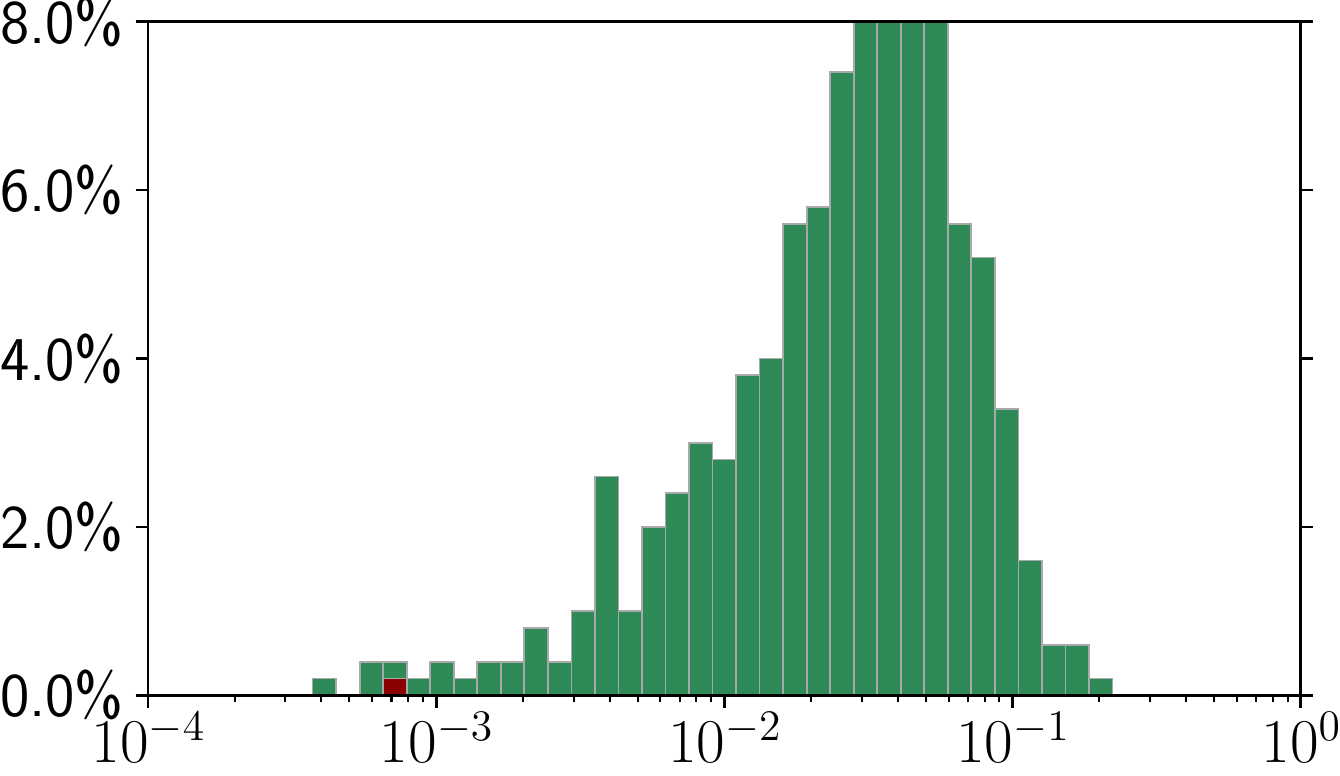} &
\includegraphics[scale=0.4]{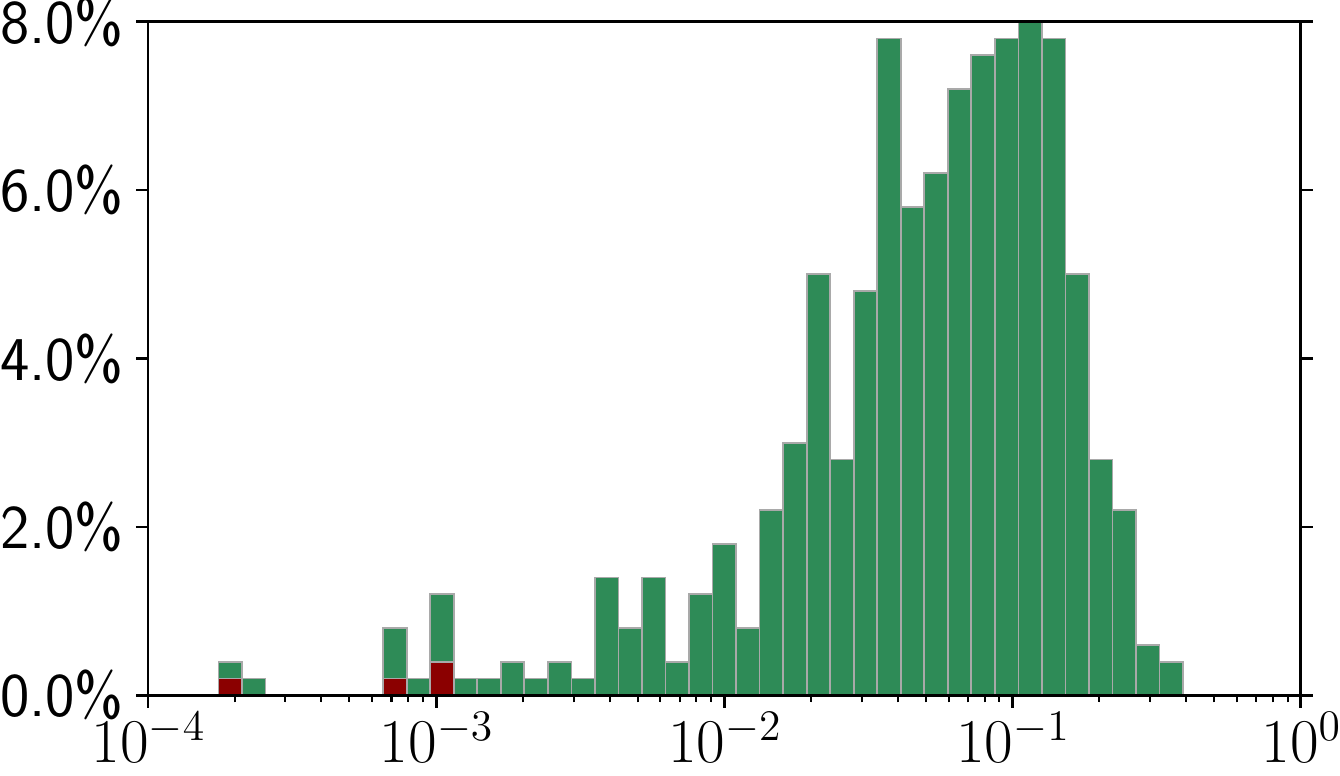} 
\\
\rotatebox[origin=lt]{90}{\hspace{0.1em} \footnotesize  LogOpt (Bias Estimated)}
&
\includegraphics[scale=0.4]{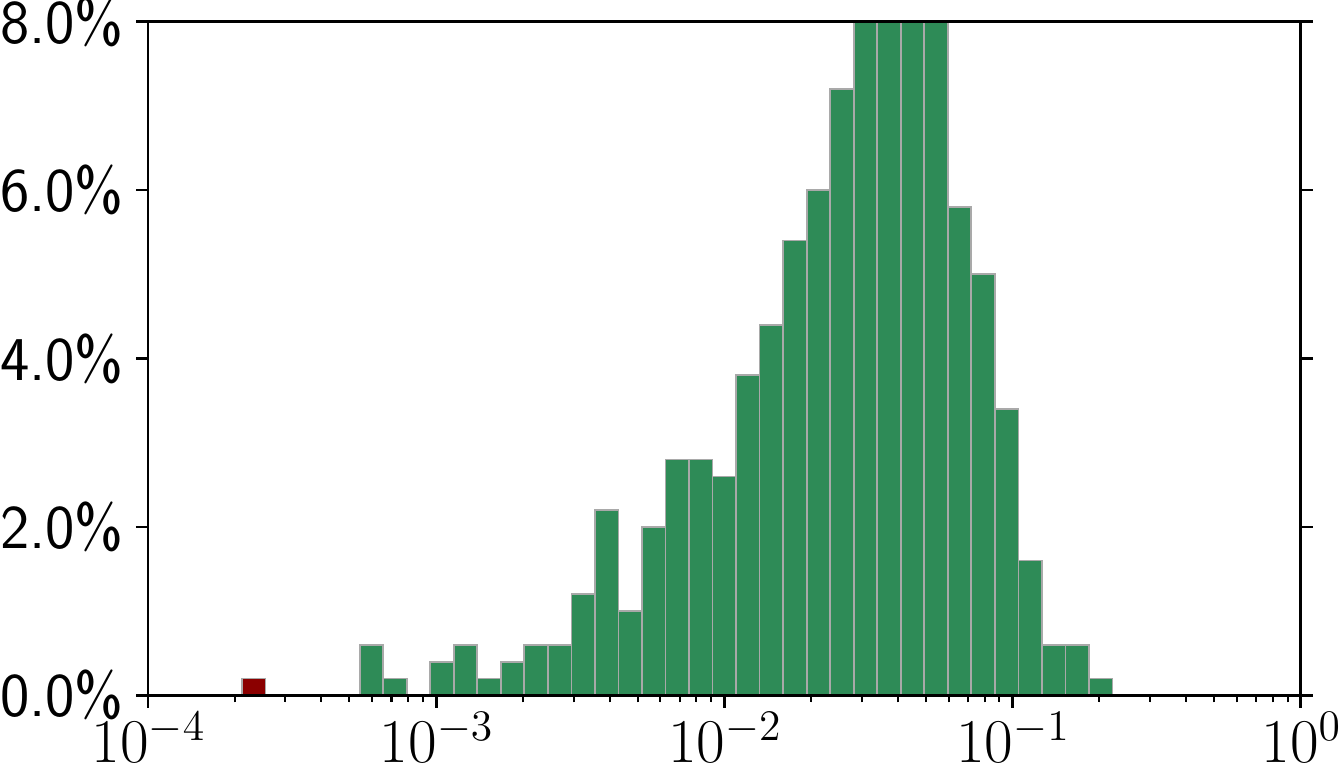} &
\includegraphics[scale=0.4]{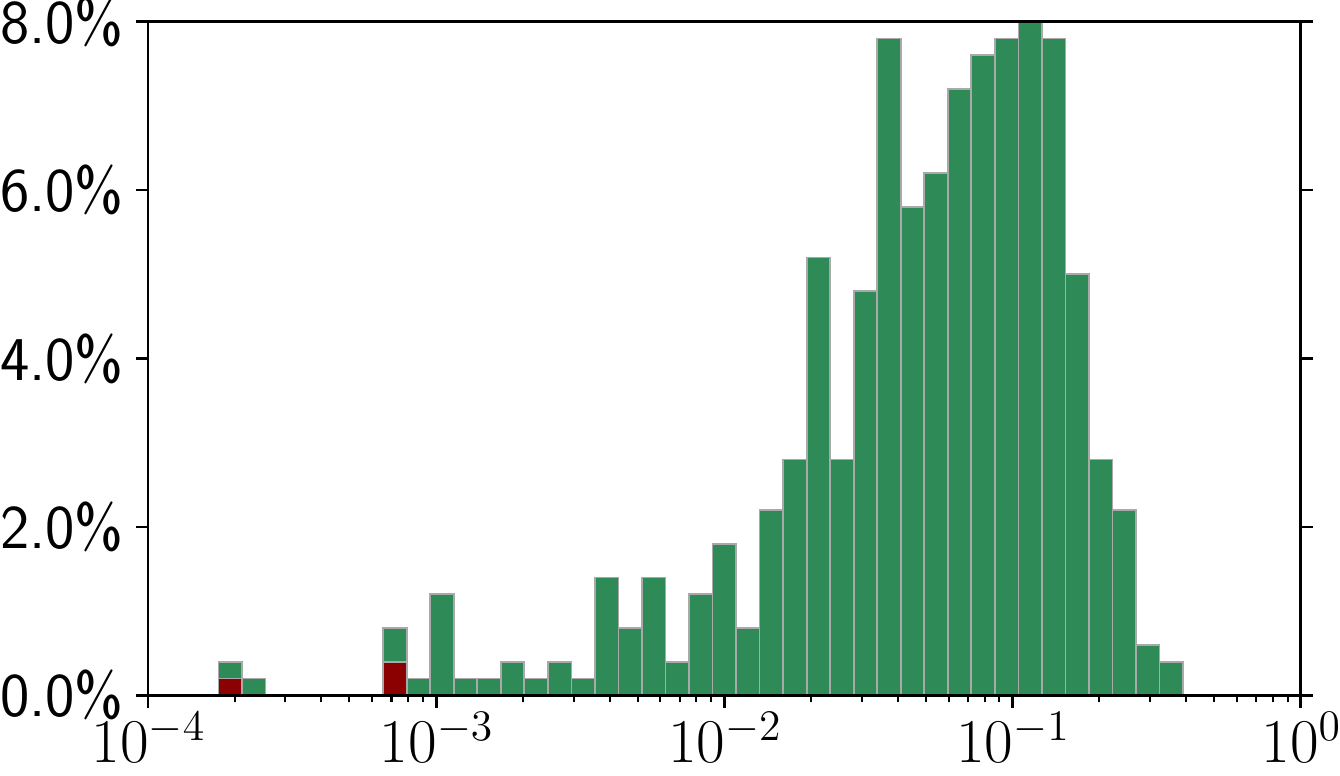} 
\\
&
 \multicolumn{1}{c}{\small  \hspace{1.8em} CTR difference}
&
 \multicolumn{1}{c}{\small  \hspace{1.8em} CTR difference}
\end{tabular}
\vspace{0.3\baselineskip}
\caption{
Distribution of errors over the \ac{CTR} differences of the rankers in the comparison; red indicates a binary error; green indicates a correctly inferred binary preference; results are on estimates based on $3 \cdot 10^6$ sampled queries.
}
\label{fig:errordistribution}
\end{figure}

\section{Conclusion}

In this chapter, we considered thesis research question \ref{thesisrq:onlineeval}: whether
 counterfactual evaluation methods for ranking can be extended to perform efficient and effective online evaluation.
Our answer is positive: we have introduced the \acf{LogOpt}: the first method that optimizes a logging policy for minimal variance counterfactual evaluation.
Counterfactual evaluation is proven to be unbiased w.r.t.\ position bias and item-selection bias under a wide range of logging policies.
With the introduction of \ac{LogOpt}, we now have an algorithm that can decide which rankings should be displayed for the fastest convergence.
Therefore, we argue that \ac{LogOpt} turns the \ac{IPS}-based counterfactual evaluation approach -- which is indifferent to the logging policy -- into an online approach -- which instructs the logging policy.
Our experimental results show that \ac{LogOpt} can lead to a better data-efficiency than A/B testing, while also showing that interleaving is biased.

This brings us to the second thesis research question that this chapter addressed, \ref{thesisrq:interleaving}: 
whether interleaving methods are truly unbiased w.r.t.\ position bias.
We answer this question negatively:
Our experimental results clearly reveal a systematic error in interleaving, moreover, in Appendix~\ref{sec:appendix:bias} we formally prove that cases exist where interleaving is affected by position bias.
In other words, interleaving should not be considered unbiased under the most common definition of bias in counterfactual evaluation.

While our findings are mostly theoretical, they do suggest that future work should further investigate the bias in interleaving methods.
Our results suggest that all interleaving methods make systematic errors, in particular when rankers with a similar \ac{CTR} are compared.
Furthermore, to the best of our knowledge, no empirical studies have been performed that could measure such a bias; our findings strongly show that such a study would be highly valuable to the field.
Finally, \ac{LogOpt} shows that in theory an evaluation method that is both unbiased and efficient is possible; if future work finds that these theoretical findings match empirical results with real users, this could be the start of a new line of theoretically-justified online evaluation methods.

Inspired by the success of this chapter to find a method effective at both online and counterfactual evaluation for ranking, Chapter~\ref{chapter:06-onlinecounterltr} introduces a method that is effective at both online and counterfactual \ac{LTR}.
Together, these chapters show that the divide between online and counterfactual optimization/evaluation can be bridged.
\begin{subappendices}

\setlength{\tabcolsep}{1em}
\renewcommand{\arraystretch}{1.0}

\section{Proof of Bias in Interleaving}
\label{sec:appendix:bias}

Section~\ref{sec:related:interleaving} claimed that for the discussed interleaving methods, an example can be constructed so that in expectation the wrong binary outcome is estimated w.r.t.\ the actual expected \ac{CTR} differences.
These examples are enough to prove that these interleaving methods are biased w.r.t.\ \ac{CTR} differences.
In the following sections we will introduce a single example for each interleaving method.

For clarity, we will keep these examples as basic as possible.
We consider a ranking setting where only a single query $q_1$ occurs, i.e.\ $P(q_1) = 1$, furthermore, there are only three documents to be ranked: $A$, $B$, and $C$.
The two policies $\pi_1$ and $\pi_2$ in the comparison are both deterministic so that:
$\pi_1([A, B, C] \mid q_1) = 1$
and
$\pi_2([B, C, A] \mid q_1) = 1$.
Thus $\pi_1$ will always display the ranking: $[A, B, C]$, and $\pi_2$ the ranking: $[B, C, A]$.
Furthermore, document $B$ is completely non-relevant: $\zeta_B = 0$, consequently, $B$ can never receive clicks; this will make our examples even simpler.

The true $\mathbbm{E}[\text{CTR}]$ difference is thus:
\begin{equation}
\Delta(\pi_1, \pi_2) = (\theta_1 - \theta_3) \zeta_A + (\theta_3 - \theta_2) \zeta_C.
\end{equation}
For each interleaving method, will now show that position bias parameters $\theta_1$, $\theta_2$, and $\theta_3$ and relevances $\zeta_A$ and $\zeta_C$ exist where the wrong binary outcome is estimated.

\subsection{Team-Draft Interleaving}
Team-Draft Interleaving~\citep{radlinski2008does} lets rankers take turns to add their top document and keeps track which ranker added each document.
In total there are four possible interleaving and assignment combinations, each is equally probable:
\FloatBarrier
\begin{table}[h]
\centering
\begin{tabular}{c c c c}
\toprule 
Interleaving & Ranking & Assignments & Probability \\
\midrule
$R_1$ & A, B, C & 1, 2, 1 & 1/4 \\
$R_2$ & A, B, C & 1, 2, 2 & 1/4 \\
$R_3$ & B, A, C & 2, 1, 1 & 1/4 \\
$R_4$ & B, A, C & 2, 1, 2 & 1/4 \\
\bottomrule
\end{tabular}
\end{table}
\FloatBarrier
\noindent
Per issued query Team-Draft Interleaving produces a binary outcome, this is based on which ranker had most of its assigned documents clicked.
To match our \ac{CTR} estimate, we use $1$ to indicate $\pi_1$ receiving more clicks, and $-1$ for $\pi_2$.
Per interleaving we can compute the probability of each outcome:
\begin{align*}
P(\text{outcome} = \phantom{-}1 \mid R_1) &= \theta_1\zeta_A + (1- \theta_1\zeta_A)\theta_3\zeta_C,
\\
P(\text{outcome} = \phantom{-}1 \mid R_2) &= \theta_1\zeta_A(1- \theta_3\zeta_C),
\\
P(\text{outcome} = \phantom{-}1 \mid R_3) &= \theta_2\zeta_A + (1- \theta_2\zeta_A)\theta_3\zeta_C,
\\
P(\text{outcome} = \phantom{-}1 \mid R_4) &= \theta_2\zeta_A (1- \theta_3\zeta_C),
\\
P(\text{outcome} = -1 \mid R_1) &= 0,
\\
P(\text{outcome} = -1 \mid R_2) &= (1-\theta_1\zeta_A)\theta_3\zeta_C,
\\
P(\text{outcome} = -1 \mid R_3) &= 0,
\\
P(\text{outcome} = -1 \mid R_4) &= (1- \theta_2\zeta_A)\theta_3\zeta_C.
\end{align*}
Since every interleaving is equally likely, we can easily derive the unconditional probabilities:
\begin{align*}
P(\text{outcome} = \phantom{-}1) &= \frac{1}{4}\Big(
\theta_1\zeta_A + (1- \theta_1\zeta_A)\theta_3\zeta_C
+  \theta_1\zeta_A(1- \theta_3\zeta_C)
\\ & \quad \,\,
+ \theta_2\zeta_A + (1- \theta_2\zeta_A)\theta_3\zeta_C
+ \theta_2\zeta_A (1- \theta_3\zeta_C)
\Big),
\\
P(\text{outcome} = -1) &= \frac{1}{4}\Big(
(1-\theta_1\zeta_A)\theta_3\zeta_C
+ (1- \theta_2\zeta_A)\theta_3\zeta_C
\Big).
\end{align*}
With these probabilities, the expected outcome is straightforward to calculate:
\begin{align*}
\mathbbm{E}[\text{outcome}] &=
 P(\text{outcome} = 1) - P(\text{outcome} = -1)
 \\ &=
 \frac{1}{4}\Big(
\theta_1\zeta_A
+  \theta_1\zeta_A(1- \theta_3\zeta_C)
+ \theta_2\zeta_A 
+ \theta_2\zeta_A (1- \theta_3\zeta_C)
\Big)
> 0.
\end{align*}
Interestingly, without knowing the values for $\theta$, $\zeta_A$ and $\zeta_C$, we already know that the expected outcome is positive.
Therefore, we can simply choose values that lead to a negative \ac{CTR} difference, and the expected outcome will be incorrect.
For this example, we choose the position bias:
$\theta_1 = 1.0$,
$\theta_2 = 0.9$, and
$\theta_3 = 0.8$;
and the relevances:
$\zeta_1 = 0.1$, and
$\zeta_3 = 1.0$.
As a result, the expected binary outcome of Team-Draft Interleaving will not match the true $\mathbbm{E}[\text{CTR}]$ difference:
\begin{equation}
\Delta(\pi_1, \pi_2) < 0 \land \mathbbm{E}[\text{outcome}]  > 0.
\end{equation}
Therefore, we have proven that Team-Draft Interleaving is biased w.r.t.\ \ac{CTR} differences.

\subsection{Probabilistic Interleaving}
Probabilistic Interleaving~\citep{hofmann2011probabilistic} treats rankings as distributions over documents, we follow the soft-max approach of~\citet{hofmann2011probabilistic} and use $\tau=4.0$ as suggested.
Probabilistic Interleaving creates interleavings by sampling randomly from one of the rankings, unlike Team-Draft Interleaving it does not remember which ranking added each document.
Because rankings are treated as distributions, every possible permutation is a valid interleaving, leading to six possibilities with different probabilities of being displayed.
When clicks are received, every possible assignment is considered and the expected outcome is computed over all possible assignments.
Because there are 36 possible rankings and assignment combinations, we only report every possible ranking and the probabilities for documents $A$ or $C$ being added by $\pi_1$:
\FloatBarrier
\setlength{\tabcolsep}{0.4em}
\begin{table}[!]
\centering
\begin{tabular}{c c c c c}
\toprule 
Interleaving & Ranking & $P(\text{add}(A) = 1)$ & $P(\text{add}(C) = 1)$ & Probability \\
\midrule
$R_1$ & A, B, C & 0.9878 & 0.4701 & 0.4182 \\
$R_2$ & A, C, B & 0.9878 & 0.4999 & 0.0527 \\
$R_3$ & B, A, C & 0.8569 & 0.0588 & 0.2849 \\
$R_4$ & B, C, A & 0.5000 & 0.0588 & 0.2094 \\
$R_5$ & C, A, B & 0.9872 & 0.5000 & 0.0166 \\
$R_6$ & C, B, A & 0.5000 & 0.0562 & 0.0182 \\
\bottomrule
\end{tabular}
\end{table}
\FloatBarrier
\noindent
These probabilities are enough to compute the expected outcome, similar as the procedure we used for Team-Draft Interleaving.
We will not display the full calculation here as it is extremely long; we recommend using some form of computer assistance to perform these calculations.
While there are many possibilities, we choose the following position bias:
$\theta_1 = 1.0$,
$\theta_2 = 0.9$, and
$\theta_3 = 0.3$;
and relevance:
$\zeta_1 = 0.5$, and
$\zeta_3 = 1.0$.
This leads to the following erroneous result:
\begin{equation}
\Delta(\pi_1, \pi_2) < 0 \land \mathbbm{E}[\text{outcome}]  > 0.
\end{equation}
Therefore, we have proven that Probabilistic Interleaving is biased w.r.t.\ \ac{CTR} differences.

\subsection{Optimized Interleaving}
Optimized Interleaving casts interleaving as an optimization problem~\citep{radlinski2013optimized}.
Optimized Interleaving works with a credit function: each clicked document produces a positive or negative credit.
The sum of all credits is the final estimated outcome.
We follow \citet{radlinski2013optimized} and use the linear rank difference, resulting in the following credits per document:
$\text{click-credit}(A) = 2$,
$\text{click-credit}(B) = -1$, and
$\text{click-credit}(C) = -1$.
Then the set of allowed interleavings is created, these are all the rankings that do not contradict a pairwise document preference that both rankers agree on.
Given this set of interleavings, a distribution over them is found so that if every document is equally relevant then no preference is found.\footnote{\citet{radlinski2013optimized} state that if clicks are not correlated with relevance then no preference should be found, in their click model (and ours) these two requirements are actually equivalent.}
For our example, the only valid distribution over interleavings is the following:
\setlength{\tabcolsep}{0.4em}
\FloatBarrier
\begin{table}[h]
\centering
\begin{tabular}{c c c}
\toprule 
Interleaving & Ranking & Probability \\
\midrule
$R_1$ & A, B, C & $1/3$ \\
$R_2$ & B, A, C & $1/3$ \\
$R_3$ & B, C, A & $1/3$ \\
\bottomrule
\end{tabular}
\end{table}
\FloatBarrier
\noindent
The expected credit outcome shows us which ranker will be preferred in expectation:
\begin{equation}
\mathbbm{E}[\text{credit}] = \frac{1}{3}\big( 2(\theta_1 + \theta_2 + \theta_3)\zeta_A - (\theta_2 + 2\theta_3)\zeta_C \big).
\end{equation}
We choose the position bias:
$\theta_1 = 1.0$,
$\theta_2 = 0.9$, and
$\theta_3 = 0.9$;
and the relevances:
$\zeta_1 = 0.5$,
$\zeta_3 = 1.0$.
As a result, the true $\mathbbm{E}[\text{CTR}]$ difference is positive, but optimized interleaving will prefer $\pi_2$ in expectation:
\begin{equation}
\Delta(\pi_1, \pi_2) > 0 \land \mathbbm{E}[\text{credit}] < 0.
\end{equation}
Therefore, we have proven that Optimized Interleaving is biased w.r.t.\ \ac{CTR} differences.

\section{Expanded Explanation of Gradient Approximation}
\label{sec:appendix:approx}

This section describes our Monte-Carlo approximation of the variance gradient in more detail.
We repeat the steps described in Section~\ref{sec:method:derivates} and include some additional intermediate steps; this should make it easier for a reader to verify our theory.

First, we assume that policies place the documents in order of rank and the probability of placing an individual document at rank $x$ only depends on the previously placed documents.
Let $R_{1:x-1}$ indicate the (incomplete) ranking from rank $1$ up to rank $x$, then $\pi_0(d \mid R_{1:x-1}, q)$ indicates the probability that document $d$ is placed at rank $x$ given that the ranking up to $x$ is $R_{1:x-1}$.
The probability of a ranking $R$ of length $K$ is thus:
\begin{equation}
\pi_0(R \mid q) = \prod_{x=1}^{K} \pi_0(R_x \mid R_{1:x-1}, q).
\end{equation}
The probability of a ranking $R$ up to rank $k$ is:
\begin{equation}
\pi_0(R_{1:k} \mid q) = \prod_{x=1}^{k} \pi_0(R_x \mid R_{1:x-1}, q).
\end{equation}
Therefore the propensity (cf.\ Eq.~\ref{eq:prop}) can be rewritten to:
\begin{equation}
\rho(d \,|\, q) = \sum_{k=1}^K \theta_k \sum_R \pi_0(R_{1:k-1} \mid q) \pi_0(d \,|\, R_{1:k-1}, q).
\end{equation}
Before we take the gradient of the propensity, we note that the gradient of the probability of a single ranking is:
\begin{equation}
\frac{\delta \pi_0(R \mid q) }{\delta \pi_0}
 =
\sum_{x=1}^K
  \frac{\pi_0(R \mid q)}{\pi_0(R_x \mid R_{1:x},  q)}
  \left[\frac{\delta \pi_0(R_x \mid R_{1:x-1},  q) }{\delta \pi_0} \right].
  \label{eq:appendix:rankingprob}
\end{equation}
Using this gradient, we can derive the gradient of the propensity w.r.t.\ the policy: %
\begin{align}
\frac{\delta \rho(d |\, q)}{\delta \pi_0} &=  \sum_{k=1}^K \theta_k \sum_R \pi_0(R_{1:k-1} |\,  q) 
\Bigg(\left[ \frac{\delta \pi_0(d |\, R_{1:k-1}, q)}{\delta \pi_0} \right] \nonumber
\\
& \quad   + 
\sum_{x=1}^{k-1} \frac{\pi_0(d |\, R_{1:k-1}, q)}{\pi_0(R_x |\, R_{1:x-1} , q)}\left[ \frac{\delta\pi_0(R_x |\,  R_{1:x-1} , q)}{\delta \pi_0} \right]
\Bigg).
\end{align}
To avoid iterating over all rankings in the $\sum_R$ sum, we sample $M$ rankings:
$R^m \sim \pi_0(R \mid q)$, and a click pattern on each ranking: $c^m \sim P(c \mid R^m)$.
This enables us to make the following approximation:
\begin{equation}
\begin{split}
\widehat{\rho\text{-grad}}(d)
&=
 \frac{1}{M} \sum_{m=1}^{M} \sum_{k=1}^K \theta_k 
\Bigg(
\left[ \frac{\delta \pi_0(d |\, R^m_{1:k-1}, q)}{\delta \pi_0} \right] 
\\
& \qquad \quad \,\,\, + 
\sum_{x=1}^{k-1} \frac{\pi_0(d |\, R^m_{1:k-1}, q)}{\pi_0(R^m_x  |\, R^m_{1:x-1} , q)}\left[ \frac{\delta\pi_0(R^m_x  |\, R^m_{1:x-1} , q)}{\delta \pi_0} \right]
\Bigg),
\end{split}
\end{equation}
since $\frac{\delta \rho(d |\, q)}{\delta \pi_0} \approx \widehat{\rho\text{-grad}}(d, q)$.
The second part of Eq.~\ref{eq:gradient} is:
\begin{equation}
\Bigg[ \frac{\delta}{\delta \pi_0} \bigg(\Delta - \sum_{d : c(d) = 1}\frac{\lambda_d}{\rho_d}\bigg)^2 \Bigg]
= 2\bigg(\Delta - \sum_{d : c(d) = 1}\frac{\lambda_d}{\rho_d}\bigg) \sum_{d : c(d) = 1}\frac{\lambda_d}{\rho_d^2} \left[ \frac{\delta \rho_d}{\delta \pi_0} \right],
\end{equation}
using $\widehat{\rho\text{-grad}}(d)$ we get the approximation: 
\begin{equation}
\widehat{\text{error-grad}}(c) =
2\bigg(\Delta - \sum_{d : c(d) = 1}\frac{\lambda_d}{\rho_d}\bigg) \sum_{d : c(d) = 1}\frac{\lambda_d}{\rho_d^2} \widehat{\rho\text{-grad}}(d).
\end{equation}
Next, we consider the gradient of a single click pattern: 
\begin{equation}
\frac{\delta}{\delta \pi_0} P(c \mid q) = \sum_R P(c \mid R) \left[\frac{\delta \pi_0(R \mid q)}{\delta \pi_0} \right].
\end{equation}
This can then be used to reformulate the first part of Eq.~\ref{eq:gradient}:
\begin{equation}
\begin{split}
\sum_c
\left[\frac{\delta}{\delta \pi_0}P(c \mid q)\right] &\bigg(\Delta - \sum_{d : c(d) = 1}\frac{\lambda_d}{\rho_d}\bigg)^2 \\ 
&= \sum_c
\sum_R P(c \mid R) \left[\frac{\delta \pi_0(R \mid q)}{\delta \pi_0}\right] \bigg(\Delta - \sum_{d : c(d) = 1}\frac{\lambda_d}{\rho_d}\bigg)^2
\end{split}
\end{equation}
Making use of Eq.~\ref{eq:appendix:rankingprob}, we approximate this with:
\begin{equation}
\begin{split}
\widehat{\text{freq-grad}}&(R, c) = \\
& \bigg(\Delta - \sum_{d : c(d) = 1}\frac{\lambda_d}{\rho_d}\bigg)^2
 \sum_{x=1}^K
\frac{1}{\pi_0(R_x \mid R_{1:x-1},  q)}
\left[\frac{\delta \pi_0(R_x \mid R_{1:x-1},  q) }{\delta \pi_0} \right].
\end{split}
\end{equation}
Combining the approximation of both parts of Eq.~\ref{eq:gradient}, allows us to approximate the complete gradient:
\begin{equation}
\frac{\delta \text{Var}(\hat{\Delta}_{IPS}^{\pi_0} \mid q)}{\delta \pi_0} 
\approx 
\frac{1}{M}\sum_{m=1}^M \widehat{\text{freq-grad}}(R^m, c^m) + \widehat{\text{error-grad}}(c^m).
\end{equation}
This completes our expanded description of the gradient approximation.
We have shown that we can approximate the gradient of the variance w.r.t.\ a logging policy $\pi_0$, based on rankings sampled from $\pi_0$ and our current estimated click model $\hat{\theta}$, $\hat{\zeta}$, while staying computationally feasible.

\section{Notation Reference for Chapter~\ref{chapter:06-onlinecountereval}}
\label{notation:06-onlinecountereval}

\begin{center}
\begin{tabular}{l l}
 \toprule
\bf Notation  & \bf Description \\
\midrule
$k$ & the number of items that can be displayed in a single ranking \\
$i$ & an iteration number \\
$q$ & a user-issued query \\
$d$ & an item to be ranked\\
$R$ & a ranked list \\
$R_{1:x}$ & the subranking in $R$ from index $1$ up to and including index $x$ \\
$\pi$ & a ranking policy\\
$\pi(R \mid q)$ & the probability that policy $\pi$ displays ranking $R$ for query $q$ \\
$\pi(R_x |\, R_{1:x-1}, q)$ & probability of $\pi$ adding item $R_x$ given $R_{1:x-1}$ is already placed \\
$\mathcal{I}$ & the available interaction data\\
$c$ & a click pattern: a vector indicating a combination of clicked \\ & and not-clicked items\\
$\sum_c$ & a summation over every possible click pattern \\
$c(d)$ & a function indicating item $d$ was clicked in click pattern $c$\\
$o(d)$ & a function indicating item $d$ was observed at iteration $i$ \\
$x_i$ & the estimate for a single interaction $i$ \\
$f(q_i, R_i, c_i)$ & the method-specific function that converts a single interaction \\ & into an estimate $x_i$\\
$\theta_{\text{rank}(d \mid R)}$  & the observation probability: $P(o(d) = 1 \mid R)$ \\
$\zeta_{d, q}$ & the conditional click probability: $P(c(d) = 1 \mid o(d) = 1, q)$ \\
\bottomrule
\end{tabular}
\end{center}
\end{subappendices}

\chapter{Unifying Online and Counterfactual Learning to Rank}
\label{chapter:06-onlinecounterltr}

\footnote[]{This chapter was submitted as~\citep{oosterhuis2021onlinecounterltr}.
Appendix~\ref{notation:06-onlinecounterltr} gives a reference for the notation used in this chapter.
}

In Chapter~\ref{chapter:06-onlinecountereval}, we introduced the \ac{LogOpt} algorithm that turns a counterfactual ranking evaluation method into an online evaluation method.
Thus, the contributions of Chapter~\ref{chapter:06-onlinecountereval} are a significant step in bridging the divide between online and counterfactual ranking evaluation.
Inspired by this contribution, this chapter will consider whether something similar can be done for the gap between online and counterfactual \ac{LTR}.
Accordingly, in this chapter the following question will be addressed: 
\begin{itemize}
\item[\ref{thesisrq:onlinecounterltr}] Can the counterfactual \ac{LTR} approach be extended to perform highly effective online \ac{LTR}?
\end{itemize}
In contrast with Chapter~\ref{chapter:06-onlinecountereval}, which looked at finding the best logging policy, this chapter will consider a novel counterfactual estimator;
we propose the novel \emph{intervention-aware estimator} for both counterfactual and online \ac{LTR}.
The estimator corrects for the effect of position bias, trust bias, and item-selection bias using corrections based on the behavior of the logging policy and online interventions: changes to the logging policy made during the gathering of click data.
Our experimental results show that, unlike existing counterfactual \ac{LTR} methods, the intervention-aware estimator can greatly benefit from online interventions.
In contrast, existing online methods are hindered without online interventions and thus should not be applied counterfactually.
With the introduction of the intervention-aware estimator, we aim to bridge the online/counterfactual \ac{LTR} division as it is shown to be highly effective in both online and counterfactual scenarios.

\section{Introduction}
\label{sec:intro}

Ranking systems form the basis for most search and recommendation applications~\citep{liu2009learning}.
As a result, the quality of such systems can greatly impact the user experience, thus it is important that the underlying ranking models perform well.
The \ac{LTR} field considers methods to optimize ranking models. 
Traditionally this was based on expert annotations.
Over the years the limitations of expert annotations have become apparent; some of the most important ones are:
\begin{enumerate*}[label=(\roman*)]
\item they are expensive and time-consuming to acquire~\citep{qin2013introducing, Chapelle2011};
\item in privacy-sensitive settings expert annotation is unethical, e.g., in email or private document search~\citep{wang2018position}; and
\item often expert annotations appear to disagree with actual user preferences~\citep{sanderson2010}.
\end{enumerate*}

User interaction data solves some of the problems with expert annotations:
\begin{enumerate*}[label=(\roman*)]
\item interaction data is virtually free for systems with active users;
\item it does not require experts to look at potentially privacy-sensitive content;
\item interaction data is indicative of users' preferences.
\end{enumerate*}
For these reasons, interest in \ac{LTR} methods that learn from user interactions has increased in recent years.
However, user interactions are a form of implicit feedback and generally also affected by other factors than user preference~\citep{joachims2017accurately}.
Therefore, to be able to reliably learn from interaction data, the effect of factors other than preference has to be corrected for.
In clicks on rankings three prevalent factors are well known:
\begin{enumerate*}[label=(\roman*)]
\item \emph{position bias}: users are less likely to examine, and thus click, lower ranked items~\citep{craswell2008experimental};
\item \emph{item-selection bias}: users cannot click on items that are not displayed~\citep{ovaisi2020correcting, oosterhuis2020topkrankings}; and
\item \emph{trust bias}: because users trust the ranking system, they are more likely to click on highly ranked items that they do not actually prefer~\citep{agarwal2019addressing, joachims2017accurately}.
\end{enumerate*}
As a result of these biases, which ranking system was used to gather clicks can have a substantial impact on the clicks that will be observed.
Current \ac{LTR} methods that learn from clicks can be divided into two families:
\emph{counterfactual approaches}~\citep{joachims2017unbiased} -- that learn from historical data, i.e., clicks that have been logged in the past -- and \emph{online approaches}~\citep{yue2009interactively} -- that can perform interventions, i.e., they can decide what rankings will be shown to users.
Recent work has noticed that some counterfactual methods can be applied as an online method~\citep{jagerman2019comparison}, or vice versa~\citep{zhuang2020counterfactual, ai2020unbiased}.
Nonetheless, every existing method was designed for either the online or counterfactual setting, never both.

In this chapter, we propose a novel estimator for both counterfactual and online \ac{LTR} from clicks: the \emph{intervention-aware estimator}.
The intervention-aware estimator builds on ideas that underlie the latest existing counterfactual methods: the policy-aware estimator~\citep{oosterhuis2020topkrankings} and the affine estimator~\citep{vardasbi2020trust}; and expands them to consider the effect of online interventions.
It does so by considering how the effect of bias is changed by an intervention, and utilizes these differences in its unbiased estimation.
As a result, the intervention-aware estimator is both effective when applied as a counterfactual method, i.e., when learning from historical data, and as an online method where online interventions lead to enormous increases in efficiency.
In our experimental results the intervention-aware estimator is shown to reach state-of-the-art \ac{LTR} performance in both online and counterfactual settings, and it is the only method that reaches top-performance in both settings.

The main contributions of this chapter are:
\begin{enumerate}[align=left,leftmargin=*]
\item A novel intervention-aware estimator that corrects for position bias, trust bias, item-selection bias, and the effect of online interventions.
\item An investigation into the effect of online interventions on state-of-the-art counterfactual and online \ac{LTR} methods.
\end{enumerate}

\section{Interactions with Rankings}
\label{sec:userinteractions}

The theory in this chapter assumes that three forms of interaction bias occur: position bias, item-selection bias, and trust bias.

\emph{Position bias} occurs because users only click an item after examining it, and users are more likely to examine items displayed at higher ranks~\citep{craswell2008experimental}.
Thus the rank (a.k.a.\ position) at which an item is displayed heavily affects the probability of it being clicked.
We model this bias using $P(E = 1 \mid k)$: the probability that an item $d$ displayed at rank $k$ is examined by a user $E$~\citep{wang2018position}.

\emph{Item-selection bias} occurs when some items have a zero probability of being examined in some displayed rankings~\citep{ovaisi2020correcting}.
This can happen because not all items are displayed to the user, or if the ranked list is so long that no user ever considers the entire list.
We model this bias by stating:
\begin{equation}
\exists k, \forall k', \, (k' > k \rightarrow P(E = 1 \mid k') = 0),
\end{equation}
i.e., there exists a rank $k$ such that items ranked lower than $k$ have no chance of being examined.
The distinction between position bias and item-selection bias is important because some methods can only correct for the former if the latter is not present~\citep{oosterhuis2020topkrankings}.

Finally, \emph{trust bias} occurs because users trust the ranking system and, consequently, are more likely to perceive top ranked items as relevant even when they are not~\citep{joachims2017accurately}.
We model this bias using: $P(C = 1 \mid k, R, E)$: the probability of a click conditioned on the displayed rank $k$, the relevance of the item $R$, and examination $E$.

To combine these three forms of bias into a single click model, we follow \citet{agarwal2019addressing} and write:
\begin{equation}
\begin{split}
P(C=1 \mid d, k, q) \hspace{-1.3cm} &
\\
&= P(E = 1 \mid k)\big(P(C = 1 \mid k, R = 0, E=1)P(R=0 \mid d, q)\\
 & \hspace{2.5cm} + P(C = 1 \mid k, R = 1, E=1)P(R=1 \mid d, q)\big),
\end{split}
\label{eq:click-probability}
\end{equation}
where $P(R=1 \mid d, q)$ is the probability that an item $d$ is deemed relevant w.r.t.\ query $q$ by the user.
An analysis on real-world interaction data performed by \citet{agarwal2019addressing}, showed that this model better captures click behavior than models that only capture position bias~\citep{wang2018position} on search services for retrieving cloud-stored files and emails.

To simplify the notation, we follow \citet{vardasbi2020trust} and adopt:
\begin{equation}
\begin{split}
\alpha_k &= P(E = 1 \mid k)\big(P(C = 1 \mid k, R = 1, E=1)
\\ & \phantom{= P(E = 1 \mid k)\big(P} -  P(C = 1 \mid k, R = 0, E=1)\big),
\\
\beta_k &= P(E = 1 \mid k)P(C = 1 \mid k, R = 0, E=1).
\end{split}
\end{equation}
This results in a compact notation for the click probability \eqref{eq:click-probability}:
\begin{equation}
P(C=1 \mid d, k, q) = \alpha_k P(R=1 \mid d, q) + \beta_k.
\label{eq:clickmodelused}
\end{equation}
For a single ranking $y$, let $k$ be the rank at which item $d$ is displayed in $y$; we denote $\alpha_k = \alpha_{d,y}$ and $\beta_k = \beta_{d,y}$.
This allows us to specify the click probability conditioned on a ranking $y$:
\begin{equation}
P(C=1 \mid d, y, q) = \alpha_{d,y} P(R=1 \mid d, q) + \beta_{d,y}.
\label{eq:trustbiasmodel}
\end{equation}
Finally, let $\pi$ be a ranking policy used for logging clicks, where $\pi(y \mid q)$ is the probability of $\pi$ displaying ranking $y$ for query $q$, then the click probability conditioned on $\pi$ is:
\begin{equation}
P(C=1 \mid d, \pi, q) = \sum_{y} \pi(y \mid  q) \mleft( \alpha_{d,y} P(R=1 \mid d, q) + \beta_{d,y} \mright).
\label{eq:clickpolicy}
\end{equation}
The proofs in the remainder of this chapter will assume this model of click behavior.

\section{Background}

In this section we cover the basics on \ac{LTR} and counterfactual \ac{LTR}.

\subsection{Learning to rank}
\label{sec:background:LTR}

The field of \ac{LTR} considers methods for optimizing ranking systems w.r.t. ranking metrics.
Most ranking metrics are additive w.r.t.\ documents; let $P(q)$ be the probability that a user-issued query is query $q$, then the metric reward $\mathcal{R}$ commonly has the form:
\begin{equation}
\mathcal{R}(\pi)
=
\sum_{q} P(q) \sum_{d \in D_q} \lambda(d \mid D_q, \pi, q) P(R = 1 \mid d, q).
\label{eq:truereward}
\end{equation}
Here, the $\lambda$ function scores each item $d$ depending on how $\pi$ ranks $d$ when given  the preselected item set $D_q$; $\lambda$ can be chosen to match a desired metric, for instance, the common \ac{DCG} metric~\citep{jarvelin2002cumulated}:
\begin{equation}
\lambda_{\text{DCG}}(d \mid D_q, \pi, q)
=
\sum_y \pi(y \mid q) \mleft( \log_2(\text{rank}(d \mid y) + 1) \mright)^{-1}.
\label{eq:dcglambda}
\end{equation}
Supervised \ac{LTR} methods can optimize $\pi$ to maximize $\mathcal{R}$ if relevances $P(R = 1 \,|\, d, q)$ are known~\citep{wang2018lambdaloss, liu2009learning}. However in practice, finding these relevance values is not straightforward.

\subsection{Counterfactual learning to rank}
\label{sec:background:counterfactual}

Over time, limitations of the supervised \ac{LTR} approach have become apparent. 
Most importantly, finding accurate relevance values $P(R = 1 \mid d, q)$ has proved to be impossible or infeasible in many practical situations~\citep{wang2016learning}.
As a solution, \ac{LTR} methods have been developed that learn from user interactions instead of relevance annotations.
Counterfactual \ac{LTR} concerns approaches that learn from historical interactions.
Let $\mathcal{D}$ be a set of collected interaction data over $T$ timesteps; for each timestep $t$ it contains the user issued query $q_t$, the logging policy $\pi_t$ used to generate the displayed ranking $\bar{y}_t$, and the clicks $c_t$ received on the ranking:
\begin{equation}
\mathcal{D} = \{(\pi_t, q_t, \bar{y}_t, c_t)\}_{t=1}^{T},
\end{equation}
where $c_t(d) \in \{0,1\}$ indicates whether item $d$ was clicked at timestep $t$.
While clicks are indicative of relevance they are also affected by several forms of bias, as discussed in Section~\ref{sec:userinteractions}.

Counterfactual \ac{LTR} methods utilize estimators that correct for bias to unbiasedly estimate the reward of a policy $\pi$.
The prevalent methods introduce a function $\hat{\Delta}$ that transforms a single click signal to correct for bias.
The general estimate of the reward is:
\begin{equation}
\hat{\mathcal{R}}(\pi \mid \mathcal{D}) =  \frac{1}{T} \sum_{t=1}^T \sum_{d \in D_{q_t}}  \lambda(d \mid D_{q_t}, \pi, q) \hat{\Delta}(d \mid \pi_t, q_t, \bar{y}_t, c_t).
\label{eq:estimatedreward}
\end{equation}
We note the important distinction between the policy $\pi$ for which we estimate the reward, and the policy $\pi_t$ that was used to gather interactions at timestep $t$.
During optimization only $\pi$ is changed in order to maximize the estimated reward.

The original \ac{IPS} based estimator introduced by \citet{wang2016learning} and \citet{joachims2017unbiased} weights clicks according to examination probabilities:
\begin{equation}
\hat{\Delta}_{\text{IPS}}(d \mid \bar{y}_t, c_t) =  \frac{c_t(d)}{P(E = 1 \mid \bar{y}_t, d)}.
\end{equation}
This estimator results in unbiased optimization under two requirements.
First, every relevant item must have a non-zero examination probability in all displayed rankings:
\begin{equation}
\forall t,  \forall d \in D_{q_t}\, \left(P(R=1 \mid d, q_t) > 0
\rightarrow P(E = 1 \mid \bar{y}_t, d) > 0\right).
\end{equation}
Second, the click probability conditioned on relevance on examined items should be the same on every rank:
\begin{equation}
\forall k, k'\, \left(P(C \mid k, R, E=1) = P(C \mid k', R, E=1)\right),
\label{eq:notrustreq}
\end{equation}
i.e., no trust bias is present.
These requirements illustrate that this estimator can only correct for position bias, and is biased when item-selection bias or trust bias is present.
For a proof we refer to previous work by \citet{joachims2017unbiased} and \citet{vardasbi2020trust}.

\citet{oosterhuis2020topkrankings} (Chapter~\ref{chapter:04-topk}) adapt the \ac{IPS} approach to correct for item-selection bias as well.
They weight clicks according to examination probabilities conditioned on the logging policy, instead of the single displayed ranking on which a click took place.
This results in the \emph{policy-aware} estimator:
\begin{equation}
\begin{split}
\hat{\Delta}_{\text{aware}}(d \mid \pi_t, q_t, c_t) &= \frac{c_t(d)}{P(E = 1 \mid \pi_t, q_t, d)}
\\
&= 
\frac{c_t(d)}{\sum_{y} \pi(y \mid q_t)P(E = 1 \mid y, d, q_t)}.
\end{split}
\end{equation}
This estimator can be used for unbiased optimization under two assumptions.
First, every relevant item must have a non-zero examination probability under the logging policy:
\begin{equation}
\mbox{}\hspace*{-2mm}
\forall t, \forall d \in D_{q_t} \left(P(R=1 \mid, d, q_t) > 0 \rightarrow
P(E = 1 \,|\, \pi_t, d, q_t) > 0\right).
\hspace*{-2mm}
\end{equation}
Second, no trust bias is present as described in Eq.~\ref{eq:notrustreq}.
Importantly, this first requirement can be met under item-selection bias, since a stochastic ranking policy can always provide every item a non-zero probability of appearing in a top-$k$ ranking.
Thus, even when not all items can be displayed at once, a stochastic policy can provide non-zero examination probabilities to all items.
For a proof of this claim we refer to previous work by \citet{oosterhuis2020topkrankings}.

Lastly, \citet{vardasbi2020trust} prove that \ac{IPS} cannot correct for trust bias.
As an alternative, they introduce an estimator based on affine corrections.
This \emph{affine} estimator penalizes an item displayed at rank $k$ by $\beta_k$ while also reweighting inversely w.r.t.\ $\alpha_k$:
\begin{equation}
\hat{\Delta}_{\text{affine}}(d \mid \bar{y}_t, c_t) =  \frac{c_t(d) - \beta_{d,\bar{y}_t}}{\alpha_{d,\bar{y}_t}}.
\end{equation}
The $\beta$ penalties correct for the number of clicks an item is expected to receive due to its displayed rank, instead of its relevance.
The affine estimator is unbiased under a single assumption, namely that the click probability of every item must be correlated with its relevance in every displayed ranking:
\begin{equation}
\forall t, \forall d \in D_{q_t},  \,
\alpha_{d,\bar{y}_t} \not= 0.
\end{equation}
Thus, while this estimator can correct for position bias and trust bias, it cannot correct for item-selection bias.
For a proof of these claims we refer to previous work by \citet{vardasbi2020trust}.

We note that all of these estimators require knowledge of the position bias ($P(E=1 \mid k)$) or trust bias ($\alpha$ and $\beta$).
A lot of existing work has considered how these values can be inferred accurately~\citep{agarwal2019addressing, wang2018position, fang2019intervention}.
The theory in this chapter assumes that these values are known.

This concludes our description of existing counterfactual estimators on which our method expands.
To summarize, each of these estimators corrects for position bias, one also corrects for item-selection bias, and another also for trust bias. 
Currently, there is no estimator that corrects for all three forms of bias together.

\section{Related Work}

One of the earliest approaches to \ac{LTR} from clicks was introduced by \citet{Joachims2002}.
It infers pairwise preferences between items from click logs and uses pairwise \ac{LTR} to update an SVM ranking model.
While this approach had some success, in later work \citet{joachims2017unbiased} notes that position bias often incorrectly pushes the pairwise loss to flip the ranking displayed during logging.
To avoid this biased behavior, \citet{joachims2017unbiased} proposed the idea of counterfactual \ac{LTR}, in the spirit of earlier work by \citet{wang2016learning}.
This led to estimators that correct for position bias using \ac{IPS} weighting (see Section~\ref{sec:background:counterfactual}).
This work sparked the field of counterfactual \ac{LTR} which has focused on both capturing interaction biases and optimization methods that can correct for them.
Methods for measuring position bias are based on EM optimization~\citep{wang2018position}, a dual learning objective~\citep{ai2018unbiased}, or randomization~\citep{agarwal2019estimating, fang2019intervention};
for trust bias only an EM-based approach is currently known~\citep{agarwal2019addressing}.
\citet{agarwal2019counterfactual} showed how counterfactual \ac{LTR} can optimize neural networks and \ac{DCG}-like methods through upper-bounding.
\citet{oosterhuis2020topkrankings} introduced an \ac{IPS} estimator that can correct for item-selection bias (see Section~\ref{sec:background:counterfactual} and Chapter~\ref{chapter:04-topk}), while also showing that the LambdaLoss framework~\citep{wang2018lambdaloss} can be applied to counterfactual \ac{LTR} (see Chapter~\ref{chapter:04-topk}).
Lastly, \citet{vardasbi2020trust} proved that \ac{IPS} estimators cannot correct for trust bias and introduced an affine estimator that is capable of doing so (see Section~\ref{sec:background:counterfactual}).
There is currently no known estimator that can correct for position bias, item selection bias, and trust bias simultaneously.

The other paradigm for \ac{LTR} from clicks is online \ac{LTR}~\citep{yue2009interactively}.
The earliest method, \ac{DBGD}, samples variations of a ranking model and compares them using online evaluation~\citep{hofmann2011probabilistic}; if an improvement is recognized the model is updated accordingly.
Most online \ac{LTR} methods have increased the data-efficiency of \ac{DBGD}~\citep{schuth2016mgd, wang2019variance, hofmann2013reusing}; later work found that \ac{DBGD} is not effective at optimizing neural models~\citep{oosterhuis2018differentiable} (Chapter~\ref{chapter:02-pdgd}) and often fails to find the optimal linear-model even in ideal scenarios~\citep{oosterhuis2019optimizing} (Chapter~\ref{chapter:03-oltr-comparison}).
To these limitations, alternative approaches for online \ac{LTR} have been proposed. \ac{PDGD} takes a pairwise approach but weights pairs to correct for position bias~\citep{oosterhuis2018differentiable} (Chapter~\ref{chapter:02-pdgd}).
While \ac{PDGD} was found to be very effective and robust to noise~\citep{jagerman2019comparison, oosterhuis2019optimizing} (Chapter~\ref{chapter:03-oltr-comparison}), it can be proven that its gradient estimation is affected by position bias, thus we do not consider it to be unbiased.
In contrast, \citet{zhuang2020counterfactual} introduced \ac{COLTR}, which takes the \ac{DBGD} approach but uses a form of counterfactual evaluation to compare candidate models.
Despite making use of counterfactual estimation, \citet{zhuang2020counterfactual} propose the method solely for online \ac{LTR}.

Interestingly, with \ac{COLTR} the line between online and counterfactual \ac{LTR} methods starts to blur.
Recent work by \citet{jagerman2019comparison} applied the original counterfactual approach~\citep{joachims2017unbiased} as an online method and found that it lead to improvements.
Furthermore, \citet{ai2020unbiased} noted that with a small adaptation \ac{PDGD} can be applied to historical data.
Although this means that some existing methods can already be applied both online and counterfactually, no method has been found that is the most reliable choice in both scenarios.

\section{An Estimator Oblivious to Online Interventions}

Before we propose the main contribution of this chapter, the intervention-aware estimator, we will first introduce an estimator that simultaneously corrects for position bias, item-selection bias, and trust bias, without considering the effects of interventions.
Subsequently, the resulting intervention-oblivious estimator will serve as a method to contrast the intervention-aware estimator with.

Section~\ref{sec:background:counterfactual} described how the policy-aware estimator corrects for item-selection bias by taking into account the behavior of the logging policy used to gather clicks~\citep{oosterhuis2020topkrankings}.
Furthermore, Section~\ref{sec:background:counterfactual} also detailed how the affine estimator corrects for trust bias by applying an affine transformation to individual clicks~\citep{vardasbi2020trust}.
We will now show that a single estimator can correct for both item-selection bias and trust bias simultaneously, by combining the approaches of both these existing estimators.

First we note the probability of a click conditioned on a single logging policy $\pi_t$ can be expressed as:
\begin{equation}
\begin{split}
P(C=1 |\, d, \pi_t, q)
&= \sum_{\bar{y}} \pi_t(\bar{y} \mid q) \mleft( \alpha_{d,\bar{y}} P(R=1 \mid d, q) + \beta_{d,\bar{y}} \mright)
\\
&= \mathbbm{E}_{\bar{y}}[\alpha_{d} \,|\, \pi_t , q] P(R=1 \,|\, d, q) + \mathbbm{E}_{\bar{y}}[\beta_{d} \,|\, \pi_t, q].
\end{split}
\label{eq:clicktimeoblivious}
\end{equation}
where the expected values of $\alpha$ and $\beta$ conditioned on $\pi_t$ are:
\begin{equation}
\begin{split}
\mathbbm{E}_{\bar{y}}[\alpha_{d} \mid \pi_t, q] = \sum_{\bar{y}} \pi_t(\bar{y} \mid q) \alpha_{d,\bar{y}},
\\
\mathbbm{E}_{\bar{y}}[\beta_{d} \mid \pi_t, q] = \sum_{\bar{y}} \pi_t(\bar{y} \mid q) \beta_{d,\bar{y}}.
\end{split}
\end{equation}
By reversing Eq.~\ref{eq:clicktimeoblivious} the relevance probability can be obtained from the click probability.
We introduce our \emph{intervention-oblivious estimator}, which applies this transformation to correct for bias:
\begin{equation}
\hat{\Delta}_{\text{IO}}(d \mid q_t, c_t) =  \frac{c_t(d) - \mathbbm{E}_{\bar{y}}[\beta_{d} \mid \pi_t, q_t]}{\mathbbm{E}_{\bar{y}}[\alpha_{d} \mid \pi_t, q_t] }.
\label{eq:timeoblivious}
\end{equation}
The intervention-oblivious estimator brings together the policy-aware and affine estimators: on every click it applies an affine transformation based on the logging policy behavior.
Unlike existing estimators, we can prove that the intervention-oblivious estimator is unbiased w.r.t.\ our assumed click model (Section~\ref{sec:userinteractions}).

\begin{theorem}
\label{theorem:oblivious}
The estimated reward $\hat{\mathcal{R}}$ (Eq.~\ref{eq:estimatedreward}) using the intervention-oblivious estimator (Eq.~\ref{eq:timeoblivious}) is unbiased w.r.t.\ the true reward $\mathcal{R}$ (Eq.~\ref{eq:truereward}) under two assumptions: (1)~our click model (Eq.~\ref{eq:trustbiasmodel}), and (2)~the click probability on every item, conditioned on the logging policies per timestep $\pi_t$, is correlated with relevance:
\begin{equation}
\forall t, \forall d \in D_{q_t}, \quad \mathbbm{E}_{\bar{y}}[\alpha_{d} \mid \pi_t, q_t] \not= 0.
\label{eq:oblivious:correlationassumption}
\end{equation}
\end{theorem}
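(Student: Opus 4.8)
The plan is to show that the intervention-oblivious estimator is unbiased by proving that, in expectation over the observed clicks, each per-timestep estimate $\hat{\Delta}_{\text{IO}}$ recovers exactly the relevance probability $P(R=1 \mid d, q_t)$. Since the estimated reward $\hat{\mathcal{R}}$ (Eq.~\ref{eq:estimatedreward}) is a linear combination of these estimates weighted by the metric function $\lambda$, and the true reward $\mathcal{R}$ (Eq.~\ref{eq:truereward}) has the same structure with $P(R=1 \mid d, q)$ in place of $\hat{\Delta}_{\text{IO}}$, establishing this pointwise recovery suffices to match the two rewards in expectation.

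First I would take the expectation of a single term $\hat{\Delta}_{\text{IO}}(d \mid q_t, c_t)$ over the clicks $c_t$ generated by the logging policy $\pi_t$. The key observation is that $\mathbbm{E}[c_t(d) \mid \pi_t, q_t] = P(C=1 \mid d, \pi_t, q_t)$, and by Eq.~\ref{eq:clicktimeoblivious} this click probability decomposes as $\mathbbm{E}_{\bar{y}}[\alpha_{d} \mid \pi_t, q_t] P(R=1 \mid d, q_t) + \mathbbm{E}_{\bar{y}}[\beta_{d} \mid \pi_t, q_t]$. Substituting this into the affine transformation defining $\hat{\Delta}_{\text{IO}}$ gives
\begin{equation}
\mathbbm{E}_{c_t}\!\left[\hat{\Delta}_{\text{IO}}(d \mid q_t, c_t)\right]
= \frac{\mathbbm{E}_{\bar{y}}[\alpha_{d} \mid \pi_t, q_t] P(R=1 \mid d, q_t) + \mathbbm{E}_{\bar{y}}[\beta_{d} \mid \pi_t, q_t] - \mathbbm{E}_{\bar{y}}[\beta_{d} \mid \pi_t, q_t]}{\mathbbm{E}_{\bar{y}}[\alpha_{d} \mid \pi_t, q_t]}.
\end{equation}
The $\beta$ terms cancel in the numerator, and dividing through by $\mathbbm{E}_{\bar{y}}[\alpha_{d} \mid \pi_t, q_t]$ leaves exactly $P(R=1 \mid d, q_t)$. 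This cancellation is the crux: it is where assumption~(2), that $\mathbbm{E}_{\bar{y}}[\alpha_{d} \mid \pi_t, q_t] \not= 0$, is required, since otherwise the division is undefined. The reliance on assumption~(1) is already folded into the use of Eq.~\ref{eq:clicktimeoblivious}, which is itself just the click model of Eq.~\ref{eq:trustbiasmodel} marginalized over the displayed rankings sampled from $\pi_t$.

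Having shown $\mathbbm{E}_{c_t}[\hat{\Delta}_{\text{IO}}(d \mid q_t, c_t)] = P(R=1 \mid d, q_t)$ for every $d$ and $t$, I would then take the expectation of the full estimate $\hat{\mathcal{R}}(\pi \mid \mathcal{D})$, push the expectation through the finite sums over timesteps and documents (using linearity of expectation), and replace each $\hat{\Delta}_{\text{IO}}$ by its expected value $P(R=1 \mid d, q_t)$. The $\lambda$ weights are deterministic given $\pi$, $D_{q_t}$, and $q_t$, so they pass through the expectation unchanged. Averaging over the $T$ timesteps, and noting that in expectation the queries $q_t$ are drawn according to $P(q)$, collapses the empirical average over $\mathcal{D}$ into the query expectation appearing in $\mathcal{R}$, yielding $\mathbbm{E}[\hat{\mathcal{R}}(\pi \mid \mathcal{D})] = \mathcal{R}(\pi)$.

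I expect the main obstacle to be careful bookkeeping rather than any deep difficulty: specifically, being precise about \emph{which} randomness each expectation is taken over. The clicks $c_t$ depend on the displayed ranking $\bar{y}_t$, which is itself sampled from $\pi_t$, so the expectation $\mathbbm{E}_{c_t}$ implicitly averages over $\bar{y}_t$ as well — this is exactly what produces the policy-conditioned quantities $\mathbbm{E}_{\bar{y}}[\alpha_d \mid \pi_t, q_t]$ and $\mathbbm{E}_{\bar{y}}[\beta_d \mid \pi_t, q_t]$ in the numerator, and conflating the single-ranking $\alpha_{d,\bar{y}}$ with the policy-averaged $\mathbbm{E}_{\bar{y}}[\alpha_d \mid \pi_t, q_t]$ would break the argument. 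The intervention-oblivious estimator works precisely because its affine transformation uses the \emph{policy-averaged} bias terms, which match the policy-averaged click probability; verifying that this matching is exact (and does not require any further independence assumptions across timesteps) is the step that warrants the most care.
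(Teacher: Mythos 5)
Your proposal is correct and follows essentially the same route as the paper's proof: take the expectation of a single $\hat{\Delta}_{\text{IO}}$ term over the displayed ranking and clicks, use the policy-marginalized click probability (Eq.~\ref{eq:clicktimeoblivious}) so the $\beta$ terms cancel and the $\alpha$ denominator divides out to leave $P(R=1 \mid d, q_t)$, then push the expectation through the sums in $\hat{\mathcal{R}}$ to recover $\mathcal{R}(\pi)$. Your remark about not conflating the single-ranking $\alpha_{d,\bar{y}}$ with the policy-averaged $\mathbbm{E}_{\bar{y}}[\alpha_d \mid \pi_t, q_t]$ correctly identifies the one point that needs care, and it is handled the same way in the paper.
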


\begin{proof}
Using Eq.~\ref{eq:clicktimeoblivious} and Eq.~\ref{eq:oblivious:correlationassumption} the relevance probability can be derived from the click probability by:
\begin{equation}
P(R=1 \mid d, q) = \frac{P(C=1 \mid d, \pi_t, q) - \mathbbm{E}_{\bar{y}}[\beta_{d} \mid \pi_t, q]}{\mathbbm{E}_{\bar{y}}[\alpha_{d} \mid \pi_t, q] }.
\label{eq:relderivoblivious}
\end{equation}
Eq.~\ref{eq:relderivoblivious} can be used to show that $\hat{\Delta}_{\text{IO}}$ is an unbiased indicator of relevance:
\begin{equation}
\begin{split}
\mathbbm{E}_{\bar{y},c}\big[\hat{\Delta}_{\text{IO}}(d \mid q_t, c_t) \mid \pi_t \big]
&=
\mathbbm{E}_{\bar{y},c}\mleft[\frac{c_t(d) - \mathbbm{E}_{t,\bar{y}}[\beta_{d} \mid \pi_t, q_t]}{\mathbbm{E}_{\bar{y}}[\alpha_{d} \mid \pi_t, q_t] }\mid \pi_t, q_t \mright]
\\
&=  \frac{ \mathbbm{E}_{\bar{y},c}\mleft[ c_t(d)  \mid \pi_t, q_t\mright] - \mathbbm{E}_{\bar{y}}[\beta_{d} \mid \pi_t, q_t]}{\mathbbm{E}_{\bar{y}}[\alpha_{d} \mid \pi_t, q_t] } 
\\
&=  \frac{P(C=1 \mid d, \pi_t, q_t) - \mathbbm{E}_{\bar{y}}[\beta_{d} \mid \pi_t, q_t]}{\mathbbm{E}_{\bar{y}}[\alpha_{d} \mid \pi_t, q_t] }
 \\
& = P(R=1 \mid d, q_t).
\end{split}
\label{eq:singleproofoblivious}
\end{equation}
Finally, combining Eq.~\ref{eq:truereward} with Eq.~\ref{eq:estimatedreward} and Eq.~\ref{eq:singleproofoblivious} reveals that $\hat{\mathcal{R}}$ based on the intervention-oblivious estimator $\hat{\Delta}_{\text{IO}}$ is unbiased w.r.t.\  $\mathcal{R}$:
\begin{align}
\mathbbm{E}_{t,q,\bar{y},c}
\mleft[ \hat{\mathcal{R}}(\pi \mid \mathcal{D}) \mright] \hspace{-1.5cm}&
\\
& = 
\sum_{q} P(q) \sum_{d \in D_q}  \lambda(d \mid D_{q}, \pi, q) \frac{1}{T} \sum_{t=1}^T \mathbbm{E}_{\bar{y},c}\mleft[ \hat{\Delta}_{\text{IO}}(d \mid c, q) \,|\, \pi_t, q\mright] \nonumber
 \\
& = 
\sum_{q} P(q) \sum_{d \in D_q} \lambda(d \mid D_q, \pi, q) P(R = 1 \mid d, q)
= \mathcal{R}(\pi).
\qedhere
\end{align}
\end{proof}

\subsection{Example with an online intervention}

\begin{figure}[t]
\centering
{\renewcommand{\arraystretch}{0.2}
\begin{tabular}{c c}
\includegraphics[scale=0.72]{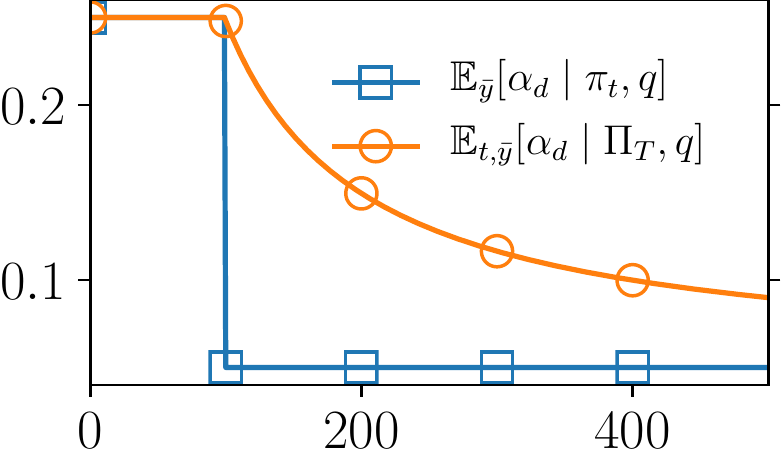} 
&
\includegraphics[scale=0.72]{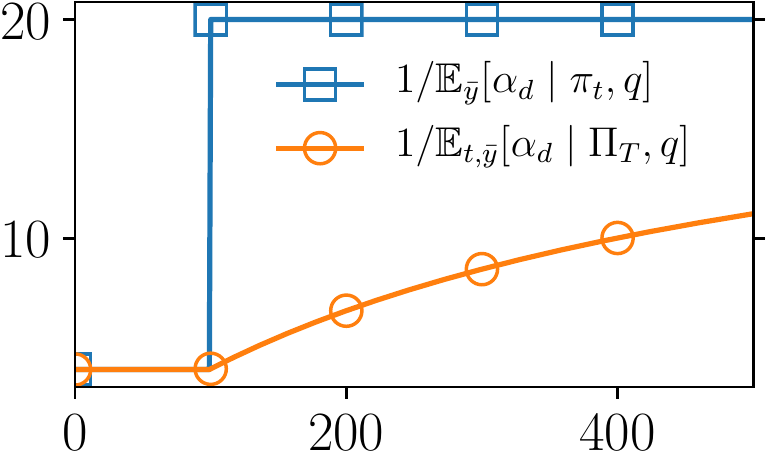}
\\
\small Timestep $T$
&
\small Timestep $T$
\end{tabular}
}
\caption{
Example of an online intervention and the weights used by the intervention-oblivious and intervention-aware estimators for a single item as more data is gathered.
}
\label{fig:intuition}
\end{figure}

Existing estimators for counterfactual \ac{LTR} are designed for a scenario where the logging policy is static:
\begin{equation}
\forall (\pi_t,\pi_{t'})\in \mathcal{D}, \quad \pi_t = \pi_{t'}.
\end{equation}
However, we note that if an online intervention takes place~\citep{jagerman2019comparison}, meaning that the logging policy was updated during the gathering of data:
\begin{equation}
\exists (\pi_t,\pi_{t'})\in \mathcal{D}, \quad \pi_t \not = \pi_{t'},
\end{equation}
the intervention-oblivious estimator is still unbiased.
This was already proven in Theorem~\ref{theorem:oblivious} because its assumptions cover both scenarios where online interventions do and do not take place.

However, the individual corrections of the intervention-oblivious estimator are only based on the single logging policy that was deployed at the timestep of each specific click.
It is completely oblivious to the logging policies applied at different timesteps.
Although this does not lead to bias in its estimation, it does result in unintuitive behavior.
We illustrate this behavior in Figure~\ref{fig:intuition}, here a logging policy that results in $\mathbbm{E}[\alpha_d \mid \pi_t, q] = 0.25$ for an item $d$ is deployed during the first $t \leq 100$ timesteps.
Then an online intervention takes place and the logging policy is updated so that for $t > 100$,  $ \mathbbm{E}[\alpha_d \mid \pi_t, q] = 0.05$.
The intervention-oblivous estimator weights clicks inversely to $\mathbbm{E}[\alpha_d \mid \pi_t]$; so clicks for $t \leq 100$ will be weighted by $1/0.25=4$ and for $t > 100$ by $1/0.05=20$.
Thus, there is a sharp and sudden difference in how clicks are treated before and after $t=100$.
What is unintuitive about this example is that the way clicks are treated after $t=100$ is completely independent of what the situation was before $t=100$.
For instance, consider another item $d'$ where $\forall t, \mathbbm{E}[\alpha_{d'} \mid \pi_t, q] = 0.05$.
If both $d$ and $d'$ are clicked on timestep $t = 101$, these clicks would both be weighted by $20$, despite the fact that $d$ has so far been treated completely different than $d'$.
One would expect that in such a case the click on $d$ should be weighted less, to compensate for the high $\mathbbm{E}[\alpha_d \mid \pi_t, q]$ it had in the first 100 timesteps.
The question is whether such behavior can be incorporated in an estimator without introducing bias.

\section{The Intervention-Aware Estimator}

Our goal for the intervention-aware estimator is to find an estimator whose individual corrections are not only based on single logging policies, but instead consider the entire collection of logging policies used to gather the data $\mathcal{D}$.
Importantly, this estimator should also be unbiased w.r.t.\ position bias, item-selection bias and trust bias.

For ease of notation, we use $\Pi_T$ for the set of policies that gathered the data in $\mathcal{D}$:
$\Pi_T = \{ \pi_1, \pi_2, \ldots, \pi_T\}$.
The probability of a click can be conditioned on this set:
\begin{equation}
\begin{split}
P(C=1 \mid d, \Pi_T, q) 
& = \frac{1}{T}\sum_{t=1}^T \sum_{\bar{y}} \pi_t(\bar{y} \mid q) \mleft( \alpha_{d,\bar{y}} P(R=1 \mid d, q) + \beta_{d,\bar{y}} \mright)
\\
&= \mathbbm{E}_{t,\bar{y}}[\alpha_{d} \mid \Pi_T , q] P(R=1 \mid d, q) + \mathbbm{E}_{t,\bar{y}}[\beta_{d} \mid \Pi_T, q],
\end{split}
\label{eq:clicktimeaware}
\end{equation}
where the expected values of $\alpha$ and $\beta$ conditioned on $\Pi_T$ are:
\begin{equation}
\begin{split}
\mathbbm{E}_{t,\bar{y}}[\alpha_{d} \mid \Pi_T, q] =\frac{1}{T} \sum_{t=1}^T \sum_{\bar{y}} \pi_t(\bar{y} \mid q) \alpha_{d,\bar{y}},
\\
\mathbbm{E}_{t,\bar{y}}[\beta_{d} \mid \Pi_T, q] =\frac{1}{T} \sum_{t=1}^T \sum_{\bar{y}} \pi_t(\bar{y} \mid q) \beta_{d,\bar{y}}.
\end{split}
\end{equation}
Thus $P(C=1 \mid d, \Pi_T, q)$ gives us the probability of a click given that any policy from $\Pi_T$ could be deployed.
We propose our \emph{intervention-aware estimator} that corrects for bias using the expectations conditioned on $\Pi_T$:
\begin{equation}
\hat{\Delta}_{\text{IA}}(d \mid q_t, c_t) =  \frac{c_t(d) - \mathbbm{E}_{t,\bar{y}}[\beta_{d} \mid \Pi_T, q_t]}{\mathbbm{E}_{t,\bar{y}}[\alpha_{d} \mid \Pi_T, q_t] }.
\label{eq:timeaware}
\end{equation}
The salient difference with the intervention-oblivious estimator is that the expectations are conditioned on $\Pi_T$, all logging policies in $\mathcal{D}$, instead of an individual logging policy $\pi_t$.
While the difference with the intervention-oblivious estimator seems small, our experimental results show that the differences in performance are actually quite sizeable.
Lastly, we note that when no interventions take place the intervention-oblivious estimator and intervention-aware estimators are equivalent.
Because the intervention-aware estimator is the only existing counterfactual \ac{LTR} estimator whose corrections are influenced by online interventions, we consider it to be a step that helps to bridge the gap between counterfactual and online \ac{LTR}.

Before we revisit our online intervention example with our novel intervention-aware estimator, we prove that it is unbiased w.r.t.\ our assumed click model (Section~\ref{sec:userinteractions}).

\begin{theorem}
The estimated reward $\hat{\mathcal{R}}$ (Eq.~\ref{eq:estimatedreward}) using the intervention-aware estimator (Eq.~\ref{eq:timeaware}) is unbiased w.r.t.\ the true reward $\mathcal{R}$ (Eq.~\ref{eq:truereward}) under two assumptions: (1)~our click model (Eq.~\ref{eq:trustbiasmodel}), and (2)~the click probability on every item, conditioned on the set of logging policies $\Pi_T$, is correlated with relevance:
\begin{equation}
\forall q, \forall d \in D_q, \quad  \mathbbm{E}_{t,\bar{y}}[\alpha_{d} \mid \Pi_T, q] \not= 0.
\label{eq:correlationassumption}
\end{equation}
\end{theorem}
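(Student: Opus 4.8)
The plan is to mirror the structure of the proof for Theorem~\ref{theorem:oblivious}, since the intervention-aware estimator differs from the intervention-oblivious estimator only by conditioning its bias corrections on the full set of logging policies $\Pi_T$ rather than on a single $\pi_t$. First I would establish that the intervention-aware correction is an unbiased indicator of relevance when the expectation is taken jointly over the timestep $t$, the sampled displayed ranking $\bar{y}$, and the clicks $c$. The key algebraic fact is Eq.~\ref{eq:clicktimeaware}, which expresses $P(C=1 \mid d, \Pi_T, q)$ as an affine function of $P(R=1 \mid d, q)$ with slope $\mathbbm{E}_{t,\bar{y}}[\alpha_{d} \mid \Pi_T, q]$ and intercept $\mathbbm{E}_{t,\bar{y}}[\beta_{d} \mid \Pi_T, q]$. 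Using assumption~(2), Eq.~\ref{eq:correlationassumption}, the slope is nonzero, so I can invert this affine relation to isolate the relevance probability, exactly as Eq.~\ref{eq:relderivoblivious} does in the oblivious case.

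The crucial and slightly more delicate step is computing $\mathbbm{E}_{t,\bar{y},c}[c_t(d) \mid \Pi_T, q]$. Here I would note that sampling from the data $\mathcal{D}$ amounts to first drawing the timestep $t$ uniformly from $\{1,\dots,T\}$, then drawing the displayed ranking $\bar{y}_t \sim \pi_t(\cdot \mid q)$, and finally drawing the click according to the trust-bias model (Eq.~\ref{eq:trustbiasmodel}). Marginalizing over $t$ and $\bar{y}$ in this order yields precisely $P(C=1 \mid d, \Pi_T, q)$ as defined in Eq.~\ref{eq:clicktimeaware}, because that quantity is itself the uniform average over $t$ of the per-policy click probabilities. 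This is the main obstacle: unlike the oblivious proof, where each click is weighted by the correction matching its own logging policy and the per-timestep expectation closes immediately, here the correction denominator $\mathbbm{E}_{t,\bar{y}}[\alpha_{d} \mid \Pi_T, q]$ is a \emph{constant} across all timesteps, so the unbiasedness relies on averaging the numerator over $t$ rather than on a term-by-term cancellation. I would therefore carry out the expectation as a single average over the joint draw, writing
\begin{equation}
\mathbbm{E}_{t,\bar{y},c}\big[c_t(d) \mid \Pi_T, q\big]
= \frac{1}{T}\sum_{t=1}^{T}\sum_{\bar{y}} \pi_t(\bar{y}\mid q)\,P(C=1\mid d,\bar{y},q)
= P(C=1\mid d,\Pi_T,q),
\end{equation}
after which substitution into the estimator gives $\mathbbm{E}_{t,\bar{y},c}[\hat{\Delta}_{\text{IA}}(d\mid q_t,c_t)\mid \Pi_T, q] = P(R=1\mid d,q)$.

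Finally I would lift this single-document result to the full estimated reward. Substituting the unbiasedness of $\hat{\Delta}_{\text{IA}}$ into Eq.~\ref{eq:estimatedreward} and comparing with the definition of the true reward in Eq.~\ref{eq:truereward}, the $\lambda$ weights and the query distribution $P(q)$ pass through the expectation unchanged because neither depends on the clicks or on $t$, giving
\begin{equation}
\mathbbm{E}_{t,q,\bar{y},c}\big[\hat{\mathcal{R}}(\pi \mid \mathcal{D})\big]
= \sum_{q} P(q) \sum_{d \in D_q} \lambda(d \mid D_q, \pi, q)\, P(R=1 \mid d, q)
= \mathcal{R}(\pi).
\end{equation}
I expect the proof to be almost entirely parallel to that of Theorem~\ref{theorem:oblivious}, with the only conceptual subtlety being the reinterpretation of the sampling procedure so that the average over timesteps is folded into a single expectation conditioned on $\Pi_T$; once that is made explicit, the affine inversion and the linearity argument are routine.
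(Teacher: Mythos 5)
Your proposal is correct and follows essentially the same route as the paper's proof: invert the affine relation of Eq.~\ref{eq:clicktimeaware} using assumption~(2), show $\mathbbm{E}_{t,\bar{y},c}[c_t(d) \mid \Pi_T, q] = P(C=1\mid d,\Pi_T,q)$ so that $\hat{\Delta}_{\text{IA}}$ is an unbiased indicator of relevance, and then pass linearity through the $\lambda$ weights and query distribution. Your explicit remark that the denominator is constant across timesteps and that unbiasedness comes from averaging the numerator over $t$ is exactly the step the paper performs (somewhat more tersely) when it replaces $\mathbbm{E}_{t,\bar{y},c}[c_t(d)\mid\Pi_T,q_t]$ by $P(C=1\mid d,\Pi_T,q_t)$.
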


\begin{proof}
Using Eq.~\ref{eq:clicktimeaware} and Eq.~\ref{eq:correlationassumption} the relevance probability can be derived from the click probability by:
\begin{equation}
P(R=1 \mid d, q) = \frac{P(C=1 \mid d, \Pi_T, q) - \mathbbm{E}_{t,\bar{y}}[\beta_{d} \mid \Pi_T, q]}{\mathbbm{E}_{t,\bar{y}}[\alpha_{d} \mid \Pi_T, q] }.
\label{eq:relderiv}
\end{equation}
Eq.~\ref{eq:relderiv} can be used to show that $\hat{\Delta}_{\text{IA}}$ is an unbiased indicator of relevance:
\begin{equation}
\begin{split}
\mathbbm{E}_{t,\bar{y},c}\big[\hat{\Delta}_{\text{IA}}(d \mid q_t, c_t) \mid \Pi_T \big]
&=
\mathbbm{E}_{t,\bar{y},c}\mleft[\frac{c_t(d) - \mathbbm{E}_{t,\bar{y}}[\beta_{d} \mid \Pi_T, q_t]}{\mathbbm{E}_{t,\bar{y}}[\alpha_{d} \mid \Pi_T, q_t] }\mid \Pi_T, q_t \mright]
\\
&=  \frac{ \mathbbm{E}_{t,\bar{y},c}\mleft[ c_t(d)  \mid \Pi_T, q_t\mright] - \mathbbm{E}_{t,\bar{y}}[\beta_{d} \mid \Pi_T, q_t]}{\mathbbm{E}_{t,\bar{y}}[\alpha_{d} \mid \Pi_T, q_t] }
 \\
&=  \frac{P(C=1 \mid d, \Pi_T, q_t) - \mathbbm{E}_{t,\bar{y}}[\beta_{d} \mid \Pi_T, q_t]}{\mathbbm{E}_{t,\bar{y}}[\alpha_{d} \mid \Pi_T, q_t] } 
 \\
&= P(R=1 \mid d, q_t).
\end{split}
\label{eq:singleproof}
\end{equation}
Finally, combining Eq.~\ref{eq:singleproof} with Eq.~\ref{eq:estimatedreward} and Eq.~\ref{eq:truereward} reveals that $\hat{\mathcal{R}}$ based on the intervention-aware estimator $\hat{\Delta}_{\text{IA}}$ is unbiased w.r.t.\  $\mathcal{R}$:
\begin{align}
\mathbbm{E}_{t,q,\bar{y},c}
\mleft[ \hat{\mathcal{R}}(\pi \mid \mathcal{D}) \mright] \hspace{-1cm} &
 \\
& = 
\sum_{q} P(q) \sum_{d \in D_q}  \lambda(d \mid D_{q}, \pi, q) \mathbbm{E}_{t,\bar{y},c}\mleft[ \hat{\Delta}_{\text{IA}}(d \mid c, q) \mid \Pi_T, q\mright]
\nonumber \\
& = 
\sum_{q} P(q) \sum_{d \in D_q} \lambda(d \mid D_q, \pi, q) P(R = 1 \mid d, q)
\nonumber \\
& = \mathcal{R}(\pi). \qedhere
\end{align}
\end{proof}

\subsection{Online intervention example revisited}

We will now revisit the example in Figure~\ref{fig:intuition}, but this time consider how the intervention-aware estimator treats item $d$.
Unlike the intervention-oblivious estimator, clicks are weighted by $\mathbbm{E}[\alpha_d \mid \Pi_T]$ which means that the exact timestep $t$ of a click does not matter, as long as $t < T$.
Furthermore, the weight of a click can change as the total number of timesteps $T$ increases.
In other words, as more data is gathered, the intervention-aware estimator retroactively updates the weights of all clicks previously gathered.

We see that this behavior avoids the sharp difference in weights of clicks occurring before the intervention $t \leq 100$ and after $t > 100$.
For instance, for a click on $d$ occuring at $t=101$ while $T=400$, results in $\mathbbm{E}[\alpha_d \mid \Pi_T] = 0.1$ and thus a weight of $1/0.1 = 10$.
This is much lower than the intervention-oblivious weight of $1/0.05=20$, because the intervention-aware estimator is also considering the initial period where $\mathbbm{E}[\alpha_{d} \mid \pi_t, q]$ was high.
Thus we see that the intervention-aware estimator has the behavior we intuitively expected: it weights clicks based on how the item was treated throughout all timesteps.
In this example, it leads weights considerably smaller than those used by the intervention-oblivious estimator.
In \ac{IPS} estimators, low propensity weights are known to lead to high variance~\citep{joachims2017unbiased}, thus we may expect that the intervention-aware estimator reduces variance in this example.

\subsection{An online and counterfactual approach}
\label{sec:onlinecounterapproach}

While the intervention-aware estimator takes into account the effect of interventions, it does not prescribe what interventions should take place.
In fact, it will work with any interventions that result in Eq.~\ref{eq:correlationassumption} being true, including the situation where no intervention takes place at all.
For clarity, we will describe the intervention approach we applied during our experiments here.
Algorithm~\ref{alg:online} displays our online/counterfactual approach.
As input it requires a starting policy ($\pi_0$), a choice for $\lambda$, the $\alpha$ and $\beta$ parameters, a set of intervention timesteps ($\Phi$), and the final timestep $T$.

The algorithm starts by initializing an empty set to store the gathered interaction data (Line~\ref{line:initdata}) and initializes the logging policy with the provided starting policy $\pi_0$.
Then for each timestep $i$ in $\Phi$ the dataset is expanded using the current logging policy so that $|\mathcal{D}| = i$ (Line~\ref{line:interventiongather}).
In other words, for $i - |\mathcal{D}|$ timesteps $\pi$ is used to display rankings to user-issued queries, and the resulting interactions are added to $\mathcal{D}$.
Then a policy is optimized using the available data in $\mathcal{D}$ which becomes the new logging policy.
For this optimization, we split the available data in training and validation partitions in order to do early stopping to prevent overfitting.
We use stochastic gradient descent where we use $\pi_0$ as the initial model; this practice is based on the assumption that $\pi_0$ has a better performance than a randomly initialized model.
Thus, during optimization, gradient calculation uses the intervention-aware estimator on the training partition of $\mathcal{D}$, and after each epoch, optimization is stopped if the intervention-aware estimator using the validation partition of $\mathcal{D}$ suspects overfitting.
Each iteration results in an intervention as the resulting policy replaces the logging policy, and thus changes the way future data is logged.
After iterating over $\Phi$ is completed, more data is gathered so that $|\mathcal{D}| = T$ and optimization is performed once more.
The final policy is the end result of the procedure.

We note that, depending on $\Phi$, our approach can be either online, counterfactual, or somewhere in between.
If $\Phi = \emptyset$ the approach is fully counterfactual since all data is gathered using the static $\pi_0$.
Conversely, if $\Phi = \{1,2,3,\ldots,T\}$ it is fully online since at every timestep the logging policy is updated.
In practice, we expect a fully online procedure to be infeasible as it is computationally expensive and user queries may be issued faster than optimization can be performed.
In our experiments we will investigate the effect of the number of interventions on the approach's performance.

\section{Experimental Setup}

Our experiments aim to answer the following research questions:
\begin{enumerate}[align=left, label={\bf RQ\arabic*},leftmargin=*]
\item Does the intervention-aware estimator lead to higher performance than existing counterfactual \ac{LTR} estimators when online interventions take place?
\item Does the intervention-aware estimator lead to performance comparable with existing online \ac{LTR} methods?
\end{enumerate}
We use the semi-synthetic experimental setup that is common in existing work on both online \ac{LTR}~\citep{oosterhuis2018differentiable, oosterhuis2019optimizing, hofmann2013reusing, zhuang2020counterfactual} and counterfactual \ac{LTR}~\citep{vardasbi2020trust, ovaisi2020correcting, joachims2017unbiased}.
In this setup, queries and documents are sampled from a dataset based on commercial search logs, while user interactions and rankings are simulated using probabilistic click models.
The advantage of this setup is that it allows us to investigate the effects of online interventions on a large scale while also being easy to reproduce by researchers without access to live ranking systems.

\begin{algorithm}[t]
\caption{Our Online/Counterfactual \ac{LTR} Approach} 
\label{alg:online}
\begin{algorithmic}[1]
\STATE \textbf{Input}: Starting policy: $\pi_0$; Metric weight function: $\lambda$;\\
\qquad\quad
Inferred bias parameters: $\alpha$ and $\beta$;\\
\qquad\quad
Interventions steps: $\Phi$; End-time: $T$.
\STATE $\mathcal{D} \leftarrow \{\}$ \hfill \textit{\small // initialize data container} \label{line:initdata}
\STATE $\pi \gets \pi_0$ \hfill \textit{\small // initialize logging policy}  \label{line:initpolicy}
\FOR{$i \in \Phi$}
\STATE $\mathcal{D} \leftarrow \mathcal{D} \cup \text{gather}(\pi, i-|\mathcal{D}|)$  \hfill \textit{\small // observe $i-|\mathcal{D}|$ timesteps}  \label{line:interventiongather}
\STATE $\pi \leftarrow \text{optimize}(\mathcal{D}, \alpha, \beta, \pi_0)$ \hfill \textit{\small // optimize based on available data} \label{line:interventionoptimize}
\ENDFOR
\STATE $\mathcal{D} \leftarrow \mathcal{D} \cup \text{gather}(\pi, T-|\mathcal{D}|)$  \hfill \textit{\small // expand data to $T$} \label{line:finalgather}
\STATE $\pi \leftarrow \text{optimize}(\mathcal{D}, \alpha, \beta, \pi_0)$ \hfill \textit{\small // optimize based on final data} \label{line:finaloptimize}
\RETURN $\pi$ 
\end{algorithmic}
\end{algorithm}

We use the publicly-available Yahoo Webscope dataset~\citep{Chapelle2011}, which consists of \numprint{29921} queries with, on average, 24 documents preselected per query.
Query-document pairs are represented by 700 features and five-grade relevance annotations ranging from not relevant (0) to perfectly relevant (4). The queries are divided into training, validation and test partitions.

At each timestep, we simulate a user-issued query by uniformly sampling from the training and validation partitions.
Subsequently, the preselected documents are ranked according to the logging policy, and user interactions are simulated on the top-5 of the ranking using a probabilistic click model.
We apply Eq.~\ref{eq:clickmodelused} with $\alpha = [0.35, 0.53, 0.55, 0.54, 0.52]$ and $\beta = [0.65, 0.26, 0.15, 0.11, 0.08]$; the relevance probabilities are based on the annotations from the dataset: $P(R = 1 \mid d,q) = 0.25 \cdot \text{relevance\_label}(d,q)$.
The values of $\alpha$ and $\beta$ were chosen based on those reported by \citet{agarwal2019addressing} who inferred them from real-world user behavior.
In doing so, we aim to emulate a setting where realistic levels of position bias, item-selection bias, and trust bias are present.

All counterfactual methods use the approach described in Section~\ref{sec:onlinecounterapproach}.
To simulate a production ranker policy, we use supervised \ac{LTR} to train a ranking model on 1\% of the training partition~\citep{joachims2017unbiased}.
The resulting production ranker has much better performance than a randomly initialized model, yet still leaves room for improvement.
We use the production ranker as the initial logging policy.
The size of $\Phi$ (the intervention timesteps) varies per run, and the timesteps in $\Phi$ are evenly spread on an exponential scale.
All ranking models are neural networks with two hidden layers, each containing 32 hidden units with sigmoid activations.
Gradients are calculated using a Monte-Carlo method following \citet{oosterhuis2020taking} (Chapter~\ref{chapter:06-onlinecountereval}).
All policies apply a softmax to the document scores produced by the ranking models to obtain a probability distribution over documents.
Clipping is only applied on the training clicks, denominators of any estimator are clipped by $10/\sqrt{|\mathcal{D}|}$ to reduce variance.
Early stopping is applied based on counterfactual estimates of the loss using (unclipped) validation clicks.

The following methods are compared:
\begin{enumerate*}[label=(\roman*)]
\item The intervention-aware estimator.
\item The intervention-oblivious estimator.
\item The policy-aware estimator~\citep{oosterhuis2020topkrankings} (Chapter~\ref{chapter:04-topk}).
\item The affine estimator~\citep{vardasbi2020trust}.
\item \ac{PDGD}~\citep{oosterhuis2018differentiable} (Chapter~\ref{chapter:02-pdgd}), we apply \ac{PDGD} both online and as a counterfactual method.
As noted by \citet{ai2020unbiased}, this can be done by separating the logging models from the learned model and, basing the debiasing weights on the logging function.
\item Biased \ac{PDGD}, identical to \ac{PDGD} except that we do not apply the debiasing weights.
\item \ac{COLTR}~\citep{zhuang2020counterfactual}.
\end{enumerate*}
We compute the \ac{NDCG} of both the logging policy and of a policy trained on all available data.
Every reported result is the average of 20 independent runs, figures plot the mean, shaded areas indicate the standard deviation.

\begin{figure}[t]
\centering
\begin{tabular}{r c}
\rotatebox[origin=lt]{90}{\small \hspace{4.3em} NDCG} &
\includegraphics[scale=0.58]{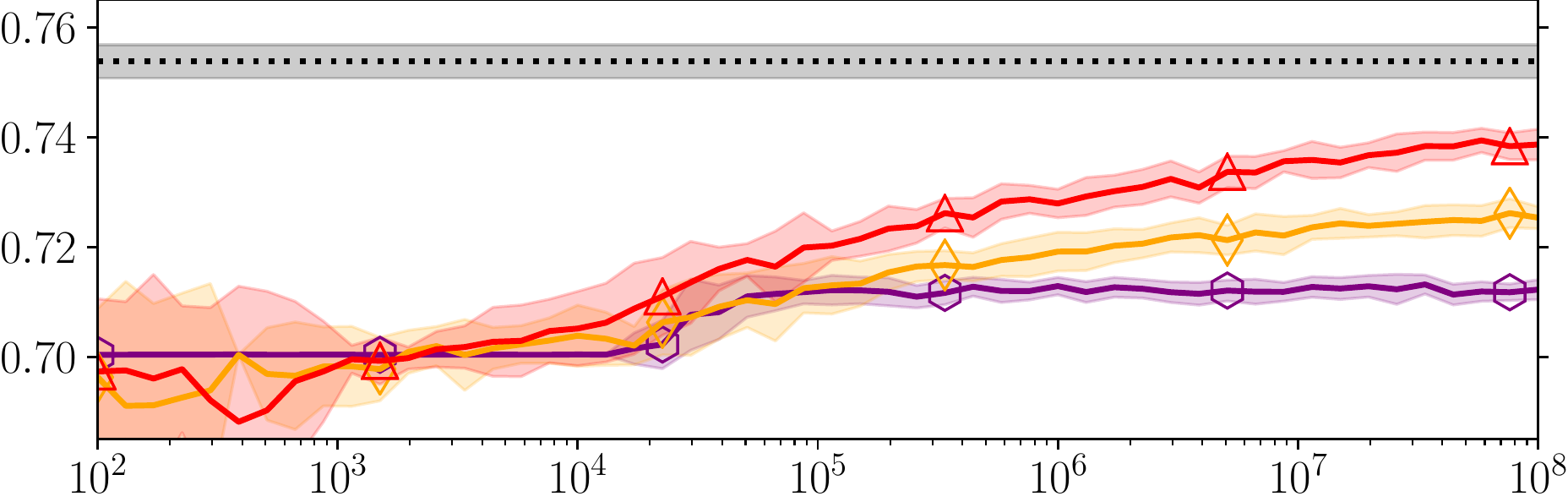}
\\
\rotatebox[origin=lt]{90}{\small \hspace{4.3em} NDCG} &
\includegraphics[scale=0.58]{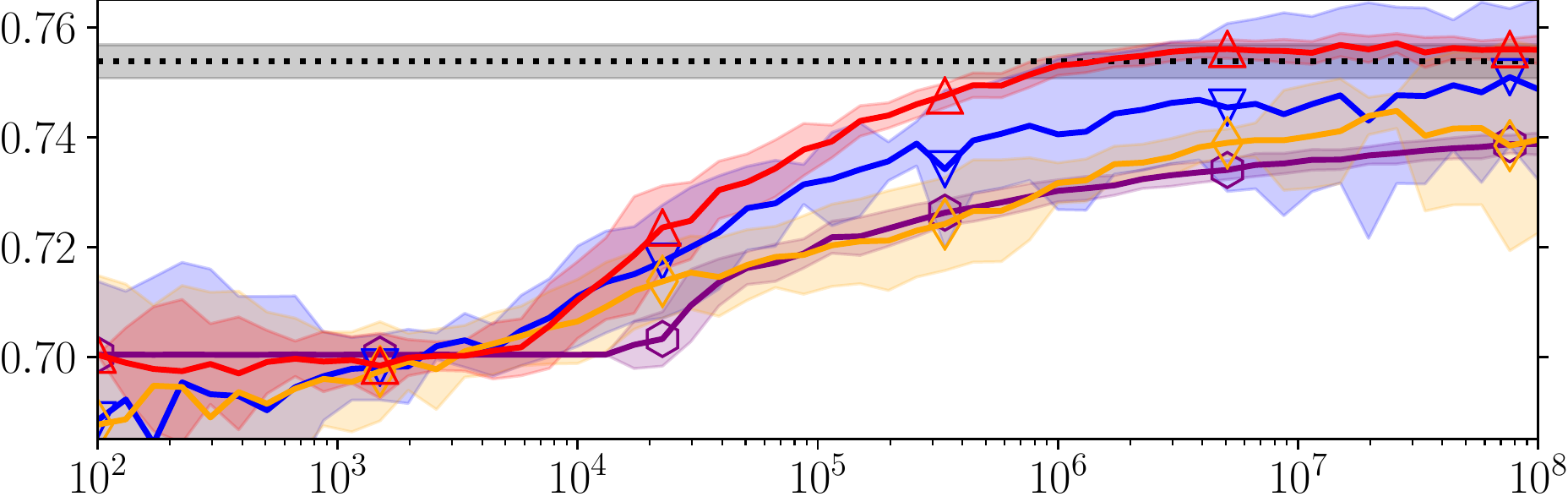}
\\
& \footnotesize \hspace{1em} Number of Logged Queries
\end{tabular}
\includegraphics[scale=0.53]{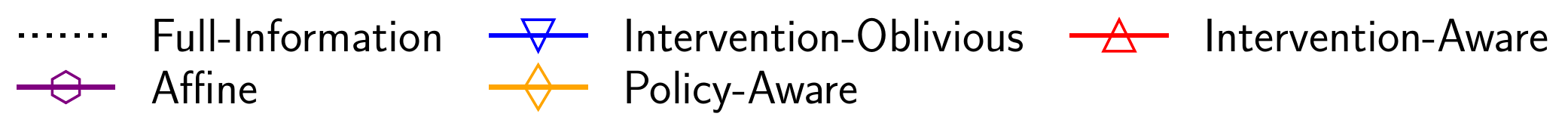}
\caption{
Comparison of counterfactual \ac{LTR} estimators. Top: Counterfactual runs (no interventions); Bottom: Online runs (50 interventions).
}
\label{fig:estimators}
\end{figure}

\section{Results and Discussion}

\subsection{Comparison with counterfactual \acs{LTR}}

To answer the first research question: \emph{whether the intervention-aware estimator leads to higher performance than existing counterfactual LTR estimators when online interventions take place},
we consider Figure~\ref{fig:estimators} which displays the performance of \ac{LTR} using different counterfactual estimators.

First we consider the top of Figure~\ref{fig:estimators} which displays performance in the counterfactual setting where the logging policy is static.\footnote{Since under a static logging policy the intervention-aware and the intervention-oblivious estimators are equivalent, our conclusions apply to both in this setting.}
We clearly see that the affine estimator converges at a suboptimal point of convergence, a strong indication of bias.
The most probable cause is that the affine estimator is heavily affected by the presence of item-selection bias.
In contrast, neither the policy-aware estimator nor the intervention-aware estimator have converged after $10^8$ queries.
However, very clearly the intervention-aware estimator quickly reaches a higher performance.
While the theory guarantees that it will converge at the optimal performance, we were unable to observe the number of queries it requires to do so.
From the result in the counterfactual setting, we conclude that by correcting for position-bias, trust-bias, and item-selection bias the intervention-aware estimator already performs better without online interventions.

Second, we turn to the bottom of Figure~\ref{fig:estimators} which considers the online setting where the estimators perform 50 online interventions during logging.
We see that online interventions have a positive effect on all estimators; leading to a higher performance for the affine and policy-aware estimators as well.
However, interventions also introduce an enormous amount of variance for the policy-aware and intervention-oblivious estimators.
In stark contrast, the amount of variance of the intervention-aware estimator hardly increases while it learns much faster than the other estimators.

Thus we answer the first research question positively:
the inter\-vention-aware estimator leads to higher performance than existing estimators, moreover, its data-efficiency becomes even greater when online interventions take place.

\begin{figure}[t]
\centering
\begin{tabular}{r c}
\rotatebox[origin=lt]{90}{\small \hspace{4.3em} NDCG} &
\includegraphics[scale=0.58]{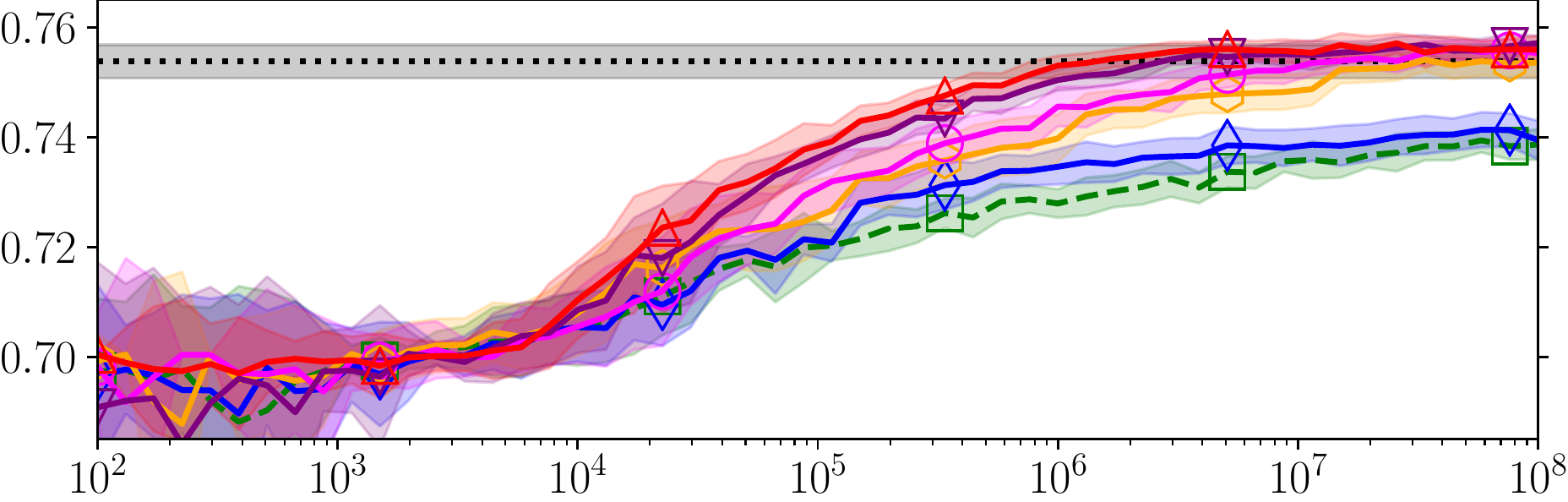}\\
\rotatebox[origin=lt]{90}{\small \hspace{0.1em} Logging Policy NDCG} &
\includegraphics[scale=0.58]{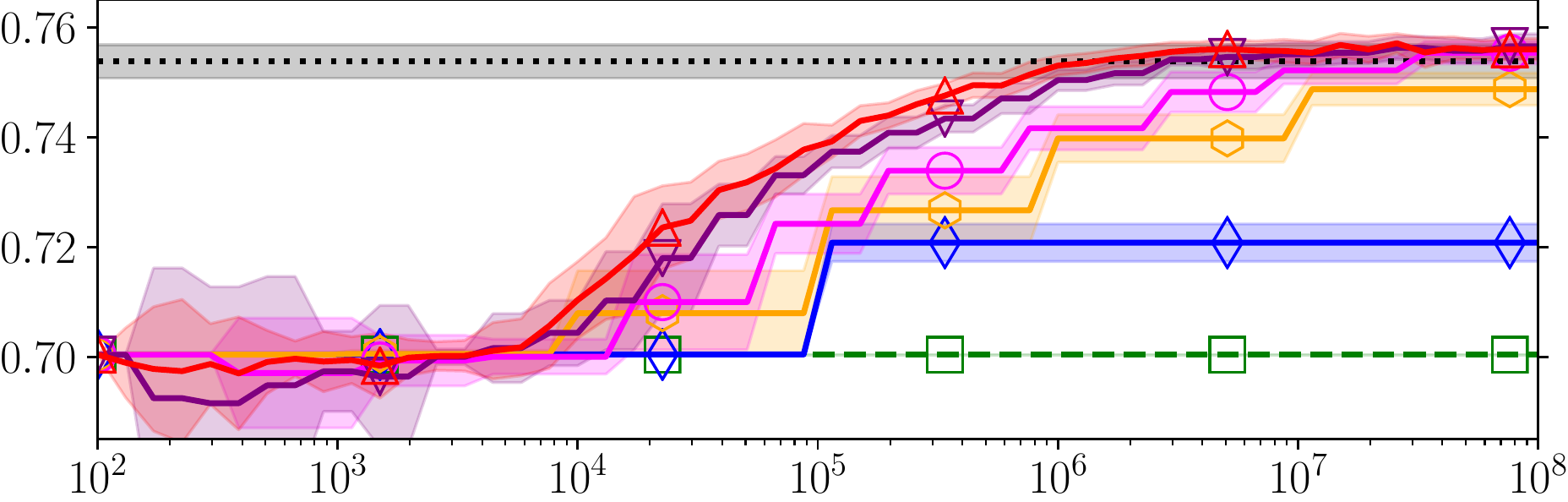}
\\
& \footnotesize \hspace{1em} Number of Logged Queries
\end{tabular}
\includegraphics[scale=0.52]{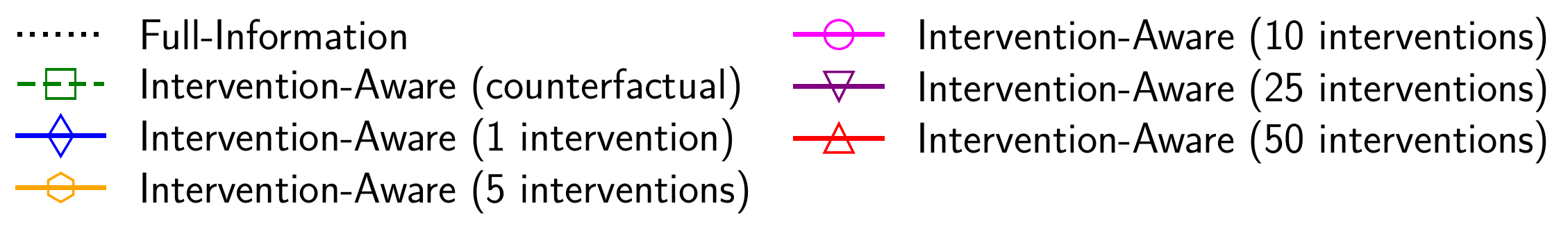}
\caption{
Effect of online interventions on \ac{LTR} with the intervention-aware estimator.
}
\label{fig:interventions}
\end{figure}

\subsection{Effect of interventions}

To better understand how much the intervention-aware estimator benefits from online interventions, we compared its performance under varying numbers of interventions in Figure~\ref{fig:interventions}.
It shows both the performance of the resulting model when training from the logged data (top), as the performance of the logging policy which reveals when interventions take place (bottom).
When comparing both graphs, we see that interventions lead to noticeable immediate improvements in data-efficiency.
For instance, when only 5 interventions take place the intervention-aware estimator needs more than 20 times the amount of data to reach optimal performance as with 50 interventions. 
Despite these speedups there are no large increases in variance.
From these observations, we conclude that the intervention-aware estimator can effectively and reliably utilize the effect of online interventions for optimization, leading to enormous increases in data-efficiency.

\begin{figure}[t]
\centering
\begin{tabular}{r c}
\rotatebox[origin=lt]{90}{\small \hspace{4.3em} NDCG} &
\includegraphics[scale=0.55]{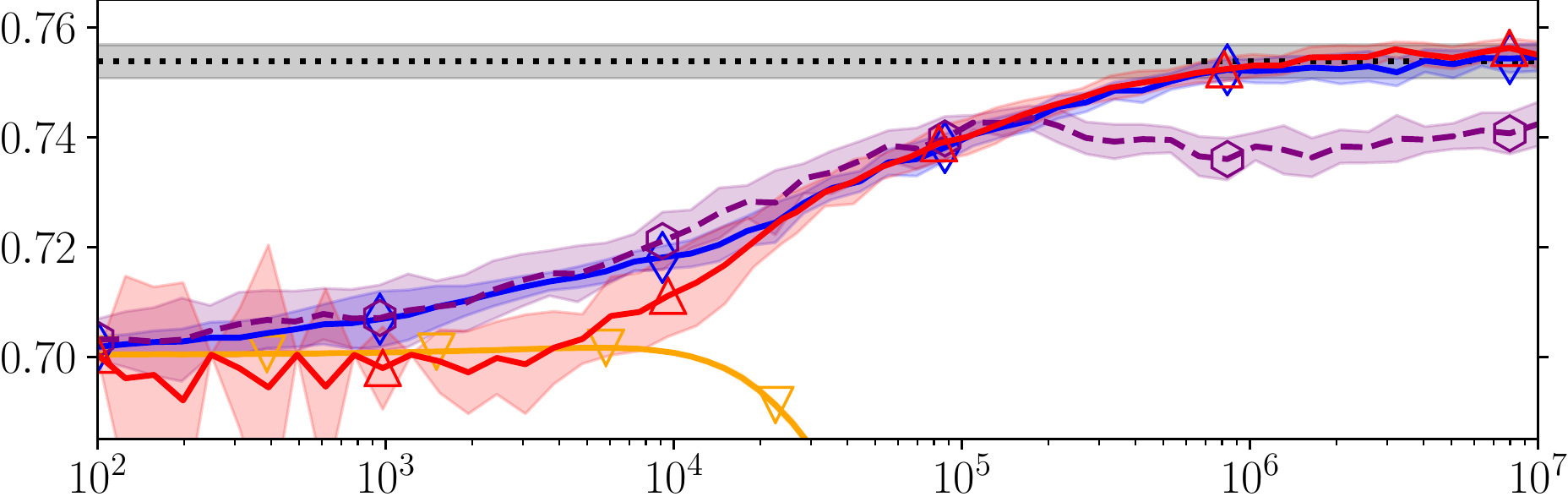}\\
\rotatebox[origin=lt]{90}{\small \hspace{0.1em} Logging Policy NDCG} &
\includegraphics[scale=0.55]{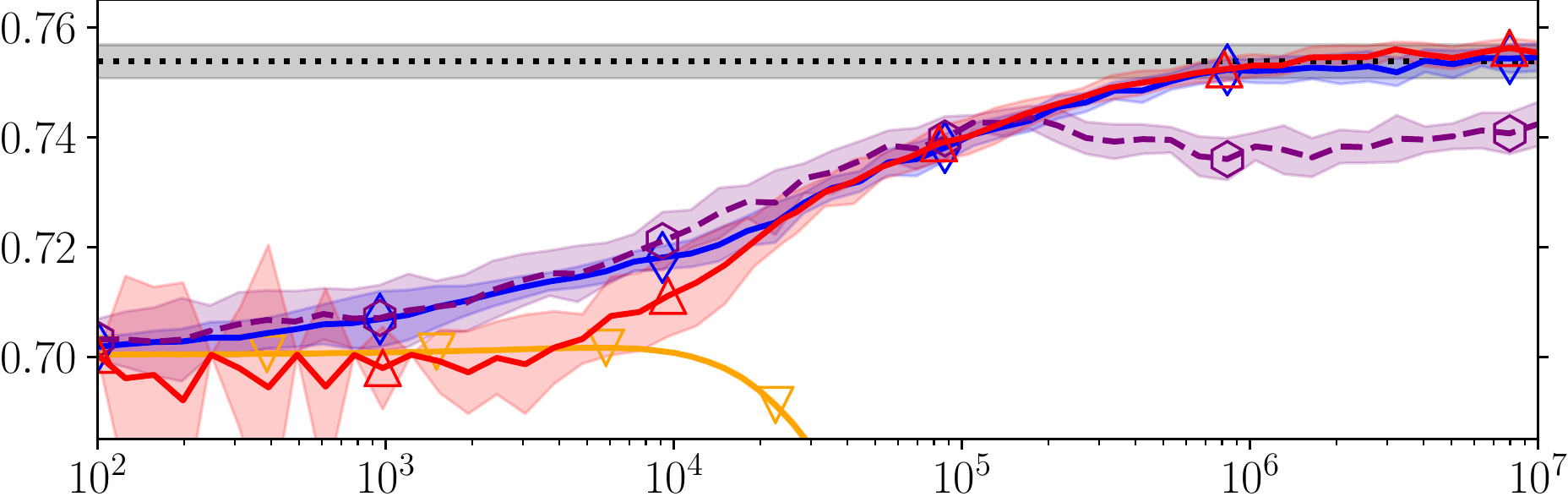}
\\
& \footnotesize \hspace{1em} Number of Logged Queries
\end{tabular}
\includegraphics[scale=0.52]{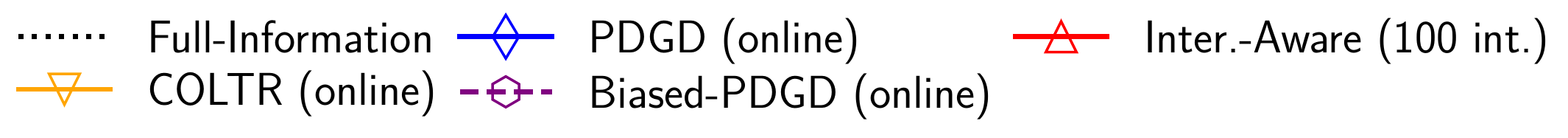} 
\caption{
Comparison with online \ac{LTR} methods.
}
\label{fig:pdgdcomp}
\end{figure}

\subsection{Comparison with online \acs{LTR}}

In order to answer the second research question: \emph{whether the inter\-vention-aware estimator leads to performance comparable with existing online \ac{LTR} methods},
we consider Figure~\ref{fig:pdgdcomp} which displays the performance of two online \ac{LTR} methods: \ac{PDGD} and \ac{COLTR} and the intervention-aware estimator with 100 online interventions.

First, we notice that \ac{COLTR} is unable to outperform its initial policy, moreover, we see its performance drop as the number of iterations increase.
We were unable to find hyper-parameters for \ac{COLTR} where this did not occur.
It seems likely that \ac{COLTR} is unable to deal with trust-bias, thus causing this poor performance.
However, we note that \citet{zhuang2020counterfactual} already show \ac{COLTR} performs poorly when no bias or noise is present, suggesting that it is perhaps an unstable method overall.

Second, we see that the difference between \ac{PDGD} and the inter\-vention-aware estimator becomes negligible after $2\cdot10^4$ queries.
Despite \ac{PDGD} running fully online, and the intervention-aware estimator only performing 100 interventions in total.
We do note that \ac{PDGD} initially outperforms the intervention-aware estimator, thus it appears that \ac{PDGD} works better with low numbers of interactions.
Additionally, we should also consider the difference in overhead: while \ac{PDGD} requires an infrastructure that allows for fully online learning, the intervention-aware estimator only requires 100 moments of intervention, yet has comparable performance after a short initial period.
By comparing Figure~\ref{fig:pdgdcomp} to Figure~\ref{fig:estimators}, we see that the intervention-aware estimator is the first counterfactual \ac{LTR} estimator that leads to stable performance while being comparably efficient with online \ac{LTR} methods.

Thus we answer the second research question positively:
besides an initial period of lower performance, the intervention-aware estimator has comparable performance to online \ac{LTR} and only requires 100 online interventions to do so.
To the best of our knowledge, it is the first counterfactual \ac{LTR} method that can achieve this feat.

\begin{figure}[t]
\centering
\begin{tabular}{r c}
\rotatebox[origin=lt]{90}{\small \hspace{4.3em} NDCG} &
\includegraphics[scale=0.58]{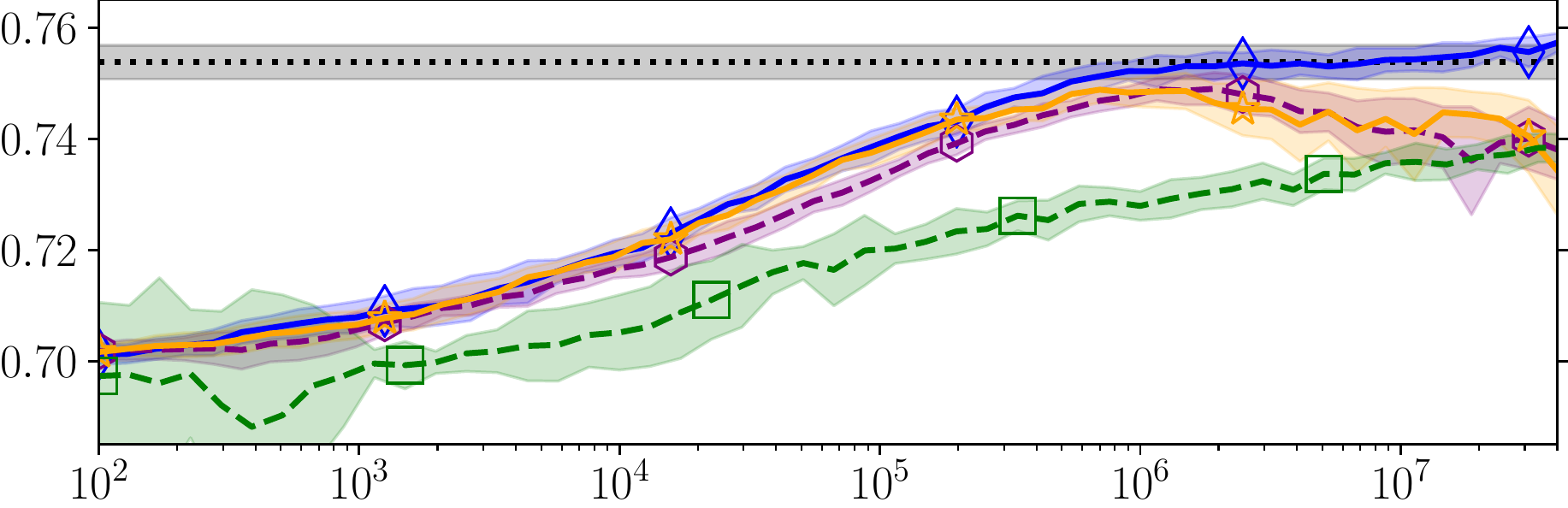}\\
\rotatebox[origin=lt]{90}{\small \hspace{0.1em} Logging Policy NDCG} &
\includegraphics[scale=0.58]{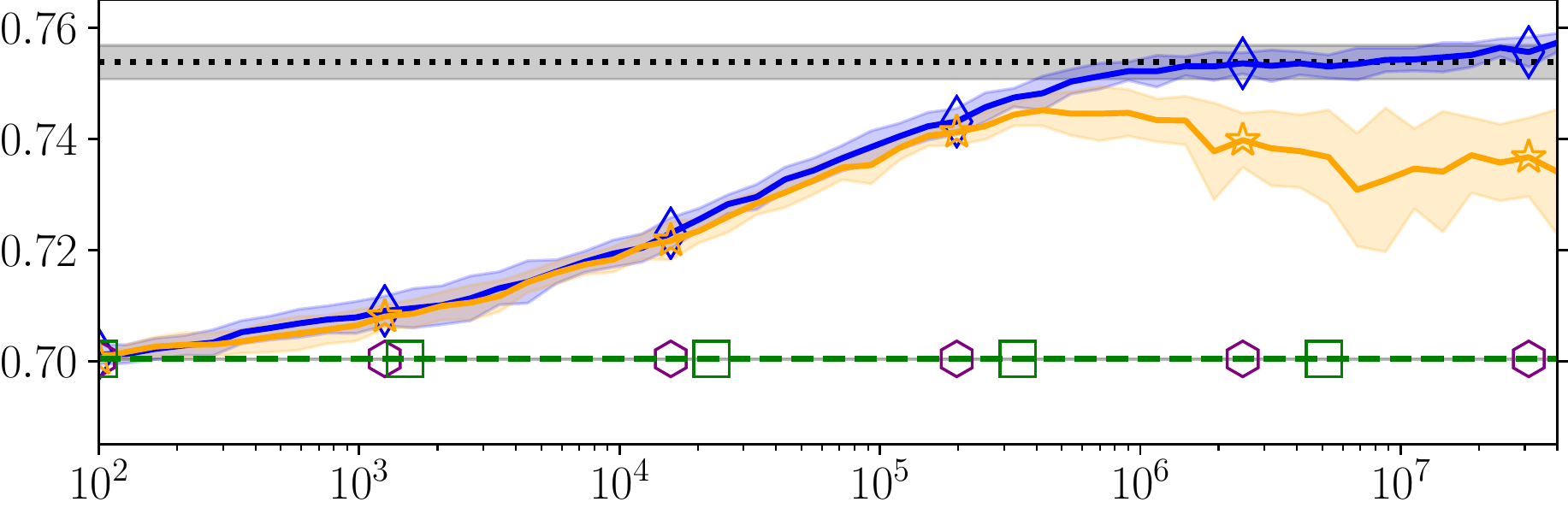}
\\
& \footnotesize \hspace{1em} Number of Logged Queries
\end{tabular}
\includegraphics[scale=0.55]{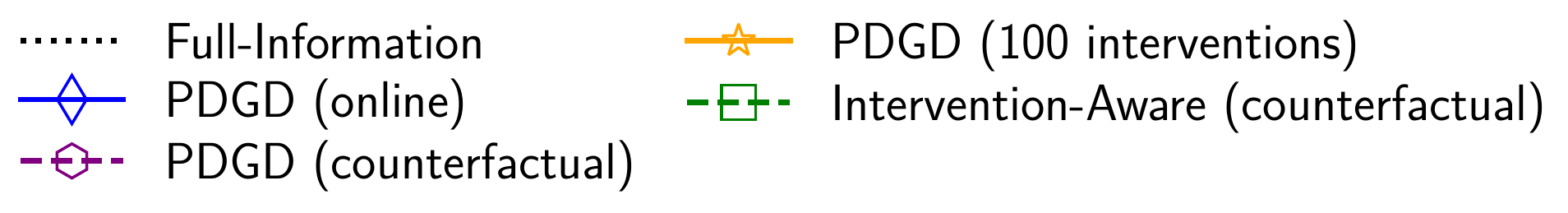} 
\caption{
Effect of online interventions on \ac{PDGD}.
}
\label{fig:pdgdcount}
\end{figure}

\subsection{Understanding the effectiveness of \acs{PDGD}}
Now that we concluded that the intervention-aware estimator reaches performance comparable to \ac{PDGD} when enough online interventions take place, the opposite question seems equally interesting:
\emph{Does \ac{PDGD} applied counterfactually provide performance comparable to existing counterfactual \ac{LTR} methods?}

To answer this question, we ran \ac{PDGD} in a counterfactual way following \citet{ai2020unbiased}, both fully counterfactual and with only 100 interventions.
The results of these runs are displayed in Figure~\ref{fig:pdgdcount}.
Quite surprisingly, the performance of \ac{PDGD} ran counterfactually and with 100 interventions, reaches much higher performance than the intervention-aware estimator without interventions.
However, after a peak in performance around $10^6$ queries, the \ac{PDGD} performance starts to drop.
This drop cannot be attributed to overfitting, since online \ac{PDGD} does not show the same behavior.
Therefore, we must conclude that \ac{PDGD} is biased when not ran fully online.
This conclusion does not contradict the existing theory, since in Chapter~\ref{chapter:02-pdgd} we only proved it is unbiased w.r.t.\ \emph{pairwise} preferences.
In other words, \ac{PDGD} is not proven to unbiasedly optimize a ranking metric, and therefore also not proven to converge on the optimal model.
This drop is particularly unsettling because \ac{PDGD} is a continuous learning algorithm: there is no known early stopping method for \ac{PDGD}.
Yet these results show there is a great risk in running \ac{PDGD} for too many iterations if it is not applied fully online.
To answer our \ac{PDGD} question: although \ac{PDGD} reaches high performance when run counterfactually and appears to have great data-efficiency initially, it appears to converge at a suboptimal biased model.
Thus we cannot conclude that \ac{PDGD} is a reliable method for counterfactual \ac{LTR}.

To better understand \ac{PDGD}, we removed its debiasing weights resulting in the performance shown in Figure~\ref{fig:pdgdcomp} (Biased-PDGD).
Clearly, \ac{PDGD} needs these weights to reach optimal performance. 
Similarly, from Figure~\ref{fig:pdgdcount} we see it also needs to be run fully online.
This makes the choice between the intervention-aware estimator and \ac{PDGD} complicated: on the one hand, \ac{PDGD} does not require us to know the $\alpha$ and $\beta$ parameters, unlike the intervention-aware estimator; furthermore, \ac{PDGD} has better initial data-efficiency even when not run fully online.
On the other hand, there are no theoretical guarantees for the convergence of \ac{PDGD}, and we have observed that not running it fully online can lead to large drops in performance.
It seems the choice ultimately depends on what guarantees a practitioner prefers.

\section{Conclusion}

In this chapter, we have introduced an intervention-aware estimator: an extension of existing counterfactual approaches that corrects for position-bias, trust-bias, and item-selection bias, while also considering the effect of online interventions.
Our results show that the intervention-aware estimator outperforms existing counterfactual \ac{LTR} estimators, and greatly benefits from online interventions in terms of data-efficiency.
With only 100 interventions it is able to reach a performance comparable to state-of-the-art online \ac{LTR} methods.
These findings allow us to answer thesis research question \ref{thesisrq:onlinecounterltr}: whether the counterfactual \ac{LTR} approach be extended to perform highly effective online \ac{LTR}.
From our experimental results, it appears that the answer is positive: using the intervention-aware estimator and 100  online interventions the performance of state-of-the-art online \ac{LTR} methods can be matched.

With the introduction of the intervention-aware estimator, we hope to further unify the fields of online \ac{LTR} and counterfactual \ac{LTR} as it appears to be the most reliable method for both settings.
Future work could investigate what kind of interventions work best for the intervention-aware estimator.
Since we have already seen in Chapter~\ref{chapter:06-onlinecountereval} that such an approach is effective for counterfactual/online ranking evaluation.

In retrospect, this chapter has put many findings from previous chapters in a different perspective.
Chapter~\ref{chapter:02-pdgd} introduced the concept of unbiasedness w.r.t.\ pairwise preferences and proved \ac{PDGD} had this property.
The experimental results of this chapter have shown that unbiasedness w.r.t.\ pairwise preferences is not enough to guarantee convergence at an optimal level of \ac{NDCG}.
Furthermore, Chapter~\ref{chapter:03-oltr-comparison} showed \ac{PDGD} is very robust to noise and bias, but with the results of this chapter we now know that \ac{PDGD} needs to be run online for this robustness.
The policy-aware estimator in Chapter~\ref{chapter:04-topk} is a precursor to the intervention-aware estimator of this chapter.
While Chapter~\ref{chapter:04-topk} realized that taking the logging policy into account is beneficial to counterfactual estimation, this chapter showed that taking the idea further, by accounting for all logging policies, provides even more benefits.
Lastly, Chapter~\ref{chapter:06-onlinecountereval} looked at bridging the divide between online and counterfactual evaluation; in retrospect, the results of Chapter~\ref{chapter:06-onlinecountereval} might have been even better had it used the intervention-aware estimator.
Together, Chapter~\ref{chapter:06-onlinecountereval} and this chapter suggest that an online method should both optimize its logging policy and use an intervention-aware estimator to learn, thus leaving a potentially very fruitful direction for future work.

\begin{subappendices}

\section{Notation Reference for Chapter~\ref{chapter:06-onlinecounterltr}}
\label{notation:06-onlinecounterltr}

\begin{center}
\begin{tabular}{l l}
 \toprule
\bf Notation  & \bf Description \\
\midrule
$k$ & the number of items that can be displayed in a single ranking \\
$t$ & a timestep number \\
$T$ & the total number of timesteps (gathered so far) \\
$\mathcal{D}$ & the available data \\
$\mathcal{R}(\pi)$ & the metric reward of a policy $\pi$ \\
$\hat{\mathcal{R}}(\pi \mid \mathcal{D})$ & an estimate of the metric reward of a policy $\pi$ \\
$q$ & a user-issued query \\
$D_q$ & the set of items to be ranked for query $q$ \\
$d$ & an item to be ranked\\
$y$ & a ranked list \\
$\pi$ & a ranking policy\\
$\pi(R \mid q)$ & the probability that policy $\pi$ displays ranking $R$ for query $q$ \\
$\pi(R_x |\, R_{1:x-1}, q)$ & probability of $\pi$ adding item $R_x$ given $R_{1:x-1}$ is already placed \\
$\Pi_T$ & the set of logging policies deployed up to timestep $T$\\
$ \lambda(d \mid D_q, \pi, q)$ & a metric function that weights items depending on their rank \\
$c(d)$ & a function indicating item $d$ was clicked in click pattern $c$\\
$o(d)$ & a function indicating item $d$ was observed at iteration $i$ \\
\bottomrule
\end{tabular}
\end{center}
\end{subappendices}

\bookmarksetup{startatroot} %
\addtocontents{toc}{\bigskip} %

\chapter{Conclusions}
\label{chapter:conclusions}

In Section~\ref{section:introduction:rqs} we stated the overarching question that we aim to answer in this thesis:
\begin{itemize}
\item[] \em Could there be a single general theoretically-grounded approach that has competitive performance for both evaluation and \ac{LTR} from user clicks on rankings, in both the counterfactual and online settings?
\end{itemize}
The thesis has explored this question by looking at both online and counterfactual families of \ac{LTR} methods, and in particular, to see if one of these approaches can be extended to be effective at both the online and counterfactual \ac{LTR} scenarios.
In this final chapter, we will summarize the findings of the thesis and discuss how they reflect on our overarching thesis question.
Finally, we will consider future research directions for the field of \ac{LTR} from user clicks.

\section{Main Findings}

This section look back at the thesis research questions posed in Section~\ref{section:introduction:rqs}.
We divide our discussion in two parts discussing online methods and counterfactual methods for \ac{LTR} and evaluation, respectively.

\subsection{Novel methods for online learning and evaluation for ranking}

The first part of the thesis focussed on online \ac{LTR} methods.
Chapter~\ref{chapter:01-online-evaluation} looked at multileaving methods~\citep{Schuth2014a} for comparing multiple ranking systems at once and asked:
\begin{enumerate}[label=\textbf{RQ\arabic*},ref=\textbf{RQ\arabic*}]
\item[\ref{thesisrq:multileaving}] Does the effectiveness of online ranking evaluation methods scale to large comparisons? 
\end{enumerate}
We introduced the novel \ac{PPM} algorithm, \ac{PPM} bases evaluation on inferred pairwise item preferences.
Furthermore, \ac{PPM} is proven to have \emph{fidelity} -- it is provably unbiased in unambiguous cases~\citep{hofmann2013fidelity} -- and \emph{considerateness} -- it is safe w.r.t.\ the user experience during the gathering of clicks.
From our theoretical analysis, we find that no other existing multileaving method manages to meet both criteria.
In addition, our empirical results indicate that using \ac{PPM} leads to a much lower number of errors, in particular when applied to large scale comparisons.
Therefore, we answered \ref{thesisrq:multileaving} positively: \ac{PPM} is shown to be effective at online ranking evaluation for large scale comparisons.

Besides in Chapter~\ref{chapter:01-online-evaluation}, online evaluation was also the subject of Chapter~\ref{chapter:06-onlinecountereval}, which addressed the question:
\begin{enumerate}[label=\textbf{RQ\arabic*},ref=\textbf{RQ\arabic*},resume]
\item[\ref{thesisrq:interleaving}] Are existing interleaving methods truly capable of unbiased evaluation w.r.t.\ position bias?
\end{enumerate}
We showed that under a basic rank-based model of position bias (common in counterfactual \ac{LTR}~\citep{joachims2017unbiased, wang2018position, agarwal2019estimating}), three of the most prevalent interleaving algorithms are not unbiased: \acl{TDI}~\citep{radlinski2008does}, \acl{PI}~\citep{hofmann2011probabilistic}, and \acl{OI}~\citep{radlinski2013optimized}.
For each of these three methods, we showed that situations exist where the binary outcome of the method does not agree with the expected binary difference in \ac{CTR}.
In other words, under a basic assumption of position bias, situations exist where these interleaving methods are expected to prefer one system over another, while the latter system has a higher expected \ac{CTR} than the former.
Thus, we answer \ref{thesisrq:interleaving} negatively: the most prevalent interleaving methods are not unbiased w.r.t.\ position bias.

This finding can be extended to the multileaving methods: \acl{TDM}~\citep{Schuth2014a}, \acl{PM}~\citep{schuth2015probabilistic}, and \acl{OM}~\citep{Schuth2014a}, since they are equivalent to their interleaving counterparts when only two systems are compared.
While we did not examine it in this thesis, it is likely that \ac{PPM} also fails to be unbiased under basic position bias.
Nonetheless, an evaluation method can still be effective despite being biased, for instance, if the systematic error is small or situations where bias occurs are rare.

Chapter~\ref{chapter:02-pdgd} looked at online \ac{LTR} methods.
Existing online \ac{LTR} methods have relied on sampling model variants and comparing them using online evaluation~\citep{yue2009interactively}.
In response to the existing online \ac{LTR} approach, Chapter~\ref{chapter:02-pdgd} considered the question:
\begin{enumerate}[label=\textbf{RQ\arabic*},ref=\textbf{RQ\arabic*},resume]
\item[\ref{thesisrq:pdgd}] Is online \ac{LTR} possible without relying on model-sampling and online evaluation?
\end{enumerate}
We answered this question positively by introducing \ac{PDGD}, an online \ac{LTR} method that learns from inferred pairwise preferences and uses a debiased pairwise loss.
Besides proving that \ac{PDGD} is unbiased w.r.t.\ pairwise preferences, our experimental results show that \ac{PDGD} greatly outperforms the previous state-of-the-art \ac{DBGD}~\citep{yue2009interactively} algorithm in terms of data-efficiency and convergence.
Furthermore, \ac{PDGD} is the first online \ac{LTR} method that can effectively optimize neural networks as ranking models.

Chapter~\ref{chapter:06-onlinecounterltr} took another look at \ac{PDGD}, in particular at conditions where \ac{PDGD} is no longer effective.
The results in Chapter~\ref{chapter:06-onlinecounterltr} show that \ac{PDGD} fails to reach optimal performance without debiasing weights or when not applied fully online.
A particular worrisome observation was that, when not applied fully online, the performance of \ac{PDGD} can degrade as more interactions are gathered.
While this behavior looks similar, it is not overfitting since \ac{PDGD} does not display it when applied online.
Instead, it appears that \ac{PDGD} becomes severely biased when not applied fully online.
Therefore, we can conclude that the fact that \ac{PDGD} is unbiased w.r.t.\ pairwise preferences is not enough to guarantee unbiased optimization.
It appears that we do not fully understand why \ac{PDGD} appears to be so effective when run online.

The results of Chapter~\ref{chapter:02-pdgd} had surprising implications on \ac{DBGD}, for instance, it appeared that \ac{DBGD} was not able to reach the performance of \ac{PDGD} at convergence.
Meanwhile, \ac{DBGD} forms the basis of most existing online \ac{LTR} methods.
This prompted us to further investigate \ac{DBGD} in Chapter~\ref{chapter:03-oltr-comparison}, where we asked:
\begin{enumerate}[label=\textbf{RQ\arabic*},ref=\textbf{RQ\arabic*},resume]
\item[\ref{thesisrq:dbgd}] Are \ac{DBGD} \ac{LTR} methods reliable in terms of theoretical soundness and empirical performance?
\end{enumerate}
By critically examining the theoretical assumptions underlying the \ac{DBGD} method, we found that they are impossible when optimizing a deterministic ranking model.
This means that the existing theoretical guarantees of \ac{DBGD} are unsound in a lot of previous work where such models were used~\citep{hofmann2013reusing, hofmann11:balancing, oosterhuis2018differentiable, oosterhuis2016probabilistic, schuth2016mgd, wang2018efficient, yue2009interactively, zhao2016constructing, wang2019variance}.
Moreover, our empirical analysis revealed that ideal circumstances exist where \ac{DBGD} is still unable to find the optimal model.
In other words, even in scenarios where optimization should be very easy, \ac{DBGD} was unable to get near optimal performance.
These findings lead us to answer \ref{thesisrq:dbgd} negatively: our empirical results show that \ac{DBGD} is very unreliable and its theoretical guarantees do not cover the most common \ac{LTR} ranking models.

\subsection{Extending the counterfactual approach to learning and evaluation for ranking}

The second part of the thesis considered counterfactual \ac{LTR} methods for optimization and evaluation.
In particular, we tried to widen the applicability of counterfactual \ac{LTR} methods and their effectiveness as online methods.

First, Chapter~\ref{chapter:04-topk} recognized that the original \ac{IPS} counterfactual method~\citep{joachims2017unbiased} is not unbiased when item selection bias occurs.
This bias occurs when not all items can be displayed in a single ranking; this bias is unavoidable in top-$k$ ranking settings where only $k$ items can be displayed.
One of the questions Chapter~\ref{chapter:04-topk} addressed is:
\begin{enumerate}[label=\textbf{RQ\arabic*},ref=\textbf{RQ\arabic*},resume]
\item[\ref{thesisrq:topk}] Can counterfactual \ac{LTR} be extended to top-$k$ ranking settings?
\end{enumerate}
We showed that one can correct for item selection bias by basing propensity weights on both the position bias of the user and the stochastic ranking behavior of the logging policy.
Our novel policy-aware estimator uses this idea to extend the original \ac{IPS} approach by taking into account the logging policy behavior.
We prove that, assuming rank-based position bias, the policy-aware estimator is unbiased as longs as the logging policy gives every relevant item a non-zero probability of appearing in the top-$k$ of a ranking.
Furthermore, in our experimental results the policy-aware estimator approximates optimal performance regardless of the amount of item-selection bias present.
Therefore, we answer \ref{thesisrq:topk} positively: with the introduction of the policy-aware estimator the applicability of counterfactual \ac{LTR} has been extended to top-$k$ ranking settings.

Besides learning from top-$k$ feedback, Chapter~\ref{chapter:04-topk} also considered optimizing for top-$k$ metrics.
Interestingly, the existing counterfactual \ac{LTR} methods~\citep{hu2019unbiased, agarwal2019counterfactual} for optimizing \ac{DCG} metrics are very dissimilar from the state-of-the-art in supervised \ac{LTR}~\citep{wang2018lambdaloss, burges2010ranknet}.
To address this dissimilarity, Chapter~\ref{chapter:04-topk} posed the following question:
\begin{enumerate}[label=\textbf{RQ\arabic*},ref=\textbf{RQ\arabic*},resume]
\item[\ref{thesisrq:lambdaloss}] Is it possible to apply state-of-the-art supervised \ac{LTR} methods to the counterfactual \ac{LTR} problem?
\end{enumerate}
We answer this question positively by showing that, with some small adjustments, the LambdaLoss framework~\citep{wang2018lambdaloss} can be applied to counterfactual \ac{LTR} losses,
thus enabling the application of state-of-the-art supervised \ac{LTR} to counterfactual \ac{LTR}.
The implication of this finding is that there does not need to be a division between state-of-the-art supervised \ac{LTR} and counterfactual \ac{LTR}.
In other words, counterfactual \ac{LTR} methods can build on the best methods from the supervised \ac{LTR} field.

Chapter~\ref{chapter:05-genspec} takes a look at tabular and feature-based \ac{LTR} methods.
Tabular methods optimize a tabular ranking model~\citep{kveton2015cascading, lagree2016multiple, lattimore2019bandit, Komiyama2015, zoghi-online-2017}, which  remembers the optimal ranking,
in contrast with feature-based methods that optimize models that use the features of items to predict the optimal ranking.
The tabular models are extremely expressive and can capture any possible ranking, making them always capable of converging on the optimal ranking~\citep{zoghi2016click}.
However, their learned behavior does not generalize to previously unseen circumstances.
Conversely, the learned behavior of feature-based models can generalize very well to previously unseen circumstances~\citep{liu2009learning, bishop2006pattern}.
But feature-based models can also be limited by the available features, because often the available features do not provide enough information to predict the optimal ranking.
Thus feature-based \ac{LTR} generalizes very well to unseen circumstances, whereas tabular \ac{LTR} can specialize extremely well in specific circumstances.
Inspired by this tradeoff, we asked the following question in Chapter~\ref{chapter:05-genspec}:
\begin{enumerate}[label=\textbf{RQ\arabic*},ref=\textbf{RQ\arabic*},resume]
\item[\ref{thesisrq:genspec}] Can the specialization ability of tabular online \ac{LTR} be combined with the robust feature-based approach of counterfactual \ac{LTR}?
\end{enumerate}
Our answer is in the form of the novel \ac{GENSPEC} algorithm, a method for combining the behavior of a single robust generalized model and numerous specialized models.
\ac{GENSPEC} optimizes a single feature-based ranking model for performance across all queries, and many  tabular ranking models each specialized for a single query.
Then \ac{GENSPEC} applies a meta-policy that uses high-confidence bounds to safely decide per query which model to deploy.
Consequently, for previously unseen queries \ac{GENSPEC} chooses the generalized model which utilizes robust feature-based prediction. 
For other queries, it can decide to deploy a specialized model, i.e., if it has enough data to confidently determine that the tabular model has found the better ranking.
Our experimental results show that \ac{GENSPEC} successfully combines robust performance on unseen queries with extremely high performance at convergence.
Accordingly, we answer \ref{thesisrq:genspec} positively: using \ac{GENSPEC} we can combine the specialization properties of tabular \ac{LTR} with the robust generalization of feature-based \ac{LTR}.
For the \ac{LTR} field, the introduction of  \ac{GENSPEC} shows that specialization does not need to be unique to tabular online \ac{LTR}, instead it can be a property of counterfactual \ac{LTR} as well.

As discussed above, Chapter~\ref{chapter:06-onlinecountereval} proved that several prominent interleaving methods are biased w.r.t.\ a basic model of position bias.
Nonetheless, empirical results suggest that these online ranking evaluation methods are still very effective.
This leaves a gap for a theoretically-grounded online ranking evaluation method that is also very effective.
To address this gap, Chapter~\ref{chapter:06-onlinecountereval} considers counterfactual ranking evaluation, which has strong theoretical guarantees, and asks:
\begin{enumerate}[label=\textbf{RQ\arabic*},ref=\textbf{RQ\arabic*},resume]
\item[\ref{thesisrq:onlineeval}] Can counterfactual evaluation methods for ranking be extended to perform efficient and effective online evaluation?
\end{enumerate}
We realized that with the introduction of the policy-aware estimator in Chapter~\ref{chapter:04-topk}, the logging policy has an important role in counterfactual estimation.
Using the policy-aware estimator as a starting point, we introduce the \ac{LogOpt} that optimizes the logging policy to minimize the variance of the policy-aware estimator.
\ac{LogOpt} can be deployed during the gathering of data, periodically or fully online, and thus changes the logging behavior through an intervention.
As such, it turns the counterfactual evaluation approach with the policy-aware estimator into an online approach.
Our experimental results show that applying \ac{LogOpt} increases the data-efficiency of counterfactual evaluation with the policy-aware estimator.
The performance with \ac{LogOpt} is comparable to A/B testing and interleaving, but in contrast with interleaving, the policy-aware estimator applied with \ac{LogOpt} does not have a systematic error.
Therefore, we answer \ref{thesisrq:onlineeval} positively: by optimizing the logging policy with \ac{LogOpt}, counterfactual evaluation can perform effective and data-efficient online evaluation.

Inspired by how Chapter~\ref{chapter:06-onlinecountereval} bridges part of the gap between online and counterfactual ranking evaluation, Chapter~\ref{chapter:06-onlinecounterltr} addressed our final question:
\begin{enumerate}[label=\textbf{RQ\arabic*},ref=\textbf{RQ\arabic*},resume]
\item[\ref{thesisrq:onlinecounterltr}] Can the counterfactual \ac{LTR} approach be extended to perform highly effective online \ac{LTR}?
\end{enumerate}
The motivation is similar to the previous chapter: we would like to find a theoretically-grounded method that is effective at both counterfactual \ac{LTR} and online \ac{LTR}.
Since counterfactual \ac{LTR} has strong theoretical guarantees, we used it as a starting point.
Then we introduced the novel intervention-aware estimator which does not assume a stationary logging policy.
As a result, the estimator takes into account the fact that an online intervention may change the logging policy during the gathering of data.
Thus when applied online, the intervention-aware estimator does not only consider the logging policy used when a click was logged but also all the other logging policies applied at all other timesteps.
In addition, the intervention-aware estimator also combines the theoretical properties of recent counterfactual \ac{LTR} estimators: it is the first estimator that can correct for both position bias, item-selection bias, and trust bias.
Our experimental results show that the intervention-aware estimator results in much lower variance, than an equivalent estimator that ignores the effect of interventions.
Furthermore, in our experimental setting it outperformed all existing counterfactual estimators, with especially large differences when online interventions take place.
Importantly, we observed that the intervention-aware estimator matches the performance of \ac{PDGD} with as few as $100$ interventions during learning.
Besides a small initial period, \ac{LTR} with the intervention-aware estimator was able to reach the performance of the most effective online \ac{LTR} methods.
Therefore, we answer \ref{thesisrq:onlinecounterltr} positively: the intervention-aware estimator extends the counterfactual \ac{LTR} approach to perform highly effective online \ac{LTR}.
For the \ac{LTR} field, this demonstrates that methods do not have to be either part of counterfactual \ac{LTR} or online \ac{LTR}.
By designing them for both applications at once, they can be highly effective in both scenarios.

Finally, we note the complementary nature of the findings in the second part of the thesis.
Many of the contributions of earlier chapters were used in later chapters.
For instance, the methods introduced in Chapter~\ref{chapter:05-genspec} and Chapter~\ref{chapter:06-onlinecountereval} made use of the policy-aware estimator proposed in Chapter~\ref{chapter:04-topk}, and Chapter~\ref{chapter:06-onlinecounterltr} built on the policy-aware estimator to introduce the intervention-aware estimator.
Similarly, the adaptation of LambdaLoss for counterfactual \ac{LTR} introduced in Chapter~\ref{chapter:04-topk} was applied in the experiments of Chapter~\ref{chapter:05-genspec} and Chapter~\ref{chapter:06-onlinecountereval}.
While not explored in the thesis, many of the later contributions can also be applied to methods in earlier chapters.
For instance, the intervention-aware estimator from Chapter~\ref{chapter:06-onlinecounterltr} is completely compatible with the LambdaLoss adaption from Chapter~\ref{chapter:04-topk} and \ac{GENSPEC} from Chapter~\ref{chapter:05-genspec}.
In particular, it could be applied in combination with \ac{LogOpt} from Chapter~\ref{chapter:06-onlinecountereval}, potentially leading to even more effective online ranking evaluation.
Together, the contributions of the second part can be combined into a single framework for counterfactual \ac{LTR} and ranking evaluation, where our contributions complement each other.
Importantly, this framework bridges several gaps between supervised \ac{LTR}, online \ac{LTR}, and counterfactual \ac{LTR}.

\section{Summary of Findings}

The overarching question this thesis aimed to answer considered
whether \emph{there could be a single general theoretically-grounded approach that has competitive performance for both evaluation and \ac{LTR} from user clicks on rankings, in both the counterfactual and online settings.}

We have looked at the family of online methods for \ac{LTR}~\citep{yue2009interactively, hofmann2013reusing, wang2019variance} and ranking evaluation~\citep{radlinski2013optimized, joachims2003evaluating, hofmann2013fidelity, Schuth2014a}, which traditionally avoid making strong assumptions about user behavior, i.e., that a model of position bias is known~\citep{wang2018position}.
While this makes their theory widely applicable, the theoretical guarantees of these methods are relatively weak.
For instance, some interleaving and multileaving methods are proven to converge on correct outcomes if clicks are uncorrelated with relevance and thus every ranker performs equally well~\citep{radlinski2013optimized, hofmann2011probabilistic}.
Though such guarantees are valuable, they only cover a small group of unambiguous situations and thus leave most situations without theoretical guarantees.
Online \ac{LTR} methods are often motivated by empirical results from semi-synthetic experiments, where they are tested in settings with varying levels of noise and bias~\citep{schuth2016mgd, hofmann12:balancing, oosterhuis2017balancing, wang2018efficient}.
The fundamental question with this type of empirical motivation is how well the results generalize, in particular, whether a method is still effective if the experimental conditions change slightly.
This thesis has presented four examples of online methods that showed surprisingly poor performance when tested in new conditions:
\begin{enumerate*}[label=(\roman*)]
\item On several datasets \ac{DBGD}~\citep{yue2009interactively} did not get close to optimal performance after $10^6$ issued-queries while learning from clicks without noise or position bias (Chapter~\ref{chapter:03-oltr-comparison}).
\item \acl{TDI}~\citep{radlinski2008does}, \acl{PI}~\citep{hofmann2011probabilistic}, and \acl{OI}~\citep{radlinski2013optimized} make systematic errors in some ranking comparisons when tested under rank-based position bias (Chapter~\ref{chapter:06-onlinecountereval}).
\item The performance of the \acs{COLTR} algorithm~\citep{zhuang2020counterfactual} dropped severely when tested under position bias, item-selection bias, and trust bias (Chapter~\ref{chapter:06-onlinecounterltr}).
\item \ac{PDGD} no longer converged to near-optimal performance when we ran it counterfactually or only provided it with $100$ online interventions, and instead resulted in a large drop in performance (Chapter~\ref{chapter:06-onlinecounterltr}).
\end{enumerate*}
While these online methods \ac{LTR} and evaluation have also shown great performance in previous work~\citep{radlinski2008does, hofmann2011probabilistic, radlinski2013optimized, jagerman2019comparison, zhuang2020counterfactual, yue2009interactively, schuth2016mgd}, these problematic examples illustrate why we cannot conclude that these online \ac{LTR} methods are reliable.
For instance, the performance of a method like \ac{PDGD} was thought to be very robust to noise and bias~\citep{jagerman2019comparison} (Chapter~\ref{chapter:02-pdgd} and~\ref{chapter:03-oltr-comparison}), until tested without constant online interventions (Chapter~\ref{chapter:06-onlinecounterltr}).
Without strong theoretical guarantees, we cannot know whether there are more currently-unknown conditions required for the robust performance of \ac{PDGD}.
In general, it is unclear how robust online \ac{LTR} methods are in practice; this thesis has shown that there is a potential risk for detrimental performance if real-world circumstances do not match the tested experimental settings.
Therefore, we conclude that online \ac{LTR} methods should not be used as a basis for a single general approach for \ac{LTR} and ranking evaluation from user clicks.

In the second part of the thesis, we considered the family of counterfactual methods for \ac{LTR} and ranking evaluation~\citep{joachims2017unbiased, wang2016learning}, which consist of theoretically-grounded methods that use explicit assumptions about user behavior.
In contrast with the online family, counterfactual methods are less widely applicable: they only provide guarantees when the assumed models of user behavior are true.
For instance, the original counterfactual \ac{LTR} method assumes clicks are only affected by relevance and rank-based position bias~\citep{joachims2017unbiased, wang2016learning}.
Despite their limited applicability, counterfactual methods have very strong theoretical guarantees.
In contrast to most online \ac{LTR} methods, counterfactual \ac{LTR} methods guarantee convergence at the same performance as supervised \ac{LTR}, given that their assumptions about user behavior are true.
The findings of this thesis indicate that the strong guarantees with limited applicability of counterfactual \ac{LTR} are preferable over the weak guarantees with wide applicability of online \ac{LTR}.
This is mainly because widening the applicability of counterfactual \ac{LTR} proved very doable.
In this thesis, we have expanded the applicability of counterfactual \ac{LTR} and evaluation to
\begin{enumerate*}[label=(\roman*)]
\item top-$k$ settings with item-selection bias (Chapter~\ref{chapter:04-topk}), and
\item ranking settings where both trust bias and item-selection bias occur (Chapter~\ref{chapter:06-onlinecounterltr}).
\end{enumerate*}
Besides expanding the settings where counterfactual \ac{LTR} methods can be applied, we expanded the methods that perform counterfactual \ac{LTR}, including:
\begin{enumerate*}[label=(\roman*),resume]
\item the state-of-the-art LambdaLoss supervised \ac{LTR} framework~\citep{wang2018lambdaloss} (Chapter~\ref{chapter:04-topk}),
\item tabular models for extremely specialized rankings (Chapter~\ref{chapter:05-genspec}), and
\item a meta-policy that safely chooses between generalized feature-based models and specialized tabular models (Chapter~\ref{chapter:05-genspec}).
\end{enumerate*}
Moreover, this thesis also found novel algorithms to increase the effectivity of counterfactual \ac{LTR} methods for
\begin{enumerate*}[label=(\roman*),resume]
\item online ranking evaluation (Chapter~\ref{chapter:06-onlinecountereval}), and 
\item online \ac{LTR} (Chapter~\ref{chapter:06-onlinecounterltr}), even with a limited number of online interventions.
\end{enumerate*}
Together, these contributions have widened the applicability of counterfactual \ac{LTR} while maintaining its strong theoretical guarantees.
As a direct result of this thesis, counterfactual \ac{LTR} is applicable to more settings, more \ac{LTR} methods can be applied to the counterfactual \ac{LTR} problem, and counterfactual \ac{LTR} methods are more effective in both the counterfactual and online \ac{LTR} scenarios.

In conclusion, based on the findings of this thesis, it appears that counterfactual \ac{LTR} could form the basis of a general approach for \ac{LTR} from user clicks.
In our experimental results, counterfactual \ac{LTR} provided competitive performance to online \ac{LTR} methods in both the counterfactual and online settings.
While the theory of counterfactual \ac{LTR} does rely on stronger assumptions regarding user behavior than existing online \ac{LTR} methods, counterfactual \ac{LTR} provides far stronger theoretical guarantees.
In contrast, it is currently unclear under what conditions online \ac{LTR} methods are effective, making their performance very unpredictable.
Therefore, we answer our overarching thesis question positively: the counterfactual \ac{LTR} framework proposed in this thesis provides a unified approach for effective and reliable \ac{LTR} from user clicks.
For the \ac{LTR} field, the counterfactual \ac{LTR} framework bridges many gaps between areas of online \ac{LTR}, counterfactual \ac{LTR}, and supervised \ac{LTR}, and as such, it unifies many of the most effective methods for \ac{LTR} from user clicks.

\section{Future Work}

We will conclude the thesis with promising research directions for future work.

The most obvious direction is to widen the applicability of the counterfactual \ac{LTR} framework.
This means that estimators are introduced that are unbiased under other assumptions about user behavior.
\citet{joachims2017unbiased} mentioned that the original counterfactual method is unbiased as long as click probabilities decompose into observation and relevance probabilities.
For example, \citet{vardasbi2020cascade} looked at the performance of counterfactual \ac{LTR} when assuming cascading user behavior, an alternative to rank-based position bias.
Additionally, \citet{fang2019intervention} looked at context-dependent position bias, where the degree of bias varies per query.
It seems natural to continue this trend to more complex models of user behavior.
The challenge for future work is two-fold: find \ac{LTR} methods that are proven to be unbiased under more complex user behavior models; and introduce methods that can reliably find the parameters of these behavior models.

Besides learning from more complex user behavior, there is a big need for \ac{LTR} based on user clicks that optimizes for more complex goals.
Some existing work has already looked at complex goals: for instance, \citet{Radlinski2008} introduced a bandit algorithm for tabular \ac{LTR} that optimizes for both relevance and diversity within a ranking.
Thus, using user clicks to find a ranking that has relevant items, as well as having variety in the items within the ranking.
Another example comes from \citet{marco2020control}, who use counterfactual \ac{LTR} to optimize for relevance and ranking fairness.
Ranking fairness metrics are based on the amount of exposure different items receive, for example, some fairness metrics measure whether certain groups of items receive similar amounts of exposure.
Other areas of \ac{LTR} also optimize for computational efficiency to ensure that ranking systems can process queries in minimal amounts of time~\citep{gallagher2019joint}.
Future work could investigate if counterfactual \ac{LTR} can be used for complex goals like these and combinations of them.

Surprisingly, the experimental results in this thesis showed that \ac{PDGD} is no longer effective when not applied fully online, and similarly, we observed very poor performance for the \acs{COLTR} algorithm~\citep{zhuang2020counterfactual}.
However, we could not find theoretically proven conditions that guarantee that \ac{PDGD} or \acs{COLTR} is or is not effective.
It appears that we lack a theoretical approach to understand the limits of online \ac{LTR} methods.
If such an approach could be found, we may be able to correct for the faults in some online \ac{LTR} methods, or understand when they can be applied reliably.
Thus it may be very valuable if future work reconsidered the theory behind existing online \ac{LTR} methods.

Finally, most of the existing work on \ac{LTR} from user interactions only consideres user clicks.
Existing work has already looked at additional signals that are useful for learning~\citep{schuth2015predicting, kharitonov2015generalized}.
Novel methods that learn from other interactions in addition to user clicks have the potential to better understand user preferences.
However, the main challenge this direction of research may be the availability of such data.
Perhaps this direction of research mostly needs a publicly available source of data and methods to share such data in a privacy-respecting way.

Overall, our main advice for future work is to focus on methods that forge connections between advances in the larger field of \ac{LTR};
that is, methods that combine the best of different areas, as our proposed framework does for online \ac{LTR}, counterfactual \ac{LTR}, and supervised \ac{LTR}.

\backmatter

\renewcommand{\bibsection}{\chapter{Bibliography}}
\renewcommand{\bibname}{Bibliography}
\markboth{Bibliography}{Bibliography}
\renewcommand{\bibfont}{\footnotesize}
\setlength{\bibsep}{0pt}

\bibliographystyle{abbrvnat}
\bibliography{thesis}

\chapter{Summary}

Ranking systems form the basis for online search engines and recommendation services.
They process large collections of items, for instance web pages or e-commerce products, and present the user with a small ordered selection.
The goal of a ranking system is to help a user find the items they are looking for with the least amount of effort.
Thus the rankings they produce should place the most relevant or preferred items at the top of the ranking.
Learning to rank is a field within machine learning that covers methods which optimize ranking systems w.r.t.\ this goal.
Traditional supervised learning to rank methods utilize expert-judgements to evaluate and learn, however, in many situations such judgements are impossible or infeasible to obtain.
As a solution, methods have been introduced that perform learning to rank based on user clicks instead.
The difficulty with clicks is that they are not only affected by user preferences, but also by what rankings were displayed.
Therefore, these methods have to prevent being biased by other factors than user preference.
This thesis concerns learning to rank methods based on user clicks and specifically aims to unify the different families of these methods.

The first part of the thesis consists of three chapters that look at online learning to rank algorithms which learn by directly interacting with users.
Its first chapter considers large scale evaluation and shows existing methods do not guarantee correctness and user experience, we then introduce a novel method that can guarantee both.
The second chapter proposes a novel pairwise method for learning from clicks that contrasts with the previous prevalent dueling-bandit methods.
Our experiments show that our pairwise method greatly outperforms the dueling-bandit approach.
The third chapter further confirms these findings in an extensive experimental comparison, furthermore, we also show that the theory behind the dueling-bandit approach is unsound w.r.t.\ deterministic ranking systems.

The second part of the thesis consists of four chapters that look at counterfactual learning to rank algorithms which learn from historically logged click data.
Its first chapter takes the existing approach and makes it applicable to top-$k$ settings where not all items can be displayed at once.
It also shows that state-of-the-art supervised learning to rank methods can be applied in the counterfactual scenario.
The second chapter introduces a method that combines the robust generalization of feature-based models with the high-performance specialization of tabular models.
The third chapter looks at evaluation and introduces a method for finding the optimal logging policy that collects click data in a way that minimizes the variance of estimated ranking metrics.
By applying this method during the gathering of clicks, one can turn counterfactual evaluation into online evaluation.
The fourth chapter proposes a novel counterfactual estimator that considers the possibility that the logging policy has been updated during the gathering of click data.
As a result, it can learn much more efficiently when deployed in an online scenario where interventions can take place.
The resulting approach is thus both online and counterfactual, our experimental results show that its performance matches the state-of-the-art in both the online and the counterfactual scenario.

As a whole, the second part of this thesis proposes a framework that bridges many gaps between areas of online, counterfactual, and supervised learning to rank.
It has taken approaches, previously considered independent, and unified them into a single methodology for widely applicable and effective learning to rank from user clicks.

\chapter{Samenvatting}

Rankingsystemen vormen de basis voor online zoekmachines en aanbevelingsdiensten.
Ze verwerken grote verzamelingen van bijvoorbeeld webpagina's of web-winkel producten, en presenteren een kleine geordende selectie aan de gebruiker.
Voor de beste gebruikerservaring, moeten de resulterende rankings de meest relevante of geprefereerde items bovenaan plaatsen. 
Het learning-to-rank veld omvat methodes die rankingssystemen optimaliseren voor dit doel.
Traditionele learning-to-rank methoden maken gebruik van supervisie: annotaties van deskundigen. %
Omdat het verkrijgen van dergelijke annotaties vaak onmogelijk is, zijn methoden ontwikkeld die leren op basis van gebruikersclicks.
Helaas worden clicks niet alleen be\"invloed door gebruikersvoorkeuren, maar ook door welke rankings worden weergegeven.
Om werkelijk de gebruikersvoorkeuren te leren moeten deze methoden dus de invloed van zulke andere factoren vermijden.
Dit proefschrift betreft learning-to-rank methoden op basis van gebruikersclicks en heeft specifiek het doel de verschillende families van deze methoden te verenigen.

Het eerste deel van dit proefschrift bestaat uit drie hoofdstukken die kijken naar online learning-to-rank algoritmen die leren d.m.v.\ directe interactie met gebruikers.
Het eerste hoofdstuk behandelt evaluatie op grote schaal waar we de eerste methode introduceren die garanties biedt voor zowel juiste resultaten en goede gebruikerservaring.
Het tweede hoofdstuk stelt een nieuwe paarsgewijze methode voor om te leren van clicks die in contrast staat met de eerdere dueling-bandits methoden.
Onze experimenten tonen aan dat onze paarsgewijze methode veel beter presteert dan de dueling-bandits methoden.
Het derde hoofdstuk bevestigt deze bevinding en laat ook zien dat de theorie achter de dueling-bandits methoden incorrect is t.o.v.\ deterministische rankingsystemen.

Het tweede deel van het proefschrift bestaat uit vier hoofdstukken die kijken naar counterfactual learning-to-rank algoritmen die leren van eerder verzamelde clickdata.
Het eerste hoofdstuk maakt deze algoritmen toepasbaar op top-k rankings die niet alle items tegelijkertijd kunnen weergeven.
Verder laat het ook zien dat state-of-the-art supervised learning-to-rank methoden toepasbaar zijn in het counterfactual scenario.
Het tweede hoofdstuk introduceert een methode die de robuuste generalisatie van `feature-based' modellen combineert met de hoge prestaties van gespecialiseerde tabelmodellen.
Het derde hoofdstuk behandelt evaluatiemethodes en introduceert een methode voor het vinden van de optimale logging-policy die clickdata verzamelt op een manier die de variantie van geschatte rankingmetrieken minimaliseert.
Door deze methode toe te passen tijdens het verzamelen van clicks wordt counterfactual evaluatie omgezet in online evaluatie.
Het vierde hoofdstuk presenteert een nieuwe counterfactual methode die rekening houdt met de mogelijkheid dat de logging policy niet constant is.
Als gevolg hiervan kan het veel effici\"enter leren in een online scenario waar interventies plaatsvinden.
Onze experimentele resultaten laten zien dat de prestaties van deze methode overeenkomen met de state-of-the-art in zowel het online als het counterfactual scenario.

In zijn geheel stelt het tweede deel van dit proefschrift een raamwerk voor dat veel hiaten overbrugt tussen de online, counterfactual en supervised learning-to-rank gebieden.
Het heeft methodes die voorheen als onafhankelijk werden beschouwd, verenigd in \'e\'en enkele methodologie voor breed toepasbare en effectieve learning-to-rank op basis van gebruikersclicks.

\end{document}